 \def\map#1{\mathcal #1}
\def\<{\langle}\def\>{\rangle}
\def\tr{\operatorname{tr}}
\def\Tr{\operatorname{Tr}}
\def\:{\hbox{\bf
    :}}
\def\R{\mathbb R}
\def\N{\mathbb N}
\def\C{\mathbb C}
\def\spc#1{\mathcal{#1}}
\def\seq#1{\textfrak{#1}}
\def\set#1{\mathsf{#1}}
\def\vec#1{\bm{#1}}
\def\Ker{\mathop{\rm Ker}}  
\def\Supp{\mathop{\rm Supp}} 
\newenvironment{proofof}[1]{\vspace*{5mm} \par 
\noindent{\it Proof of #1:\hspace{2mm}}}{\endproof
\hfill$\Box$ 
}
\newtheorem{theo}{{Theorem}}
\newtheorem{condition}{{Condition}}
\newtheorem{lem}{{Lemma}}
\newtheorem{prop}{{Proposition}}
\newtheorem{cor}{{Corollary}}
\newtheorem{defi}{{Definition}}
\def\Label#1{\label{#1}\ [\ \text{#1}\ ]\ }
\def\Label{\label}
\begin{document}

\title{Attaining the ultimate precision limit in quantum state estimation}
\titlerunning{Attaining the ultimate precision limit in quantum state estimation}
\author{Yuxiang Yang\inst{1}\fnmsep\inst{2}\fnmsep\inst{5}, Giulio Chiribella\inst{1}\fnmsep\inst{3} \fnmsep\inst{4}\fnmsep\inst{5}, and Masahito Hayashi\inst{6}\fnmsep\inst{7}\fnmsep\inst{8}}
\institute{
Department of Computer Science, The University of Hong Kong, Pokfulam Road, Hong Kong.
\and
Institute for Theoretical Physics, ETH Z\"urich, Switzerland\\
\email{yangyu@phys.ethz.ch}
\and
Department of Computer Science,  Parks Road, Oxford, OX1 3QD,  UK.\\
\email{giulio.chiribella@cs.ox.ac.uk}
\and
Perimeter Institute for Theoretical Physics, Waterloo, Ontario N2L 2Y5, Canada
 \and
The University of Hong Kong Shenzhen Institute of Research and Innovation, 5/F, Key Laboratory Platform Building, No.6, Yuexing 2nd Rd., Nanshan, Shenzhen 518057, China
\and
Graduate School of Mathematics, Nagoya University,
Furocho, Chikusaku, Nagoya, 464-860, Japan.\\
\email{masahito@math.nagoya-u.ac.jp}
\and
Centre for Quantum Technologies, National University of Singapore, 3 Science Drive 2, Singapore 117542.
\and
Shenzhen Institute for Quantum Science and Engineering, Southern University of Science and Technology, Shenzhen 518055, China.}  
\authorrunning{Y. Yang, G. Chiribella, M. Hayashi}

\maketitle
\abstract{We derive a bound on the precision of  state estimation for finite dimensional quantum systems and prove its attainability in the generic case where the spectrum is non-degenerate. 
     Our results hold under an assumption called local asymptotic covariance, which is weaker than unbiasedness or local  unbiasedness. The derivation  is based on an analysis of the limiting distribution of the estimator's deviation from the true value of the parameter, and takes advantage of  quantum local asymptotic normality,  a useful asymptotic characterization of identically prepared states in terms of Gaussian states.  We first prove our results for the mean square error of a special class of models, called $D$-invariant, and then extend the results to arbitrary models, generic cost functions,  and  global state estimation, where the unknown parameter is not restricted to a local neighbourhood of the true value. The extension includes a treatment of nuisance parameters, i.e. parameters that are not  of interest to the experimenter but nevertheless affect the precision of the estimation.       As an illustration of the general approach, we provide the optimal estimation strategies for the joint measurement of two qubit observables, for the estimation of  qubit states in the presence of amplitude damping noise, and for  noisy multiphase  estimation. }

\medskip

\section{Introduction.}

Quantum estimation theory is one of the pillars of  quantum information science, with a wide range of applications from evaluating the performance of quantum devices \cite{d2001quantum,knill2008randomized} to exploring the foundation of physics \cite{heisenberg1927anschaulichen,schnabel2010quantum}.   In the typical scenario, the problem is specified by a  parametric family of quantum states,  called the {\em model}, and the  objective  is to design  measurement strategies that estimate the  parameters of interest with  the highest possible precision.
The precision measure is often chosen to be the mean square error (MSE),  and is  lower bounded  through generalizations of the Cram\'{e}r-Rao bound of classical statistics \cite{holevo-book,helstrom-book}. Given $n$ copies of a quantum state, such generalizations  imply  that the  product ${\rm MSE}\cdot n$ converges to a positive constant in the large $n$ limit. 

Despite many efforts   made over the years (see, e.g., \cite{holevo-book,helstrom-book,massar1995optimal,gill2000state,fujiwara2001estimation,matsumoto2002new,hayashi2005statistical,gill2013asymptotic} and \cite{szczykulska2016multi} for a review), the attainability of the precision bounds of quantum state estimation has only been proven in a few special cases. 
Consider, as an example, the most widely used bound, namely the symmetric logarithmic derivative Fisher information bound (SLD bound, for short).  
The SLD bound is tight in  the one-parameter case \cite{holevo-book,helstrom-book}, 
but is generally non-tight  in multiparameter estimation.
Intuitively, measuring one parameter may affect the precision in the measurement of  another parameter, and thus it is extremely tricky  to construct the optimal measurement.
Another bound for multiparameter estimation is the right logarithmic derivative Fisher information bound (RLD bound, in short)\cite{holevo-book}.
Its achievability was  shown in the Gaussian states case \cite{holevo-book},
the qubits case \cite{0411073,hayashi2008asymptotic}, and the qudits case \cite{guta-lan,qlan}.
In this sense,  the RLD bound is superior to  the SLD bound.
However, the RLD bound holds only when the family of states to be estimated satisfies an {\em ad hoc} mathematical condition. The most general quantum extension of the classical Cram\'er-Rao bound till now is the Holevo bound \cite{holevo-book}, which gives the maximum among all existing lower bounds for the error of unbiased measurements for the estimation of any family of states.  
The attainability of the Holevo bound was studied  
in the pure states case \cite{matsumoto2002new} and the qubit case \cite{0411073,hayashi2008asymptotic},
and was conjectured to be generic by  one of us  \cite{hayashi2009quantum}.   
Yamagata, Fujiwara, and Gill \cite{yamagata2013quantum} addressed the attainability question in a local scenario, showing that the Holevo bound can be attained under  certain regularity conditions.  
However, the attaining estimator constructed therein depends on the true parameter, and therefore has limited practical interest. 
Meanwhile, the need of  a general, attainable bound on multiparameter quantum estimation is increasing,  as more and more  applications are being investigated \cite{yue2014quantum,knott2016local,zhang2017quantum,liu2017control,proctor2017multi}.

In this work we explore a new route to the study of precision limits in quantum estimation. 
    This new route  allows us to prove the asymptotic attainability of the Holevo bound in generic scenarios, to extend its validity to a broader class of estimators,  and to derive a new set of attainable precision bounds. 
We adopt the condition of \emph{local asymptotic covariance} \cite{hayashi2009quantum} which is less restrictive than the unbiasedness condition \cite{holevo-book} assumed in the derivation of the Holevo bound. 
Under local asymptotic covariance, we characterize the MSE of the \emph{limiting distribution}, 
namely the distribution of the estimator's rescaled deviation from the true value of the parameter in the asymptotic limit of $n\to\infty$.

Our contribution  can be divided into two parts, 
the attainability  of the Holevo bound and the proof that the Holevo bound still holds under the weaker condition of local asymptotic covariance.
To show the achievability part, we employ 
 {\em quantum local asymptotic normality} (Q-LAN),  a useful characterization of $n$-copy $d$-dimensional (qudit) states in terms of multimode  Gaussian states.
The qubit case was derived in \cite{0411073,hayashi2008asymptotic}
and the case of full parametric models was derived by Kahn and Guta 
when the state has non-degenerate spectrum \cite{guta-lan,qlan}.
Here we extend this characterization to a larger class of models, called $D$-invariant models, using a technique of symplectic diagonalization.  For models that are not $D$-invariant,
we derive an  achievable bound, expressed in terms of a quantum Fisher information-like quantity that  can be straightforwardly evaluated. Whenever the model consists of qudit states with non-degenerate spectrum, this quantity turns out  to be equal to the quantity in the Holevo bound \cite{holevo-book}.  Our evaluation has compact uniformity and order estimation of the convergence, which will allow us to prove the achievability of the bound even in the global setting.


 We stress that, until now, the most general proof of the Holevo bound required the condition of local unbiasedness.  In particular, no previous  study showed the validity of the Holevo bound under the weaker condition of local asymptotic covariance in  the multiparameter scenario.
To avoid employing the (local) unbiasedness condition,
we focus on the discretized version of the RLD Fisher information matrix, introduced by Tsuda and Matsumoto \cite{tsuda2005quantum}. 
Using this version of the RLD Fisher information matrix,
we manage to handle the local asymptotic covariance condition and to show the validity of the Holevo bound in this broader scenario. 
 Remarkably, the validity of the bound   does not require finite-dimensionality of the system or non-degeneracy of the states in the model. 
  Our result also provides a simpler way of evaluating the Holevo bound, whose original expression  involved  a difficult optimization over a set  of operators.  

The advantage of  local asymptotic covariance over local unbiasedness is the following.
For  practical applications, the estimator needs to attain the lower bound globally, i.e., at all points in the parameter set.
However, it is quite difficult to meet this desideratum  under the condition of local unbiasedness, even if we employ a two-step method based on a first rough estimate of the state, followed by the  measurement that is optimal  in the neighbourhood of the estimate. 
In this paper,   
we construct a locally asymptotic covariant estimator that achieves the Holevo bound at every point, for any qudit submodel except those with  degenerate states. 
Our construction proceeds in two steps. In the first step, we perform a full tomography of the state, using the protocol proposed in \cite{haah2017sample}. In the second step, we implement a locally optimal estimator based on  Q-LAN \cite{guta-lan,qlan}. 
The two-step estimator works even when 
the estimated parameter is not assumed to be in a local neighbourhood of the true value.  
The key tool to prove this property is 
our precise evaluation of the optimal local estimator with 
compact uniformity and order estimation of the convergence. 
Our method can be extended from the MSE to  arbitrary cost functions.
A comparison between the approach adopted in  this work (in green)  and conventional approaches to quantum state estimation (in blue) can be found in Figure \ref{fig:compare}.

  Besides the attainability of the Holevo bound, the method  can be used to derive a broad class of bounds for quantum state estimation. 
Under suitable assumptions, we  characterize the tail  of the limiting distribution, providing a bound on the probability that the estimate falls out of a confidence region.
The limiting distribution is a good approximation of the (actual) probability distribution of the estimator, up to a term vanishing in $n$.    
 Then, we  derive a bound for quantum estimation with \emph{nuisance parameters}, {\em i.e.}  parameters that are not of interest to the experimenter but may affect the  estimation of the other parameters.  For instance, the strength of noise in a phase estimation scenario can be regarded as a nuisance parameter.  Our  bound applies also to arbitrary estimation models, thus extending nuisance parameter bounds derived for specific cases (see, e.g., \cite{suzuki,kok2017role,yang2017quantum}).  In the final part of the paper, the above bounds are illustrated in  concrete examples, including the joint measurement of two qubit observables,  the estimation of  qubit states in the presence of amplitude damping noise, and  noisy multiphase  estimation. 
 


 \begin{figure}[t!]
\begin{center}
  \includegraphics[width=\linewidth]{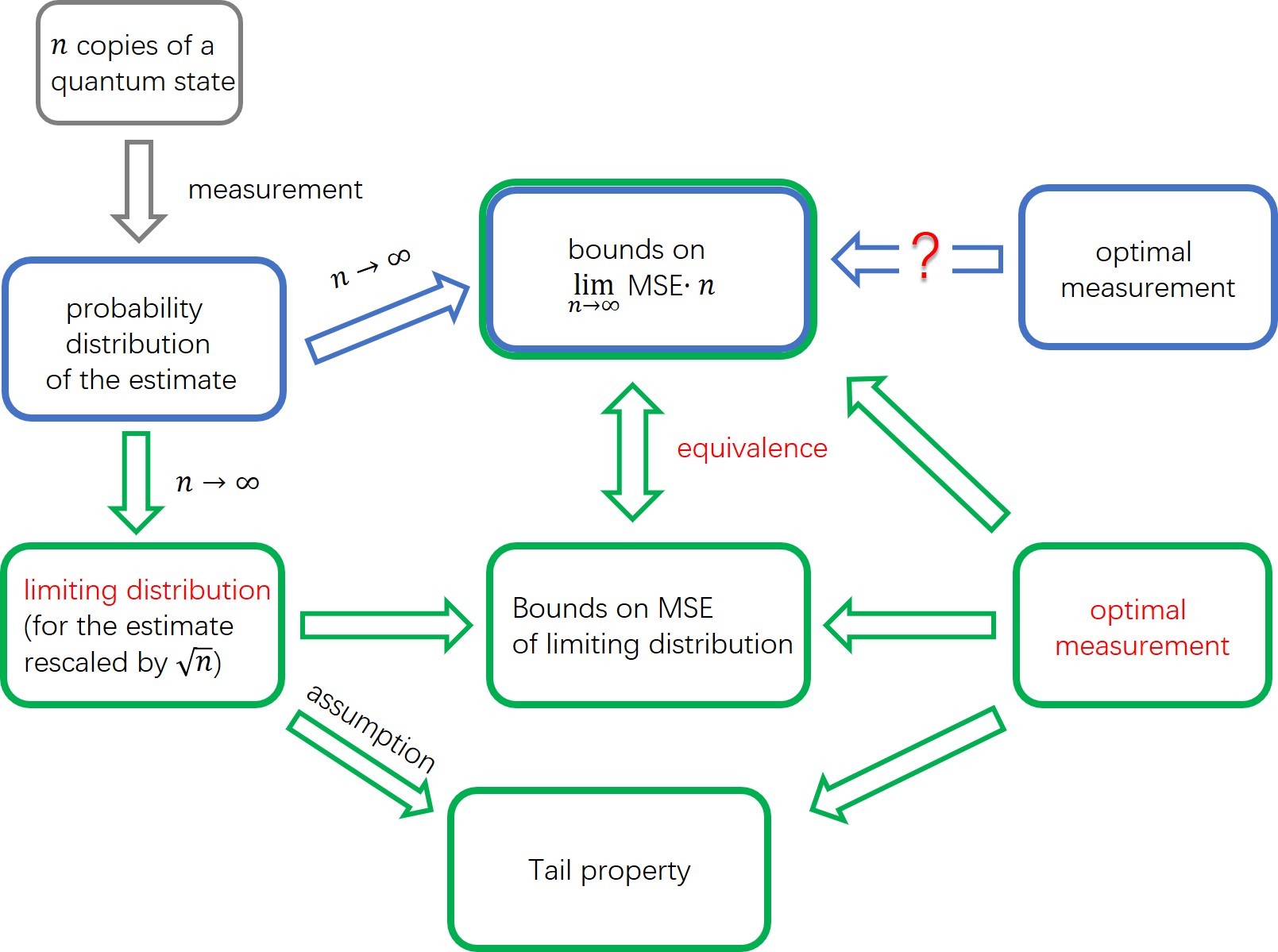}
  \end{center}
\caption{\Label{fig:compare}
  {\bf Comparison between the approach  of this work (in green) and the traditional  approach of quantum  state estimation (in blue).}   In the traditional approach, one derives precision bounds based on the probability distribution function (PDF) for measurements on the original set of quantum states. The bounds are evaluated in the large $n$ limit and the task is to find a sequence of measurements that achieves the limit bound. In this work, we first characterize the limiting distribution and then work out a bound in terms of the limiting distribution.  This construction also provides the optimal measurement in the limiting scenario, which can be used to prove the asymptotic attainability of the bound.  The analysis of the limiting distribution also provides tail bounds, which approximate the tail bounds for finite $n$ up to a small correction, under the assumption that the cost function and the model satisfy a certain relation (see Theorem \ref{thm-tail}).}
\end{figure}


\begin{table}
\begin{center}
    \begin{tabular}{ | p{5cm} | c | p{3.5cm} |}
    \hline
    Term & Notation & Definition \\ \hline
   Limiting distribution & $\wp_{\vec{t}_0,\vec{t}|\seq{m}}$ & Eqs. (\ref{limiting-single}) and (\ref{limiting}) \\
     \hline
   SLD quantum Fisher information at $\vec{t}_0$ & {$J_{\vec{t}_0}$} & {Eq. (\ref{SLD-QFI}) }\\
     \hline
        \multirow{2}{*}{RLD quantum Fisher information at $\vec{t}_0$} & \multirow{2}{*}{$\widetilde{J}_{\vec{t}_0}$} & Eq. (\ref{RLD-QFI}) (for D-invariant models) \\
     \hline
       D-matrix at $\vec{t}_0$ & $D_{\vec{t}_0}$ & Eq. (\ref{D-m}) \\
     \hline
     Gaussian state & $G[\vec{\alpha},\vec{\gamma}]$ & Eq. (\ref{Gaussian-state})\\
     \hline
(Multi-mode) displaced thermal state & {$\Phi[(\vec{\alpha}^R,\vec{\alpha}^I),\vec{\beta}]$} 
     &  {Eq. (\ref{multidisplaced-thermal})}\\
     \hline
     Gaussian shift operator & $ T_{\vec{\alpha}}$ & Eq. (\ref{Gaussian-shift-operator})\\
     \hline
     $2m$-dimensional symplectic form  & {$\Omega_m$} 
     & {Eq. (\ref{M2})}\\
     \hline
A $2m$-dimensional diagonal matrix & {$E_m(\vec{x})$} 
& {Eq. (\ref{M1})}\\
\hline
Covariance matrix of a probability distribution $\wp$ & \multirow{2}{*}{$V[\wp]$} 
& \multirow{2}{*}{Eq. (\ref{cov-def})}\\
\hline
    \end{tabular}\caption{Table of terms and notations.}\Label{table-notation}
\end{center}
\end{table}

The remainder of the paper is structured as follows. In Section \ref{sec-single} we  introduce the main ideas  in the one-parameter case. 
Our discussion of the one-parameter case requires no regularity condition for the parametric model.   
Then we devote several sections to introducing and deriving tools for the multiparameter estimation. 
In Section \ref{s3}, we briefly review the Holevo bound and Gaussian states, and derive some relations that will be useful in the rest of the paper. 
In Sections \ref{sec:Tool}, we introduce Q-LAN.   
In Section \ref{sec-RLD} we introduce the $\epsilon$-difference RLD Fisher information matrix, which will be a key tool for deriving our bounds in the multiparameter case. 
In Section \ref{sec:multi}, we derive the general bound on the precision of multiparameter estimation.   In Section \ref{sec-general}, we address state estimation in the presence of nuisance parameters and derive a precision bound for this scenario.
Section \ref{sec-tail} provides bounds on the tail probability.
In Section \ref{sec-global}, we extend our results to global estimation and to generic cost functions. 
In Section \ref{sec:applications}, the general method is illustrated through examples.  The conclusions are drawn  in Section \ref{sec-conclusion}.

{\em Remark on the notation.} In this paper, we use $z^*$ for the complex conjugate of $z\in\C$ and $A^\dag$ for the Hermitian conjugate of an operator $A$.
For convenience of the reader, we list other frequently appearing notations and their definitions in Table \ref{table-notation}.

\section{Precision bound under  local  asymptotic covariance: one-parameter case.}\Label{sec-single}
In this section, we discuss estimation of a single parameter under the local asymptotic covariance condition, 
without any assumption on  the parametric model. 

\subsection{Cram\'{e}r-Rao inequality without regularity assumptions}\label{S-one}

Consider a  one-parameter model $\map M$, of the form 
\begin{align}\label{quditmodel}
\map M  =   \left\{ \rho_t \right\}_{    t  \in  \set \Theta} 
\end{align} 
where $\set \Theta$ is a subset of $\mathbb{R}$.
In the literature it is typically assumed that the parametrization is differentiable. When this is the case, one can define 
the symmetric logarithmic derivative operator (SLD in short) at $t_0$ via the equation
\begin{align}
\frac{d \rho_{{t_0}}}{d t_0}=
\frac12\left(
\rho_{{t}_0}{L}_{t_0}+{L}_{t_0}\rho_{{t}_0}\right).
\label{LLP}
\end{align}
Then, 
the SLD Fisher information is defined as 
\begin{align}
J_{{t_0}}:=\Tr \left[  \rho_{{t_0}} L_{t_0}^2\right] \, .\Label{DefSLD1}
\end{align}
The SLD $L_{t_0}$ is not unique in general, 
but the SLD Fisher information 
$J_{{t_0}}$ is uniquely defined because it does not depend on the choice of 
the SLD $L_{t_0}$ among the operators satisfying  \eqref{LLP}. 
When the parametrization is $C^1$-continuous and $\epsilon >0$ is a small number, 
one has
\begin{align}
F\left(\rho_{t_0} ||\rho_{t_0+\epsilon}\right)
= 1-\frac{J_{t_0}}{8}\epsilon^2+ o(\epsilon^2)\label{LLP2B} \, ,
\end{align}
where 
\begin{align}
F(\rho\|\rho'):= \Tr |\sqrt{\rho}\sqrt{\rho'}|
\end{align}
is the fidelity between two density matrices $\rho$ and $\rho'$. 
It is called Bhattacharya or Hellinger coefficient in the classical case
\cite{bhattacharyya1943measure,hellinger1909neue}.

Here we do not assume that the parametrization \eqref{quditmodel} is differentiable. Hence, 
the SLD Fisher information cannot be defined by \eqref{DefSLD1}.
Instead, following the intuition of \eqref{LLP2B}, we define the SLD Fisher information $J_{t_0}$  as the limit
\begin{align}
J_{t_0}:=\liminf_{\epsilon\to 0}\frac{8 \big[ \,1-F\left(\rho_{t_0} ||\rho_{t_0+\epsilon}\right)   \big]}{\epsilon^2 } \, .
\label{LLP2}
\end{align}
In the $n$-copy case,  we have the following lemma: 
\begin{lem}\label{MGR}
\begin{align}
\liminf_{n \to \infty}
\frac{8\left(1-F\left(\rho_{t_0}^{\otimes n}||\rho_{t_0+\frac{\epsilon}{\sqrt{n}}}^{\otimes n}\right)\right)}{\epsilon^2}
{=}  
\frac{8\left(1-
e^{-\frac{J_{t_0} \epsilon^2}{8}}
\right)}
{\epsilon^2}.
\end{align}
\end{lem}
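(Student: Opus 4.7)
The plan is to exploit multiplicativity of the quantum fidelity under tensor products, $F(\rho^{\otimes n}\|\sigma^{\otimes n}) = F(\rho\|\sigma)^n$, to rewrite the $n$-copy quantity in terms of the single-copy fidelity. Setting $\beta_n := 1 - F(\rho_{t_0}\|\rho_{t_0+\epsilon/\sqrt n})$, the left-hand side becomes $\liminf_n \tfrac{8[1-(1-\beta_n)^n]}{\epsilon^2}$, so the goal is to control the asymptotic behaviour of $(1-\beta_n)^n$ purely in terms of the defining liminf \eqref{LLP2} of $J_{t_0}$.

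First I would observe that $\beta_n \to 0$ by joint continuity of the fidelity in its arguments together with continuity of $\rho_t$ in $t$, and that direct substitution of $\delta_n := \epsilon/\sqrt n$ into \eqref{LLP2} yields the easy inequality $\liminf_n n\beta_n \geq \epsilon^2 J_{t_0}/8$. The matching reverse bound is the crux of the argument: picking a sequence $\delta_k \to 0$ that realises the defining liminf, $\tfrac{8[1-F(\rho_{t_0}\|\rho_{t_0+\delta_k})]}{\delta_k^2}\to J_{t_0}$, I would assign integers $n_k := \lceil \epsilon^2/\delta_k^2 \rceil$ so that $\epsilon/\sqrt{n_k}$ is arbitrarily close to $\delta_k$, and transfer the limit along this subsequence to obtain $\liminf_n n\beta_n \leq \epsilon^2 J_{t_0}/8$. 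The two inequalities together give $\liminf_n n\beta_n = \epsilon^2 J_{t_0}/8$.

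Finally, since $\beta_n \to 0$ the expansion $\log(1-\beta_n) = -\beta_n + O(\beta_n^2)$ gives $n\log(1-\beta_n) = -n\beta_n + O\bigl((n\beta_n)^2/n\bigr)$; when $n\beta_n$ stays bounded the remainder is $o(1)$ and $(1-\beta_n)^n = e^{-n\beta_n + o(1)}$, while the exceptional case $n\beta_n \to \infty$ (i.e.\ $J_{t_0}=\infty$) makes both sides trivially equal to $8/\epsilon^2$. Using monotonicity and continuity of $x\mapsto e^{-x}$ to convert $\liminf$ into $\limsup$, one obtains $\limsup_n (1-\beta_n)^n = e^{-\liminf_n n\beta_n} = e^{-\epsilon^2 J_{t_0}/8}$, which combined with the identity $\liminf_n [1-(1-\beta_n)^n] = 1-\limsup_n (1-\beta_n)^n$ delivers the stated equality. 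The main obstacle is the subsequence-matching step: without it one only obtains the easy inequality, and some care is needed to transfer the $\liminf$ from the continuous parameter $\delta\to 0$ to the discrete sequence $\delta_n = \epsilon/\sqrt n$, exploiting continuity of $\beta$ in its argument.
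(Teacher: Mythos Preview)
Your proposal is correct and follows the same route as the paper: multiplicativity $F(\rho^{\otimes n}\|\sigma^{\otimes n})=F(\rho\|\sigma)^n$ reduces the question to the behaviour of $(1-\beta_n)^n$, and one then invokes $(1-c/n+o(1/n))^n\to e^{-c}$. The paper's own proof is in fact just the two-line substitution of the expansion $1-F=J_{t_0}\delta^2/8+o(\delta^2)$ into this limit, so your liminf/subsequence analysis supplies exactly the care that the paper glosses over when it treats the defining liminf \eqref{LLP2} as though it were a genuine limit.
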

\begin{proof}
Using the definition \eqref{LLP2}, we have
\begin{align}
& 
\liminf_{n \to \infty}
\frac{8\left(1-F\left(\rho_{t_0}^{\otimes n}||\rho_{t_0+\frac{\epsilon}{\sqrt{n}}}^{\otimes n}\right)\right)}{\epsilon^2}
\nonumber \\
&{=}  
\liminf_{n \to \infty}
\frac{8\left(1-
\left(1-\frac{J_{t_0} \epsilon^2}{8n} 
+o(\frac{ \epsilon^2}{n}) 
\right)^n\right)}
{\epsilon^2}
{=}  
\frac{8\left(1-
e^{-\frac{J_{t_0} \epsilon^2}{8}}
\right)}
{\epsilon^2}.
\end{align}
\end{proof}
In other words, the SLD Fisher information is constant over $n$ if we replace $\epsilon$ by $\epsilon/\sqrt n$.

To estimate the parameter $t\in\set \Theta$, we perform on the input state a quantum measurement, which is mathematically described by a {\em positive operator valued measure (POVM)} with outcomes in $\cal{X}\subset\R$. An outcome $x$ is then mapped to an estimate of $t$ by an estimator $\hat{t}(x)$. It is often assumed that the measurement   is unbiased, in the following sense: 
a  POVM  $M$ on a single input copy is called \emph{unbiased}
when
\begin{align}\Label{unbiased}
\int_{\cal{X}} \hat{t}(x) \, \Tr \left[ \rho_{t} \,M(d {x})\right]=t, \qquad\forall t\in\set \Theta.
\end{align}
For a POVM $M$, we define the mean square error (MSE)  $V_{t}(M)$ as
\begin{align}
V_{{t}}(M)
:=\int_{\cal{X}} \left(\hat{t}(x)-t\right)^2 \Tr \left[ \, \rho_{{t}} M(d{x}) \,\right] \, .
\end{align}

Then, we have the fidelity version of the Cram\'{e}r-Rao inequality:

\begin{theo}\Label{L16}
For an unbiased measurement $M$ satisfying
\begin{align}
\int_{{\cal X}} \hat{t}(x)P_t(dx)=t\Label{UBC0}
\end{align}
for any $t$,
we have
\begin{align}
\frac{1}{2}
\big[ \, V_{t_0}(M)+V_{t_0+\epsilon}(M)+\epsilon^2  \,  \big]  \ge 
\frac{\epsilon^2}{8 \big[ \, 1-F(\rho_{t_0} ||\rho_{t_0+\epsilon}) \,\big]} \, .\label{O6}
\end{align}
When $\lim_{\epsilon \to 0}V_{t_0+\epsilon}(M)=V_{t_0}(M)$,
taking the limit $\epsilon \to 0$, we have
\begin{align}
V_{t_0}(M) \ge J_{t_0}^{-1} .
\label{O9}
\end{align}
\end{theo}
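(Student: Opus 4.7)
The plan is to reduce the quantum inequality \eqref{O6} to a classical Cram\'er--Rao-type estimate by pushing forward the measurement to the distributions $P_{t_0}$ and $P_{t_0+\epsilon}$ obtained from $M$, and then invoking monotonicity of the fidelity under the measurement channel. The second statement \eqref{O9} will follow by taking an appropriate $\epsilon\to 0$ limit along a sequence realising the $\liminf$ in \eqref{LLP2}.

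First, I would use the unbiasedness assumption \eqref{UBC0} at both $t_0$ and $t_0+\epsilon$ to write
\begin{align}
\epsilon \;=\; \int_{\mathcal X}\bigl(\hat t(x)-t_0\bigr)\bigl(P_{t_0+\epsilon}(dx)-P_{t_0}(dx)\bigr).
\end{align}
Writing $P_{t}(dx)=p_t(x)\,\mu(dx)$ for a common dominating measure $\mu$ (this is harmless since the argument only uses the Hellinger integral), I would factorise $p_{t_0+\epsilon}-p_{t_0}=(\sqrt{p_{t_0+\epsilon}}-\sqrt{p_{t_0}})(\sqrt{p_{t_0+\epsilon}}+\sqrt{p_{t_0}})$ and apply the Cauchy--Schwarz inequality:
\begin{align}
\epsilon^2 \;\le\; \Bigl[\textstyle\int(\hat t-t_0)^2\bigl(\sqrt{p_{t_0+\epsilon}}+\sqrt{p_{t_0}}\bigr)^2 d\mu\Bigr]\Bigl[\textstyle\int\bigl(\sqrt{p_{t_0+\epsilon}}-\sqrt{p_{t_0}}\bigr)^2 d\mu\Bigr].
\end{align}

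Next I would simplify each factor. The second bracket equals $2(1-F_c)$, where $F_c:=\int\sqrt{p_{t_0}p_{t_0+\epsilon}}\,d\mu$ is the classical Bhattacharyya coefficient of $P_{t_0}$ and $P_{t_0+\epsilon}$. For the first bracket, I would use $(a+b)^2\le 2(a^2+b^2)$ together with the identity $\int(\hat t-t_0)^2 p_{t_0+\epsilon}\,d\mu = V_{t_0+\epsilon}(M)+\epsilon^2$, obtained by expanding $(\hat t-t_0)=(\hat t-(t_0+\epsilon))+\epsilon$ and using unbiasedness at $t_0+\epsilon$. The first bracket is thus $\le 2\bigl[V_{t_0}(M)+V_{t_0+\epsilon}(M)+\epsilon^2\bigr]$. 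Finally, the monotonicity of the fidelity under the measurement channel $\rho\mapsto \Tr[\rho M(\cdot)]$ gives $F_c\ge F(\rho_{t_0}\|\rho_{t_0+\epsilon})$, hence $1-F_c\le 1-F(\rho_{t_0}\|\rho_{t_0+\epsilon})$. Combining the three estimates yields
\begin{align}
\epsilon^2 \;\le\; 4\bigl[V_{t_0}(M)+V_{t_0+\epsilon}(M)+\epsilon^2\bigr]\bigl[1-F(\rho_{t_0}\|\rho_{t_0+\epsilon})\bigr],
\end{align}
which is precisely \eqref{O6}.

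For \eqref{O9}, under the continuity hypothesis $\lim_{\epsilon\to 0}V_{t_0+\epsilon}(M)=V_{t_0}(M)$ the left-hand side of \eqref{O6} tends to $V_{t_0}(M)$. To extract the correct right-hand side, I would choose a sequence $\epsilon_n\to 0$ that realises the $\liminf$ in \eqref{LLP2}, so that $\epsilon_n^2\big/\bigl(8[1-F(\rho_{t_0}\|\rho_{t_0+\epsilon_n})]\bigr)\to J_{t_0}^{-1}$, and pass to the limit along this sequence. The main subtlety, rather than a deep obstacle, is the handling of the fidelity-based definition \eqref{LLP2} of $J_{t_0}$ without any differentiability of the parametrisation: the entire ``quantum'' content of the inequality is concentrated in the single invocation of fidelity monotonicity, while unbiasedness is used only at the classical level.
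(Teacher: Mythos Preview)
Your proof is correct and follows essentially the same route as the paper: reduce to the classical distributions $P_{t_0},P_{t_0+\epsilon}$ via the measurement, use the Cauchy--Schwarz inequality on the factorisation $p_{t_0+\epsilon}-p_{t_0}=(\sqrt{p_{t_0+\epsilon}}-\sqrt{p_{t_0}})(\sqrt{p_{t_0+\epsilon}}+\sqrt{p_{t_0}})$ together with $(a+b)^2\le 2(a^2+b^2)$ and unbiasedness, and then invoke monotonicity of the fidelity. The only cosmetic difference is that the paper takes $P_{t_0}+P_{t_0+\epsilon}$ as the dominating measure rather than an abstract $\mu$, and your handling of the $\epsilon\to 0$ limit via a sequence realising the $\liminf$ in \eqref{LLP2} is slightly more explicit than the paper's one-line passage to the limit.
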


The proof uses the notion of fidelity  between two classical probability distributions: 
for two given distributions $P$ and $Q$ on a probability space ${\cal X}$, 
we define the fidelity $F(P\|Q)$ as follows.
Let $f_P$ and $f_Q$ be the  Radon-Nikod\'{y}m derivatives of $P$ and $Q$ 
with respect to $P+Q$, respectively.
Then, the fidelity $F(P\|Q)$ can be defined as
\begin{align}\label{deffid}
F(P\|Q)  :  = \int_{{\cal X}} \sqrt{f_P(x)}\sqrt{f_Q(x)} (P+Q)(dx)  \, .
\end{align}
With the above definition, the fidelity satisfies an information processing inequality: for every classical channel $\map G$, one has $F(\map G (P) \|   \map G(Q))  \ge  F(  P\|Q)$.  For  a family of probability distributions $\{P_\theta\}_{\theta \in  \set \Theta}$, we define the Fisher information as  
\begin{align}
J_{t_0}:=\liminf_{\epsilon\to 0}\frac{8 \big[ \,1-F\left(P_{t_0} ||P_{t_0+\epsilon}\right)   \big]}{\epsilon^2 } \, .
\end{align}
When the probability distributions are over a discrete set, their Fisher information coincides with the quantum SLD of the corresponding diagonal matrices.

\begin{proofof}{Theorem \ref{L16}}
Without loss of generality, we assume $t_0=0$.
We define the probability distribution $P_t$
by $P_t(B):= \Tr\left[ \,\rho_t M(B) \,\right]$.
 Then, the information processing inequality of the fidelity \cite{fuchs1996distinguishability} yields the bound $F(\rho_{t_0} ||\rho_{t_0+\epsilon}) \le F(P_0\|P_\epsilon)$. Hence, 
it is sufficient to show \eqref{O6} for the probability distribution family $\{P_t\}$.

Let $f_0$ and $f_\epsilon$ 
be the  Radon-Nikod\'{y}m derivatives of $P_0$ and $P_{\epsilon}$ with respect to $P_{0}+P_{\epsilon}$. Denoting the estimate by $\hat{t}$, we have
\begin{align*}
V_{\epsilon} (\hat{t})
=&
\int_{{\cal X}}
(\hat{t}(x)-\epsilon)^2
P_{\epsilon}(d x) 
=
\int_{{\cal X}}
\hat{t}(x)^2
P_{\epsilon}(d x) 
-2\epsilon \int_{{\cal X}}
\hat{t}(x)
P_{\epsilon}(d x) 
+\epsilon^2 \\
=&
\int_{{\cal X}}
\hat{t}(x)^2
P_{\epsilon}(d x) 
-2\epsilon^2 +\epsilon^2
=
\int_{{\cal X}}
\hat{t}(x)^2
P_{\epsilon}(d x) 
-\epsilon^2 ,
\end{align*}
and therefore
\begin{align}
& 2 V_{0} (\hat{t}) 
+2 V_{\epsilon} ( \hat{t}) +2\epsilon^2 
=
\int_{{\cal X}}2 
\hat{t}(x)^2
(P_{0}(d x) +P_{\epsilon}(d x) )
  \nonumber \\
= &
\int_{{\cal X}}2 
\hat{t}(x)^2
(f_0(x)+f_\epsilon(x)) (P_{0}+P_{\epsilon})(d x) 
  \nonumber \\
\ge &
\int_{{\cal X}}
\hat{t}(x)^2
(\sqrt{f_0(x)}+\sqrt{f_\epsilon(x)})^2 (P_{0}+P_{\epsilon})(d x) .
\label{O1}
\end{align}
Also, \eqref{deffid} implies the relation
\begin{align}
\int_{{\cal X}}
(\sqrt{f_0(x)}-\sqrt{f_\epsilon(x)})^2 
(P_{0}+P_{\epsilon})(d x) 
=2-2F(P_0\|P_\epsilon).\label{O2}
\end{align}
Hence, Schwartz inequality implies 
\begin{align}
& 
\int_{{\cal X}}\hat{t}(x)^2
(\sqrt{f_0(x)}+\sqrt{f_\epsilon(x)})^2 (P_{0}+P_{\epsilon})(d x) 
\cdot 
\int_{{\cal X}}
(\sqrt{f_0(x)}-\sqrt{f_\epsilon(x)})^2 
(P_{0}+P_{\epsilon})(d x) 
 \nonumber \\
\ge &
\Big(
\int_{{\cal X}}\hat{t}(x)
(\sqrt{f_0(x)}+\sqrt{f_\epsilon(x)})
(\sqrt{f_0(x)}-\sqrt{f_\epsilon(x)})
 (P_{0}+P_{\epsilon})(d x) 
\Big)^2 
\nonumber \\
=&
\Big(
\int_{{\cal X}}\hat{t}(x)
(f_0(x)-f_\epsilon(x))
 (P_{0}+P_{\epsilon})(d x) 
\Big)^2 
\nonumber \\
=&
\Big(
\int_{{\cal X}}
\hat{t}(x)
 (P_{0}-P_{\epsilon})(d x) 
\Big)^2 
=\epsilon^2 \label{O3}
\end{align}
Combining \eqref{O1}, \eqref{O2},  and \eqref{O3}
we have
\eqref{O6}.
\end{proofof}

\subsection{Local asymptotic covariance} 
When many copies of the state  $\rho_t$ are available, the estimation of $t$ can be reduced to a local neighbourhood of a fixed point $t_0 \in \set \Theta$.  
Motivated by Lemma \ref{MGR},  we adopt the following parametrization of the $n$-copy state
\begin{align}\label{local-model} 
\rho_{t_0,t}^{n}
:=\rho_{t_0+\frac{t}{\sqrt{n}}}^{\otimes n} \, ,  \qquad  t\in \set \Delta_n   :=      \sqrt n  \,       (   \set \Theta-t_0) \, ,    
\end{align} 
having used the notation $a \,  \set \Theta  +  b : =  \{  a  x+   b \,  |  x\in\set \Theta\}$, for two arbitrary constants $a, b\in  \R$.  
With this parametrization, the  local  $n$-copy model is
$  \left\{\rho_{t_0,t}^{n}           
    \right\}_{t  \in  \set \Delta_n} $.
Note that,  for every $t\in\R$, there exists a sufficiently large $n$ such that $ \set \Delta_n$ contains $t$.     As a consequence,  one has   $ \bigcup_{n\in \N}\,  \set \Delta_n =  \R$. 
 
Assuming $t_0$ to be known, the task is to estimate the local parameter $t\in\R$, 
by performing a measurement   on the $n$-copy state $\rho_{t_0,t}^n$ 
and then mapping the obtained data to an estimate $\hat{t}_n$.   The whole estimation strategy can be described by a sequence of POVMs $\seq{m}:=\{M_n\}$.   
For every Borel set $\set{B}\subset\R$, we adopt the standard notation
\begin{align*}
M_n(\set{B}):=\int_{\hat{t}_n\in\set{B}}~M_n(d\hat{t}_n)  \, . 
\end{align*}

In the  existing works on quantum state estimation,
 the error criterion  is defined in terms of the difference
between the global estimate $t_0+\frac{\hat{t}_n}{\sqrt{n}}$
and the global true value $t_0+\frac{t}{\sqrt{n}}$.  
Instead, here we focus on the difference between the local estimate $\hat{t}_n$  and the true value of the local parameter $t$. With this aim in mind, we consider the probability distribution  
\begin{align}\Label{pdf}
\wp^{n}_{t_0,t|M_n}(\set{B})
:=\Tr  \left [\rho_{t_0,t}^{n} M_n
\left( \frac{\set{B}}{\sqrt{n} }+{t}_0\right)
\right] \, .
\end{align}

We focus  on the behavior of $\wp^{n}_{t_0,t|M_n}$ in the large $n$ limit, assuming the following condition:

\begin{condition}[Local asymptotic covariance for a single-parameter] 
A sequence of measurements $\seq{m}=\{M_n\}$ satisfies local asymptotic covariance 
\footnote{The counterpart of this condition in classical statistics is known as asymptotic equivalent-in-law or regular. See, for instance, page 115 of \cite{van1998asymptotic}.} 
when
\begin{enumerate}
\item
The distribution $\wp^{n}_{t_0,t|M_n}$ 
(\ref{pdf}) converges  
 to a distribution $\wp_{t_0,t| \seq{m}}$, called 
the limiting distribution, namely
\begin{align}\Label{limiting-single}
\wp_{t_0,t| \seq{m}}\left(\set{B}\right):=
\lim_{n \to \infty}\wp^{n}_{t_0,t|M_n}\left(\set{B}\right)
\end{align}
for any Borel set $\set{B}$.

\item the limiting distribution satisfies  the relation
\begin{align}
\wp_{t_0,t| \seq{m}}(\set{B}+t)=\wp_{t_0,0| \seq{m}}(\set{B})\Label{LL}
\end{align}
for any $t\in\R$, which is equivalent to the condition
\begin{align}
\lim_{n \to \infty}
\Tr  \left [\rho_{t_0,t}^{n} M_n
\left( \frac{\set{B}}{\sqrt{n} }+{t}_0+\frac{t}{\sqrt{n} }\right)
\right]
=
\lim_{n \to \infty}
\Tr  \left [\rho_{t_0,0}^{n} M_n
\left( \frac{\set{B}}{\sqrt{n} }+{t}_0\right)
\right].
\end{align}
\end{enumerate}
\end{condition}

Using the limiting distribution, we can faithfully approximate the tail probability as
\begin{align}\Label{limit-real}
\mathbf{Prob}\left\{  \left|\hat{t}_n-t\right| > \epsilon\right\}&= \wp_{t_0,t|\seq{m}}((-\infty,-\epsilon)\cup(\epsilon,\infty))+\epsilon_n
\end{align}
where  the $\epsilon_n$ term  vanishes with $n$ for every fixed $\epsilon$.

For convenience, one may be tempted to require  the existence of a probability density function (PDF) 
of the limiting distribution
$\wp_{t_0,t| \seq{m}}$.
However, the existence of a PDF is already guaranteed by  the following lemma.
\begin{lem}\label{LGD}
When a sequence $\seq{m}:=\{M_n\}$ of POVMs  satisfies local asymptotic covariance, the  limiting distribution $\wp_{t_0,t| \seq{m}} $ admits a PDF, denoted by $\wp_{t_0,0|\seq{m},d}$. 
\end{lem}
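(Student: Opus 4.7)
\emph{Proof plan.} I would show that $\mu:=\wp_{t_0,0|\seq m}$ is absolutely continuous with respect to Lebesgue measure on $\R$; by the covariance relation \eqref{LL}, which identifies $\wp_{t_0,t|\seq m}$ with the translate $\mu_t(\set B) := \mu(\set B-t)$, the same property then transfers to every $\wp_{t_0,t|\seq m}$ and produces the claimed density $\wp_{t_0,0|\seq m,d}$.

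The core step is to prove that translation is continuous at the origin in total variation, i.e.\ $\|\mu-\mu_t\|_{TV}\to 0$ as $t\to 0$. At finite $n$, the monotonicity of the trace distance under the POVM $M_n$, followed by the Fuchs--van de Graaf inequality, yields
\begin{align*}
\|\wp^n_{t_0,0|M_n}-\wp^n_{t_0,t|M_n}\|_{TV}\le D\!\left(\rho_{t_0}^{\otimes n},\rho_{t_0+t/\sqrt n}^{\otimes n}\right)\le\sqrt{1-F\!\left(\rho_{t_0}^{\otimes n},\rho_{t_0+t/\sqrt n}^{\otimes n}\right)^{2}},
\end{align*}
and Lemma \ref{MGR} identifies $\limsup_n F\!\left(\rho_{t_0}^{\otimes n},\rho_{t_0+t/\sqrt n}^{\otimes n}\right)=e^{-J_{t_0}t^2/8}$ along the subsequence produced by the $\liminf$ in the definition \eqref{LLP2} of $J_{t_0}$ (tacitly assuming $J_{t_0}<\infty$). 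Total variation is lower semi-continuous under setwise convergence of probability measures, so condition~(1) of local asymptotic covariance lets me pass to the limit:
\begin{align*}
\|\mu-\mu_t\|_{TV}\le\liminf_{n\to\infty}\|\wp^n_{t_0,0|M_n}-\wp^n_{t_0,t|M_n}\|_{TV}\le\sqrt{1-e^{-J_{t_0}t^2/4}}\to 0\quad\text{as }t\to 0.
\end{align*}

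With TV-continuity of translation in hand, I would approximate $\mu$ by absolutely continuous probability measures via convolution with a smooth approximate identity $\phi_\delta$ (nonnegative, $\int\phi_\delta\,dx=1$, supported in $[-\delta,\delta]$). The convolution $\mu*\phi_\delta$ is automatically absolutely continuous with density $x\mapsto\int\phi_\delta(x-y)\,d\mu(y)$, and viewing it as a mixture of translates $\int\phi_\delta(y)\,\mu_y\,dy$ gives $\|\mu-\mu*\phi_\delta\|_{TV}\le\sup_{|y|\le\delta}\|\mu-\mu_y\|_{TV}\to 0$. Since the class of absolutely continuous probability measures is closed under total variation limits---for any Lebesgue-null set $N$, $\mu(N)=\lim_{\delta\to 0}(\mu*\phi_\delta)(N)=0$---it follows that $\mu$ is itself absolutely continuous.

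The delicate point is the semi-continuity step: extracting, from the $\liminf$-based definition \eqref{LLP2} of the SLD Fisher information together with Lemma \ref{MGR}, a quantitative bound that survives the passage from finite-$n$ to the limiting distribution. Once that is established, the rest---translation-continuity of $\mu$ in TV and closedness of the absolutely continuous class under TV limits---is a standard Sobolev-type smoothing argument.
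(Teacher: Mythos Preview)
Your argument is correct and genuinely different from the paper's. You establish absolute continuity by proving that $t\mapsto\mu_t$ is total-variation continuous (via monotonicity of trace distance, Fuchs--van de Graaf, Lemma~\ref{MGR}, and lower semicontinuity of TV under setwise convergence), then invoke the classical fact that TV-continuity of translates forces absolute continuity (your mollification argument). The paper instead proves a self-contained structural lemma (Lemma~\ref{L30}): if a location-shift family has finite Fisher information uniformly over all finite coarse-grainings, then the CDF is differentiable and the resulting density is bounded by $\sqrt{J_0}$ and H\"older-$1/2$ continuous. It then verifies the hypothesis by bounding the coarse-grained classical Fisher information of $\{\wp^n_{t_0,t|M_n}\}$ by the quantum SLD Fisher information and passing to the limit.

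Your route is shorter and uses standard tools; the paper's route buys more. In particular, Lemma~\ref{L30} delivers not just a density but the pointwise bound $\wp_{t_0,0|\seq m,d}(x)\le\sqrt{J_{t_0}}$ and H\"older continuity, and the boundedness is explicitly invoked in Theorem~\ref{ThY}. So while your proof fully settles Lemma~\ref{LGD} as stated, it would not supply the PDF bound that the paper later uses. The ``delicate point'' you flag---controlling the passage to the limit using only the $\liminf$ definition \eqref{LLP2}---is real, but the paper's own proof of Lemma~\ref{MGR} and its proof of Lemma~\ref{LGD} (which asserts $J_{\mathcal A,t}=\lim_n J_{\mathcal A,n,t}$) carry the same tacit regularity, so your argument is at the same level of rigor.
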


The proof is provided  in Appendix \ref{app:lemma1}.

\if0
\begin{remark}
We do not need the uniformity for the convergence \eqref{limiting-single}
for the discussion in this Section, in particular, for the proof of Theorem \ref{ThY}.
The uniformity will be used in Theorem \ref{theo-cost}, in which,
the order of taking the limit needs to be exchanged with other operations like taking derivative and integral. 
The uniformity assumption makes sure that the limiting distribution fits the requirement to be a probability distribution with enough smoothness to define quantities like Fisher information.
\end{remark}
\fi

\subsection{MSE bound for the limiting distribution}\Label{S2-2-2}

As a figure of merit, we focus on the mean square error (MSE)
$V[\wp_{t_0,t| \seq{m}}]$
of the limiting distribution $\wp_{t_0,t| \seq{m}}$, namely
\begin{align*}
V[\wp_{t_0,t| \seq{m}}]:=\int_{-\infty}^{\infty} (\hat{t}-t)^2\, \wp_{t_0,t| \seq{m}}\left(d\hat{t}\right) \, .
\end{align*}
Note that local asymptotic covariance implies that the MSE is independent of $t$. 

The main result of the section is  the following theorem: 
\begin{theo}[MSE bound for single-parameter estimation]\Label{ThY}
When a sequence $\seq{m}:=\{M_n\}$ of POVMs  satisfies local asymptotic covariance, the MSE of its limiting distribution is lower bounded as
\begin{align}
V[\wp_{t_0,t| \seq{m}}]
\ge J^{-1}_{t_0} \, ,\Label{TY}
\end{align}
where $J_{t_0}$ is the SLD Fisher information of the model $\{\rho_t\}_{t\in\set \Theta}$. 
The PDF of $\wp_{t_0,t| \seq{m}}$ is upper bounded by $\sqrt{J_{t_0}}$.
When the PDF of $\wp_{t_0,t| \seq{m}}$ is differentiable with respect to $t$,
equality in (\ref{TY}) holds if and only if
$\wp_{t_0,t| \seq{m}}$
is the normal distribution with average zero and variance 
$J_{t_0}^{-1}$.
\end{theo}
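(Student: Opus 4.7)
The plan is to apply the fidelity-based Cram\'er--Rao strategy of Theorem~\ref{L16} directly to the limiting distributions $\wp_0:=\wp_{t_0,0|\seq{m}}$ and $\wp_\epsilon:=\wp_{t_0,\epsilon|\seq{m}}$, exploiting the translation-family structure that local asymptotic covariance~\eqref{LL} forces on $\{\wp_{t_0,t|\seq{m}}\}_{t\in\R}$. Since $\wp_\epsilon$ is the $\epsilon$-translate of $\wp_0$, the MSE $V:=V[\wp_{t_0,t|\seq{m}}]$ is automatically independent of $t$; assuming $V<\infty$ (the claim is otherwise vacuous), Cauchy--Schwarz guarantees that $\mu:=\int\hat{t}\,\wp_0(d\hat{t})$ is finite, and translation covariance immediately yields the identities
\begin{align*}
\int\hat{t}\,(\wp_0-\wp_\epsilon)(d\hat{t})=-\epsilon,\qquad \int\hat{t}^{\,2}(\wp_0+\wp_\epsilon)(d\hat{t})=2V+\epsilon^2+2\epsilon\mu,
\end{align*}
\emph{without} any unbiasedness hypothesis at finite $n$.

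The crucial new input is the bound $F(\wp_0,\wp_\epsilon)\ge e^{-J_{t_0}\epsilon^2/8}$ at each fixed $\epsilon>0$. For every finite $n$, the quantum data-processing inequality yields
\begin{align*}
F\big(\wp^n_{t_0,0|M_n},\wp^n_{t_0,\epsilon|M_n}\big)\ge F\big(\rho_{t_0}^{\otimes n}\big\|\rho_{t_0+\epsilon/\sqrt n}^{\otimes n}\big),
\end{align*}
whose $\limsup_n$ equals $e^{-J_{t_0}\epsilon^2/8}$ by Lemma~\ref{MGR}. Local asymptotic covariance provides joint weak convergence of $(\wp^n_{t_0,0|M_n},\wp^n_{t_0,\epsilon|M_n})$ to $(\wp_0,\wp_\epsilon)$, and the classical Bhattacharyya coefficient is upper semicontinuous under weak convergence, because it equals $\inf_{\{A_i\}}\sum_i\sqrt{\wp_0(A_i)\wp_\epsilon(A_i)}$ over finite measurable partitions (a direct consequence of classical data processing applied to coarse-grainings). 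Chaining these ingredients gives $F(\wp_0,\wp_\epsilon)\ge\limsup_n F(\wp^n_{t_0,0|M_n},\wp^n_{t_0,\epsilon|M_n})\ge e^{-J_{t_0}\epsilon^2/8}$.

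With the fidelity bound in hand, I would replay the Schwartz-inequality argument of the proof of Theorem~\ref{L16}: writing $f_0,f_\epsilon$ for the densities of $\wp_0,\wp_\epsilon$ with respect to $\wp_0+\wp_\epsilon$ and using $(\sqrt{f_0}+\sqrt{f_\epsilon})^2\le 2(f_0+f_\epsilon)$,
\begin{align*}
\int\hat{t}^{\,2}(\wp_0+\wp_\epsilon)\cdot 4\,\big(1-F(\wp_0,\wp_\epsilon)\big)\ge\Big(\int\hat{t}\,(\wp_0-\wp_\epsilon)\Big)^{\!2}=\epsilon^2.
\end{align*}
Inserting the translation identities and the fidelity bound,
\begin{align*}
V\ge\frac{\epsilon^2}{8\,(1-e^{-J_{t_0}\epsilon^2/8})}-\frac{\epsilon^2}{2}-\epsilon\mu\;\xrightarrow{\epsilon\to 0}\;J_{t_0}^{-1},
\end{align*}
which establishes~\eqref{TY}.

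For the PDF bound and the equality condition, a Taylor expansion of the fidelity bound around $\epsilon=0$ identifies the classical Fisher information $J_c$ of the translation family at $t=0$ (equivalently $\int((\sqrt{g})')^2=J_c/4$) and delivers $J_c\le J_{t_0}$. The one-dimensional Sobolev estimate $\sqrt{g}(\hat{t})^{\,2}\le 2\|\sqrt{g}\|_2\|(\sqrt{g})'\|_2=\sqrt{J_c}$ then yields $g(\hat{t})\le\sqrt{J_c}\le\sqrt{J_{t_0}}$, with $g$ the PDF from Lemma~\ref{LGD}. Equality in~\eqref{TY}, via the decomposition $V=\mathrm{Var}[\wp_0]+\mu^2\ge J_c^{-1}\ge J_{t_0}^{-1}$, forces $\mu=0$, $J_c=J_{t_0}$, and saturation of the classical Cram\'er--Rao bound for location families, which is achieved only by the Gaussian of mean zero and variance $J_{t_0}^{-1}$. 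The main obstacle in the plan is the passage to the limit in the fidelity: the upper semicontinuity of classical fidelity under weak convergence must be combined with the $\limsup$ form of Lemma~\ref{MGR} carefully enough to yield an honest pointwise bound on $F(\wp_0,\wp_\epsilon)$ for every $\epsilon$, rather than merely along a subsequence.
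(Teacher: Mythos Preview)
Your argument for the main inequality \eqref{TY} is essentially the paper's own proof: the paper also passes through the fidelity-based Cram\'er--Rao inequality (Theorem~\ref{L16}), the upper semicontinuity of the Bhattacharyya coefficient under weak convergence (this is exactly Lemma~\ref{ALL8}), quantum data processing, and Lemma~\ref{MGR}. The only cosmetic difference is that the paper first recenters the estimator to $\hat t-\mu$ so as to invoke Theorem~\ref{L16} directly, whereas you keep $\mu$ explicit in the Schwarz step and let it disappear in the $\epsilon\to 0$ limit; both are fine.

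The one genuine divergence is the PDF bound. The paper obtains $g(x)\le\sqrt{J_{t_0}}$ via Lemma~\ref{L30}, a self-contained argument based on two-cell coarse-grainings that simultaneously proves existence and H\"older continuity of the density without any a priori regularity. Your Sobolev route $g(\hat t)\le 2\|\sqrt g\|_2\|(\sqrt g)'\|_2=\sqrt{J_c}$ is slicker but tacitly uses that $J_c<\infty$ forces $\sqrt g\in H^1(\mathbb R)$; this is a true (and classical) regularity fact for translation families, but it is not free, and the paper's Lemma~\ref{L30} avoids it entirely. For the equality characterization, your decomposition $V=\mathrm{Var}[\wp_0]+\mu^2\ge J_c^{-1}\ge J_{t_0}^{-1}$ leading to the Gaussian case is the same content as the paper's Schwarz-with-logarithmic-derivative step.
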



\begin{proofof}{Theorem \ref{ThY}}
When 
the integral $\int_{\mathbb{R}} \hat{t} 
\wp_{t_0,0|\seq{m}}(d \hat{t})$
does not converge,
$V[\wp_{t_0,t| \seq{m}}]$ is infinite and satisfies \eqref{TY}.
Hence, we can assume that the above integral converges.
Further, we can assume that the outcome $\hat{t}$ satisfies
the unbiasedness condition 
$\int_{\mathbb{R}} \hat{t} 
\wp_{t_0,t|\seq{m}}(d \hat{t})=t$.
Otherwise, we can replace $\hat{t}$ by 
$\hat{t}_0:=\hat{t}-\int_{\mathbb{R}} \hat{t}'
\wp_{t_0,0|\seq{m}}(d \hat{t}')$
because the estimator $\hat{t}_0$ 
has a smaller MSE than $\hat{t}$
and satisfies the unbiasedness condition due to 
the covariance condition.
Hence, Theorem \ref{L16} guarantees 
\begin{align}
V[\wp_{t_0,t| \seq{m}}]
=V[\wp_{t_0,0| \seq{m}}]
&\ge
\Big(\liminf_{\epsilon \to 0} \frac{8\left(1- F(\wp_{t_0,0|\seq{m}}||\wp_{t_0,\epsilon|\seq{m}})\right)}{\epsilon^2} \Big)^{-1}.
\Label{TU9}
\end{align}
Applying Lemma \ref{ALL8} to $\{\wp_{t_0,t|\seq{m}}\}$, we have
\begin{align}
&\liminf_{\epsilon \to 0} \frac{8\left(1- F(\wp_{t_0,0|\seq{m}}||
\wp_{t_0,\epsilon|\seq{m}})\right)}{\epsilon^2}
\nonumber \\
\stackrel{(a)}{\le} &
  \liminf_{\epsilon \to 0}\liminf_{n \to \infty}
\frac{8\left(1-F\left(\wp^n_{t_0,0|M_n}||\wp^n_{t_0,\epsilon|M_n} \right)\right)}{\epsilon^2} 
\nonumber \\
\stackrel{(b)}{\le} & \liminf_{\epsilon\to 0}\liminf_{n \to \infty}
\frac{8\left(1-F\left(\rho_{t_0}^{\otimes n}||\rho_{t_0+\frac{\epsilon}{\sqrt{n}}}^{\otimes n}\right)\right)}{\epsilon^2}
\nonumber \\
\stackrel{(c)}{=} & 
 \liminf_{\epsilon\to 0}
\frac{8\left(1-
e^{-\frac{J_{t_0} \epsilon^2}{8}}
\right)}
{\epsilon^2}
= J_{t_0}\Label{TUX} \, .
\end{align}
The inequality $(a)$ holds by Lemma \ref{ALL8} from Appendix \ref{app:asymptoticstuff}, and the inequality $(b)$ comes from the data-processing inequality of the fidelity. 
The equation $(c)$ follows from Lemma \ref{MGR}.
Finally, substituting Eq. (\ref{TUX}) into Eq. (\ref{TU9}), we have the desired bound (\ref{TY}).

Now, we denote the PDF of $\wp_{t_0,0|\seq{m}}$ by $\wp_{t_0,0|\seq{m},d}$.
In Appendix \ref{app:lemma1}  the proof of Lemma \ref{LGD} shows that
we can apply Lemma \ref{L30} to $\{\wp_{t_0,t| \seq{m}}\}_t$.
Since the Fisher information of $\{\wp_{t_0,t| \seq{m}}\}_t$
is upper bounded by $ J_{t_0}\Label{TU}$,
this application guarantees that 
$\wp_{t_0,t| \seq{m},d}(x)\le \sqrt{J_{t_0}}$ for $x \in \mathbb{R}$.

When the PDF $\wp_{t_0,t| \seq{m},d}$ is differentiable, 
to derive the equality condition in Eq. (\ref{TY}),
we show \eqref{TU9} in a different way.
Let $l_{t_0,t}(x)$ be the logarithmic derivative of $\wp_{t_0,t|\seq{m},d}(x)$, defined as 
$l_{t_0,t}(x)\cdot \wp_{t_0,t|\seq{m},d}(x): =\frac{\partial \log \wp_{t_0,t|\seq{m},d}(x)}{\partial t}
=\frac{\partial \log \wp_{t_0,0|\seq{m},d}(x-t)}{\partial t}$.  
By Schwartz inequality, we have
\begin{align}
V[\wp_{t_0,0| \seq{m}}]
&\ge\frac{\left|\int_{-\infty}^{\infty} \hat{x}\,l_{t_0,0}(\hat{x})
\,\wp_{t_0,0|\seq{m}}(\hat{x})d\hat{x}\right|^2}
{\int_{-\infty}^{\infty} l_{t_0,0}^2(\hat{x}) \wp_{t_0,0|\seq{m}}(\hat{x})d\hat{x}}.\Label{TU2E}
\end{align}
The numerator on the right hand side of Eq. (\ref{TU2E}) can be evaluated by noticing that
\begin{align*}
\int_{-\infty}^{\infty} \hat{x}\,l_{t_0,0}(\hat{x})\,\wp_{t_0,0|\seq{m}}(\hat{x})
d\hat{x}
&=\left.\frac{\partial}{\partial x}\int_{-\infty}^{\infty} \hat{x}\,\wp_{t_0,x|\seq{m}}(\hat{x})
d\hat{x}\right|_{x=0}.
\end{align*}
By local asymptotic covariance, this quantity can be evaluated as
\begin{align}
\int_{-\infty}^{\infty} \hat{x}\,l_{t_0,0}(\hat{x})\,\wp_{t_0,0|\seq{m}}(\hat{x})d\hat{x}
=\left.\frac{\partial }{\partial x}\left[\int_{-\infty}^{\infty}( \hat{x}+x)\,\wp_{t_0,0|\seq{m}}(\hat{x})
d\hat{x}\right]\right|_{x=0}=1.\Label{TU1C}
\end{align}
Hence, (\ref{TU2E})  coincides with \eqref{TU9}.
The denominator on the right hand side of (\ref{TU2E}) equals 
the right hand side of \eqref{TU9}.
The equality in Eq. (\ref{TU2E}) holds if and only if
$\int_{-\infty}^{\infty} \hat{x}^2 \wp_{t_0,0|\seq{m},d}( \hat{x})d \hat{x} =J^{-1}_{t_0}$ and
$\frac{d}{dx} \log \wp_{t_0,0|\seq{m},d}(\hat{x})$ is proportional to $\hat{x}$, which implies that 
$\wp_{t_0,0|\seq{m}}$ 
is the normal distribution with average zero and variance $J^{-1}_{t_0}$. 
\end{proofof}

\medskip

The RHS of (\ref{TY}) can be regarded as the
limiting distribution version of the SLD quantum Cram\'er-Rao bound.  Note that, when  the  limiting PDF is differentiable and the bound is attained,  
  the probability distribution  $\wp^{n}_{t_0,t|M_n}$  is approximated
 (in the pointwise sense) by a normal distribution with average zero and variance $\frac{1}{n J_{t_0}}$. 
 Using this fact, we will show that there exists a sequence
of POVMs  that attains the equality (\ref{TY}) at all points uniformly.   The optimal sequence of POVMs is constructed explicitly in Section \ref{sec:multi}.

\subsection{Comparison between local asymptotic covariance and other conditions}

We conclude the section by  discussing the relation between  asymptotic covariance and  other conditions that are often imposed on measurements. This subsection  is not necessary for understanding the technical results in the next sections and can be skipped at a first reading.

Let us start with the unbiasedness condition.  Assuming unbiasedness, one can derive  the quantum  Cram\'er-Rao  bound  on the MSE \cite{holevo-book}. 
Holevo showed the attainability of the quantum Cram\'er-Rao  bound when estimating displacements in Gaussian systems \cite{holevo-book}.   

The  disadvantage of  unbiasedness is that it is  too restrictive, as  it is satisfied only by  a small class of measurements. 
Indeed, the unbiasedness condition for the estimator $M$ requires 
the condition
$ \Tr \left[E \frac{d^i \rho_t}{d t^i}\right]\, |_{t=t_0}=0$ for $i \ge 2$ with 
$ E:= \int \hat{t} M(d\hat{t})$ as well as the condition $ \Tr\left[   E \frac{d \rho_t}{d t}\right] \,|_{t=t_0}=1$.
   In certain situations, the above conditions might be incompatible.
    For example,
 consider a family of  qubit states  $\rho_t:=\frac12\left(I+\vec{n}_t\cdot\vec{\sigma}\right)$. 
When the Bloch vector $\vec{n}_t$ has a non-linear dependence on $t$ and
the set of higher order derivatives $\frac{d^i \rho_t}{d t^i}|_{t=t_0}$ with 
$i \ge 2$ spans the space of traceless Hermittian matrices, 
no unbiased estimator can exist.  
In contrast,  local asymptotic covariance is only related  to 
the first  derivative $\frac{d \rho_t}{d t}|_{t=t_0}$ 
because 
the contribution of higher order derivatives to the variable $\hat{t}_n$
has order $o\left( \frac{1}{\sqrt{n}}\right)$ and 
vanishes under the condition of the local asymptotic covariance.

One can see that the unbiasedness condition implies local asymptotic covariance 
with the parameterization 
$\rho_{t_0+\frac{t}{\sqrt{n}}}$ 
in the following sense. 
When we have $n$ (more than one) input copies, 
we can construct unbiased estimator by applying 
a single-copy unbiased estimator $M$ satisfying Eq. (\ref{unbiased}) to all copies
as follows.
For the $i$-th outcome $x_i$, we take the rescaled average 
$
\sum_{i=1}^n \frac{x_i}{n}$, 
which satisfies the unbiasedness \eqref{unbiased} for the parameter $t$
as well.
When the single-copy estimator $M$ has variance $v$ at $t_0$, 
{which is lower bounded by the Cram\'{e}r-Rao inequality,} 
this estimator has  variance $v/n$ at $t_0$.
In addition,
the average \eqref{BUL}
 of the obtained data satisfies the  local asymptotic covariance 
because 
the rescaled estimator $\sqrt{n} ( (\sum_{i=1}^n \frac{x_i}{n})-t_0)$
follows the Gaussian distribution with variance $v$
in the large $n$ limit by the central limit theorem;
the center of the Gaussian distribution is pinned at the true value of the parameter by unbiasedness; 
the shape of the Gaussian is independent of the value $t$ and depends only on $t_0$; thus locally asymptotic covariance holds.

The above discussion can be extended to the  multiple-copy case as follows.
  Suppose that $M_{\ell}$ is an unbiased measurement for the $\ell$-copy state
$\rho_{t_0+\frac{t}{\sqrt{n}}}^{\otimes \ell}$ with respect to the parameter $t_0+\frac{t}{\sqrt{n}}$, 
where $\ell$ is an arbitrary finite integer.   From the measurement  $M_{\ell}$  we can construct a measurement for the  $n$-copy state with $n= k\ell+i$ and $i <\ell $
by applying the measurement $M_\ell$ $k$ times and discarding the remaining $i$ copies.   In the following, we consider the limit where the total number $n$ tends to infinity, while $\ell$ is kept fixed.
When the variance of $M_\ell$ at $t_0$ is $v/\ell$,
the average 
\begin{align}
\sum_{i=1}^k \frac{x_i}{k}\label{BUL}
\end{align}
of the $k$ obtained data $x_1, \ldots, x_k$ satisfies local asymptotic covariance, i.e., 
the rescaled estimator $\sqrt{n} ((\sum_{i=1}^k \frac{x_i}{k})-t_0)$
follows the Gaussian distribution with variance $v$ in the large $n$ limit.
Therefore, for any unbiased estimator, there exists an estimator satisfying locally asymptotic covariance that has the same variance.


Another common condition, less restrictive than unbiasedness, is  \emph{local unbiasedness}. This condition depends on the true parameter $t_0$ and consists of the following two requirements 
\begin{align}\Label{asy-unbiased3}
\int \hat{t}\Tr \left[  \,  \rho_{t}^{\otimes \ell}\,  M_\ell(d\hat{t})\, \right  ] \, \Big|_{t=t_0}=t_0\\
\Label{asy-unbiased4}
\frac{d}{d t}\int \hat{t}\Tr\left[  \,  \rho_{t}^{\otimes \ell} \, M_\ell(d\hat{t})\right] \,\Big|_{t=t_0}=1\, , 
\end{align}
{where $\ell$ is a fixed, but otherwise arbitrary, integer. }
The derivation of the quantum Cram\'er-Rao bound still holds, because it uses only the condition  (\ref{asy-unbiased4}). 
    When  the parametrization $\rho_t$ is  $C^1$ continuous, the first derivative $\frac{d}{d t}\int \hat{t}\Tr \left[  \rho_{t}^{\otimes \ell}M_\ell (d\hat{t})\right]$ is continuous at $t=t_0$, and
the locally unbiased condition at $t_0$ yields
the local asymptotic covariance at $t_0$ in the 
way as Eq. \eqref{BUL}.    

Another relaxation of the unbiasedness  condition is  \emph{asymptotic unbiasedness}  \cite{hayashi2005statistical}
\begin{align}\Label{asy-unbiased1}
\lim_{n\to\infty}\int \hat{t}\Tr \left[ \,  \rho_{t}^{\otimes n} \, M_n(d\hat{t})\right] =t  \qquad &\forall t\in\set \Theta\\
\Label{asy-unbiased2}\lim_{n\to\infty}\frac{d}{d t}\int \hat{t}  \,  \Tr\left[  \,  \rho_{t}^{\otimes n}\, M_n(d\hat{t})\right] =1 \qquad   &\forall t\in\set \Theta \, .
\end{align}
The condition of asymptotic unbiasedness leads to a precision bound on MSE \cite[Chapter 6]{hayashi2017quantum}. 
The bound is given by the SLD Fisher information,  and therefore it is  attainable for Gaussian states.  
  However,    no attainable bound for qudit systems has been derived so far under the condition of asymptotic unbiasedness. 
Interestingly, one cannot directly use the  attainability for Gaussian systems to derive an attainability result for qudit systems,   despite the  asymptotic equivalence between Gaussian systems and qudit systems  stated by  {\em quantum local asymptotic normality (Q-LAN) }(see \cite{guta-lan,qlan} and Section \ref{subsec:condition}).  
The problem is that the error of Q-LAN goes to 0 for large $n$,  but the error in the derivative may not go to zero, and therefore  the condition (\ref{asy-unbiased2}) is not guaranteed to hold.  

In order to guarantee attainability of the quantum Cram\'er-Rao bound, one could  think of further loosening the condition of the asymptotic unbiasedness. An attempt to avoid the problem of the Q-LAN error could be to remove  condition (\ref{asy-unbiased2}) and keep only condition (\ref{asy-unbiased1}). 
This leads to an enlarged class of estimators, called {\em weakly asymptotically unbiased}.    The problem with these estimators is that no general MSE bound is known to hold at every point $x$.  For example,  one can find   \emph{superefficient} estimators \cite{hayashiCMP,lecam}, which violate  the Cram\'er-Rao bound  on a set of points.    Such a set must be of zero measure in the limit $n\to\infty$, but the violation of the bound may occur in a considerably large set when $n$ is finite.
In contrast,  local asymptotic covariance  guarantees the MSE  bound \eqref{TY} at every point $t$ where the local asymptotic convariance condition is satisfied.

\begin{table}
\begin{center}
    \begin{tabular}{ | p{2cm} | l | p{4cm} |}
    \hline
    {\bf Condition} & {\bf Definition} & {\bf Limitation} \\ \hline
    Unbiasedness  & $\int \hat{t}\Tr \left[  \, \rho_{t}^{\otimes \ell} \, M(d\hat{t})\right]=t \qquad \qquad \qquad \qquad  \forall t\in \set \Theta$ &  Too restrictive (unbiased estimator may not exist). \\
     \hline
    {Local}  &
    {\parbox[t]{5cm}{$\displaystyle \int \hat{t}\Tr \left[  \, \rho_{t_0}^{\otimes \ell}\,  M_\ell(d\hat{t})\right ] \, =t_0$}} 
    & Estimator depends on  \\
     {unbiasedness} &       &  the true parameter $t_0$.\\ 
     & 
    {$\displaystyle\frac{d}{dt}\int \hat{t}\Tr\left[  \, \rho_{t}^{\otimes \ell} \,M_\ell(d\hat{t})\,  \right]  |_{t=t_0}=1$}  & 
      \\ 
      \hline     
    {Asymptotic}  &{{$\displaystyle\lim_{n\to\infty}\int \hat{t}\Tr\left[ \, \rho_{t}^{\otimes n}\, M_n(d\hat{t}) \, \right] =t \quad ~ \qquad \forall t\in\set \Theta$}} & Attainability unknown  \\
     {unbiasedness} &&  for finite-dimensional systems. \\ 
     & {$\displaystyle\lim_{n\to\infty}\frac{d}{dt}\int \hat{t}\Tr \left[ \,\rho_{t}^{\otimes n} \, M_n(d\hat{t})\, \right]=1  \qquad \forall t \in \set \Theta$}  & \\
       \hline
    Weak  
    & \multirow{3}{*}{$\displaystyle\lim_{n\to\infty}\int \hat{t} \, \Tr\left[\,  \rho_{t}^{\otimes n} \, M_n(d\hat{t}) \,\right]=t  \qquad \quad ~\forall t\in\set \Theta$} 
    & \multirow{3}{*}{No lower bound to the MSE }\\
asymptotic &&   \multirow{3}{*}{is known to hold at every point.}\\ 
unbiasedness    &&\\
    \hline
    \end{tabular}\caption{Alternative conditions for deriving MSE bounds.}
\end{center}
\end{table}

\section{Holevo bound and Gaussian states families}\Label{s3}
\subsection{Holevo bound}\Label{s31} 
When studying  multiparameter estimation in quantum systems, 
we need to address the tradeoff between the precision of estimation of different parameters.
 This is done using two types of quantum extensions of Fisher information matrix: the SLD and the right logarithmic derivative (RLD). 
 
Consider  a multiparameter family of density operators  $\{\rho_{\vec{t}}\}_{\vec{t} \in  \set \Theta}$, where $\set\Theta$ is an open set in $\R^k$,  $k$ being  the number of parameters.   
Throughout this section, we assume that $\rho_{\vec{t}_0}$ is invertible 
and that  the parametrization is  $C^1$ in all parameters.   
Then,  the SLD $L_j$ and the RLD $\tilde{L}_j$ for the parameter $t_j$ are defined through the following equations
\begin{align*}
\frac{\partial \rho_{\vec{t}}}{\partial t_j}=\frac12\left(\rho_{\vec{t}}{L}_j+{L}_j\rho_{\vec{t}}\right), \quad
\frac{\partial \rho_{\vec{t}}}{\partial t_j}=\rho_{\vec{t}}\tilde{L}_j \, ,
\end{align*}
see {\em e.g.}    \cite{helstrom-book,holevo-book} and \cite[Section II]{hayashi2008asymptotic}.  
It can be seen from the definitions that the SLD $L_j$ can always be chosen to be Hermitian, 
while the RLD $\tilde{L}_j$ is in general not  Hermitian.

The SLD quantum Fisher information matrix $J_{\vec{t}}$ and
the RLD quantum Fisher information matrix $\tilde{J}_{\vec{t}}$ are the $k\times k$ matrices
defined as 
\begin{align}\Label{SLD-QFI}
\left(J_{\vec{t}}\right)_{ij}:=\Tr  \left[  \frac {\rho_{\vec{t}}(L_iL_j+L_jL_i)}2\right] ,\quad
\left(\tilde{J}_{\vec{t}}\right)_{ij}:=\Tr  \left[  
\tilde{L}_j^\dagger \rho_{\vec{t}} \tilde{L}_i \right] \, .
\end{align}
Notice that the SLD quantum Fisher information matrix $J_{\vec{t}}$ is a real symmetric matrix,
but
the RLD quantum Fisher information matrix $\tilde{J}_{\vec{t}}$ is not a real matrix in general.

A POVM $M$ is called an unbiased estimator for the family $\spc{S}=\{\rho_{\vec{t}}\}$ 
when the relation
\begin{align*}
\vec{t}=E_{\vec{t}}(M):=\int \vec{x} \Tr \big[ \rho_{\vec{t}} M(d \vec{x})\big] 
\end{align*} 
holds for any parameter $\vec{t}$.
For a POVM $M$, we define the mean square error (MSE) matrix $V_{\vec{t}}(M)$ as
\begin{align}
\Big( \, V_{\vec{t}}(M) \, \Big)_{i,j}:=\int (x_i- t_i) (x_j- t_j)
\Tr \big[  \, \rho_{\vec{t}} M(d \vec{x})\, \big] 
\end{align}

It is known that an unbiased estimator $M$ satisfies
the SLD type and RLD type of Cramer-Rao inequalities  
\begin{align}
V_{\vec{t}}(M)\ge J_{\vec{t}}^{-1}, \qquad {\rm and}  \qquad
V_{\vec{t}}(M)\ge \tilde{J}_{\vec{t}}^{-1} \, ,
\end{align}
respectively \cite{holevo-book}. 
Since it is not always possible  to minimize the MSE matrix under the unbiasedness condition, we minimize 
the weighted MSE $\tr  \big[ W V_{\vec{t}}(M) \big]$ for a given weight matrix $W \ge 0$, where $\tr$ denotes the trace of $k\times k$ matrices.
When a POVM $M$ is unbiased, one has the {\em RLD bound} \cite{holevo-book}  
\begin{align}
 \tr \big[ W V_{\vec{t}}(M) \big]   \ge 
 \map{C}_{{\rm R},{\cal M}}(W,\vec{t})  
 \end{align}
 with
 \begin{align}
  \nonumber  \map{C}_{{\rm R},{\cal M}}(W,\vec{t})  &:=
\min \Big\{\tr [V W]  ~\Big|~  V \ge \tilde{J}^{-1}_{\vec{t}}\Big\}  \\
&  =
\tr \Big [ \, W {\sf Re} (\tilde{J}_{\vec{t}})^{-1}\, \Big]  + \tr\Big| \sqrt{W} {\sf Im} (\tilde{J}_{\vec{t}})^{-1} \sqrt{W}\Big| \, . 
\Label{BFR}
\end{align}
In particular, when $W>0$,
the lower bound  \eqref{BFR} is attained  by the matrix 
$V= {\sf Re} (\tilde{J}_{\vec{t}})^{-1} + \sqrt{W}^{-1} | \sqrt{W} {\sf Im} (\tilde{J}_{\vec{t}})^{-1} \sqrt{W}|\sqrt{W}^{-1} $.

The RLD bound has a particularly tractable  form when the model is D-invariant:  
\begin{defi}  The model $\{\rho_{\vec{t}}\}_{\vec{t} \in  \set \Theta}$ is
\emph{D-invariant at $\vec{t}$}  when the space spanned by the SLD operators is invariant under the linear map  $\map{D}_{\vec{t}}$. For any operator $X$, $\map{D}_{\vec{t}}(X)$ is defined via the following equation 
\begin{align}
\frac { \rho_{\vec{t}}\map{D}_{\vec{t}}(X)+\map{D}_{\vec{t}}(X)\rho_{\vec{t}} } 2=i[X,\rho_{\vec{t}}]
\end{align} 
where $[A,B]  =  AB-BA$ denotes the  commutator. 
When the model is D-invariant at any point,
it is simply called D-invariant.
\end{defi}

For a D-invariant model, the RLD quantum Fisher information can be computed in terms of  the  {\em  D-matrix}, namely    the skew-symmetric matrix defined as 
\begin{align}\Label{D-m}
(D_{\vec{t}})_{j,k}:=i\Tr \Big [  \rho_{\vec{t}}[L_j, L_k] \Big]  \, .
\end{align}
Precisely, the RLD quantum Fisher information has the expression \cite{holevo-book} 
\begin{align}\Label{RLD-QFI}
\left(\widetilde{J}_{\vec{t}}\right)^{-1}=\left(J_{\vec{t}}\right)^{-1}+\frac{i}{2}\left(J_{\vec{t}}\right)^{-1}D_{\vec{t}}\left(J_{\vec{t}}\right)^{-1} \, .
\end{align}
Hence, \eqref{BFR}  becomes
\begin{align}\Label{RLD-QFIMRT}
\map{C}_{{\rm R},{\cal M}}(W,\vec{t})  = \tr  \left[  \, W \left(J_{\vec{t}}\right)^{-1}\right]  +
\frac{1}{2} \tr \left|\sqrt{W}\left(J_{\vec{t}}\right)^{-1}D_{\vec{t}}\left(J_{\vec{t}}\right)^{-1}
\sqrt{W}\right|.
\end{align}
For D-invariant models, the RLD bound is larger  and thus it is a better bound than the bound derived by using the SLD Fisher information matrix (the SLD bound).
However, in the one-parameter case, when the model is not D-invariant,
the RLD bound is not tight, and it is common to use the SLD bound in the one-parameter case.
Hence, both quantum extensions of the Cram\'er-Rao bound have advantages and disadvantages.

To unify both extensions, 
Holevo \cite{holevo-book} derived the following bound, which 
improves the RLD bound when the model is not D-invariant.
For 
a $k$-component vector $\vec{X}$ of operators, define the $k\times k$ matrix $Z_{\vec{t}}(\vec{X})$  as
\begin{align}\Label{HYU}
 \Big(Z_{\vec{t}}(\vec{X})\Big)_{ij}:=
 \Tr  \Big[  \rho_{\vec{t}}X_iX_j \Big] \, .
\end{align}
Then, Holevo's bound is as follows: for any weight matrix $W$, one has 
\begin{align}
\nonumber \inf_{M \in \set{UB}_{{\cal M}}} \tr \Big [  W V_{\vec{t}}(M) \Big]  
&\ge \map{C}_{{\rm H},{\cal M}}(W,\vec{t})  \\
 &:= \min_{\vec{X}} \min_V \Big \{\tr \big[ W V\big] ~ \Big|~ V \ge Z_{\vec{t}}(\vec{X})\Big \} \Label{H-quantity1} \\
&= \min_{\vec{X}} 
\tr  \Big[  W  {\sf Re} (Z_{\vec{t}}(\vec{X}))  \Big]  +
\tr \Big|\sqrt{W} {\sf Im}  (Z_{\vec{t}}(\vec{X}))\sqrt{W}\Big| \Label{H-quantity2},
\end{align}
where
$\set{UB}_{{\cal M}}$ denotes the set of all unbiased measurements under the model ${\cal M}$,
$V$ is a real symmetric matrix, and 
$\vec{X}=(X_i)$ is a $k$-component vector of Hermitian operators  satisfying
\begin{align}
 \Tr\left. \left[  
 X_i
 \frac{\partial\rho_{\vec{t}}}{\partial t_j}
  \right]  \right|_{\vec{t}=\vec{t}_0 }  =\delta_{ij}, \ \forall i, j \le k.\Label{BCR-1}
\end{align}
$\map{C}_{{\rm H},{\cal M}}(W,\vec{t})$ is called the Holevo bound. 
When $W>0$, there exists 
a vector $\vec{X}$ achieving the minimum in  \eqref{H-quantity1}.
Hence, similar to the RLD case,
the equality in \eqref{H-quantity1} holds for $W>0$
only when 
\begin{align}
 V_{\vec{t}}(M) =  {\sf Re}\,   (Z_{\vec{t}}(\vec{X}))+
\sqrt{W}^{-1} |\sqrt{W} {\sf Im  }  \, (Z_{\vec{t}}(\vec{X}))\sqrt{W}| \sqrt{W}^{-1}.\Label{BCR}
\end{align}
Moreover, we have the following proposition.
\begin{prop}[\protect{\cite[Theorem 4]{hayashi2008asymptotic}}]\Label{LGT}
Let $\spc{S}=\{\rho_{\vec{t}}\}_{\vec{t}\in \set{\Theta}}$  be a generic $k$-parameter qudit model and let $\spc{S}'=\{\rho_{\vec{t},\vec{p}}\}_{\vec{t},\vec{p}}$ be a 
$k'$-parameter  model containing $\spc{S}$ 
as $\rho_{\vec{t}}=\rho_{(\vec{t},\vec{0})}$.
When $\spc{S}'$ is D-invariant and the inverse $\tilde{J}_{\vec{t}'}^{-1} $ of 
the RLD Fisher information matrix of the model $\spc{S}'$ exists, we have
\begin{align}
\map{C}_{{\rm H},{\cal M}}(W,\vec{t})
&= \min_{\vec{X}:{\cal M}'} \min_V \Big \{\tr \big[W V \big]  ~ \Big |~ V \ge Z_{\vec{t}}(\vec{X})\Big \} \Label{H-quantity11}\\
&=\min_{P}\left\{\tr \left [  P^TWP(J_{\vec{t}'}^{-1})\right] +\frac12\tr\left|\sqrt{P^TWP}J_{\vec{t}'}^{-1}
D_{\vec{t}'}J_{\vec{t}'}^{-1}\sqrt{P^TWP}\right|\right\}.\Label{Ho2}
\end{align}
In \eqref{H-quantity11},
$\min_{\vec{X}:{\cal M}'}$ denotes the minimum for vector $ \vec{X}$ whose components $X_i$
are  linear combinations of the SLDs operators in the model ${\cal M}'$.
In \eqref{Ho2}, the minimization is taken over all $k\times k'$ matrices satisfying the constraint $(P)_{ij}:=\delta_{ij}$ for $i,j\le k$,  $J_{\vec{t}'}$ 
and $D_{\vec{t}'}$ are the SLD Fisher information matrix  and the D-matrix [cf. Eqs. (\ref{SLD-QFI}) and (\ref{D-m})] for the extended model  $\spc{S}'$
at $\vec{t}':=(\vec{t},\vec{0})$.
\end{prop}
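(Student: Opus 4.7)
The plan is to establish (\ref{H-quantity11}) first by showing that the Holevo minimization can be restricted to vectors $\vec{X}$ whose components lie in the real span $\mathcal{L}'$ of $\{L_1^{(\vec{t}')},\dots,L_{k'}^{(\vec{t}')}\}$, then to derive (\ref{Ho2}) by an explicit parametrization of such $\vec{X}$. For the reduction, I would decompose any feasible $\vec{X}$ in (\ref{H-quantity1}) as $X_i=X_i^{\parallel}+X_i^{\perp}$ by $\langle\cdot,\cdot\rangle_s$-orthogonal projection onto $\mathcal{L}'$, where $\langle A,B\rangle_s:=\Tr[\rho_{\vec{t}}(AB+BA)/2]$. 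Since $\partial_{t_j}\rho_{\vec{t}}=(\rho_{\vec{t}}L_j^{(\vec{t}')}+L_j^{(\vec{t}')}\rho_{\vec{t}})/2$ with $L_j^{(\vec{t}')}\in\mathcal{L}'$ for $j\le k$, the constraint (\ref{BCR-1}) reads $\langle X_i,L_j^{(\vec{t}')}\rangle_s=\delta_{ij}$ and depends only on the parallel component, so $\vec{X}^{\parallel}$ remains feasible. Writing $Z_{\vec{t}}(\vec{X})=Z_1+Z_2$ with $Z_1:=Z_{\vec{t}}(\vec{X}^{\parallel})$ and $Z_2:=Z_{\vec{t}}(\vec{X}^{\perp})$, $\langle\cdot,\cdot\rangle_s$-orthogonality kills the real-part cross terms, while the identity $\mathsf{Im}\,\Tr[\rho_{\vec{t}}AB]=-\tfrac12\langle A,\map{D}_{\vec{t}}(B)\rangle_s$, combined with D-invariance ($\map{D}_{\vec{t}}(\mathcal{L}')\subseteq\mathcal{L}'$) and the skew-symmetry $\langle A,\map{D}_{\vec{t}}(B)\rangle_s=-\langle B,\map{D}_{\vec{t}}(A)\rangle_s$, kills the imaginary-part cross terms too.

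Next, using (\ref{BFR}), the inner minimum over $V$ in (\ref{H-quantity1}) reduces the Holevo objective to $h(\vec{X}):=\tr[W\,\mathsf{Re}\,Z_{\vec{t}}(\vec{X})]+\tr|\sqrt{W}\,\mathsf{Im}\,Z_{\vec{t}}(\vec{X})\sqrt{W}|$. Since $Z_2\ge 0$ as a Gram matrix under the sesquilinear form $(Y,Y')\mapsto\Tr[\rho_{\vec{t}}Y^{\dagger}Y']$, also $\sqrt{W}Z_2\sqrt{W}\ge 0$; and for any PSD complex Hermitian matrix $M=A+iB$ with $A$ real symmetric and $B$ real antisymmetric one has $\tr A\ge\tr|B|$ (a consequence of the operator bound $\|A^{-1/2}(iB)A^{-1/2}\|_\infty\le 1$ on the range of $A$, combined with H\"older's inequality). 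Applied to $M=\sqrt{W}Z_2\sqrt{W}$ this gives $\tr[W\,\mathsf{Re}\,Z_2]\ge\tr|\sqrt{W}\,\mathsf{Im}\,Z_2\sqrt{W}|$; together with the reverse triangle inequality for the trace norm it yields $h(\vec{X})\ge h(\vec{X}^{\parallel})$, so the Holevo minimum is attained in $\mathcal{L}'$, establishing (\ref{H-quantity11}).

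For (\ref{Ho2}), I would parametrize any $\vec{X}$ with $X_i\in\mathcal{L}'$ by $X_i=\sum_{m=1}^{k'}Q_{im}L_m^{(\vec{t}')}$ with $Q\in\R^{k\times k'}$. The constraint (\ref{BCR-1}) becomes $(QJ_{\vec{t}'})_{ij}=\delta_{ij}$ for $i,j\le k$, so $P:=QJ_{\vec{t}'}$ satisfies the constraint of the statement and $Q=PJ_{\vec{t}'}^{-1}$. Using $\Tr[\rho_{\vec{t}}L_m^{(\vec{t}')}L_n^{(\vec{t}')}]=(J_{\vec{t}'})_{mn}-\tfrac{i}{2}(D_{\vec{t}'})_{mn}$ one obtains $\mathsf{Re}\,Z_{\vec{t}}(\vec{X})=PJ_{\vec{t}'}^{-1}P^{T}$ and $\mathsf{Im}\,Z_{\vec{t}}(\vec{X})=-\tfrac12 PJ_{\vec{t}'}^{-1}D_{\vec{t}'}J_{\vec{t}'}^{-1}P^{T}$. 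Substituting into (\ref{BFR}) and using the identity $\tr|BAB^{T}|=\tr|\sqrt{B^{T}B}\,A\,\sqrt{B^{T}B}|$, valid for antisymmetric real $A$ and rectangular real $B$ of full column rank via polar decomposition and the fact that $MN$ and $NM$ share their non-zero singular values, with $B=\sqrt{W}P$ and $A=J_{\vec{t}'}^{-1}D_{\vec{t}'}J_{\vec{t}'}^{-1}$, converts the absolute-value term into the form appearing in (\ref{Ho2}).

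The hard part will be the projection step: one must simultaneously verify the annihilation of the imaginary-part cross terms, which is the precise point where D-invariance of $\spc{S}'$ enters essentially, and the dominance inequality $\tr[W\,\mathsf{Re}\,Z_2]\ge\tr|\sqrt{W}\,\mathsf{Im}\,Z_2\sqrt{W}|$ coming from the PSD structure of $Z_2$, which together ensure that no reduction of the trace-norm term is achievable by a non-zero $\vec{X}^{\perp}$.
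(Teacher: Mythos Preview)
Your argument is correct and follows the same projection idea as the paper's discussion after the proposition. For the reduction to $\mathcal{L}'$ (Eq.~\eqref{H-quantity11}), however, the paper's route is shorter than yours: once you have established that the cross terms in $Z_{\vec{t}}(\vec{X})=Z_1+Z_2$ vanish (both real and imaginary parts, via D-invariance), you already have $Z_2\ge 0$ as a Gram matrix, hence the \emph{matrix} inequality $Z_{\vec{t}}(\vec{X})\ge Z_1=Z_{\vec{t}}(\vec{X}^\parallel)$. This feeds directly into the formulation \eqref{H-quantity1}: the constraint set $\{V:V\ge Z_{\vec{t}}(\vec{X})\}$ is contained in $\{V:V\ge Z_1\}$, so the inner minimum over $V$ is automatically no smaller. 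Your detour through \eqref{BFR}, the reverse triangle inequality, and the separate scalar inequality $\tr A\ge\tr|B|$ for $A+iB\ge 0$ reaches the same conclusion but is unnecessary once the block decomposition $Z=Z_1+Z_2$ is in hand.

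For \eqref{Ho2}, your parametrization $X_i=\sum_m Q_{im}L_m^{(\vec{t}')}$, the change of variable $P=QJ_{\vec{t}'}$, and the computation of $Z_{\vec{t}}(\vec{X})$ are exactly what is needed; the paper defers this to the cited reference. One small slip: in the trace-norm identity $\tr|BAB^T|=\tr|\sqrt{B^TB}\,A\,\sqrt{B^TB}|$ with $B=\sqrt{W}P\in\R^{k\times k'}$ and $k<k'$, the matrix $B$ has rank at most $k$, so ``full column rank'' is neither correct nor needed. The identity holds unconditionally because $BAB^T$ and $\sqrt{B^TB}\,A\,\sqrt{B^TB}$ share their non-zero eigenvalues (both are real antisymmetric, so eigenvalue moduli coincide with singular values).
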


The Holevo bound is always tighter than the RLD bound:  
\begin{align}
\map{C}_{{\rm R},{\cal M}}(W,\vec{t})
\le\map{C}_{{\rm H},{\cal M}}(W,\vec{t}) \,.
\end{align}
The equality holds if and only if the model ${\cal M}$ is D-invariant \cite{suzuki2016explicit}.

In the above proposition, it is not immediately clear whether the Holevo bound depends on the choice of the extended model $\spc{S}'$. In the following, we show that there is a minimum D-invariant extension of $\spc{S}$, and thus the Holevo bound is independent of the choice of $\spc{S}'$.
The minimum D-invariant subspace in the space of Hermitian matrices is given as follows.
Let ${\cal V}$ be the subspace spanned SLDs $\{L_i\}$ of the original model ${\cal M}$
at $\rho_{\vec{t}}$.
Let ${\cal V}'$ be the subspace spanned by $\cup_{j=0}^{\infty}{\cal D}_{\vec{t}}^j({\cal V})$.
Then, the subspace ${\cal V}'$ is D-invariant and contains ${\cal V}$.
What remains is to show that ${\cal V}'$ is the minimum D-invariance subspace.
Let ${\cal V}''$ be the orthogonal space with respect to ${\cal V}'$ for the inner product defined 
by $\Tr \rho X^\dagger Y$.
We denote  by $P'$ and $P''$ the projections into $\cal V'$ and $\cal V''$ respectively.
Each component $X_i$ of a vector of operators $\vec{X}$ can be expressed as $X_i=P'X_i + P''X_i$.
Then, the two vectors $\vec{X}':=(P' X_i)$ and $\vec{X}'':=(P'' X_i)$
satisfy the inequality 
$Z_{\vec{t}}(\vec{X})=Z_{\vec{t}}(\vec{X}')+Z_{\vec{t}}(\vec{X}'')
\ge Z_{\vec{t}}(\vec{X}')$.
Substituting Eq. (\ref{SLD-QFI}) into Eq. (\ref{BCR-1}) and noticing that $P''X_i$ has no support in $\cal V$, we get that only the part $P'X_i $ contributes the condition \eqref{BCR-1} and  
the minimum in \eqref{H-quantity2} is attained when $\vec{X}''=0$.
Hence, the minimum is achieved when each component of the vector $\vec{X}$ is included in 
the minimum D-invariant subspace ${\cal V}'$.
Therefore, since the minimum D-invariant subspace can be uniquely defined,
the Holevo bound does not depend on the choice
of the D-invariant model $\spc{S}'$ that extends $\spc{S}$.

\subsection{Classical and quantum Gaussian states}\Label{s32}
For a  classical system of dimension $d^{\rm C}$, a Gaussian state is  
a $d^{\rm C}$-dimensional normal distribution $N[\vec{\alpha}^{\rm C},\Gamma^{\rm C}]$ 
with mean $\vec{\alpha}^{\rm C}$ and covariance matrix $\Gamma$. The corresponding 
 random variable will be denoted as  $\vec {Z}=(Z_1, \ldots, Z_{d^{\rm C}})$ and will take values $\vec{z}  =  (z_1, \dots, z_{d^{\rm C}})\in \R^{d^{\rm C}}$.

 For quantum systems we will restrict our attention to a subfamily of Gaussian states, known as displaced thermal states. 
     For a quantum system made of a single mode,   the {\em displaced thermal states} are  defined as
\begin{align}\Label{D-alpha}
\rho_{\alpha,\beta}:=T^{\rm Q}_{\alpha}\,\rho^{{\rm thm}}_{\beta}(T^{\rm Q}_{\alpha})^\dag, \qquad T^{\rm Q}_\alpha=\exp(\alpha \hat{a}^\dag-\bar{\alpha} \hat{a})  \, ,
\end{align}
where $\alpha \in \C$ is the {\em displacement},  $T^{\rm Q}_\alpha$ is the {\em displacement operator},  $\hat{a}$ is the annihilation operator satisfying the relation $[\hat a, \hat a^\dag] =1$, and  $\rho^{{\rm thm}}_{\beta}$ is a {\em thermal state},  defined as
\begin{align}
\rho^{{\rm thm}}_{\beta}&:=(1-e^{-\beta})\sum_{j=0}^{\infty}e^{-j\beta}\,  |j\>\<j| \, ,
\end{align}
where  the basis $\{  |j\> \}_{j \in \N}$ consists of the eigenvectors of $\hat a^\dag \hat a$ and $\beta \in  (0,\infty) $ is a real parameter, hereafter  called the {\em thermal parameter}.

For a quantum system of $d^{\rm Q}$ modes, the  products of single-mode displaced thermal states 
 will be denoted as
\begin{align}\Label{multidisplaced-thermal}
\Phi[\vec{\alpha}^{\rm Q} , \vec{\beta}^{\rm Q}]:= \bigotimes_{j=1}^{d^Q} \rho_{\alpha_j,\beta_j} \,.  
\end{align}
where $\vec{\alpha}^{\rm Q}  = (\alpha_j)_{j=1}^{d^{\rm Q}}$ is the vector of displacements and $\vec{\beta}^{\rm Q}  = (\beta_j)_{j=1}^{d^{\rm Q}} $ is the vector of thermal parameters.    In the following we will regard $\vec{\alpha}$ as a vector in $\R^{2 d^{\rm Q}}$, using the notation $\vec \alpha  =  (\alpha^{\rm R}_1,  \alpha^{\rm I}_1 , \cdots,  \alpha^{\rm R}_{d^{\rm Q}},   \, \alpha^{\rm I}_{d^{\rm Q}})$, $\alpha_j^{\rm R}  :  ={\sf Re}  (\alpha_j)$, $\alpha_j^{\rm I}  :  =  {\sf  Im}   ( \alpha_j)$.

For a hybrid system of $d^{\rm C}$  classical variables and $d^{\rm Q}$ quantum modes, we define  the {\it canonical classical-quantum (c-q) Gaussian states} $G[\vec{\alpha},\Gamma]$ as
\begin{align}\Label{Gaussian-state}
G[\vec{\alpha},\Gamma]:=N[\vec{\alpha}^{\rm C},\Gamma^{\rm C}]\otimes 
\Phi[\vec{\alpha}^{\rm Q},\vec{\beta}^{{\rm Q}}] \, ,
\end{align}
with $\vec{\alpha}= \vec{\alpha}^{\rm C} \oplus \vec{\alpha}^{\rm Q}  \in  \R^{d^{\rm C} + 2 d^{\rm Q}}$, $\vec{\alpha}^{\rm C} \in\R^{d^{\rm C}}$, $\vec{\alpha}^{\rm Q}  \in  \R^{2d^{\rm Q}}$, and ${\Gamma}= \Gamma^{\rm C}  \oplus  \Gamma^{\rm Q}$,   
where   
\begin{align}
\nonumber  \Gamma^{\rm Q}  &:=
E_{d^{\rm Q}}\left(\vec{N}\right)+\frac{i}{2}\Omega_{d^{\rm Q}},  \\
  E_{d^{\rm Q}}(\vec{N}) &:= 
\left(
\begin{array}{ccccc}
N_{1} & 0 & & &  \\
0 &N_{1} & & &  \\
 &      & \ddots & &  \\
 &      &  & N_{d^{\rm Q}}&0  \\
 &      &  & 0&N_{d^{\rm Q}}
 \end{array}
\right)    
 \qquad N_{j}:=\frac{e^{-\beta_{j}}}{1-e^{-\beta_{j}}}  \label{M1}\\
 \Omega_{d^{\rm Q}}   &:=
\left(
\begin{array}{ccccc}
0 & 1 & & &  \\
-1 & 0 & & &  \\
 &      & \ddots & &  \\
 &      &  & 0&1  \\
 &      &  & -1&0 
 \end{array}
\right)\quad
.\label{M2}
\end{align}

Equivalently, the canonical Gaussian states can be expressed as 
 \begin{align}
 G[\vec{\alpha},{\Gamma}]=T_{\vec{\alpha}} \, G[\vec{0},{\Gamma}] \, T^\dag_{\vec{\alpha}} \, ,
 \end{align}
where $T_{\vec{\alpha}}$ is the Gaussian shift operator
\begin{align}\Label{Gaussian-shift-operator}
 T_{\vec{\alpha}}:=\left(\bigotimes_{k=1}^{d^C} T^{\rm C}_{\alpha^C_k}\right)\otimes\left(\bigotimes_{j=1}^{d^{\rm Q}}T^{\rm Q}_{\alpha^{\rm Q}_j }\right)\,,
 \end{align}
$T^{\rm Q}_{\alpha_k^{\rm Q}}$ is given by Eq. (\ref{D-alpha}), and $T^{\rm C}_{\alpha_j^{\rm C}}$ is the map $z_j\to z_j +  \alpha_j^{\rm C}$.  
   For the classical part, 
  we have adopt the notation 
  \begin{align*}
 \Tr  \left[ N[\vec{\alpha}^{\rm C},\Gamma^{\rm C}]    \, \exp\left(\sum_{j=1}^{d^{\rm C}} i \xi_j  Z_j\right)     \right]: =   \int  d z_1 \dots  d z_{d^{\rm C}}    \,  N[\vec{\alpha}^{\rm C},\Gamma^{\rm C}]      (\vec{z})       \,   \exp\left(\sum_{j=1}^{d^{\rm C}} i \xi_j  z_j\right)     \, .
 \end{align*}
 With this notation, the canonical Gaussian state $ G[\vec{\alpha},{\Gamma}]$ is uniquely identified by  the characteristic equation \cite{holevo-book}
\begin{align}
\Tr  \left[   G[\vec{\alpha},{\Gamma}]  \, \exp\left(\sum_{j=1}^d i \xi_j R_j\right)\right]  
= \exp \left [i \sum_{j} \xi_j \alpha_j - \frac{1}{2} \sum_{j,k} \xi_j\xi_k  \,  {\sf Re}  [ \Gamma_{j,k}]  \right] \, ,\label{MFT}
\end{align}
with  
\begin{align}
\nonumber d    &:  =  d^{\rm C}  + 2  d^{\rm Q} \\
\nonumber R_j & :=     Z_j  \, , \qquad \forall j \in \{1,\dots,  d^{\rm C}\}   \\
 R_{2j-1}  &:=Q_j  :  =  \frac {a_j+ a_j^\dag }{\sqrt 2}   \, ,\quad  R_{2j}: =P_j  :  =  \frac {a_j - a_j^\dag }{\sqrt 2 i}  \, ,\qquad \forall j \in \left\{d^{\rm C}  + 1  , \dots,  d\right\} \, . 
 \end{align}
The formulation in terms of the characteristic equation \eqref{MFT} can be used  to generalize the notion of canonical Gaussian state \cite{Petz}.  
Given a $d$-dimensional Hermitian matrix (correlation matrix) $\Gamma={\sf Re} ( \Gamma)+i{\sf Im} ( \Gamma)$ whose real part ${\sf Re} ( \Gamma)$ is positive semi-definite, we define  the operators $\vec{R}:=(R_1, \ldots, R_d)$ 
 via the commutation relation 
\begin{align}
[R_k, R_j]= i {\sf Im}  ( \Gamma_{k,j}).
\end{align}
We define the {\it general Gaussian state} $ G[\vec{\alpha},{\Gamma}]$ on the operators
$\vec{R}$ as the linear functional 
on the operator algebra generated by $R_1, \ldots, R_d$ 
satisfying the characteristic equation \eqref{MFT} \cite{Petz}. Note that, although $\Gamma$ is not necessarily positive semi-definite, 
its real part ${\sf Re} (\Gamma)$ is positive semi-definite.
Hence, the right-hand-side of Eq. (\ref{MFT}) is contains a negative semi-definite quadratic form, in the same way as for the standard Gaussian states.

For general Gaussian states, we have the following lemma.
\begin{lem}\Label{3LL}
Given a Hermitian matrix $\Gamma$,
there exists an invertible real matrix $ T$ such that
the Hermitian matrix $T \Gamma T^{T} $ is the correlation matrix of a canonical Gaussian state.
In particular, when ${\rm Im} (\Gamma)$ is invertible, 
$T \Gamma T^{T} = E_{d^Q}\left(\vec{N}\right)+\frac{i}{2}\Omega_{d^Q}$
and the vector $\beta$ is unique up to the permutation.   
Further, when two matrices  $T$ and $\tilde{T}$ satisfy the above condition,
the canonical Gaussian states 
$G[T^{-1}\vec{\alpha}, T \Gamma T^{T}]$
and
$G[\tilde{T}^{-1}\vec{\alpha}, \tilde{T} \Gamma \tilde{T}^{T}]$
are unitarily equivalent.
\end{lem}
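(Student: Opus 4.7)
The plan is to reduce the statement to a classical-quantum version of Williamson's symplectic diagonalization theorem. First I would split the underlying real space $\R^d$ according to the kernel of the skew-symmetric matrix ${\sf Im}(\Gamma)$: setting $d^{\rm C}:=\dim\ker({\sf Im}(\Gamma))$ and $2d^{\rm Q}:=d-d^{\rm C}$ (the latter is even because a real skew-symmetric matrix has even rank), I would choose a real basis $T_0$ in which ${\sf Im}(\Gamma)$ becomes block diagonal with a zero $d^{\rm C}\times d^{\rm C}$ block and a nondegenerate skew-symmetric $2d^{\rm Q}\times 2d^{\rm Q}$ block. The first block will furnish the classical coordinates (the $R_j$ commute there by the commutation relation $[R_k,R_j]=i\,{\sf Im}(\Gamma_{k,j})$), while the second block will accommodate the quantum modes.

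Next, on the classical block, ${\sf Re}(\Gamma)$ restricts to a real symmetric positive semi-definite matrix, and I would bring it to any desired form (e.g.\ $\Gamma^{\rm C}$) by a real orthogonal change of basis. On the quantum block, the pair $({\sf Re}(\Gamma),{\sf Im}(\Gamma))$ consists of a real symmetric positive semi-definite matrix and a nondegenerate real skew-symmetric matrix. Here I would invoke (a mild extension of) Williamson's theorem: there is a real invertible transformation $T_1$ such that $T_1\,{\sf Im}(\Gamma)\,T_1^T=\Omega_{d^{\rm Q}}$ and simultaneously $T_1\,{\sf Re}(\Gamma)\,T_1^T=E_{d^{\rm Q}}(\vec N)$ for some nonnegative $\vec N$. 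When ${\sf Re}(\Gamma)$ is positive definite on the quantum block, the standard Williamson argument applies verbatim (diagonalize the matrix $i\,{\sf Re}(\Gamma)^{-1/2}\,{\sf Im}(\Gamma)\,{\sf Re}(\Gamma)^{-1/2}$ by a real orthogonal transformation); the semi-definite case reduces to the definite one by a perturbation argument or by passing to the quotient by the (symplectic) kernel. Composing $T_0$ with the block-diagonal transformation built from the classical orthogonal map and $T_1$ gives the desired $T$, and $T\Gamma T^T$ is the correlation matrix of a canonical c-q Gaussian state.

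For uniqueness in the case ${\sf Im}(\Gamma)$ invertible, the symplectic eigenvalues of ${\sf Re}(\Gamma)$ with respect to the symplectic form ${\sf Im}(\Gamma)$ are by definition the moduli of the eigenvalues of ${\sf Re}(\Gamma)^{-1}\,{\sf Im}(\Gamma)$ (suitably normalized), so the multiset $\{N_j\}$, and hence $\vec\beta$ via $N_j=e^{-\beta_j}/(1-e^{-\beta_j})$, is determined up to permutation.

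Finally, for unitary equivalence: if both $T$ and $\tilde T$ bring $\Gamma$ to canonical form, then $S:=\tilde T T^{-1}$ satisfies $S(T\Gamma T^T)S^T=\tilde T\Gamma\tilde T^T$, so $S$ intertwines two canonical c-q Gaussian correlation matrices with the same symplectic spectrum (up to reordering). On the quantum part $S$ is therefore a real symplectic transformation preserving the canonical form; such transformations are represented on Fock space by unitaries via the metaplectic representation, and mode permutations are implemented by the obvious tensor-factor swap. On the classical part $S$ is an orthogonal change of coordinates, which corresponds to a measure-preserving relabelling and hence to a unitary on $L^2(\R^{d^{\rm C}})$. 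Checking that the displacement transforms correctly, i.e.\ $\tilde T^{-1}\vec\alpha$ is obtained from $T^{-1}\vec\alpha$ by the same symplectic/orthogonal map, is an immediate consequence of $S(T^{-1}\vec\alpha)=\tilde T^{-1}\vec\alpha$, and confirms the unitary equivalence of $G[T^{-1}\vec\alpha,T\Gamma T^T]$ and $G[\tilde T^{-1}\vec\alpha,\tilde T\Gamma\tilde T^T]$.

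The main obstacle I anticipate is the semi-definite case of Williamson's theorem on the quantum sector: one must verify that, if ${\sf Re}(\Gamma)$ has a nontrivial kernel there, this kernel is itself symplectic (otherwise the associated Gaussian state would not be a legitimate density operator), so that the argument can be localized to the symplectic complement where ${\sf Re}(\Gamma)$ is positive definite. Once this point is dispatched, the remainder of the proof is a bookkeeping of changes of basis and a standard appeal to the metaplectic representation for the unitary implementation.
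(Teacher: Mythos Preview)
Your proposal is correct and follows essentially the same route as the paper: split off the classical sector as $\ker({\sf Im}\,\Gamma)$, normalize the imaginary part on the quantum sector to $\Omega_{d^{\rm Q}}$, then invoke Williamson's theorem to diagonalize the real part symplectically; uniqueness of $\vec\beta$ is argued via uniqueness of symplectic eigenvalues in both cases.

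Two minor differences are worth noting. First, for the unitary-equivalence step the paper isolates an auxiliary lemma (Lemma~\ref{lem-QLAN2}) showing that any symplectic matrix $S$ with $S\,E_q(e^{-\vec\beta})\,S^T=E_q(e^{-\vec\beta})$ is implemented by a unitary on the thermal state; your appeal to the metaplectic representation is the same mechanism phrased more generally. Second, you flag the semi-definite case of Williamson as a potential obstacle, which the paper simply does not discuss (it applies Williamson as if ${\sf Re}(\Gamma)$ were positive definite on the quantum block). Your caution there is warranted, though in the paper's applications ${\sf Re}(\Gamma)$ is in fact strictly positive (cf.\ Lemma~\ref{LNS}), so the issue does not bite. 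Neither you nor the paper spells out how the possible off-diagonal block of ${\sf Re}(\Gamma)$ between the classical and quantum sectors is eliminated; this is a routine linear-algebra step (conjugate by a block-triangular real matrix), but if you want a complete write-up you should include it.
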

The proof is provided  in Appendix \ref{A3LL}. 

In the above lemma,  we can transform $\Gamma$ into the block form   $ \Gamma^C  \oplus  \Gamma^Q$
where $\Gamma^C$ is real by applying orthogonal transformation. The unitary operation on the classical part is given as a scale conversion.
Hence, an invertible real matrix $ T$ can be realized by 
the combination of a scale conversion and a linear conversion, 
which can be implemented as a unitary on the Hilbert space. 
Hence, a general Gaussian state can be given as the resultant linear functional 
on the operator algebra 
after the application of the linear conversion to a canonical Gaussian state. 
This kind of construction is unique up to unitarily equivalence. 
Indeed, Petz \cite{Petz} showed a similar statement by using Gelfand-Naimark-Segal (GNS) construction. 
Our derivation directly shows the uniqueness without using the GNS construction.

\begin{lem}\Label{lem-QLAN-QFI1}
The Gaussian states family $\{
G[\vec{\alpha},{\Gamma}] 
\}_{\vec{\alpha}}$ is D-invariant. 
The SLDs are calculated as $L_{\vec{\alpha},j}= \sum_{k=1}^d (( {\sf Re} (\Gamma))^{-1})_{j,k} R_k$.
The $D$-operator at any point $\vec{\alpha}$ is given as
$ \map{D}(R_j)= \sum_{k} 2 {\sf Im} (\Gamma)_{j,k}R_k$.
The inverse of the RLD Fisher information matrix $\tilde{J}_{\vec{\alpha}}$ is calculated as
\begin{align}\Label{QFI-QLAN1}
(\tilde{J}_{\vec{\alpha}})^{-1}=
\Gamma.
\end{align}
\end{lem}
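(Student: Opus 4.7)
\begin{proofof}{Lemma \ref{lem-QLAN-QFI1} (proof sketch)}
The plan is to exploit the characteristic equation \eqref{MFT} together with the commutation relations $[R_k,R_j]=i\,{\sf Im}(\Gamma_{k,j})$ to reduce everything to linear algebra on the span ${\cal V}:={\sf Span}_{\R}\{R_1,\ldots,R_d\}$. The key observation is that, for a Gaussian state, both the SLDs and the image of ${\cal V}$ under the $D$-operator remain inside ${\cal V}$, so D-invariance is automatic once the SLDs are identified.

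First, I would identify the SLDs. Differentiating the characteristic equation \eqref{MFT} with respect to $\alpha_j$ one gets
\begin{align*}
\Tr\!\left[(\partial_{\alpha_j}G[\vec\alpha,\Gamma])\,e^{i\sum_k\xi_k R_k}\right]=i\xi_j\,\Tr\!\left[G[\vec\alpha,\Gamma]\,e^{i\sum_k\xi_k R_k}\right].
\end{align*}
Now make the ansatz $L_{\vec\alpha,j}=\sum_k A_{jk}\,(R_k-\alpha_k)$ and compute $\tfrac12\Tr[\{L_{\vec\alpha,j},e^{i\xi\cdot R}\}\,G[\vec\alpha,\Gamma]]$ by commuting $R_k$ past $e^{i\xi\cdot R}$ via the CCR, then matching characteristic functions. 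Only the symmetric part of $\Gamma$ survives in the anticommutator, and one finds $A={\sf Re}(\Gamma)^{-1}$, which (up to the irrelevant constant shift by $\vec\alpha$ that drops out of the defining equation \eqref{LLP}) gives the stated formula $L_{\vec\alpha,j}=\sum_k({\sf Re}(\Gamma)^{-1})_{j,k}R_k$. Since each SLD lies in ${\cal V}$, the SLD span is contained in ${\cal V}$; because ${\sf Re}(\Gamma)^{-1}$ is invertible, in fact the SLD span equals ${\cal V}$.

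Next I would verify the $D$-operator formula. Writing $X=\sum_k x_k R_k \in{\cal V}$, I compute $[X,G[\vec\alpha,\Gamma]]$ by moving $X$ through the displaced thermal factor, using $[R_j,R_k]=i\,{\sf Im}(\Gamma_{j,k})$; the commutator stays in ${\cal V}\cdot G[\vec\alpha,\Gamma]+G[\vec\alpha,\Gamma]\cdot{\cal V}$. One then checks the defining equation $\tfrac12\{G[\vec\alpha,\Gamma],\mathcal{D}(R_j)\}=i[R_j,G[\vec\alpha,\Gamma]]$ by testing against $e^{i\xi\cdot R}$ and matching the resulting characteristic-function identities; the answer $\mathcal{D}(R_j)=\sum_k 2\,{\sf Im}(\Gamma)_{j,k}R_k$ follows. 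This simultaneously establishes that $\mathcal{D}$ maps ${\cal V}$ into ${\cal V}$, hence the Gaussian family is D-invariant at every $\vec\alpha$.

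Finally, knowing D-invariance, I would apply the identity \eqref{RLD-QFI}. A direct computation using $L_{\vec\alpha,j}=\sum_k({\sf Re}(\Gamma)^{-1})_{j,k}R_k$ and the symmetric two-point function $\tfrac12\Tr[G[\vec\alpha,\Gamma]\{R_l-\alpha_l,R_m-\alpha_m\}]={\sf Re}(\Gamma)_{l,m}$ (obtained from a second derivative of \eqref{MFT} in $\xi$) yields $J_{\vec\alpha}={\sf Re}(\Gamma)^{-1}$, i.e.\ $J_{\vec\alpha}^{-1}={\sf Re}(\Gamma)$. Similarly, from $[L_j,L_k]=\sum_{l,m}({\sf Re}(\Gamma)^{-1})_{j,l}({\sf Re}(\Gamma)^{-1})_{k,m}\cdot i\,{\sf Im}(\Gamma)_{l,m}$ one computes $(D_{\vec\alpha})_{j,k}=i\Tr[\rho[L_j,L_k]]=-({\sf Re}(\Gamma)^{-1}\,{\sf Im}(\Gamma)\,{\sf Re}(\Gamma)^{-1})_{j,k}\cdot 2$, so that $\tfrac{i}{2}J_{\vec\alpha}^{-1}D_{\vec\alpha}J_{\vec\alpha}^{-1}=i\,{\sf Im}(\Gamma)$. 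Plugging into \eqref{RLD-QFI} gives $\tilde J_{\vec\alpha}^{-1}={\sf Re}(\Gamma)+i\,{\sf Im}(\Gamma)=\Gamma$.

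The main obstacle is the bookkeeping of real versus imaginary parts of $\Gamma$ in the two algebraic identities above: the characteristic equation only sees ${\sf Re}(\Gamma)$ (via the quadratic form in $\xi$) while the commutation relations only see ${\sf Im}(\Gamma)$, and both must be combined consistently to extract $\Gamma$ itself in the RLD formula. Once the ansatz ``SLD $\in{\cal V}$'' is accepted, however, everything reduces to matrix manipulations on ${\cal V}\simeq\R^d$ and no infinite-dimensional analysis is required; in particular, the argument applies uniformly to the canonical case and, via Lemma \ref{3LL}, to general Gaussian states.
\end{proofof}
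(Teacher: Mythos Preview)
Your approach is sound and essentially reproduces the content of the reference the paper cites. The paper's own proof is two lines: invoke Lemma~\ref{3LL} to reduce an arbitrary $\Gamma$ to the canonical form, and then cite Holevo \cite{holevo-book} for the canonical case. You instead carry out the direct characteristic-function computation for general $\Gamma$, which is precisely Holevo's calculation, and only mention Lemma~\ref{3LL} at the end as an alternative route. So the difference is one of presentation: the paper outsources the computation, you perform it. Your version is more self-contained; the paper's is more economical.

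One caution on the bookkeeping you already flag. In your last step you write $(D_{\vec\alpha})_{j,k}=-2\big({\sf Re}(\Gamma)^{-1}{\sf Im}(\Gamma){\sf Re}(\Gamma)^{-1}\big)_{j,k}$ and then claim $\tfrac{i}{2}J^{-1}D\,J^{-1}=i\,{\sf Im}(\Gamma)$. But with $J^{-1}={\sf Re}(\Gamma)$ this product is $-i\,{\sf Im}(\Gamma)$, not $+i\,{\sf Im}(\Gamma)$, so as written your chain yields $\Gamma^*$ rather than $\Gamma$. The discrepancy traces back to the normalization of the CCR: the paper states $[R_k,R_j]=i\,{\sf Im}(\Gamma_{k,j})$, whereas the canonical $[Q,P]=i$ together with ${\sf Im}(\Gamma^Q)=\tfrac12\Omega$ suggests a factor of $2$, and the sign of ${\sf Im}(\Gamma)$ versus $\Omega$ also enters. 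None of this affects the strategy, but you should fix one convention for $[R_l,R_m]$ at the outset and propagate it consistently through the $D$-matrix and \eqref{RLD-QFI}; the paper's route via Lemma~\ref{3LL} plus the Holevo citation avoids exactly this hazard.
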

This lemma shows the inverse of the RLD Fisher information matrix is given by the correlation matrix. 

\begin{proof}
Due to the coordinate conversion give in Lemma \ref{3LL},
it is sufficient to show the relation \eqref{QFI-QLAN1} for the canonical Gaussian states family.
In that case, the desired statement  has already  been shown by Holevo in \cite{holevo-book}.
\end{proof}

Therefore, as shown in Appendix \ref{ALNS}, 
a $D$-invariant Gaussian model can be characterized as follows:

\begin{lem}\Label{LNS}
Given an $d \times d $ strictly positive-definite Hermitian matrix $\Gamma=A+ iB$ ($A, B$ are real matrices)
and a $d\times k$ real matrix $T$ with $k \le d$, 
the following conditions are equivalent.
\begin{description}
\item[(1)]
The linear submodel ${\cal M}:=\{G[ T\,\vec{t}, {\Gamma}]\}_{\vec{t}\in \mathbb{R}^{k}}$ (with displacement $T\,\vec{t}$)
is D-invariant.  
\item[(2)]
The image of the linear map $A^{-1}T$ is invariant for the application of $B$.
\item[(3)]
There exist a unitary operator $U$ and a Hermitian matrix $\Gamma_0$ 
such that 
\begin{align}
U G[ T\,\vec{t}, {\Gamma}] U^\dagger
= G[ \vec{t}, \Gamma_T]\otimes G[ 0, {\Gamma}_0],\Label{FHY}
\end{align}
where $\Gamma_T:= (T^T A^{-1} T)^{-1}+ i 
(T^T A^{-1} T)^{-1}(T^T B T)(T^T A^{-1} T)^{-1}$.
\end{description}
\end{lem}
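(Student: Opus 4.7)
My plan is to establish the three-way equivalence by proving $(1)\Leftrightarrow (2)$ and $(2)\Leftrightarrow (3)$ separately. The main tool is Lemma \ref{lem-QLAN-QFI1}, which supplies explicit formulas for the SLDs and the action of the $D$-operator of a general Gaussian model.

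For $(1)\Leftrightarrow (2)$, by Lemma \ref{lem-QLAN-QFI1} and the chain rule, the SLDs of the submodel $\spc M$ for the parameter $\vec t$ are $L^{\spc M}_i = \sum_{k}(A^{-1}T)_{ki} R_k$, so as $i$ ranges over $\{1,\dots,k\}$ they span the subspace of $\mathrm{span}(R_1,\ldots,R_d)$ whose coefficient vectors form $V_1:=\mathrm{Im}(A^{-1}T)$. The $D$-operator sends a coefficient vector $\vec c$ to $-2B\vec c$ (using that $B$ is antisymmetric). Hence D-invariance of $\spc M$ is equivalent to $BV_1\subseteq V_1$, which is exactly condition (2). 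Since the Gaussian model is translation-invariant, checking this at a single point suffices.

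For $(3)\Rightarrow(1)$, if \eqref{FHY} holds then under $U$ the model $\spc M$ is conjugate to $\{G[\vec t,\Gamma_T]\otimes G[0,\Gamma_0]\}_{\vec t}$, whose SLDs live purely on the first tensor factor and span its full operator space. By Lemma \ref{lem-QLAN-QFI1} applied to that first factor, this full-parameter Gaussian model is D-invariant; since unitary conjugation and tensoring with a fixed state both preserve D-invariance, $\spc M$ itself is D-invariant.

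For the core implication $(1)\&(2)\Rightarrow(3)$, the plan is to construct a real invertible $d\times d$ matrix $S$ implementing a linear change of the canonical operators $\tilde R=SR$ such that (a) $ST=\begin{pmatrix}I_k \\ 0\end{pmatrix}$, so the displacement $T\vec t$ becomes $(\vec t,\vec 0)$ in the new coordinates, and (b) $S\Gamma S^T$ is block-diagonal with first $k\times k$ block equal to $\Gamma_T$ and second block equal to some Hermitian $\Gamma_0$. A natural candidate for the first $k$ rows is $S_1:=(T^TA^{-1}T)^{-1}T^TA^{-1}$, a left inverse of $T$ adapted to the $A$-inner product; the remaining rows $S_2$ are chosen so that $S_1\Gamma S_2^T=0$, which forces the columns of $S_2^T$ into an appropriate complement of $\mathrm{Im}(T)$. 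Condition (2), in the form $BA^{-1}T=A^{-1}TM$ for some $k\times k$ matrix $M$, is precisely what guarantees that this simultaneous $A$- and $B$-block-diagonalization is possible. The desired unitary $U$ is then the Gaussian unitary implementing $S$ on the CCR algebra, of the type produced by Lemma \ref{3LL}. The most delicate step will be the explicit computation of the first block $S_1\Gamma S_1^T$ and matching it to the stated algebraic form of $\Gamma_T$: a direct expansion yields the real part $(T^TA^{-1}T)^{-1}$ together with an imaginary part of the form $(T^TA^{-1}T)^{-1}(T^TA^{-1}BA^{-1}T)(T^TA^{-1}T)^{-1}$, and one must invoke condition (2) via the identity $T^TA^{-1}BA^{-1}T=(T^TA^{-1}T)\,M$ (and its dual) to rearrange this into the stated $(T^TA^{-1}T)^{-1}(T^TBT)(T^TA^{-1}T)^{-1}$. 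This final algebraic reconciliation is where the $B$-invariance of $V_1$ does its real work and constitutes the main obstacle of the proof.
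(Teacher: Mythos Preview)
Your treatment of $(1)\Leftrightarrow(2)$ and $(3)\Rightarrow(1)$ is correct and is exactly the paper's argument. For $(2)\Rightarrow(3)$ you also follow the paper's route: your $S_1=(T^{T}A^{-1}T)^{-1}T^{T}A^{-1}$ is precisely the top block of the paper's $F^{-1}$ once one unwinds the choice $T'^{T}A^{-1}T=0$.

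The genuine error is in your ``final algebraic reconciliation.'' The identity you invoke, $T^{T}A^{-1}BA^{-1}T=(T^{T}A^{-1}T)\,M$, does \emph{not} follow from condition~(2): from $BA^{-1}T=A^{-1}TM$, left-multiplying by $T^{T}A^{-1}$ gives $T^{T}A^{-2}T\,M$, while left-multiplying by $T^{T}$ gives only the \emph{different} identity $T^{T}BA^{-1}T=(T^{T}A^{-1}T)\,M$. In fact the two matrices $(T^{T}A^{-1}T)^{-1}(T^{T}A^{-1}BA^{-1}T)(T^{T}A^{-1}T)^{-1}$ and $(T^{T}A^{-1}T)^{-1}(T^{T}BT)(T^{T}A^{-1}T)^{-1}$ are generically different even under~(2). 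Your computed $S_1\Gamma S_1^{T}$ is actually already the right object: its imaginary part $(T^{T}A^{-1}T)^{-1}(T^{T}A^{-1}BA^{-1}T)(T^{T}A^{-1}T)^{-1}$ equals $\tfrac12 (J^{\spc M})^{-1}D^{\spc M}(J^{\spc M})^{-1}$, so $S_1\Gamma S_1^{T}=(\tilde J_{\vec t_0})^{-1}$, which is exactly what $\Gamma_T$ must be in any factorization~\eqref{FHY} and is what the paper actually uses downstream in Theorem~\ref{Th3}. The printed $\Gamma_T$ with $T^{T}BT$ appears to be a misprint for $T^{T}A^{-1}BA^{-1}T$; so the ``main obstacle'' you identify is an artefact of that misprint, and the rearrangement you propose is both impossible and unnecessary.

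A second, subtler point that you (and the paper) pass over: with your fixed $S_1$, the requirement $S_1\Gamma S_2^{T}=0$ is not automatic. Its real part gives $T^{T}S_2^{T}=0$, which is just $S_2T=0$; but its imaginary part, via $T^{T}A^{-1}B=-M^{T}T^{T}A^{-1}$, becomes $M^{T}T^{T}A^{-1}S_2^{T}=0$, which (when $M$ is invertible) forces the rows of $S_2$ to lie in $(\mathrm{Im}\,A^{-1}T)^{\perp}$ as well as in $(\mathrm{Im}\,T)^{\perp}$. These two $(d-k)$-dimensional subspaces need not coincide. The paper's one-line passage from $T'^{T}A^{-1}T=0$ to block-diagonality of $F^{-1}\Gamma(F^{T})^{-1}$ has the same gap.
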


\subsection{Measurements on Gaussian states family}\Label{s33}
We discuss the stochastic behavior of the outcome of the  measurement 
on the c-q system generated by $\vec{R}=(R_j)_{j=1}^d$
when the state is given as a general Gaussian state $G[\vec{\alpha}, {\Gamma}]$.
To this purpose, we introduce the notation
$\wp_{\vec{\alpha}| M}(B):= \Tr \Big [  G  [\vec{\alpha}, {\Gamma}] M(B)\Big] $ for a POVM $M$.
Then, we have the following lemma.
\begin{lem}\Label{ANH}
Let $\vec{X}  =  (X_i)_{i=1}^k$  be  the  vector defined by  $X_i : = \sum_{j=1}^d P_{i,j} R_j$,  where $P$ is a real $k\times d$ matrix.   
For a weighted matrix $W>0$, there exists a POVM $M_{P|W}^\Gamma$ with outcomes in  $\mathbb{R}^k$
such that
$\int x_i M_{P|W}^\Gamma (d \vec{x})= X_i$
and
$\wp_{\vec{\alpha}| M_{P|W}^\Gamma}$ is the normal distribution  
with average $ (\sum_{j=1}^d P_{i,j}\alpha_j)_{i=1}^k$ and covariance matrix
$$  {\sf Re}  (Z_{\vec{\alpha}}(\vec{X}))+\sqrt{W}^{-1} |\sqrt{W} {\sf Im}  (Z_{\vec{\alpha}}(\vec{X}))\sqrt{W}| \sqrt{W}^{-1}.$$
In this case, the weighted covariance matrix is 
$$\tr  \Big[ W {\sf Re} ( Z_{\vec{\alpha}}(\vec{X})) \Big] +\tr \Big|\sqrt{W} {\sf Im} ( Z_{\vec{\alpha}}(\vec{X}))\sqrt{W} \Big|. $$
\end{lem}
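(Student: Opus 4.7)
The plan is to construct $M_{P|W}^\Gamma$ via a Naimark-style dilation: I adjoin an ancillary Gaussian system carrying a $k$-tuple of operators $\vec{Y}=(Y_1,\dots,Y_k)$ prepared in a Gaussian state $\sigma$ with zero mean and correlation matrix $C$ (a Hermitian $k\times k$ matrix with ${\sf Re}(C)\ge 0$), and jointly measure the ``noisy'' observables $\tilde X_i := X_i\otimes I_A + I\otimes Y_i$. Lemma~\ref{3LL} guarantees that any such $C$ with ${\sf Re}(C)\ge 0$ can be realized as the correlation matrix of an honest canonical Gaussian state, supplying the pair $(\vec Y,\sigma)$. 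I fix the imaginary part of $C$ so that the $\tilde X_i$ commute, namely ${\sf Im}(C)=-{\sf Im}(Z_{\vec\alpha}(\vec X))$, using the identity $[X_i,X_j]=2i\,{\sf Im}(Z_{\vec\alpha}(\vec X))_{i,j}$, which follows from $[R_k,R_l]=i\,{\sf Im}(\Gamma_{k,l})$ and the linearity $X_i=\sum_j P_{i,j}R_j$.

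Once the $\tilde X_i$ commute, the joint spectral theorem produces a PVM $E$ on the combined Hilbert space, and I define the system POVM by $M_{P|W}^\Gamma(B):=\Tr_A[(I\otimes\sigma)E(B)]$, which is indeed normalized since $\Tr_A[(I\otimes\sigma)I]=I$. The resulting outcome distribution on input $G[\vec\alpha,\Gamma]$ equals the joint distribution of the commuting Gaussian observables $\tilde X_i$ on $G[\vec\alpha,\Gamma]\otimes\sigma$; since all observables are linear in the canonical operators and the state is Gaussian, this distribution is Gaussian. Its mean is $\langle\tilde X_i\rangle=\sum_j P_{i,j}\alpha_j$ because $\langle Y_i\rangle_\sigma=0$, and its covariance is ${\sf Re}(Z_{\vec\alpha}(\vec X))+{\sf Re}(C)$, as the antisymmetric (imaginary) parts cancel by the commutation condition. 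A parallel computation of the first moment gives $\int x_i\,M_{P|W}^\Gamma(d\vec x)=X_i+\Tr[\sigma Y_i]\cdot I = X_i$, delivering the unbiasedness claim.

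It then remains to optimize the still-free parameter ${\sf Re}(C)$: minimize $\tr[W\,{\sf Re}(C)]$ subject to the physicality constraint that $C={\sf Re}(C)-i\,{\sf Im}(Z_{\vec\alpha}(\vec X))$ is positive semidefinite. Conjugating by $\sqrt{W}$ and block-diagonalizing the real antisymmetric matrix $\sqrt{W}\,{\sf Im}(Z_{\vec\alpha}(\vec X))\,\sqrt{W}$ into $2\times 2$ symplectic blocks reduces this to independent one-mode uncertainty inequalities, whose saturation yields the optimum
\[
{\sf Re}(C) \;=\; \sqrt{W}^{-1}\bigl|\sqrt{W}\,{\sf Im}(Z_{\vec\alpha}(\vec X))\,\sqrt{W}\bigr|\sqrt{W}^{-1},
\]
with minimum value $\tr\bigl|\sqrt{W}\,{\sf Im}(Z_{\vec\alpha}(\vec X))\,\sqrt{W}\bigr|$. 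Substituting this back into the covariance formula yields exactly the stated expression, and taking the trace against $W$ recovers the final identity for the weighted covariance.

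The main technical obstacle I anticipate is verifying that the optimal $C$ above is genuinely positive semidefinite as a Hermitian matrix (not merely that its real part is), which is what licenses the invocation of Lemma~\ref{3LL} to produce the ancilla state. This amounts to the inequality $\sqrt{W}^{-1}|\sqrt{W}\,{\sf Im}(Z_{\vec\alpha}(\vec X))\,\sqrt{W}|\sqrt{W}^{-1} \,\pm\, i\,{\sf Im}(Z_{\vec\alpha}(\vec X))\ge 0$, which in turn reduces, after conjugating by $\sqrt{W}$, to the elementary fact that for any real antisymmetric matrix $B$ one has $|B|\pm iB\ge 0$. Once this positivity is in hand, the Naimark construction goes through unchanged and the proof is complete.
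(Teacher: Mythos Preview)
Your argument is correct and takes a different route from the paper. The paper builds the POVM explicitly as the covariant measurement $M_\rho(B)=\int_B T_{\vec\alpha}\,\rho\, T_{\vec\alpha}^\dagger\,d\vec\alpha$ seeded by a squeezed state $\rho$, finds the correct squeezing by symplectically diagonalizing $W$, and then treats a general $\vec X$ through a four-step case reduction (first $\vec X=\vec R$ with only a quantum part, then a special block form, then arbitrary quantum $\vec X$ via a further symplectic change of canonical variables, and finally the classical part added on at the end). You instead realize the measurement in one stroke as a Naimark dilation with a Gaussian ancilla of correlation matrix $C$, recast the choice of ancilla as the semidefinite program $\min\{\tr[W\,{\sf Re}(C)]:\ C\ge 0,\ {\sf Im}(C)=-{\sf Im}\,Z_{\vec\alpha}(\vec X)\}$, and solve it abstractly via the block decomposition of the real antisymmetric matrix $\sqrt{W}\,{\sf Im}\,Z_{\vec\alpha}(\vec X)\,\sqrt{W}$ together with the elementary inequality $|B|\pm iB\ge 0$. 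Your route is more economical and handles the classical and quantum parts uniformly (the kernel of ${\sf Im}\,Z_{\vec\alpha}(\vec X)$ simply receives zero ancilla noise), whereas the paper's route produces the optimal POVM in the explicit covariant form it later needs in Theorems~\ref{thm-MSE} and~\ref{prop-fixed}. The two constructions in fact coincide: your optimal ancilla, which saturates $C\ge 0$, is a squeezed vacuum on each symplectic block---exactly the seed state the paper writes down in its Step~1.
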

The proof is provided in Appendix \ref{AANH}. 

In the above lemma, when $\vec{X}=\vec{R}$, we simplify $M_{P|W}^{\Gamma}$ to $M_{W}^\Gamma $.
This lemma is useful for estimation in the Gaussian states family 
${\cal M}':=\{G[\vec{t}, {\Gamma}]\}_{\vec{t}\in \mathbb{R}^{d}}$.
In this family, 
we consider the covariant condition.
\begin{defi}
A POVM $M$ is a {\it covariant} estimator for the family 
$\{G[\vec{t}, {\Gamma}]\}_{\vec{t}\in \mathbb{R}^{k}}$
when
the distribution $\wp_{\vec{t}|M}(\set{B}):=\Tr G[\vec{t}, {\Gamma}] M(\set{B})$
satisfies the condition $\wp_{\vec{0}|M}(\set{B})=\wp_{\vec{t}|M}(\set{B}+\vec{t})$
for any $\vec{t}$.
This condition is equivalent to 
\begin{align*}
M(\set{B}+\vec{t})=T_{\vec{t}} M(\set{B}) T_{\vec{t}}^\dagger  \qquad \forall \vec{t} \in \R^k \, .
\end{align*}
\end{defi}
Then, we have the following lemma for this Gaussian states family.

\begin{cor}[\cite{holevo-book}]\Label{LFD}
For any weight matrix $W\ge 0$ and the above Gaussian states family ${\cal M}'$, we have
\begin{align}
& \inf_{M \in \set{UB}_{{\cal M}'}}\tr  \Big[   W V_{\vec{t}}(M) \Big]  
=\inf_{M \in \set{CUB}_{{\cal M}'}}\tr  \Big[ W V_{\vec{t}}(M)\Big]  \nonumber \\
= & \map{C}_{{\rm R},{\cal S'}}(W,\vec{t})
= \tr \Big [  {\sf Re} ( \Gamma) W\Big]   + \tr\Big |\sqrt{W} {\sf Im} ( \Gamma)\sqrt{W}\Big |,\Label{DIT2}
\end{align}
where $\set{CUB}_{{\cal M}'}$ are the sets of covariant unbiased estimators for the model ${\cal M}'$, respectively.
Further, 
when $W>0$, the above infimum is attained by 
the covariant unbiased estimators $M_W^\Gamma$ whose output distribution is 
the normal distribution with average $\vec{t}$ and covariance matrix
 ${\sf Re} ( (\Gamma) +\sqrt{W}^{-1} |\sqrt{W} {\sf Im} ( \Gamma) \sqrt{W}|\sqrt{W}^{-1}$.
\end{cor}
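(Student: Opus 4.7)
The plan is to sandwich the claimed value $\map{C}_{{\rm R},{\cal S'}}(W,\vec{t})$ between a lower bound coming from the general RLD Cram\'er--Rao inequality and an upper bound coming from the explicit estimator constructed via Lemma~\ref{ANH}.

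First, I would invoke Lemma~\ref{lem-QLAN-QFI1}, which says that the Gaussian family $\{G[\vec t,\Gamma]\}_{\vec t\in\R^d}$ is $D$-invariant and that its inverse RLD Fisher information satisfies $\tilde J_{\vec t}^{-1}=\Gamma$. Substituting this identity into the general formula \eqref{BFR} immediately yields
\begin{align*}
\inf_{M\in\set{UB}_{{\cal M}'}}\tr[W V_{\vec t}(M)]\ \ge\ \map{C}_{{\rm R},{\cal S'}}(W,\vec t)\ =\ \tr[{\sf Re}(\Gamma) W]+\tr\bigl|\sqrt W\,{\sf Im}(\Gamma)\sqrt W\bigr|,
\end{align*}
which provides the lower half of the sandwich.

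Second, I would close the sandwich from above by exhibiting an estimator that attains this value. Taking $P$ to be the identity so that $\vec X=\vec R$ in Lemma~\ref{ANH}, I obtain, for $W>0$, a POVM $M_W^\Gamma$ whose output under $G[\vec t,\Gamma]$ is the normal distribution with mean $\vec t$ and covariance ${\sf Re}(\Gamma)+\sqrt W^{-1}\bigl|\sqrt W\,{\sf Im}(\Gamma)\sqrt W\bigr|\sqrt W^{-1}$. Unbiasedness is automatic from the mean being $\vec t$, and the weighted trace of the covariance equals $\map{C}_{{\rm R},{\cal S'}}(W,\vec t)$ exactly. The case of a general weight $W\ge 0$ follows by applying the strictly positive case to $W+\epsilon I$ and letting $\epsilon\downarrow 0$.

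Third, for the equality between the infimum over $\set{UB}_{{\cal M}'}$ and the infimum over $\set{CUB}_{{\cal M}'}$, I would verify that $M_W^\Gamma$ is itself covariant. Since $G[\vec t,\Gamma]=T_{\vec t}\,G[\vec 0,\Gamma]\,T_{\vec t}^\dagger$ and the POVM of Lemma~\ref{ANH} is built from a fixed ancillary Gaussian state together with a joint quadrature measurement of $\vec R$-linear combinations, conjugation of $M_W^\Gamma$ by $T_{\vec t}$ shifts every outcome by $\vec t$, which is exactly $M_W^\Gamma(\set B+\vec t)=T_{\vec t}M_W^\Gamma(\set B)T_{\vec t}^\dagger$. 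Together with the trivial inclusion $\set{CUB}_{{\cal M}'}\subset\set{UB}_{{\cal M}'}$, this forces the chain
\begin{align*}
\map{C}_{{\rm R},{\cal S'}}(W,\vec t)\,\le\, \inf_{\set{UB}_{{\cal M}'}}\tr[W V_{\vec t}(M)]\,\le\, \inf_{\set{CUB}_{{\cal M}'}}\tr[W V_{\vec t}(M)]\,\le\, \tr[W V_{\vec t}(M_W^\Gamma)]\,=\,\map{C}_{{\rm R},{\cal S'}}(W,\vec t)
\end{align*}
to collapse to a string of equalities, yielding \eqref{DIT2}.

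The main obstacle I anticipate lies in step three: Lemma~\ref{ANH} is stated at the level of output distributions rather than of the POVM operators, so operator-level equivariance of $M_W^\Gamma$ under $T_{\vec t}$ must be extracted from the construction in its proof rather than from its statement. Should that route prove awkward, a clean fallback is to average an arbitrary $M\in\set{UB}_{{\cal M}'}$ against translations on a growing compact set and pass to a weak limit: by convexity of $\tr[W V_{\vec t}(\cdot)]$ and the translation-invariance of the Gaussian family, the resulting covariant unbiased estimator has weighted MSE no larger than that of $M$, which is enough to identify the two infima without touching the internal structure of $M_W^\Gamma$.
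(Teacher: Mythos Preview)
Your proposal is correct and follows essentially the same route as the paper: the lower bound via Lemma~\ref{lem-QLAN-QFI1} plugged into \eqref{BFR}, the upper bound via Lemma~\ref{ANH} with $\vec X=\vec R$, and the degenerate-$W$ case handled by passing to $W_\epsilon=W+\epsilon I$. Your anticipated obstacle is also the right one and resolves exactly as you guess: the POVM built in the proof of Lemma~\ref{ANH} has the form $M_\rho(\set B)=\int_{\set B}T_{\vec\alpha}\,\rho\,T_{\vec\alpha}^\dagger\,d\vec\alpha$, which is manifestly covariant, so no averaging fallback is needed.
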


This corollary can be shown as follows.
Due to Lemma \ref{lem-QLAN-QFI1}, the lower bound \eqref{RLD-QFIMRT} 
of the weighted MSE $\tr W V_{\vec{t}}(M) $ of 
unbiased estimator $M$ is calculated as the RHS of \eqref{DIT2}.
Lemma \ref{ANH} guarantees the required performance of $M_{W}^\Gamma $.
To discuss the case when $W$ is not strictly positive definite,
we consider $W_\epsilon:=W+\epsilon I$.
Using the above method, we can construct an unbiased and covariant estimator 
whose output distribution is 
the $2d^Q$-dimensional distribution of average $\vec{t}$ and covariance 
 ${\sf Re} ( (\Gamma) + \sqrt{W_\epsilon}^{-1} 
|\sqrt{W_\epsilon} {\sf Im} ( \Gamma) \sqrt{W_\epsilon}|\sqrt{W_\epsilon}^{-1}$.
The weighted MSE matrix is 
$\tr  \Big[ W {\sf Re} ( \Gamma) \Big]+ 
\tr  \Big[ \sqrt{W_\epsilon}^{-1} W \sqrt{W_\epsilon}^{-1} 
|\sqrt{W_\epsilon} {\sf Im} ( \Gamma) \sqrt{W_\epsilon}| \Big]$, which converges to 
the bound \eqref{DIT2}.

By combining Proposition \ref{LGT}, this corollary can be extended to 
a linear subfamily of 
$k'$-dimensional Gaussian family 
$\{G[\vec{t}', {\Gamma}]\}_{\vec{t}'\in \mathbb{R}^{k'}}$.
Consider a linear map $T$ from $\mathbb{R}^k$ to 
$\mathbb{R}^{k'}$.
We have the following corollary for the subfamily 
${\cal M}:=\{G[ T(\vec{t}), {\Gamma}]\}_{\vec{t}\in \mathbb{R}^{k}}$.

\begin{cor}\Label{LFD2}
For any weight matrix $W\ge 0$, we have
\begin{align}
& \inf_{M \in \set{UB}_{{\cal M}}}\tr \Big[W V_{\vec{t}}(M) \Big] 
=\inf_{M \in \set{CUB}_{{\cal M}}}\tr \Big[  W V_{\vec{t}}(M) \Big] 
= \map{C}_{{\rm H},{\cal M}}(W,\vec{t}).
\Label{DIT3}
\end{align}
Further, 
when $W>0$, 
we choose a vector $\vec{X}$ to realize the minimum in  \eqref{H-quantity11}.
The above infimum is attained by 
the covariant unbiased estimators $M_W$ whose output distribution is 
the normal distribution with average $\vec{t}$ and covariance matrix
 ${\sf Re} ( (Z_{\vec{t}}(\vec{X}))+
\sqrt{W}^{-1} |\sqrt{W} {\sf Im} ( Z_{\vec{t}}(\vec{X}))\sqrt{W}| \sqrt{W}^{-1}$.
\end{cor}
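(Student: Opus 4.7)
The plan is to apply Proposition \ref{LGT} with the ambient Gaussian family $\mathcal{M}' = \{G[\vec{t}',{\Gamma}]\}_{\vec{t}'\in\mathbb{R}^{k'}}$ serving as the D-invariant extension of the submodel $\mathcal{M}$, and then to attain the resulting bound with an explicit Gaussian POVM constructed via Lemma \ref{ANH}. Observe that by Lemma \ref{lem-QLAN-QFI1} the family $\mathcal{M}'$ is indeed D-invariant, and its SLDs are linear combinations of the operators $R_1,\ldots,R_{k'}$. Therefore Proposition \ref{LGT} applies and yields
\begin{align*}
\mathcal{C}_{\mathrm{H},\mathcal{M}}(W,\vec{t})
=\min_{\vec{X}:\mathcal{M}'}\min_V\bigl\{\tr[WV]\mid V\ge Z_{\vec{t}}(\vec{X})\bigr\},
\end{align*}
where the minimum is over vectors $X_i=\sum_{j=1}^{k'}P_{i,j}R_j$ for a real $k\times k'$ matrix $P$, the constraint \eqref{BCR-1} translating via the chain rule (since the submodel displacement depends on $\vec{t}$ through $T\vec{t}$) into the linear condition $PT = I_k$.

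For the lower bound, the standard derivation of the Holevo bound recalled in Section \ref{s31} already gives $\inf_{M\in\mathsf{UB}_{\mathcal{M}}}\tr[WV_{\vec{t}}(M)]\ge \mathcal{C}_{\mathrm{H},\mathcal{M}}(W,\vec{t})$, and the inclusion $\mathsf{CUB}_{\mathcal{M}}\subseteq\mathsf{UB}_{\mathcal{M}}$ immediately yields the same bound for covariant estimators.

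To prove attainability when $W>0$, I would fix a matrix $P^\star$ achieving the minimum above, producing an optimal $\vec{X}^\star$, and then invoke Lemma \ref{ANH} to construct the POVM $M^\Gamma_{P^\star|W}$. Its output distribution is Gaussian on $\mathbb{R}^k$ with mean $(P^\star T\vec{t}) = \vec{t}$ (since $P^\star T = I_k$) and with covariance matrix ${\sf Re}(Z_{\vec{t}}(\vec{X}^\star))+\sqrt{W}^{-1}|\sqrt{W}{\sf Im}(Z_{\vec{t}}(\vec{X}^\star))\sqrt{W}|\sqrt{W}^{-1}$, so that its weighted MSE equals precisely $\mathcal{C}_{\mathrm{H},\mathcal{M}}(W,\vec{t})$. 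Unbiasedness is the mean condition; covariance follows from the Gaussian form of the output together with the fact that the ambient displacement operator $T_{T\vec{t}}$ translates the state's displacement by $T\vec{t}$ and the output distribution by $P^\star T\vec{t}=\vec{t}$, while leaving the covariance matrix of the output unchanged.

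The main (mild) obstacle is the degenerate case $W\ge 0$ that is not strictly positive: I would handle it by the same perturbation trick already used for Corollary \ref{LFD}, replacing $W$ by $W_\epsilon = W+\epsilon I$, applying the $W>0$ construction to obtain a covariant unbiased POVM whose weighted MSE is continuous in $\epsilon$, and then letting $\epsilon\to 0^+$ to conclude that both infima in \eqref{DIT3} coincide with $\mathcal{C}_{\mathrm{H},\mathcal{M}}(W,\vec{t})$.
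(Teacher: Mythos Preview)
Your proposal is correct and follows essentially the same approach as the paper: invoke Proposition \ref{LGT} (with the ambient Gaussian family $\mathcal{M}'$ as the D-invariant extension) to reduce the Holevo bound to a minimization over $\vec{X}$ whose components are linear combinations of $R_1,\ldots,R_{k'}$, use the Holevo inequality \eqref{H-quantity1} for the lower bound, construct the optimal covariant unbiased POVM via Lemma \ref{ANH} when $W>0$, and handle degenerate $W$ by the $W_\epsilon$ perturbation exactly as in Corollary \ref{LFD}. Your additional details (the explicit translation of \eqref{BCR-1} into $PT=I_k$ and the verification of covariance from the Gaussian output law) are helpful elaborations but do not change the strategy.
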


Proposition \ref{LGT} guarantees that 
$\map{C}_{{\rm H},{\cal M}}(W,\vec{t})$ with \eqref{H-quantity11} can be given 
when the components $\vec{X}$ are given a linear combination of $R_1, \ldots, R_{k'}$.
Hence, the latter part of the corollary with $W>0$ follows from \eqref{H-quantity1} and Lemma \ref{ANH},
implies this corollary for $W>0$.
The case with non strictly positive $W$ can be shown by considering $W_\epsilon$ in the same way as Corollary \ref{LFD}.

\section{Local asymptotic normality}\Label{sec:Tool}
The extension from one-parameter estimation to multiparameter estimation is quite nontrivial.
Hence we first develop the concept of local asymptotic normality which is the key tool to constructing the optimal measurement in multiparameter estimation.
Since we could derive the tight bound of MSE for the Gaussian states family,
it is a natural idea to approximate the general case by Gaussian states family, and local asymptotic normality will serve as the bridge between these general qudit families and Gaussian state families.

\subsection{Quantum local asymptotic normality with specific parametrization}\Label{subsec:condition}

For a quantum system of dimension $d<\infty$,  also known as qudit,  we consider {\em generic states},  described by density matrices with  full rank and non-degenerate spectrum.
To discuss quantum local asymptotic normality, we need to define a specific coordinate system.
For this aim, we consider the neighborhood of a fixed  density matrix $\rho_{\vec{\theta}_0}$, assumed to be diagonal in the canonical  basis of $\C^d$, and parametrized as 
\begin{align*}
\rho_{\vec{\theta}_0} =    
 \sum_{j=1}^{d}  \theta_{0,j}\,   |j\>\<j|\,
\end{align*}
with spectrum ordered as  ${\theta}_{0,1}>\cdots>{\theta}_{0,d-1}>{\theta}_{0,d}>0$. 
In the neighborhood of $\rho_{\vec{\theta}_0} $, 
we parametrize the states of the system as
\begin{align}\Label{state}
\rho_{\vec{\theta}_0+\frac{\vec{\theta}}{\sqrt{n}}}=
U_{\vec{\theta}^R,\vec{\theta}^I} \, \rho_0(\vec{\theta}^C) \, U^\dag_{\vec{\theta}^R,\vec{\theta}^I}
\end{align}
for 
$\vec{\theta}  :=  ( \vec{\theta}^C,\vec{\theta}^R,\vec{\theta}^I)$ with 
$(\vec{\theta}^R,\vec{\theta}^I)\in\R^{d(d-1)}$ and $\vec{\theta}^C \in\R^{d-1}$, 
where  $\rho_0(\vec{\theta}^C)$ is the  diagonal density matrix
\begin{align}\Label{rhomu}
\rho_0 (\vec{\theta}^C)  =    
 \sum_{j=1}^{d}  \,  \left(\theta_{0,j}+\frac{{\theta}^C_j}{\sqrt{n}}\right) \,   |j\>\<j|, \,\qquad \theta^C_d:=-\sum_{k=1}^{d-1}\theta^C_k, 
\end{align}
and $U_{\vec{\theta}^R,\vec{\theta}^I}$ is the unitary matrix defined by
\begin{align}\Label{parameters}
U_{\vec{\theta}^R,\vec{\theta}^I} &=
\exp\left[\sum_{1\le j<k\le d}\frac{i\left(
\theta^{\rm I}_{j,k}F^{\rm I}_{j,k}+\theta^{\rm R}_{j,k}F^{\rm R}_{k,j}\right)}{\sqrt{n(\theta_{0,j}-\theta_{0,k})}}\right].
\end{align}
 Here $\vec{\theta}^R$ and $\vec{\theta}^I$ are vectors of real parameters 
 $ ({\theta}^R_{j,k})_{1\le j<k\le d}$ and $( {\theta}^I_{j,k})_{1\le j<k\le d}$,  
 and  $F^{\rm I}$   ($F^{\rm R}$)
 is the matrix  defined by $(F^{\rm I})_{j,k}:=i\delta_{j,k}-i\delta_{k,j}$   ($(F^{\rm R})_{k,j}:=\delta_{j,k}+\delta_{k,j})$, where
$\delta_{j,k}$   is the delta function.
We note that by this definition the components of $\vec{\theta}^R$ and $\vec{\theta}^I$ are in one-to-one correspondence.
The parameter $\vec{\theta}  =  ( \vec{\theta}^C,\vec{\theta}^R,\vec{\theta}^I)$ will be referred to as \emph{the Q-LAN coordinate},
and the state with this parametrization, which was used by Khan and Guta in \cite{qlan,guta-lan,guta-qubit}, will be denoted by $\rho^{\rm KG}_{\vec{\theta}}$.

Q-LAN establishes an asymptotic correspondence between multicopy qudit states and Gaussian shift models. 
Using the parameterization $\vec{\theta}=(\vec{\theta}^C,\vec{\theta}^R,\vec{\theta}^I)$, we have the multicopy qudit models and Gaussian shift models are equivalent in terms of the RLD quantum Fisher information
matrix: 
\begin{lem}\Label{lem-QLAN-QFI}
The RLD quantum Fisher information matrices of the qudit model and the corresponding Gaussian model in Eq. (\ref{ER1}) are both equal to
\begin{align}\Label{QFI-QLAN}
\left(\widetilde{J}^Q_{\vec{\theta}}\right)^{-1}=E_{d(d-1)/2}\left(\vec{\beta}'\right)+\frac{i}{2}\Omega_{d(d-1)/2}\qquad e^{-\beta'_i}=\frac14\coth{\frac{\beta_i}2}.
\end{align}
\end{lem}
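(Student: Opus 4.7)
\noindent\textbf{Proof proposal for Lemma \ref{lem-QLAN-QFI}.} The lemma makes two assertions that I would treat separately: that the Gaussian-side RLD Fisher information has the claimed form, and that the qudit-side one does as well. The Gaussian side is essentially immediate from Lemma \ref{lem-QLAN-QFI1}, which states $(\tilde J_{\vec\alpha})^{-1}=\Gamma$ for any Gaussian shift family with correlation matrix $\Gamma$. Since the companion Gaussian model is, by construction, a product of $d(d-1)/2$ displaced thermal modes with correlation matrix equal to $E_{d(d-1)/2}(\vec\beta')+(i/2)\Omega_{d(d-1)/2}$, the identity on the Gaussian side follows. All the substantive work is therefore on the qudit side.

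My first step on the qudit side would be a scaling reduction. The $n$-copy RLD Fisher information satisfies the tensor-power identity $\tilde{J}^{(n)}=n\tilde{J}^{(1)}$ in any fixed coordinates, while passing from the unrescaled parameter $\vec\phi$ to the Q-LAN parameter $\vec\theta=\sqrt n\,\vec\phi$ rescales each entry by $1/n$. The two factors cancel, so it suffices to compute the single-copy RLD Fisher information of the family $\rho_{\vec\theta_0+\vec\phi}$ at $\vec\phi=0$ with respect to $\vec\phi$. This reduces the problem to a calculation at a single, invertible, diagonal state $\rho_{\vec\theta_0}$.

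Next I would differentiate the Q-LAN parametrization in \eqref{state}--\eqref{parameters} at the reference point. For the off-diagonal coordinates, the elementary commutator identities $[F^{\rm I}_{jk},\rho_{\vec\theta_0}]=-i(\theta_{0,j}-\theta_{0,k})F^{\rm R}_{jk}$ and $[F^{\rm R}_{jk},\rho_{\vec\theta_0}]=i(\theta_{0,j}-\theta_{0,k})F^{\rm I}_{jk}$, combined with the normalizing prefactor $1/\sqrt{\theta_{0,j}-\theta_{0,k}}$ built into \eqref{parameters}, produce the clean expressions $\partial_{\phi^{\rm I}_{jk}}\rho|_{0}=\sqrt{\theta_{0,j}-\theta_{0,k}}\,F^{\rm R}_{jk}$ and an analogous formula for $\phi^{\rm R}_{jk}$. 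Invertibility of $\rho_{\vec\theta_0}$ then lets me read off the RLDs as $\tilde L_\mu=\rho_{\vec\theta_0}^{-1}\partial_{\phi_\mu}\rho|_0$, and the RLD Fisher matrix entries follow from $\tilde J_{\mu\nu}=\Tr[\tilde L_\nu^\dagger\rho_{\vec\theta_0}\tilde L_\mu]$. The crucial structural observation is that, because the RLDs for distinct pairs $(j,k)\ne(j',k')$ have disjoint matrix support, $\tilde J$ is block-diagonal with a $2\times 2$ block for each pair $(j,k)$ (the classical block for the $\phi^C_j$ decouples and is outside the assertion of the lemma).

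Finally, I would invert each $2\times 2$ block explicitly. A short computation shows that the block for pair $(j,k)$ equals $\tfrac{\theta_{0,j}-\theta_{0,k}}{\theta_{0,j}\theta_{0,k}}\bigl(\,(\theta_{0,j}+\theta_{0,k})I+i(\theta_{0,j}-\theta_{0,k})\Omega_1\bigr)$; inverting this $2\times 2$ Hermitian matrix and simplifying with the substitution $e^{-\beta_i}=\theta_{0,k}/\theta_{0,j}$ yields a real diagonal part together with an imaginary off-diagonal part exactly of the form $(i/2)\Omega_1$, after which matching the diagonal entry with $N'_i=e^{-\beta'_i}/(1-e^{-\beta'_i})$ gives the identity $e^{-\beta'_i}=\tfrac14\coth(\beta_i/2)$. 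I expect the main obstacle to be precisely this bookkeeping step: tracking all the $\sqrt{\theta_{0,j}-\theta_{0,k}}$ prefactors from the Q-LAN coordinates together with the $\sqrt 2$ normalizations of the $Q,P$ quadratures and the paper's convention $\mathrm{Im}(\Gamma)=\tfrac12\Omega$, so that the resulting identification of $\beta'$ with the Gaussian thermal parameter $\beta$ comes out exactly as claimed rather than up to a spurious numerical factor.
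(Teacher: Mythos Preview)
Your approach is correct and differs from the paper's in one substantive way. In Appendix~\ref{app-QFI} the authors do not compute the RLD directly; instead they work out the SLD operators $L_{\theta^{R/I}_{jk}}$, from these obtain the SLD Fisher matrix $J$ and the $D$-matrix block by block, and then invoke the $D$-invariant identity $(\tilde J)^{-1}=J^{-1}+\tfrac{i}{2}J^{-1}DJ^{-1}$ from Eq.~\eqref{RLD-QFI} to reach the claimed form. Your route---computing $\tilde L_\mu=\rho_{\vec\theta_0}^{-1}\partial_{\phi_\mu}\rho$ and then $\tilde J_{\mu\nu}=\Tr[\partial_\nu\rho\,\rho^{-1}\partial_\mu\rho]$ directly---is more self-contained: it avoids having to argue separately that the qudit model is $D$-invariant at $\vec\theta_0$ (which the paper's method tacitly uses). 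The paper's route, on the other hand, has the side benefit of producing $J$ and $D$ explicitly, quantities that are reused elsewhere (e.g.\ in Eq.~\eqref{RLD-QFIMRT} and Theorem~\ref{thm-MSE}). Your scaling reduction to a single copy and your observation that the $\tilde J$ matrix is block-diagonal over pairs $(j,k)$ are both correct and make the computation clean. The only caveat is the one you already flagged: when you invert the $2\times 2$ block $\tfrac{\theta_{0,j}-\theta_{0,k}}{\theta_{0,j}\theta_{0,k}}\big((\theta_{0,j}+\theta_{0,k})I+i(\theta_{0,j}-\theta_{0,k})\Omega_1\big)$, the imaginary part comes out as a multiple of $\Omega_1$ whose precise coefficient depends on the ordering convention for $(R,I)$ and on how $E(\vec\beta')$ is read, so the final identification with the Gaussian $\Gamma^Q$ really is pure bookkeeping rather than a new idea.
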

The calculations can be found in Appendix \ref{app-QFI}.
The quantum version of local asymptotic normality has been derived in several different forms \cite{qlan,guta-lan,guta-qubit} with applications in quantum statistics \cite{guctua2008optimal,gill2013asymptotic}, benchmarks \cite{guctua2010quantum} and data compression \cite{yang2018compression}. Here we use the version of \cite{qlan}, which states that $n$ identical copies of a qudit state can be locally approximated by a c-q Gaussian state in the large $n$ limit.
  The approximation is in the following sense: 
 
 \begin{defi}[Compact uniformly asymptotic equivalence of models]
For every $n\in\N^*$, let  $\{\rho_{\vec{t},n}\}_{\vec{t}\in\set{\Theta}_n}$ and $\{\widetilde{\rho}_{\vec{t},n}\}_{\vec{t}\in\set{\Theta}_n}$ be two models of density matrices acting on  Hilbert spaces $\spc H$ and $\spc K$  respectively
where the set of parameters $\set{\Theta}_n$ may depend on $n$.
We say that the two families are {\em asymptotically equivalent} for $\vec{t} \in \set{\Theta}_n$, 
denoted as 
$\rho_{\vec{t},n} \cong \widetilde{\rho}_{\vec{t},n}~(\vec{t} \in \set{\Theta}_n)$,  
if there exists a quantum channel $\map{T}_n$  (i.e. a completely positive trace preserving map) mapping trace-class operators on $\spc H$ to trace-class operators on $\spc K$ and a quantum channel $\map{S}_n$ mapping trace-class operators on $\spc K$ to trace-class operators on $\spc H$, which are independent of $\vec{t}$ and satisfy the conditions
\begin{align}
&\sup_{\vec{t}\in\set{\Theta}_n }
\left\|
\map{T}_n \left(\rho_{\vec{t},n}\right)
-\widetilde{\rho}_{\vec{t},n} \right\|_1 \xrightarrow{n\to\infty} 0 \Label{LAN-TB}\\
&\sup_{\vec{t}\in\set{\Theta}_n}
\left\|
\rho_{\vec{t}, n}
-\map{S}_n
\left(\widetilde{\rho}_{\vec{t}, n}\right)
\right\|_1 \xrightarrow{n\to\infty} 0 
\Label{LAN-SB} \, .
\end{align}

Next, we extend 
asymptotic equivalence to 
compact uniformly asymptotic equivalence.
In this extension, we also describe the order of the convergence. 

Given a sequence $\{a_n\}$ converging to zero, for every $\vec{t}'$ in a compact set $\set{K}$ consider two models $\{\rho_{\vec{t},\vec{t}',n}\}_{\vec{t}\in\set{\Theta}_n,
}$ 
and $\{\widetilde{\rho}_{\vec{t},\vec{t}',n}\}_{\vec{t}\in\set{\Theta}_n}$. We say that they
are {\em asymptotically equivalent} for $\vec{t} \in \set{\Theta}_n$
compact uniformly with respect to $\vec{t}'$ with order $a_n$, 
denoted as 
$\rho_{\vec{t},\vec{t}',n} 
\stackrel{\vec{t}'}{\cong} \widetilde{\rho}_{\vec{t},\vec{t}',n}~
(\vec{t} \in \set{\Theta}_n, a_n)$,
if for every $\vec{t}'\in\set{K}$ there exists a quantum channel $\map{T}_{n,\vec{t}'}$  
mapping trace-class operators on $\spc H$ to trace-class operators on $\spc K$ and a quantum channel $\map{S}_{n,\vec{t}'}$ mapping trace-class operators on $\spc K$ to trace-class operators on $\spc H$
such that  
\begin{align}
&\sup_{\vec{t}' \in K} \sup_{\vec{t}\in\set{\Theta}_n }
\|
\map{T}_{n,\vec{t}'} (\rho_{\vec{t},\vec{t}',n})
-\widetilde{\rho}_{\vec{t},\vec{t}',n} \|_1 = O(a_n) \Label{LAN-T}\\
&\sup_{\vec{t}' \in K} \sup_{\vec{t}\in\set{\Theta}_n}
\|
\rho_{\vec{t},\vec{t}',n}
-\map{S}_{n,\vec{t}'}
(\widetilde{\rho}_{\vec{t},\vec{t}',n})
\|_1 = O(a_n).
\Label{LAN-S} 
\end{align}
Notice that the channels
$\map{T}_{n,\vec{t}'}$ and $\map{S}_{n,\vec{t}'} $
depend on $\vec{t}'$ and are independent of $\vec{t}$.
\end{defi} 
In the above terminology, Q-LAN establishes an asymptotic equivalence between families of $n$ copy qudit states and Gaussian shift models. 
 Precisely, one has the following 
\begin{prop}[Q-LAN for a fixed parameterization; Kahn and Guta \cite{qlan,guta-lan}]\Label{lem-QLAN}
For any $x <1/9$, we define the set $\set{\Theta}_{n,x}$ of $\vec{\theta}$ as 
$$\set{\Theta}_{n,x}:=\left\{\vec{\theta}~|~\|\vec{\theta}\|\le n^{x}\right\}$$ ($\|\cdot\|$ denotes the vector norm). 
Then, we have the following compact uniformly asymptotic equivalence;
\begin{align}
(\rho^{\rm KG}_{\vec{\theta}_0+\vec{\theta}/\sqrt{n}})^{\otimes n}
\stackrel{\vec{\theta}_0}{\cong}
G[\vec{\theta},\Gamma_{\vec{\theta}_0}]:=
N[\vec{\theta}^C,\Gamma_{\vec{\theta}_0}^C] \otimes \Phi[(\vec{\theta}^R,\vec{\theta}^I),\vec{\beta}_{\vec{\theta}_0}]
~(\vec{\theta} \in \set{\Theta}_{n,x},n^{-\kappa} ),
 \Label{ER1}
\end{align}
where $\kappa$ is a parameter to satisfy $\kappa\ge 0.027$, and
$N[\vec{\theta}^C,\Gamma_{\vec{\theta}_0}] $ 
is the multivariate normal distribution with mean $\vec{\theta}^C$ and covariance matrix 
$\Gamma_{\vec{\theta}_0,k,l}:= (J_{\vec{\theta}_0}^{-1})_{k,l}$ 
for $k,l=1, \ldots, d-1$, and 
$(\vec{\beta})_{\vec{\theta}_0,j,k}:=\frac{(\rho_{\vec{\theta}_0})_{k,k}}{ (\rho_{\vec{\theta}_0})_{j,j}}$. 
 \end{prop}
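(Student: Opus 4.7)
The plan is to prove Proposition \ref{lem-QLAN} by exploiting Schur--Weyl duality to decompose $(\mathbb{C}^d)^{\otimes n}$ into irreducible blocks and then separating the ``diagonal'' (classical) and ``off-diagonal'' (quantum) fluctuations of $\rho_{\vec{\theta}_0+\vec{\theta}/\sqrt{n}}^{\otimes n}$. Under the decomposition $(\mathbb{C}^d)^{\otimes n}=\bigoplus_{\lambda\vdash n, \ell(\lambda)\le d} \mathcal{H}_\lambda^U\otimes\mathcal{H}_\lambda^{S_n}$, the $n$-copy state becomes block-diagonal in $\lambda$, and the diagonal probability distribution on $\lambda$ concentrates (by classical LAN applied to the empirical eigenvalue distribution) around $n\vec{\theta}_0$ with Gaussian fluctuations of order $\sqrt{n}$ governed by the covariance $(J_{\vec{\theta}_0}^{-1})_{k,l}$ on the first $d-1$ coordinates. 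This accounts for the classical factor $N[\vec{\theta}^C,\Gamma_{\vec{\theta}_0}^C]$ in the target Gaussian state.

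To handle the quantum factor, I would restrict attention to typical Young diagrams $\lambda$ with $\lambda_j/n\to \theta_{0,j}$ up to fluctuations of order $n^{-1/2+\delta}$ for a small $\delta$, which carries probability $1-O(n^{-\infty})$ by standard large-deviation estimates. On each typical block, the rescaled unitary $U_{\vec{\theta}^R,\vec{\theta}^I}$ of \eqref{parameters} acts on the irreducible representation by generators $F^{\rm R}_{j,k},F^{\rm I}_{j,k}$ which, after the $1/\sqrt{n(\theta_{0,j}-\theta_{0,k})}$ rescaling, satisfy canonical commutation relations in the large-$n$ limit. Concretely, one constructs an isometry from the highest-weight-vector neighborhood of $\mathcal{H}_\lambda^U$ onto a truncated Fock space such that the rescaled generators converge in norm to $Q_{j,k},P_{j,k}$ on truncated thermal states; the thermal parameters $\vec{\beta}_{\vec{\theta}_0}$ come precisely from the ratios $(\rho_{\vec{\theta}_0})_{k,k}/(\rho_{\vec{\theta}_0})_{j,j}$ (Boltzmann weights). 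The channels $\mathcal{T}_{n,\vec{\theta}_0}$ and $\mathcal{S}_{n,\vec{\theta}_0}$ are then assembled from (i) the projective measurement of $\lambda$, (ii) the Fock-space isometry (and its right-inverse by completion to a Stinespring dilation) on typical blocks, and (iii) a fixed state on atypical blocks. These channels depend on $\vec{\theta}_0$ but not on $\vec{\theta}$, as required.

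The trace-norm bounds \eqref{LAN-T}--\eqref{LAN-S} then split into three contributions: the atypical-$\lambda$ error (super-polynomially small in $n$), the error from approximating the group-theoretic generators by canonical operators on Fock space, and the error from the local expansion of $U_{\vec{\theta}^R,\vec{\theta}^I}$ to second order. Each of these is bounded by a polynomial in $\|\vec{\theta}\|/\sqrt{n}$ times a factor coming from the energy concentration of the thermal state. For $\|\vec{\theta}\|\le n^x$ with $x<1/9$, one optimizes the truncation level of Fock space against the order-of-expansion error to obtain a uniform bound of order $n^{-\kappa}$ for some $\kappa\ge 0.027$; this specific exponent comes from balancing the residual from cubic terms in the Baker--Campbell--Hausdorff expansion of $U_{\vec{\theta}^R,\vec{\theta}^I}$ against the tail of the thermal distribution.

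The main obstacle is obtaining the compact-uniform rate in $\vec{\theta}_0$ with an explicit exponent $\kappa$, rather than mere convergence. This requires that every error estimate -- concentration of $\lambda$, norm approximation of representations by CCR operators, truncation of the Fock space, and BCH tail bounds -- be made uniform over $\vec{\theta}_0$ in a compact set of non-degenerate diagonal spectra, which is delicate because the normalization $1/\sqrt{n(\theta_{0,j}-\theta_{0,k})}$ in \eqref{parameters} and the thermal parameters $\vec{\beta}_{\vec{\theta}_0}$ both degenerate as the spectral gaps shrink. Assuming ${\theta}_{0,1}>\cdots>{\theta}_{0,d}>0$ strictly, one obtains uniform control on every compact $\vec{\theta}_0$-set bounded away from the gap-closing locus, which is precisely the regime covered by the proposition.
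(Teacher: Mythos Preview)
The paper does not prove this proposition at all: it is stated as a result of Kahn and Gu\c{t}\u{a} and simply cited to \cite{qlan,guta-lan}, then used as a black box (notably in the proof of Theorem~\ref{Th3}). There is therefore no ``paper's own proof'' to compare against.

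That said, your sketch does follow the architecture of the cited Kahn--Gu\c{t}\u{a} construction: Schur--Weyl decomposition, classical LAN for the Young-diagram label giving the $N[\vec{\theta}^C,\Gamma_{\vec{\theta}_0}^C]$ factor, and an isometric embedding of the typical $U(d)$-irreducible blocks into truncated multi-mode Fock space so that the rescaled $F^{\rm R}_{j,k},F^{\rm I}_{j,k}$ become canonical $Q,P$ with thermal parameters $\beta_{j,k}=\log(\theta_{0,j}/\theta_{0,k})$. Your identification of the three error sources (atypical $\lambda$, representation-to-CCR approximation, BCH remainder) and of the need for compact uniformity in $\vec{\theta}_0$ away from the degeneracy locus is correct in spirit. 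What your outline does not supply---and what constitutes the actual work in \cite{qlan}---are the quantitative norm estimates on the Gelfand--Tsetlin/Fock-space isometry and the explicit optimization yielding the range $x<1/9$ and the value $\kappa\ge 0.027$; these numbers are not recoverable from the high-level description you give and require the detailed computations of the cited papers. For the purposes of the present paper, citing \cite{qlan,guta-lan} is the intended ``proof''.
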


The conditions (\ref{LAN-T}) and (\ref{LAN-S}) are not enough to translate precision limits for one family into precision limits for the other. This is because such limits are often expressed in terms of the derivatives of the density matrix, whose asymptotic behaviour is not fixed by (\ref{LAN-T}) and (\ref{LAN-S}). In the following we will establish an asymptotic equivalence in terms of the RLD quantum Fisher information.

\subsection{Quantum local asymptotic normality with generic parametrization}\Label{subsec:condition2}

In the following, we explore to which  extent can we extend Q-LAN in Proposition \ref{lem-QLAN}. Precisely, we derive a Q-LAN  equivalence as in Eq. (\ref{ER1}) which is not restricted to  the parametrization of Eqs. (\ref{rhomu}) and (\ref{parameters}). 

In the previous subsection, we have discussed the specific parametrization given in \eqref{state}.
In the following, we discuss a generic parametrization.
Given an arbitrary D-invariant model $\rho^{\otimes n}_{\vec{t}_0+\frac{\vec{t}}{\sqrt{n}}}$ with vector parameter  $\vec{t}$, 
we have the following theorem.

\begin{theo}[Q-LAN for an arbitrary parameterization]\Label{Th3}
Let $\{\rho_{\vec{t}}\}_{\vec{t}\in \set{\Theta}}$ be a $k$-parameter D-invariant qudit model. 
Assume that $\rho_{\vec{t}_0}$ is a non-degenerate state,
the parametrization is $C^2$ continuous, and $\tilde{J}_{\vec{t}_0}^{-1} $ exists.
Then, there exist a constant $c({\vec{t}_0})$ such that
the set 
\begin{align}
\set{\Theta}_{n,x,c({\vec{t}_0})}:=
\left\{\vec{t}~|~\|\vec{t}\|\le c({\vec{t}_0}) n^{x}\right\}\label{neighborhood-c}
\end{align} with $x <1/9$
satisfies 
\begin{align}
\rho^{\otimes n}_{\vec{t}_0+\frac{\vec{t}}{\sqrt{n}}}
\stackrel{\vec{t}_0}{\cong}
G[ \vec{t},\tilde{J}_{\vec{t}_0}^{-1}]
~(\vec{t} \in \set{\Theta}_{n,x,c({\vec{t}_0})}\cap \mathbb{R}^k
,n^{-\kappa} ),
 \Label{ER7}
\end{align}
where $\tilde{J}_{\vec{t}_0}^{-1}$ is the RLD Fisher information at $\vec{t}_0$ and $\kappa$ is a parameter to satisfy $\kappa\ge 0.027$.
\end{theo}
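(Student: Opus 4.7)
The plan is to reduce Theorem \ref{Th3} to the Kahn–Guta Q-LAN (Proposition \ref{lem-QLAN}), which is already available in the specific KG parameterization, by combining a change of coordinates with the structural decomposition of D-invariant Gaussian submodels given by Lemma \ref{LNS}.

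First I would fix an orthonormal basis in which $\rho_{\vec t_0}$ is diagonal, so that it takes the form $\rho^{\rm KG}_{\vec\theta_0}$, and use the smoothness hypothesis to write the arbitrary model in KG coordinates as $\rho_{\vec t_0 + \vec s} = \rho^{\rm KG}_{\vec\theta_0 + \psi(\vec s)}$, where $\psi$ is a $C^{2}$ map with $\psi(0)=0$. A Taylor expansion gives $\psi(\vec s) = T \vec s + O(\|\vec s\|^{2})$ with Jacobian $T = D\psi(0)$, so with $\vec s = \vec t/\sqrt{n}$,
\[
\rho_{\vec t_0 + \vec t/\sqrt{n}} \;=\; \rho^{\rm KG}_{\vec\theta_0 + \vec\theta_n(\vec t)/\sqrt{n}}, \qquad \vec\theta_n(\vec t) = T\vec t + \vec R_n(\vec t), \qquad \|\vec R_n(\vec t)\| = O\!\bigl(\|\vec t\|^{2}/\sqrt{n}\bigr).
\]
The constant $c(\vec t_0)$ in the definition \eqref{neighborhood-c} is chosen so that for $\vec t \in \set\Theta_{n,x,c(\vec t_0)}$ one has $\vec\theta_n(\vec t) \in \set\Theta_{n,x'}$ for some $x' < 1/9$, ensuring that Proposition \ref{lem-QLAN} applies along the entire parameter curve.

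Next I would apply Proposition \ref{lem-QLAN} with the $\vec t_0$-dependent but $\vec t$-independent channels $\map{T}_n,\map{S}_n$ it provides, obtaining
\[
\bigl\|\map{T}_n\bigl((\rho_{\vec t_0+\vec t/\sqrt{n}})^{\otimes n}\bigr) - G[\vec\theta_n(\vec t),\Gamma_{\vec\theta_0}]\bigr\|_1 = O(n^{-\kappa})
\]
uniformly on $\set\Theta_{n,x,c(\vec t_0)}$, and similarly for $\map{S}_n$. The displacement $\vec\theta_n(\vec t)$ then needs to be replaced by its linear part $T\vec t$: since two general Gaussian states with the same correlation matrix and displacements differing by $\vec R_n(\vec t)$ are joined by a Gaussian shift $T_{\vec R_n(\vec t)}$ whose deviation from the identity is controlled in trace norm by $O(\|\vec R_n(\vec t)\|)=O(n^{2x-1/2})$, and since $x<1/9$ gives $n^{2x-1/2} = O(n^{-5/18}) = o(n^{-\kappa})$, this substitution is absorbed into the error term. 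Thus $\rho^{\otimes n}_{\vec t_0+\vec t/\sqrt{n}} \stackrel{\vec t_0}{\cong} G[T\vec t,\Gamma_{\vec\theta_0}]$ with order $n^{-\kappa}$.

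Finally I would use the D-invariance hypothesis. The SLDs of the submodel $\{G[T\vec t,\Gamma_{\vec\theta_0}]\}$ are the $T$-images of those of $\{\rho_{\vec t}\}$ under the Q-LAN correspondence, so D-invariance of the qudit model implies condition (1) of Lemma \ref{LNS} for the linear Gaussian submodel; the lemma then supplies a $\vec t$-independent unitary $U$ with $U\,G[T\vec t,\Gamma_{\vec\theta_0}]\,U^\dagger = G[\vec t,\Gamma_T]\otimes G[0,\Gamma_0]$. A direct computation using $\tilde J_{\rm sub}=T^{\top}\tilde J_{\rm full}T$ together with Lemma \ref{lem-QLAN-QFI1} identifies $\Gamma_T = \tilde J_{\vec t_0}^{-1}$; tensoring with (resp.\ tracing out) the fixed ancillary state $G[0,\Gamma_0]$ gives the two channels needed to promote the equivalence from $G[T\vec t,\Gamma_{\vec\theta_0}]$ to $G[\vec t,\tilde J_{\vec t_0}^{-1}]$. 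The main obstacle is the bookkeeping in the second step: the Taylor remainder must be controlled \emph{uniformly} on the expanding set $\|\vec t\|\le c(\vec t_0)\,n^{x}$ while simultaneously keeping $\vec\theta_n(\vec t)$ inside the KG validity region, and the error from replacing $\vec\theta_n(\vec t)$ by $T\vec t$ must beat the KG rate $n^{-\kappa}$ — it is precisely the constraint $x<1/9$ that makes both requirements compatible, and the compact uniformity in $\vec t_0$ rests on the continuity of $\vec\theta_0$, $T$, $\Gamma_{\vec\theta_0}$, and $c(\vec t_0)$ in $\vec t_0$ on compact subsets where $\rho_{\vec t_0}$ remains non-degenerate.
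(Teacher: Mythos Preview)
Your proposal is correct and follows essentially the same route as the paper's proof: pass to KG coordinates via a $C^2$ map, apply Proposition \ref{lem-QLAN}, replace the nonlinear displacement by its linearization $T\vec t$ using the $O(\|\vec t\|^2/\sqrt n)=O(n^{-5/18})$ Taylor bound on the Gaussian shift, and then invoke Lemma \ref{LNS} to pass from $G[T\vec t,\Gamma_{\vec\theta_0}]$ to $G[\vec t,\tilde J_{\vec t_0}^{-1}]$. Your account is in fact slightly more explicit than the paper's in spelling out the identification $\Gamma_T=\tilde J_{\vec t_0}^{-1}$ and the tensor/trace construction of the final channels.
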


\begin{proof}
We choose the basis $\{|i\rangle\}_{i=1}^d$ to diagonalize the state $\rho_{\vec{t}_0}$.
We denote the Q-LAN parametrization based on this basis by $\rho^{{\rm KG}|\vec{t}_0}_{\vec{\theta} } $, where this parametrization depends on $\vec{t}_0$.
It is enough to consider the neighborhood $U(\vec{t}_0)$ of $\vec{t}_0$.
There exists a map $f_{\vec{t}_0}$  on $U(\vec{t}_0)$ such that
$ \rho_{\vec{t}_0+\vec{t}}= \rho^{{\rm KG}|\vec{t}_0}_{\vec{\theta}_0(\vec{t}_0)+f_{\vec{t}_0}(\vec{t})}$,
where $\vec{\theta}_0(\vec{t}_0) $ is the parameter to describe the diagonal elements of  $\rho_{\vec{t}_0}$.
Since the parametrization $ \rho_{\vec{t}}$ is $C^2$-continuous,
the function $f$ is also $C^2$-continuous.
Proposition \ref{lem-QLAN} guarantees that 
\begin{align}
\rho_{\vec{t}_0+ \vec{t}/\sqrt{n}}^{\otimes n}
&=
\left(\rho^{{\rm KG}|\vec{t}_0}_{\vec{\theta}_0(\vec{t}_0)+f_{\vec{t}_0}(\vec{t}/\sqrt{n})}\right)^{\otimes n} \nonumber\\
&\stackrel{\vec{\theta}_0}{\cong}
G[\sqrt{n} f_{\vec{t}_0}(\vec{t}/\sqrt{n}),\Gamma_{\vec{\theta}_0(\vec{t}_0)}]
~(\vec{t} \in \set{\Theta}_{n,x,c({\vec{t}_0})}\cap \mathbb{R}^k
,n^{-\kappa} )
 \Label{ER1T}
\end{align}
with suitable choice of the constant $c({\vec{t}_0})$.
Denoting by $f_{\vec{t}_0}' $  the Jacobian matrix of $f_{\vec{t}_0}$,
since $f$ is  $C^2$-continuous and $f_{\vec{t}_0}(0)=0$, 
the norm $\| \sqrt{n} f_{\vec{t}_0}(\vec{t}/\sqrt{n})- f_{\vec{t}_0}'(0)\vec{t} \|_1$ is evaluated as
$O(\frac{\|\vec{t}\|^2}{\sqrt{n}}) $.
Hence, 
the trace norm $
\|G[\sqrt{n} f_{\vec{t}_0}(\vec{t}/\sqrt{n}),\Gamma_{\vec{\theta}_0(\vec{t}_0)}]
-
G[f_{\vec{t}_0}'(0)\vec{t},\Gamma_{\vec{\theta}_0(\vec{t}_0)}]\|_1
$ is also $O(\frac{\|\vec{t}\|^2}{\sqrt{n}}) $, which is at most $O(n^{-5/18}) $
because $\vec{t} \in \set{\Theta}_{n,x,c({\vec{t}_0})}$.
Since $O(n^{-5/18})$ is smaller than $n^{-\kappa}$,
the combination of this evaluation and \eqref{ER1T} yields 
\begin{align}
\rho_{\vec{t}_0+ \vec{t}/\sqrt{n}}^{\otimes n}
\stackrel{\vec{\theta}_0}{\cong}
G[f_{\vec{t}_0}'(0)\vec{t},\Gamma_{\vec{\theta}_0(\vec{t}_0)}]
~(\vec{t} \in \set{\Theta}_{n,x,c({\vec{t}_0})}\cap \mathbb{R}^k
,n^{-\kappa} ).
 \Label{ER6}
\end{align}
The combination of Lemma \ref{LNS} and \eqref{ER6} implies 
\eqref{ER7}.
\end{proof}

\section{
The $\epsilon$-difference RLD Fisher information matrix}
\label{sec-RLD}
In Section \ref{S-one} we  evaluated the limiting distribution in the one-parameter case, using the fidelity as  a discretized version of the SLD Fisher information. 
 In order to tackle the  multiparameter case, we need to develop a similar discretization for the RLD Fisher information matrix, which is the relevant quantity for the multiparameter setting (cf. Section \ref{s3}).  
In this section we define  a discretized version of the RLD Fisher information matrix, extending to the multiparameter case the single-parameter definition introduced by Tsuda  and Matsumoto  \cite{tsuda2005quantum},  who in turn extended the corresponding  classical notion   \cite{chapman1951minimum,hammersley1950estimating}.

\subsection{Definition} 

 Let  $\map M  =  \{  \rho_{\vec{t}}\}_{\vec t \in  \set \Theta}$ be a $k$-parameter model, with the property that  $\rho_{\vec{t}_0}$ is invertible.  If the parametrization  $\rho_{\vec{t}}$ is differentiable,  
 the RLD quantum Fisher information matrix $\tilde{J}_{\vec{t}}$ can be rewritten as 
the following  $k\times k$ matrix
 \begin{align}
\Label{RLDNuova} 
\left(\tilde{J}_{\vec{t}_0}\right)_{ij}
=\Tr  \left[  
\tilde{L}_j^\dagger \rho_{\vec{t}_0} \tilde{L}_i \right]  =  \Tr  \left[  \left.
  \frac{\partial \rho_{\vec{t}}}{\partial t_j}\right|_{\vec{t}=\vec{t}_0}    \rho_{\vec{t}_0}^{-1}   \left. \frac{\partial \rho_{\vec{t}}}{\partial t_i}\right|_{\vec{t}=\vec{t}_0}  \right] \, .
\end{align}

The $\epsilon$-difference   RLD quantum  Fisher information matrix $\tilde{J}_{\vec{t}_0,\epsilon}$ is defined by  replacing the partial derivatives  with finite increments: 
\begin{align}
\left(\tilde{J}_{\vec{t}_0,\epsilon}\right)_{i,j}
:=&   
\Tr   \left[   \left(  \frac {  \rho_{\vec{t}_0+ \epsilon \vec{e}_i}- \rho_{\vec{t}_0} }\epsilon \right)  \, 
\rho_{\vec{t}_0}^{-1}  \,
    \left(  \frac { \rho_{\vec{t}_0+ \epsilon \vec{e}_j}- \rho_{\vec{t}_0}  }{\epsilon } \right) \right] \nonumber\\
= &
\frac {
  \Tr \Big [ \rho_{\vec{t}_0+ \epsilon \vec{e}_i} \, \rho_{\vec{t}_0}^{-1} \, \rho_{\vec{t}_0+ \epsilon \vec{e}_j}  \Big] -1}{\epsilon^2} \, ,\label{KHT}
\end{align}
 where $\vec{e}_j$ is the unit vector  with  $1$ in the $j$-th entry and zero in the other entries.    
Notice that one has  
\begin{align}\left(\tilde{J}_{\vec{t}_0,\epsilon}\right)_{i,i}   =  \frac{\exp  \big[ D_2(\rho_{\vec{t}_0+ \epsilon \vec{e}_i}||\rho_{\vec{t}_0})  \Big] -1}{\epsilon^2} \, , 
\end{align} 
where $D_2(\rho||\sigma):=\log  \Tr \big[\rho^2\sigma^{-1}\big
]$ is the (Petz's) $\alpha$-Renyi entropy for $\alpha=2$. 
 
When the parametrization $\rho_{\vec{t}} $ is differentiable, 
 one has 
 \begin{align}
\lim_{\epsilon \to 0}\tilde{J}_{\vec{t}_0,\epsilon}= \tilde{J}_{\vec{t}_0} \, ,\Label{3-3-0}
\end{align}
where $\tilde{J}_{\vec{t}_0}$ is the RLD quantum Fisher information matrix \eqref{RLDNuova}.

When the parametrization  is not differentiable, we define 
 the RLD Fisher information matrix $\tilde{J}_{\vec{t}_0}$
to be the limit \eqref{3-3-0}, provided that the limit  exists.  
All throughout  this section, we impose no condition on the  parametrization $\rho_{\vec{t}} $, except for the requirement that $\rho_{\vec{t}_0}$ be invertible.


\subsection{ The $\epsilon$-difference RLD Cram\'er-Rao inequality} 
 A discrete version of  the RLD quantum  Cram\'{e}r-Rao inequality can be derived under the assumption of {\em $\epsilon$-locally unbiasedness}, defined as follows: 
\begin{defi} A POVM $M$ with outcomes in $\R^k$ is      $\epsilon$-locally unbiased  at $\vec{t}_0$ if the expectation value 
  $E_{\vec{t}_0}(M)$ 
  satisfies the conditions 
\begin{align*}
E_{\vec{t}_0}(M) 
= \vec t_{0} ,\qquad  {\rm and}  \qquad 
E_{\vec{t}_0+ \epsilon \vec e_j}(M)
= \vec t_{0} +\epsilon  \vec e_j  \qquad \forall j  \in  \{1,\dots, k\} \, .
\end{align*} 
\end{defi}

Under the $\epsilon$-locally unbiasedness condition, 
Tsuda et al. \cite{tsuda2005quantum} derived a lower bound on the MSE for the one-parameter case. In the following theorem, we extend the bound to the multiparameter case. 

\begin{theo}[$\epsilon$-difference RLD Cram\'er-Rao inequality]\Label{LO5}
The MSE matrix for an $\epsilon$-locally unbiased POVM $M$ at $\vec{t}_0$ satisfies the bound
 \begin{align}
 V_{\vec{t}_0}(M)\ge  (\tilde{J}_{\vec{t}_0,\epsilon})^{-1}. \Label{3-3-1}
 \end{align}
\end{theo}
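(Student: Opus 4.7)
The plan is to run the standard RLD Cram\'er-Rao argument with all derivatives replaced by $\epsilon$-differences. Since every object lives at the $k+1$ points $\vec t_0,\,\vec t_0+\epsilon\vec e_1,\ldots,\vec t_0+\epsilon\vec e_k$, no differentiability hypothesis on the parametrization $\rho_{\vec t}$ is required; only the invertibility of $\rho_{\vec t_0}$ (assumed throughout the section) is used.

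First, I would introduce the $\epsilon$-difference RLD operators
\begin{align*}
\tilde L_{j,\epsilon}\,:=\,\rho_{\vec t_0}^{-1}\,\frac{\rho_{\vec t_0+\epsilon\vec e_j}-\rho_{\vec t_0}}{\epsilon},\qquad j=1,\ldots,k,
\end{align*}
so that Eq.~\eqref{KHT} rewrites as $(\tilde J_{\vec t_0,\epsilon})_{ij}=\Tr[\rho_{\vec t_0}\tilde L_{j,\epsilon}\tilde L_{i,\epsilon}^\dagger]$. Setting $E_i:=\int x_i\,M(d\vec x)$ and $X_i:=E_i-t_{0,i}$ (Hermitian), the two $\epsilon$-locally unbiased conditions together with $\Tr[\rho_{\vec t_0+\epsilon\vec e_j}]=\Tr[\rho_{\vec t_0}]=1$ produce the pairing identity
\begin{align*}
\Tr[\rho_{\vec t_0}\tilde L_{j,\epsilon}X_i]\;=\;\frac{1}{\epsilon}\Tr\!\left[(\rho_{\vec t_0+\epsilon\vec e_j}-\rho_{\vec t_0})\,E_i\right]\;=\;\delta_{ij},
\end{align*}
which is the discrete counterpart of $\Tr[\rho\tilde L_jX_i]=\delta_{ij}$ used in the differentiable case.

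Next, I would apply the operator Cauchy--Schwarz inequality for POVMs to prove $V_{\vec t_0}(M)\ge Z$ as Hermitian matrices, where $Z_{ij}:=\Tr[\rho_{\vec t_0}X_iX_j]$. For an arbitrary $\vec c\in\C^k$, writing $g_{\vec c}(\vec x):=\sum_i c_i(x_i-t_{0,i})$ and $G_{\vec c}:=\int g_{\vec c}(\vec x)\,M(d\vec x)=\sum_i c_i X_i$, one has $\vec c^{\,\dagger}V_{\vec t_0}(M)\vec c=\int|g_{\vec c}(\vec x)|^{2}\Tr[\rho_{\vec t_0}M(d\vec x)]$ and $\vec c^{\,\dagger}Z\vec c=\Tr[\rho_{\vec t_0}G_{\vec c}^{\,\dagger}G_{\vec c}]$; the bound then reduces to the operator inequality $\int|g_{\vec c}|^{2}\,dM\ge G_{\vec c}^{\,\dagger}G_{\vec c}$, which is immediate from any Naimark dilation of $M$.

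Finally, I would exploit positive semidefiniteness of the Gram matrix of the family $\{X_i\}_{i=1}^{k}\cup\{\tilde L_{j,\epsilon}^{\,\dagger}\}_{j=1}^{k}$ with respect to the inner product $(A,B)\mapsto\Tr[\rho_{\vec t_0}A^{\,\dagger}B]$. By the pairing identity, the off-diagonal $k\times k$ block of this $2k\times 2k$ matrix is precisely $I_k$, so
\begin{align*}
\begin{pmatrix}Z & I_k \\ I_k & H\end{pmatrix}\;\ge\;0,\qquad H_{ij}\,:=\,\Tr[\rho_{\vec t_0}\tilde L_{i,\epsilon}\tilde L_{j,\epsilon}^{\,\dagger}],
\end{align*}
and a short cyclicity-of-trace computation yields $H=\tilde J_{\vec t_0,\epsilon}^{\,T}$. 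Taking the Schur complement with respect to the (assumed invertible) block $H$ produces $Z\ge H^{-1}=(\tilde J_{\vec t_0,\epsilon}^{-1})^{T}$, and combining with Step~2 gives $V_{\vec t_0}(M)\ge(\tilde J_{\vec t_0,\epsilon}^{-1})^{T}$; because $V_{\vec t_0}(M)$ is real symmetric while $\tilde J_{\vec t_0,\epsilon}^{-1}$ is Hermitian, complex conjugation shows this is equivalent to Eq.~\eqref{3-3-1}. The main subtlety I anticipate is bookkeeping the transposition/conjugation conventions consistently throughout the block-matrix manipulations; the degenerate case in which $\tilde J_{\vec t_0,\epsilon}$ is singular is handled by the standard perturbation $\tilde J_{\vec t_0,\epsilon}\mapsto \tilde J_{\vec t_0,\epsilon}+\delta I$ followed by $\delta\downarrow 0$.
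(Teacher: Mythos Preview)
Your proof is correct and follows essentially the same route as the paper: both define $\epsilon$-difference RLD operators, extract the pairing identity $\Tr[(\rho_{\vec t_0+\epsilon\vec e_j}-\rho_{\vec t_0})X_i]=\epsilon\delta_{ij}$ from $\epsilon$-local unbiasedness, use the POVM Cauchy--Schwarz step $V_{\vec t_0}(M)\ge Z$, and then a Cauchy--Schwarz/Gram argument to bound $Z$ by $(\tilde J_{\vec t_0,\epsilon})^{-1}$. The only cosmetic difference is that the paper runs the last step as a scalar Cauchy--Schwarz with the optimal choice $\vec b=(\tilde J_{\vec t_0,\epsilon})^{-1}\vec a$, whereas you package it as a $2k\times 2k$ Gram block matrix and take a Schur complement; the transpose/conjugation bookkeeping you flag is handled identically in both versions.
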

\begin{proof}
For simplicity, we assume that $\vec{t}_0=\vec{0}$.
For two vectors $\vec{a} \in \mathbb{C}^k$ and $\vec{b}\in \mathbb{C}^{k}$,
 we define the two observables
$X:= \int \sum_i a_i x_i  \,M(d \vec{x}) $ 
and $Y:= \sum_j  b_j  \frac {
\rho_{\vec{t}_0+ \epsilon \vec e_j}-\rho_{\vec{t}_0}}\epsilon  \rho_{\vec{t}_0}^{-1}$.
Then, the Cauchy-Schwartz inequality implies
\begin{align}
\nonumber \Tr \Big[   X^\dag X   \,  \rho_{\vec t_0}\Big]  \,  \Tr \Big[   Y^\dag Y   \,  \rho_{\vec t_0}\Big]       & \ge   \left|\Tr\Big[ X^\dagger Y \rho_{\vec{t}_0} \Big] \right|^2  \\
\nonumber & =  \left |  \sum_{i,j}    \overline a_i  b_j   ~\int   x_i  \Tr  \left[  M(d \vec x) \frac {
\rho_{\vec{t}_0+ \epsilon \vec e_j}-\rho_{\vec{t}_0}}\epsilon\right]       \right|^2\\
&   =  |\< \vec a| \vec b\>|^2 \, ,
  \end{align}
  the second equality  following from 
 $\epsilon$-locally unbiasedness  at $\vec{t}_0$.
 Note that one has  $\Tr [Y^\dag Y  \rho_{\vec t_0}]  =   \<  \vec b |  \,   \tilde J_{\vec t_0, \epsilon} \, |\vec  b \>$ and
 \begin{align}
\nonumber \langle \vec{a} |V_{\vec{t}_0}(M) |\vec{a} \rangle-
\Tr \Big[ X^\dagger X \rho_{\vec{t}_0}\Big]
 &=
\int   \Tr \left[\Big(\sum_i a_i x_i- X\Big )^\dagger \Big(\sum_j a_j x_j- X\Big) M(d \vec{x}) \rho_{\vec{t}_0}\right] \\
&\ge 0 \,.
\end{align}
Choosing $\vec{b}:= (\tilde{J}_{\vec{t}_0,\epsilon})^{-1} \vec{a}$, we have
\begin{align}
\nonumber  \langle \vec{a} |V_{\vec{t}_0}(M) |\vec{a} \rangle ~
\langle \vec{a} | \tilde{J}_{\vec{t}_0,\epsilon}^{-1} |\vec{a} \rangle  &\ge  \Tr \Big[   X^\dag X   \,  \rho_{\vec t_0}\Big]  \,  \Tr \Big[   Y^\dag Y   \,  \rho_{\vec t_0}\Big]     \\
\nonumber    & \ge   \left|\Tr\Big[ X^\dagger Y \rho_{\vec{t}_0} \Big] \right|^2 \\
   & =\left|\langle \vec{a} |  (\tilde{J}_{\vec{t}_0,\epsilon})^{-1}  |  \vec{a}\rangle \right|^2 \, , 
\end{align}
which implies $\<\vec{a} |V_{\vec{t}_0}(M) |\vec{a} \rangle  \ge  \langle \vec{a} |  (\tilde{J}_{\vec{t}_0,\epsilon})^{-1}  |  \vec{a}\rangle$.   Since $\vec a$ is arbitrary, the last inequality implies \eqref{3-3-1}.
\end{proof}

We will call  \eqref{3-3-1} the {\em $\epsilon$-difference RLD Cram\'{e}r-Rao inequality}.  

The $\epsilon$-difference RLD Cram\'er-Rao inequality can be used to derive an information processing inequality, 
which states that the $\epsilon$-difference RLD Fisher information matrix is non-increasing  under the application of measurements. 
For a family of probability distributions $\{P_{\vec t}\}_{\vec t \in \set \Theta}$, 
we assume that 
$P_{\vec{t}+\epsilon \vec{e}_j }$ is absolutely continuous with respect to $P_{\vec{t}}$ for  every $j  $. 
Then, 
the $\epsilon$-difference RLD Fisher information is defined as 
\begin{align}
\left(  J_{\vec{t},\epsilon}\right)_{ij} : =  \int   \left(   \frac{p_{\vec{t}+\epsilon \vec e_j }(x)-1}{\epsilon}\right)  \, \left(
\frac{p_{\vec{t}+\epsilon \vec e_j }(x)-1}{\epsilon}\right)  ~   P_{\vec{t}}(dx)\label{GCR}
\end{align}
 where  $p_{\vec{t}+\epsilon \vec{e}_j }$  and   $p_{\vec{t}+\epsilon \vec{e}_i }$  are the Radon-Nikod\'{y}m derivatives of 
$P_{\vec{t}+\epsilon \vec{e}_j }$ and $P_{\vec{t}+\epsilon \vec{e}_i }$ 
with respect to $P_{\vec{t}}$, respectively.
We note that the papers \cite{chapman1951minimum,hammersley1950estimating}
defined its one-parameter version when the distributions are absolutely continuous with respect to the Lebesgue measure.
Hence, 
when an estimator $\hat{\vec{t}}$ for the distribution family $\{P_{\vec t}\}_{\vec t \in \set \Theta}$
is $\epsilon$-locally unbiased at $\vec{t}_0$,
in the same way as \eqref{3-3-1}, 
we can show the $\epsilon$-difference Cram\'er-Rao inequality;
 \begin{align}
 V_{\vec{t}_0}[\hat{t}]\ge  ({J}_{\vec{t}_0,\epsilon})^{-1}. \Label{3-3-1F}
 \end{align}

For  a family of quantum states $\{\rho_{\vec t}\}_{\vec t \in \set \Theta}$ and a POVM $M$, we denote  by $J_{\vec{t},\epsilon}^M$ the $\epsilon$-difference Fisher information matrix of  the probability distribution family $\{ P^M_{\vec t}\}_{\vec{t} \in \set \Theta}$ defined by $P^M_{\vec t}  :  =\Tr \big [ M\rho_{\vec{t}}\big ]$. With this notation, we have the following lemma:
 
\begin{lem}\Label{LO6}
For  every family of quantum states $\{\rho_{\vec t}\}_{\vec t \in \set \Theta}$ and every  POVM $M$, one has the information processing inequality
 \begin{align}
\tilde{J}_{\vec{t}_0,\epsilon} \ge J_{\vec{t}_0,\epsilon}^M \, ,\Label{3-3-3}
 \end{align}
 where $\tilde{J}_{\vec{t}_0,\epsilon}$ is the $\epsilon$-difference RLD Fisher information of the model  $\{\rho_{\vec t}\}_{\vec t \in \set \Theta}$. 
\end{lem}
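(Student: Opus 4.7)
\begin{proofof}{Lemma \ref{LO6} (plan)}
The plan is to prove the matrix inequality directly by a Cauchy--Schwarz argument, bypassing any use of estimators. I would first fix an arbitrary real vector $\vec{a}=(a_1,\dots,a_k)$ and introduce the Hermitian operator
\begin{align*}
A_{\vec{a}} := \sum_{i=1}^{k} a_i \, \frac{\rho_{\vec{t}_0+\epsilon \vec{e}_i}-\rho_{\vec{t}_0}}{\epsilon},
\end{align*}
so that, directly from the definition \eqref{KHT},
\begin{align*}
\vec{a}^{\,T} \tilde{J}_{\vec{t}_0,\epsilon}\, \vec{a} = \Tr\!\left[A_{\vec{a}}\, \rho_{\vec{t}_0}^{-1} A_{\vec{a}}\right].
\end{align*}

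Next I would rewrite the classical side. Since $P^M_{\vec{t}}(d\vec{x})=\Tr[\rho_{\vec{t}} M(d\vec{x})]$, one has, for each $i$, the identity $a_i(p_{\vec{t}_0+\epsilon \vec{e}_i}(\vec{x})-1)\,P^M_{\vec{t}_0}(d\vec{x})=a_i\Tr[(\rho_{\vec{t}_0+\epsilon\vec{e}_i}-\rho_{\vec{t}_0})M(d\vec{x})]$, so summing over $i$ and dividing by $\epsilon$ yields
\begin{align*}
\sum_{i=1}^k a_i\,\frac{p_{\vec{t}_0+\epsilon\vec{e}_i}(\vec{x})-1}{\epsilon}\,P^M_{\vec{t}_0}(d\vec{x}) = \Tr[M(d\vec{x})\,A_{\vec{a}}].
\end{align*}
Expanding $\vec{a}^{\,T} J^M_{\vec{t}_0,\epsilon}\,\vec{a}$ from \eqref{GCR} and using this identity gives
\begin{align*}
\vec{a}^{\,T} J^M_{\vec{t}_0,\epsilon}\,\vec{a} = \int \frac{\bigl(\Tr[M(d\vec{x})\,A_{\vec{a}}]\bigr)^2}{\Tr[M(d\vec{x})\,\rho_{\vec{t}_0}]}.
\end{align*}

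The core step is then a Cauchy--Schwarz inequality applied pointwise in $\vec{x}$. Writing $\Tr[M(d\vec{x})\,A_{\vec{a}}]=\Tr\!\bigl[\bigl(M(d\vec{x})^{1/2}\rho_{\vec{t}_0}^{1/2}\bigr)\bigl(\rho_{\vec{t}_0}^{-1/2}A_{\vec{a}} M(d\vec{x})^{1/2}\bigr)\bigr]$ and applying the Hilbert--Schmidt Cauchy--Schwarz inequality gives
\begin{align*}
\bigl(\Tr[M(d\vec{x})\,A_{\vec{a}}]\bigr)^2 \le \Tr[M(d\vec{x})\rho_{\vec{t}_0}]\cdot \Tr[M(d\vec{x})\,A_{\vec{a}}\,\rho_{\vec{t}_0}^{-1} A_{\vec{a}}].
\end{align*}
Dividing by $\Tr[M(d\vec{x})\rho_{\vec{t}_0}]$ and integrating in $\vec{x}$, the completeness relation $\int M(d\vec{x})=I$ yields
\begin{align*}
\vec{a}^{\,T} J^M_{\vec{t}_0,\epsilon}\,\vec{a} \le \Tr\!\left[A_{\vec{a}}\,\rho_{\vec{t}_0}^{-1} A_{\vec{a}}\right]=\vec{a}^{\,T} \tilde{J}_{\vec{t}_0,\epsilon}\, \vec{a},
\end{align*}
and since $\vec{a}$ was arbitrary, the matrix inequality \eqref{3-3-3} follows.

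The argument is essentially a one-shot Cauchy--Schwarz and has no hard step; the only mild technicality is the continuous-outcome case, where one must justify the pointwise application of Cauchy--Schwarz and the exchange of integration via dominated convergence, but this is routine given that $\rho_{\vec{t}_0}$ is assumed invertible so $\rho_{\vec{t}_0}^{-1}$ is bounded.
\end{proofof}
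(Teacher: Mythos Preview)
Your direct Cauchy--Schwarz argument is valid and takes a genuinely different route from the paper. The paper proves Lemma~\ref{LO6} indirectly: it invokes the existence of an $\epsilon$-locally unbiased classical estimator $\hat{\vec{t}}$ saturating the classical bound, $V_{\vec{t}_0}(\hat{\vec{t}})=(J^M_{\vec{t}_0,\epsilon})^{-1}$, composes this estimator with $M$ to obtain a quantum POVM $M'$, and then applies Theorem~\ref{LO5} to $M'$ to conclude $(\tilde{J}_{\vec{t}_0,\epsilon})^{-1}\le (J^M_{\vec{t}_0,\epsilon})^{-1}$. Your route bypasses estimators and Theorem~\ref{LO5} entirely, working directly with the operator $A_{\vec{a}}$ and the POVM; it is the $\epsilon$-difference analogue of the standard direct proof that the (differential) RLD Fisher information dominates the classical one. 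The paper's approach is more modular (recycling Theorem~\ref{LO5}); yours is self-contained and more elementary.

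One caveat that does need repair: you restrict to real $\vec{a}$, but the paper explicitly stresses, right after Lemma~\ref{LO6}, that \eqref{3-3-3} is a \emph{Hermitian} matrix inequality, since $\tilde{J}_{\vec{t}_0,\epsilon}$ generally has complex off-diagonal entries. Testing only real $\vec{a}$ yields merely ${\sf Re}(\tilde{J}_{\vec{t}_0,\epsilon})\ge J^M_{\vec{t}_0,\epsilon}$, which is strictly weaker (a Hermitian matrix whose real part is positive semidefinite need not be positive semidefinite). The fix is immediate: take $\vec{a}\in\mathbb{C}^k$, so $A_{\vec{a}}$ is no longer Hermitian; then $\langle\vec{a}|\tilde{J}_{\vec{t}_0,\epsilon}|\vec{a}\rangle=\Tr[A_{\vec{a}}^\dagger\rho_{\vec{t}_0}^{-1}A_{\vec{a}}]$ and $\langle\vec{a}|J^M_{\vec{t}_0,\epsilon}|\vec{a}\rangle=\int|\Tr[M(d\vec{x})A_{\vec{a}}]|^2/\Tr[M(d\vec{x})\rho_{\vec{t}_0}]$, and the same Cauchy--Schwarz step (now giving $\Tr[M(d\vec{x})A_{\vec{a}}^\dagger\rho_{\vec{t}_0}^{-1}A_{\vec{a}}]$ on the right) closes the argument.
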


\begin{proof}
 Consider the estimation of $\vec{t}$ from the probability distribution family $\{ P^M_{\vec t}\}_{\vec{t} \in \set \Theta}$.  Following the same arguments used for the achievability of the Cram\'er-Rao bound with  locally unbiased estimators (see, for instance, Chapter 2 of Ref. \cite{hayashi2017quantum}),  it is possible to show that   there exists an  $\epsilon$-locally unbiased estimator $\hat {\vec t}$ at $\vec{t}_0$  
such that 
\begin{align}
V_{\vec{t}_0}\left(\hat {\vec t}  \right) =  (J_{\vec{t}_0,\epsilon}^M)^{-1} \, .\Label{3-3-2}
\end{align}
Combining the POVM $M$ with the $\epsilon$-locally unbiased estimator $\hat {\vec t}$ we obtain a new POVM $M'$, which is $\epsilon$-locally unbiased. Applying Theorem \ref{LO5}  to the POVM $M'$ we obtain   
 \begin{align}
( \tilde{J}_{\vec{t}_0,\epsilon} )^{-1}\le  V_{\vec{t}_0}\left(   M' \right)   =  V_{\vec{t}_0}\left(\hat {\vec t}  \right)   =   (J_{\vec{t}_0,\epsilon}^M)^{-1} \, ,
\end{align}    
which implies  \eqref{3-3-3}.    
\qed
\end{proof}

We  stress  that \eqref{3-3-3}   is a matrix inequality for {\em Hermitian} matrices: in general,  $\tilde{J}_{\vec{t}_0,\epsilon}$ has complex entries.
  Also note that  any classical  process can be regarded as a POVM.  
Hence, in the same way as \eqref{3-3-3}, 
using the $\epsilon$-difference Cram\'er-Rao inequality \eqref{3-3-1F},
we can show the inequality 
\begin{align}
J_\epsilon \ge J_\epsilon' \label{MMM}
\end{align}
for an classical process ${\cal E}$
when 
$J_\epsilon$ is the $\epsilon$-difference Fisher information matrix on the distribution family
$\{P_{\vec{t}}\}_{\vec{t}\in \set\Theta}$
and 
$J_\epsilon'$ is the $\epsilon$-difference Fisher information matrix on the distribution family
$\{{\cal E}(P_{\vec{t}})\}_{\vec{t}\in \set\Theta}$.
 
\if

\subsection{Proof of Theorem \ref{ThY}}\Label{P-ThY}
In Subsection \ref{S2-2-2}, we have shown Theorem \ref{ThY} by assuming .
If we employ the concept of $\epsilon$-difference RLD Fisher information matrix,
we can show Theorem \ref{ThY} as follows.
Applying Theorem \ref{LO5} to $\{\wp_{t_0,t|\seq{m}}\}$ and using the observation in Remark \ref{RE1}, 
we have
\begin{align}
V[\wp_{t_0,t| \seq{m}}]=V[\wp_{t_0,0| \seq{m}}]
\ge
\epsilon^2/(e^{D_2(\wp_{t_0,t+\epsilon|\seq{m}}||\wp_{t_0,t|\seq{m}})}-1).
\end{align}
Further, since the relative R\'{e}nyi entropy is monotonely increasing with respect to the order parameter,
we have
\fi

\subsection{Extended models}
The  lemmas in the previous subsection can be generalized to the case where  an extended model ${\cal M}':=
\{\rho_{\vec{t}'}\}_{\vec{t}'=(\vec{t},\vec{p})}$ contains the original model ${\cal M} $
as $\rho_{\vec{t}}=\rho_{(\vec{t},\vec{0})}$.
Choosing $\vec{t}_0'=(\vec{t}_0,\vec{0})$, we denote 
the $\epsilon$-difference RLD Fisher information matrix at $\vec{t}_0'$ for the family ${\cal M}'$
by $\tilde{J}_{\vec{t}_0',\epsilon}$. 

\begin{lem}\Label{LO7-2}
For an $\epsilon$-locally unbiased estimator $M$ at $\vec{t}_0'$,
there exists a $k \times k'$ matrix $P$
such that
$P_{ij} =\delta_{ij}$ for $ i,j\le k$ and
\begin{align*}
V_{\vec{t}_0}(M) &\ge P  \tilde{J}_{\vec{t}_0',\epsilon}^{-1}P^T.
\end{align*}
\end{lem}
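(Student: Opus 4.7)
My plan is to mimic the Cauchy-Schwarz argument in the proof of Theorem \ref{LO5}, adapting it so that the ``estimation'' operator is built from $M$ (which only estimates the $k$ parameters of the restricted model ${\cal M}$) while the ``derivative'' operator explores all $k'$ directions of the extended model ${\cal M}'$. The matrix $P$ is precisely what records how the expectation of $M$ responds in the nuisance directions $\vec{p}$.

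The first step is to \emph{define} $P$. Assume without loss of generality $\vec{t}_0'=\vec 0$. Set
\[
P_{ij} \;:=\; \frac{1}{\epsilon}\int x_i\,\Tr\!\Big[M(d\vec x)\bigl(\rho_{\epsilon\vec e_j'}-\rho_{\vec 0}\bigr)\Big]
\;=\; \frac{E_{\epsilon\vec e_j'}(M)_i - E_{\vec 0}(M)_i}{\epsilon},\qquad 1\le i\le k,\;1\le j\le k'.
\]
The $\epsilon$-local unbiasedness of $M$ at $\vec t_0'$ (restricted to the $k$ directions that come from the original model ${\cal M}$) immediately yields $P_{ij}=\delta_{ij}$ for $i,j\le k$, which is the claimed constraint on $P$.

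The second step is to run the Cauchy-Schwarz argument from Theorem \ref{LO5} in this extended setting. Fix $\vec a\in\mathbb{C}^k$, $\vec b\in\mathbb{C}^{k'}$ and define
\[
\tilde X \;:=\; \int\sum_{i=1}^k a_i(x_i - t_{0,i})\,M(d\vec x),\qquad
Y \;:=\; \sum_{j=1}^{k'} b_j\,\frac{\rho_{\epsilon\vec e_j'}-\rho_{\vec 0}}{\epsilon}\,\rho_{\vec 0}^{-1}.
\]
The Cauchy-Schwarz inequality with inner product $\langle A,B\rangle:=\Tr[A^\dagger B\rho_{\vec 0}]$ gives $\Tr[\tilde X^\dagger \tilde X\rho_{\vec 0}]\cdot\Tr[Y^\dagger Y\rho_{\vec 0}]\ge |\Tr[\tilde X^\dagger Y\rho_{\vec 0}]|^2$. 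Now I compute each factor: the POVM Schwarz inequality $\tilde X^\dagger\tilde X\le\int|\sum_i a_i(x_i-t_{0,i})|^2\,M(d\vec x)$ yields $\Tr[\tilde X^\dagger\tilde X\rho_{\vec 0}]\le\vec a^\dagger V_{\vec t_0}(M)\vec a$; the definition \eqref{KHT} of the $\epsilon$-difference RLD Fisher information gives $\Tr[Y^\dagger Y\rho_{\vec 0}]=\vec b^\dagger \tilde J_{\vec t_0',\epsilon}\vec b$; and the construction of $P$ together with $E_{\vec 0}(M)=\vec t_0$ gives $\Tr[\tilde X^\dagger Y\rho_{\vec 0}]=\vec a^\dagger P\vec b$.

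The third step is the optimal choice of $\vec b$. Taking $\vec b:=\tilde J_{\vec t_0',\epsilon}^{-1}P^{T}\vec a$ and using $\tilde J_{\vec t_0',\epsilon}^\dagger=\tilde J_{\vec t_0',\epsilon}$, the Cauchy-Schwarz bound collapses to
\[
\vec a^\dagger V_{\vec t_0}(M)\vec a \;\cdot\; \vec a^\dagger P\tilde J_{\vec t_0',\epsilon}^{-1}P^{T}\vec a
\;\ge\; \bigl(\vec a^\dagger P\tilde J_{\vec t_0',\epsilon}^{-1}P^{T}\vec a\bigr)^2,
\]
where the right-hand side is a real, non-negative number because $P\tilde J_{\vec t_0',\epsilon}^{-1}P^T$ is Hermitian. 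Dividing by the (assumed positive) factor gives $\vec a^\dagger V_{\vec t_0}(M)\vec a\ge \vec a^\dagger P\tilde J_{\vec t_0',\epsilon}^{-1}P^T\vec a$ for every $\vec a\in\mathbb{C}^k$, which is the desired matrix inequality. Degenerate cases (where $P\tilde J_{\vec t_0',\epsilon}^{-1}P^T$ has a nontrivial kernel) are handled by a standard continuity/perturbation argument, replacing $P$ by $P_\delta$ with an arbitrarily small added rank.

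The main obstacle I anticipate is \emph{interpretational rather than computational}: one must be careful about what ``$\epsilon$-locally unbiased at $\vec t_0'$'' really imposes on $M$, given that $M$ produces outcomes in $\mathbb{R}^k$ while $\vec t_0'$ lives in $\mathbb{R}^{k'}$. Only the first $k$ of the $k'$ unbiasedness constraints are available, and it is exactly this asymmetry that forces the appearance of the nontrivial $k\times(k'-k)$ block of $P$. Once this is pinned down, the rest of the proof is a direct lifting of the Cauchy-Schwarz argument used for Theorem \ref{LO5}; the only minor subtlety is the verification that $P\tilde J_{\vec t_0',\epsilon}^{-1}P^T$ remains Hermitian (which holds since $P$ is real), so that the resulting bilinear inequality on $\mathbb{C}^k$ does translate into a genuine Hermitian matrix inequality.
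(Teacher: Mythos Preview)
Your proof is correct and follows essentially the same strategy as the paper: both define $P$ identically via $P_{ij}=\epsilon^{-1}\int x_i\,\Tr[M(d\vec x)(\rho_{\vec t_0'+\epsilon\vec e_j}-\rho_{\vec t_0'})]$ and then invoke the Cauchy--Schwarz machinery underlying Theorem~\ref{LO5}. The only cosmetic difference is that the paper reparametrizes the extended model by $P^{-1}$ and applies Theorem~\ref{LO5} as a black box, whereas you rerun the Cauchy--Schwarz argument directly with $\vec a\in\mathbb C^k$, $\vec b\in\mathbb C^{k'}$; your version is arguably cleaner since it avoids the notational question of what $P^{-1}$ means for a $k\times k'$ matrix.
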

\begin{proofof}{Lemma \ref{LO7-2}}
For an $\epsilon$-locally unbiased estimator $M$ at $\vec{t}_0$,
there exists a $k \times k'$ matrix $P$
such that
\begin{align}
P_{ij}& =\delta_{ij} \hbox{ for }i,j\le k \Label{FDT}\\
\epsilon P_{ij}&= \int x_i \Tr M(d \vec{x})
( \rho_{\vec{t}_0'+ \epsilon \vec e_j}-\rho_{\vec{t}_0'})
\hbox{ for }i\le k, k+1 \le j \le k'. 
\end{align}
Now, we introduce a new parametrization 
$\tilde{\rho}_{\eta}:= 
\rho_{\vec{t}_0'+ \sum_{i,j}\eta_i P^{-1}_{j,i}\vec e_j}$.
Since 
$\frac{\partial \theta_j}{\partial \eta_i}
= P^{-1}_{j,i}$, the $\epsilon$-difference RLD quantum Fisher information
under the parameter $\eta$ is $ (P^{-1})^{T} \tilde{J}_{\vec{t}_0',\epsilon} P^{-1}$.
Applying Theorem \ref{LO5} to the parameter $\eta$, we obtain
\begin{align}
V_{\vec{t}_0}(M) \ge P  \tilde{J}_{\vec{t}_0',\epsilon}^{-1}P^T. \Label{3-3-E}
\end{align}
Combining \eqref{FDT} and \eqref{3-3-E}, we obtain the desired statement.
\end{proofof}

In the same way as Lemma \ref{LO6}, 
Lemma \ref{LO7-2} yields the following lemma.

\begin{lem}\Label{LO8}
For any POVM $M$, 
there exists a $k \times k'$ matrix $P$
such that
$P_{ij} =\delta_{ij}$ { for }$ i,j\le k$ and
\begin{align}
(J_{\vec{t}_0,\epsilon}^M)^{-1} 
&\ge P  \tilde{J}_{\vec{t}_0',\epsilon}^{-1}P^T
. \Label{3-3-F}
\end{align}
\end{lem}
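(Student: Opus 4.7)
The plan is to reproduce the argument of Lemma~\ref{LO6} in the extended-model setting, now invoking Lemma~\ref{LO7-2} in place of Theorem~\ref{LO5}. Given an arbitrary POVM $M$ on the quantum model ${\cal M}' = \{\rho_{\vec{t}'}\}_{\vec{t}' = (\vec{t},\vec{p})}$, I would first pass to the classical distribution family $\{P^M_{(\vec{t},\vec{0})}\}_{\vec{t}}$ obtained by fixing the nuisance parameter at $\vec{p} = \vec{0}$ and varying only the $k$ components of $\vec{t}$; the $\epsilon$-difference Fisher information of this classical family is by definition $J_{\vec{t}_0,\epsilon}^M$.

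Next, I would invoke the same classical achievability statement used in the proof of Lemma~\ref{LO6} (Chapter~2 of \cite{hayashi2017quantum}) to produce an $\epsilon$-locally unbiased estimator $\hat{\vec{t}}$ at $\vec{t}_0$ for this classical family, whose MSE matrix saturates the $\epsilon$-difference classical Cram\'er-Rao bound \eqref{3-3-1F}:
\begin{align*}
V_{\vec{t}_0}(\hat{\vec{t}}) \;=\; (J_{\vec{t}_0,\epsilon}^M)^{-1}.
\end{align*}
Composing $M$ with this classical post-processing yields a new POVM $M'$ on ${\cal M}'$, with outcomes in $\mathbb{R}^k$, satisfying $V_{\vec{t}_0}(M') = (J_{\vec{t}_0,\epsilon}^M)^{-1}$. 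Crucially, $M'$ is $\epsilon$-locally unbiased (for the parameter $\vec{t}$) at $\vec{t}_0' = (\vec{t}_0,\vec{0})$ in the extended model ${\cal M}'$, because perturbing any of the first $k$ coordinates of $\vec{t}'$ yields exactly the classical family on which $\hat{\vec{t}}$ was constructed to be unbiased. Then Lemma~\ref{LO7-2} applied to $M'$ delivers a $k \times k'$ matrix $P$ with $P_{ij} = \delta_{ij}$ for $i,j \le k$ such that
\begin{align*}
(J_{\vec{t}_0,\epsilon}^M)^{-1} \;=\; V_{\vec{t}_0}(M') \;\ge\; P\, \tilde{J}_{\vec{t}_0',\epsilon}^{-1}\, P^T,
\end{align*}
which is exactly the inequality \eqref{3-3-F} to be proved.

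The only non-routine ingredient is the classical achievability step, which requires exhibiting a genuine $\epsilon$-locally unbiased estimator saturating $(J_{\vec{t}_0,\epsilon}^M)^{-1}$. This is the discretized analogue of the standard locally unbiased Cram\'er-Rao achievability and is handled by the reference already invoked in the proof of Lemma~\ref{LO6}. All other verifications are routine: the identity $P^M_{\vec{t}_0' + \epsilon \vec{e}_j} = P^M_{(\vec{t}_0 + \epsilon \vec{e}_j,\vec{0})}$ for $j \le k$ immediately propagates the $\epsilon$-local unbiasedness of $\hat{\vec{t}}$ into $\epsilon$-local unbiasedness of $M'$ in ${\cal M}'$, and the matrix $P$ encoding the response of $M'$ to the nuisance-parameter shifts $j > k$ is then produced automatically by Lemma~\ref{LO7-2}.
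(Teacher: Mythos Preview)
Your proposal is correct and follows exactly the approach the paper indicates: it says only ``In the same way as Lemma~\ref{LO6}, Lemma~\ref{LO7-2} yields the following lemma,'' and you have unpacked precisely that argument---build an $\epsilon$-locally unbiased estimator saturating $(J_{\vec{t}_0,\epsilon}^M)^{-1}$ from the classical family, compose with $M$, and feed the resulting POVM into Lemma~\ref{LO7-2} in place of Theorem~\ref{LO5}. Your observation that $\epsilon$-local unbiasedness is needed only for the first $k$ coordinates (so that Lemma~\ref{LO7-2} applies and produces the remaining columns of $P$ automatically) is exactly the point.
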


\subsection{Asymptotic case}\Label{SRT}
We denote by $\tilde{J}_{\vec{t}_0,\epsilon}^{n}$ the 
$\epsilon$-difference RLD Fisher information matrix
 of the $n$-copy states $\{\rho_{\vec{t}}^{\otimes n}\}_{\vec{t}  \in \set \Theta}$.

In the following we provide the analogue of Lemma \ref{MGR} for the RLD Fisher information matrix.
\begin{lem}\Label{LLO}
When  the parametrization is $C^1$ continuous, the following relations hold 
 \begin{align}
\lim_{n \to\infty}\frac{1}{n}\tilde{J}_{\vec{t}_0,   \frac {\epsilon }{\sqrt{n}}}^{n} 
&= \tilde{J}^{[\epsilon]}_{\vec{t}_0} \Label{KBI} \\
\lim_{\epsilon \to 0}
\tilde{J}^{[\epsilon]}_{\vec{t}_0} 
&= \tilde{J}_{\vec{t}_0}\Label{2KBI} ,
 \end{align}
 where  $J^{[\epsilon]}_{\vec{t}_0}$ is the matrix defined by
$\left( \tilde{J}^{[\epsilon ]}_{\vec{t}_0} \right)_{i,j}:= \frac{1}{\epsilon^2} \left[  e^{\epsilon^2   \left(  \tilde{J}_{\vec{t}_0}\right)_{i,j}}-1\right]$.
\end{lem}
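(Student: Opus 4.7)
The plan is to exploit the multiplicativity of the key trace across tensor products, reducing the $n$-copy quantity to a power of a single-copy quantity, and then combine this with a first-order expansion of the single-copy quantity as $\delta\to 0$. Both limits will then fall out of the standard formula $\bigl(1+\tfrac{x_n}{n}\bigr)^n \to e^x$ when $x_n\to x$.

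Concretely, for the first step I will observe that since $(\rho_{\vec{t}_0}^{\otimes n})^{-1}=(\rho_{\vec{t}_0}^{-1})^{\otimes n}$, the defining trace in \eqref{KHT} factorises:
\begin{align*}
\Tr\bigl[\rho_{\vec{t}_0+\delta\vec{e}_i}^{\otimes n}\,(\rho_{\vec{t}_0}^{\otimes n})^{-1}\,\rho_{\vec{t}_0+\delta\vec{e}_j}^{\otimes n}\bigr]
=\Tr\bigl[\rho_{\vec{t}_0+\delta\vec{e}_i}\,\rho_{\vec{t}_0}^{-1}\,\rho_{\vec{t}_0+\delta\vec{e}_j}\bigr]^{n}.
\end{align*}
Setting $\delta=\epsilon/\sqrt{n}$ and using \eqref{KHT} this gives the identity
\begin{align*}
\frac{1}{n}\bigl(\tilde{J}_{\vec{t}_0,\epsilon/\sqrt{n}}^{\,n}\bigr)_{i,j}
=\frac{\bigl(1+\tfrac{\epsilon^{2}}{n}(\tilde{J}_{\vec{t}_0,\epsilon/\sqrt{n}})_{i,j}\bigr)^{n}-1}{\epsilon^{2}}.
\end{align*}
So it suffices to control the single-copy coefficient $a_n:=(\tilde{J}_{\vec{t}_0,\epsilon/\sqrt{n}})_{i,j}$ and to push it through the $n$-th power.

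For the second step I will expand the single-copy trace. Writing $A_{\delta}:=\rho_{\vec{t}_0+\delta\vec{e}_i}-\rho_{\vec{t}_0}$ and $B_{\delta}:=\rho_{\vec{t}_0+\delta\vec{e}_j}-\rho_{\vec{t}_0}$, a direct algebraic manipulation together with $\Tr[A_\delta]=\Tr[B_\delta]=0$ (since both are differences of density matrices) yields
\begin{align*}
\Tr[\rho_{\vec{t}_0+\delta\vec{e}_i}\rho_{\vec{t}_0}^{-1}\rho_{\vec{t}_0+\delta\vec{e}_j}]-1
=\Tr\!\left[\tfrac{A_\delta}{\delta}\,\rho_{\vec{t}_0}^{-1}\,\tfrac{B_\delta}{\delta}\right]\delta^{2}.
\end{align*}
This is the key bookkeeping trick: the linear cross-terms vanish exactly, so no $C^{2}$ hypothesis is required. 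By $C^{1}$ continuity of $\vec{t}\mapsto\rho_{\vec{t}}$, the difference quotients $A_{\delta}/\delta$ and $B_{\delta}/\delta$ converge in operator norm to $\partial_{i}\rho_{\vec{t}_0}$ and $\partial_{j}\rho_{\vec{t}_0}$ as $\delta\to 0$, so
\begin{align*}
(\tilde{J}_{\vec{t}_0,\delta})_{i,j}\xrightarrow[\delta\to 0]{}\Tr\!\left[\partial_{i}\rho_{\vec{t}_0}\,\rho_{\vec{t}_0}^{-1}\,\partial_{j}\rho_{\vec{t}_0}\right]=(\tilde{J}_{\vec{t}_0})_{i,j}.
\end{align*}
In particular $a_n\to(\tilde{J}_{\vec{t}_0})_{i,j}$.

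For the third step I will combine the two pieces: since $a_n\to (\tilde{J}_{\vec{t}_0})_{i,j}$, the standard limit $(1+y_n/n)^n\to e^{y}$ whenever $y_n\to y$ gives $\bigl(1+\tfrac{a_n\epsilon^{2}}{n}\bigr)^{n}\to e^{\epsilon^{2}(\tilde{J}_{\vec{t}_0})_{i,j}}$, and hence \eqref{KBI}. The second limit \eqref{2KBI} is then immediate from $e^{x}-1=x+O(x^{2})$ applied entrywise at $x=\epsilon^{2}(\tilde{J}_{\vec{t}_0})_{i,j}$.

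The only delicate point is that naive Taylor expansion of the trace would appear to need $C^{2}$ regularity to isolate a $\delta^{2}$ term with a genuine $o(\delta^{2})$ remainder, which is stronger than the stated $C^{1}$ hypothesis. The trick above, rewriting the trace as $\Tr[A_\delta\rho_{\vec{t}_0}^{-1}B_\delta]$ and recognising this as a bilinear form in difference quotients, eliminates the need for a second-order expansion altogether; this is the one place the proof has to be handled with care, and after that the argument is essentially a bookkeeping exercise with $(1+x/n)^n$.
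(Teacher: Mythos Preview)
Your proof is correct and follows essentially the same route as the paper: factorise the $n$-copy trace as an $n$-th power of the single-copy trace, use convergence of the single-copy $\epsilon$-difference RLD to $\tilde J_{\vec t_0}$ (this is \eqref{3-3-0}), and apply $(1+y_n/n)^n\to e^y$. Your ``bookkeeping trick'' is in fact nothing more than the first line of the definition \eqref{KHT}, so there is no genuine extra idea there; the paper's proof simply quotes \eqref{3-3-0} instead of rederiving it. One small point in your favour: the paper writes the single-copy trace as $1+\tfrac{\epsilon^2}{n}(\tilde J_{\vec t_0})_{i,j}+O(1/n^2)$, but under a bare $C^1$ hypothesis one only gets $o(1/n)$, which is exactly what your formulation via $a_n\to(\tilde J_{\vec t_0})_{i,j}$ captures; since $(1+y_n/n)^n\to e^y$ requires only $y_n\to y$, both versions reach the same conclusion.
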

\begin{proofof}{Lemma \ref{LLO}}
Eq. \eqref{2KBI} holds trivially.    Using \eqref{3-3-0}, we have 
 \begin{align*}
\lim_{n\to \infty}\frac{1}{n} \left( \tilde{J}_{\vec{t}_0, \frac {\epsilon}{\sqrt{n}}}^n  \right)_{i,j}
&
=\lim_{n\to \infty}
\frac{1}{\epsilon^2} \left(
\Tr \left[ 
\rho_{\vec{t}_0+ \frac{\epsilon}{\sqrt{n}} \vec e_j}^{\otimes n}
(\rho_{\vec{t}_0}^{\otimes n})^{-1}
\rho_{\vec{t}_0+ \frac{\epsilon}{\sqrt{n}} \vec e_i}^{\otimes n}\right]
-1\right) \\
&= \lim_{n\to \infty}
\frac{1}{\epsilon^2} \left[\left
(1+ \frac{\epsilon^2}{n}  \left(  \tilde{J}_{\vec{t}_0}\right)_{i,j} +O\left(\frac{1}{n^2}\right)\right)^n
-1\right] \\
&=\lim_{n\to \infty}
\frac{1}{\epsilon^2} \left[
\left(1+ \frac{\epsilon^2}{n} \left(  \tilde{J}_{\vec{t}_0}\right)_{i,j} +O\left(\frac{1}{n^2}\right)\right)^n
-1\right]\\ 
& = 
\frac{1}{\epsilon^2} \left[
e^{\epsilon^2   \left(\tilde{J}_{\vec{t}_0}\right)_{i,j}}-1 \right  ] \, ,
\end{align*}
which implies \eqref{KBI}.
\end{proofof}

\section{Precision bounds for multiparameter estimation}\Label{sec:multi}

\subsection{Covariance conditions}
First, we introduce the condition for our estimators.
The correspondence between qudit states and Gaussian states also extends to the estimator level. 
We consider a generic state family ${\cal M}=\{\rho_{\vec{t}}\}_{\vec{t} \in \set{\Theta}}$, with the parameter space $\set{\Theta}$ being an open subset of $\R^k$. 
Similar to the single-parameter case, 
given a point $\vec{t}_{0} \in \set{\Theta}$, 
we consider a local model $\rho^n_{\vec{t}_0,\vec{t}}:=\rho^{\otimes n}_{\vec{t}_0+\vec{t}/\sqrt{n}}$.
Throughout this section, we assume that $\rho_{\vec{t}_0}$ is invertible.
For a sequence of POVM $\seq{m}:=\{M_n\}$, we introduce the condition of local asymptotic covariance as follows:

\begin{condition}[Local asymptotic covariance] \Label{cond}
We say that a sequence of measurements $\seq{m}:=\{M_n\}$ satisfies local asymptotic covariance at $\vec{t}_{0} \in \set{\Theta}$ under the state family ${\cal M}$, 
if the probability distribution 
\begin{align}\Label{finitedistribution}
\wp^{n}_{\vec{t}_{0},\vec{t}|M_n}(\set{B})
:=\Tr\rho_{\vec{t}_0+\frac{\vec{t}}{\sqrt{n}}} ^{\otimes n}M_n
\left( \frac{\set{B}}{\sqrt{n}}+\vec{t}_0\right) 
\end{align}
converges to a limiting distribution 
\begin{align}\Label{limiting}
\wp_{\vec{t}_{0},\vec{t} |\seq{m}}(\set{B}):=
\lim_{n\to\infty} \wp^n_{\vec{t}_{0},\vec{t}|M_n}(\set{B}),
\end{align}
the relation
\begin{align}
\wp_{\vec{t}_{0},\vec{t}| \seq{m}}(\set{B}+\vec{t})
=
\wp_{\vec{t}_{0},\vec{0}| \seq{m}}(\set{B})
\Label{LLe}
\end{align}
holds for any $\vec{t}\in\R^k$\ \footnote{The range of $\vec{t}$ is determined via the constraint $\vec{t}_0+\vec{t}/\sqrt{n}\in\set{\Theta}$.
Just as in the one-parameter case, $\vec{t}$ can take any value in $\R^k$ when $n$ is large enough. The range of the local parameter is then $\vec{t}\in\R^k$.}.
\if0
and the PDF of the distribution $\wp_{\vec{t}_{0},\vec{0}| \seq{m}}$ exists\footnote{In fact, it is not clear by the authors' best knowledge
whether the third condition can be derived from other two conditions.
The third condition is not needed if it can be derived by them.}.
\fi
When we need to express the outcome of 
$\wp^n_{\vec{t}_{0},\vec{t}|M_n} $ or
$\wp_{\vec{t}_{0},\vec{t}| \seq{m}}$, we denote it by $\hat{\vec{t}}$. 

Further, we say that a sequence of measurements $\seq{m}:=\{M_n\}$ satisfies local asymptotic covariance under the state family ${\cal M}$
when it satisfies local asymptotic covariance 
at any element $\vec{t}_{0} \in \Theta$ under the state family ${\cal M}$.
\end{condition}


Under these preparations, we obtain the following theorem by using Theorem \ref{Th3}.
\begin{theo}\Label{CTh3}
Let $\{\rho^{\otimes n}_{\vec{t}}\}_{\vec{t}\in\set{\Theta}}$ be a $k$-parameter D-invariant qudit model
with $C^2$ continuous parametrization.
Assume that 
$\tilde{J}^{-1}_{\vec{t}_0}$ exists, 
$\rho_{\vec{t}_0}$ is a non-degenerate state,
and a sequence of measurements $\seq{m}:=\{M_n\}$ satisfies local asymptotic covariance at $\vec{t}_{0} \in \set{\Theta}$.
Then there exists a covariant POVM $\widetilde{M}^G$
such that
\begin{align}\Label{Gaussian-limiting}
\Tr \widetilde{M}^G(\set{B})G[\vec{t},\tilde{J}^{-1}_{\vec{t}_0}]=
\wp_{\vec{t}_0,\vec{t}|\seq{m}}(\set{B})
\end{align}
for any vector $\vec{t}$ and any measurable subset $\set{B}$. Here $\tilde{J}_{\vec{t}_0}$ is the RLD Fisher information of the qudit model at $\vec{t}_0$. 
\end{theo}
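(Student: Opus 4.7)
The plan is to use the Q-LAN approximation of Theorem~\ref{Th3} to pull the sequence of qudit POVMs $\{M_n\}$ back to the Gaussian system, and then to exploit local asymptotic covariance together with a compactness argument to extract a covariant limit. Concretely, let $\map{S}_{n,\vec{t}_0}$ be the channel sending $G[\vec{t},\tilde J_{\vec{t}_0}^{-1}]$ to a trace-norm $O(n^{-\kappa})$ approximation of $\rho_{\vec{t}_0+\vec{t}/\sqrt n}^{\otimes n}$, as provided by Theorem~\ref{Th3}, and let $\map{S}^{*}_{n,\vec{t}_0}$ denote its Heisenberg-picture adjoint. Writing $\tilde M_n(\set B) := M_n(\set B/\sqrt n + \vec{t}_0)$, the natural Gaussian POVM associated with $M_n$ is
\[
\widetilde M_n^G(\set B) := \map{S}^{*}_{n,\vec{t}_0}\bigl(\tilde M_n(\set B)\bigr),
\]
which is a POVM because $\map{S}^{*}_{n,\vec{t}_0}$ is unital and completely positive.

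The next step is to check that $\widetilde M_n^G$ reproduces the limiting distribution $\wp_{\vec{t}_0,\vec{t}|\seq m}$. Using duality together with the Q-LAN bound~\eqref{LAN-S}, one finds
\[
\Tr\bigl[\widetilde M_n^G(\set B)\,G[\vec{t},\tilde J_{\vec{t}_0}^{-1}]\bigr]
= \Tr\bigl[\tilde M_n(\set B)\,\map{S}_{n,\vec{t}_0}(G[\vec{t},\tilde J_{\vec{t}_0}^{-1}])\bigr]
= \wp^n_{\vec{t}_0,\vec{t}|M_n}(\set B) + O(n^{-\kappa}),
\]
and Condition~\ref{cond} then forces the right hand side to converge to $\wp_{\vec{t}_0,\vec{t}|\seq m}(\set B)$ as $n\to\infty$, for $\vec t$ in the compact neighbourhood supplied by Theorem~\ref{Th3}. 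From this I will extract a single limit POVM $\widetilde M^G$ by a weak-$*$ compactness argument on the space of $\mathcal B(\spc K)$-valued Borel measures on $\R^k$: boundedness is automatic since each $\widetilde M_n^G$ is normalized to the identity, and the tightness needed to rule out escape of mass to infinity is supplied precisely by the convergence of the output statistics to an honest probability measure. A Banach–Alaoglu plus Prokhorov-type argument yields a subsequence converging weakly to a POVM $\widetilde M^G$ on the Gaussian system satisfying
\[
\Tr\bigl[\widetilde M^G(\set B)\,G[\vec{t},\tilde J_{\vec{t}_0}^{-1}]\bigr] = \wp_{\vec{t}_0,\vec{t}|\seq m}(\set B)
\]
first for continuity sets of the limit and then, by Carath\'eodory extension, for all Borel $\set B$, which is exactly \eqref{Gaussian-limiting}.

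It remains to verify covariance. Using $T_{\vec{t}} G[\vec{s},\tilde J_{\vec{t}_0}^{-1}] T_{\vec{t}}^{\dagger} = G[\vec{s}+\vec{t},\tilde J_{\vec{t}_0}^{-1}]$, the identity just obtained, and the covariance \eqref{LLe} of the limit, one gets
\[
\Tr\bigl[T_{\vec{t}}^{\dagger}\widetilde M^G(\set B+\vec{t}) T_{\vec{t}}\,G[\vec{s},\tilde J_{\vec{t}_0}^{-1}]\bigr]
= \wp_{\vec{t}_0,\vec{s}+\vec{t}|\seq m}(\set B+\vec{t})
= \wp_{\vec{t}_0,\vec{s}|\seq m}(\set B)
= \Tr\bigl[\widetilde M^G(\set B)\,G[\vec{s},\tilde J_{\vec{t}_0}^{-1}]\bigr]
\]
for every $\vec s$. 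Since the displaced Gaussian states are total in the trace class, the operator identity $T_{\vec{t}}^{\dagger}\widetilde M^G(\set B+\vec{t}) T_{\vec{t}} = \widetilde M^G(\set B)$ follows, giving covariance of $\widetilde M^G$. The main obstacle I anticipate is the weak-$*$ extraction step: making this work cleanly on the infinite-dimensional Gaussian Hilbert space, while guaranteeing that the limit is still normalized to the identity, requires a careful tightness argument tailored to the noncompactness of $\R^k$, and it is exactly here that the convergence of $\wp^n_{\vec{t}_0,\vec t|M_n}$ to a genuine probability measure (rather than to a subprobability measure) does the essential work.
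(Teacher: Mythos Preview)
Your approach is sound in outline but differs substantially from the paper's proof. The paper does \emph{not} pull the qudit POVMs back through the Q-LAN channels at all. Instead it works directly with the limiting distribution: Lemma~\ref{L54} says that for a canonical Gaussian family $\{G[\vec{\alpha},\Gamma]\}$ there is, for any given function $f(\vec{\alpha})$, a \emph{unique} operator $F$ (given by an explicit P-function formula) satisfying $\Tr F\,G[\vec{\alpha},\Gamma]=f(\vec{\alpha})$. Applying this with $f(\vec t)=\wp_{\vec t_0,\vec t|\seq m}(\set B)$ produces $\widetilde M^G(\set B)$ at one stroke; positivity, additivity and normalization are checked directly from the formula, and covariance follows from the uniqueness part of Lemma~\ref{L54} together with \eqref{LLe}. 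So in the paper the channels $\map S_{n,\vec t_0}$, the pulled-back POVMs, and the compactness machinery never appear.

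Your route --- pull back via $\map S_{n,\vec t_0}^{*}$ and extract a weak-$*$ limit --- is essentially the argument the authors sketched and then abandoned (it survives in the source as a commented-out ``Step~1''). It is conceptually attractive because it makes the role of Q-LAN transparent, but the compactness step you yourself flag is a genuine technical burden: one needs sequential compactness of POVMs on an infinite-dimensional Hilbert space with outcome space $\R^k$, diagonalization over a countable generating algebra, and a tightness argument to prevent loss of mass, and then a $\pi$--$\lambda$ argument to pass from continuity sets to all Borel sets. All of this can be made rigorous, but none of it is supplied. Your covariance step, by contrast, is identical in spirit to the paper's: the claim that the displaced Gaussians are separating for bounded operators is precisely the uniqueness assertion of Lemma~\ref{L54}, so you are invoking the same fact the paper does, just without naming it. The paper's approach buys you a fully constructive proof with no compactness issues; yours buys a cleaner conceptual picture at the price of nontrivial functional-analytic housekeeping that you have outlined but not carried out.
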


To show Theorem \ref{CTh3}, we will use  the following lemma.

\begin{lem}\Label{L54}
For a function $f$, an operator $F$, and a c-q Gaussian state in the canonical form $G[\vec{\alpha},\Gamma]$, the relation
\begin{align}\Label{Gaussian-limiting2}
\Tr F G[\vec{\alpha},\Gamma]=
f(\vec{\alpha})
\end{align}
holds for any vector $\vec{\alpha}$
if and only if 
\begin{align}\Label{Gaussian-limiting3}
F=\int d\vec{y}\,\map{F}^{-1}_{\vec{\xi}\to\vec{y}}\left(\sqrt{\pi^k}e^{\frac14\sum_j\frac{\xi_j^2}{1-\gamma_j}}\map{F}_{\vec{\alpha}\to\vec{\xi}}\left( 
f (\vec{\alpha})
\right)\right)|\vec{y}\>\<\vec{y}|.
\end{align} 
 Here $\vec{\xi}$ and $\vec{y}$ are $k$-dimensional vectors, $|\vec{y}\>$ is a (multimode) coherent state, $\gamma_j$ are thermal parameters of the Gaussian, and $\map{F}^{-1}_{\vec{\xi}\to\vec{y}}(g)$ denotes the inverse of the Fourier transform 
$\map{F}_{\vec{\xi}\to\vec{y}}(g):=\int d\vec{\xi}\ e^{i\vec{\xi}\cdot\vec{y}} g$.
Therefore,
for a given function $f(\vec{\alpha})$, there uniquely exists an operator $F$
to satisfy \eqref{Gaussian-limiting2}.
\end{lem}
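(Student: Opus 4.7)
The plan is to convert the trace equation $\Tr F\, G[\vec{\alpha},\Gamma]=f(\vec{\alpha})$ into a convolution equation by using the Glauber--Sudarshan P-representation of $F$, and then to invert the convolution by Fourier transform. This will produce the unique $F$ satisfying the constraint, and the formula \eqref{Gaussian-limiting3} will emerge by identifying the Fourier multiplier explicitly.

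First, I would write $F=\int d\vec{y}\, g(\vec{y})\,|\vec{y}\>\<\vec{y}|$ for a (possibly distributional) density $g$; such a representation exists and is unique as a tempered distribution on the space of coherent states, so characterizing $F$ is equivalent to characterizing $g$. Substitution gives
\begin{align}
f(\vec{\alpha}) = \Tr F\, G[\vec{\alpha},\Gamma]
= \int d\vec{y}\, g(\vec{y})\,\<\vec{y}|G[\vec{\alpha},\Gamma]|\vec{y}\>.
\end{align}
The overlap $\<\vec{y}|G[\vec{\alpha},\Gamma]|\vec{y}\>$ is just the Husimi Q-function of the displaced thermal state, and a direct computation from the characteristic equation \eqref{MFT} shows that it factors into one-mode Gaussians of the form $K(\vec{y}-\vec{\alpha})$, whose widths are controlled by the thermal parameters $\gamma_j$. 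The equation above therefore reduces to the convolution identity $(g\ast K)(\vec{\alpha})=f(\vec{\alpha})$.

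Second, I would apply the Fourier transform $\map{F}_{\vec{\alpha}\to\vec{\xi}}$ and invoke the convolution theorem to obtain $\tilde g(\vec{\xi})\,\tilde K(\vec{\xi}) = \tilde f(\vec{\xi})$. Because $K$ is Gaussian, $\tilde K$ is also Gaussian, and a short calculation identifies $\tilde K(\vec{\xi})^{-1} = \sqrt{\pi^k}\,e^{\frac{1}{4}\sum_j \xi_j^2/(1-\gamma_j)}$, matching the multiplier appearing in \eqref{Gaussian-limiting3}. Solving for $\tilde g$ and applying the inverse Fourier transform $\map{F}^{-1}_{\vec{\xi}\to\vec{y}}$ recovers precisely the operator $F$ specified in the statement. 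Uniqueness then follows from two facts: the P-representation of a given operator is unique as a tempered distribution, and the Fourier-space equation has a unique solution because $\tilde K$ is nowhere vanishing.

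The main obstacle is analytic rather than algebraic. Since $\tilde K(\vec{\xi})$ decays as a Gaussian, $\tilde K(\vec{\xi})^{-1}$ grows as a Gaussian, so the expression $e^{\frac{1}{4}\sum_j \xi_j^2/(1-\gamma_j)}\tilde f(\vec{\xi})$ need not be tempered for an arbitrary $f$. The cleanest way to state the lemma is thus to understand the integral \eqref{Gaussian-limiting3} in a distributional sense, which also accommodates $F$ being an unbounded operator; within this class the correspondence $f\leftrightarrow F$ is a genuine bijection. This level of generality is sufficient for the application to Theorem \ref{CTh3}, where the function $f$ will be a probability density (or a smooth test function against one), so the Fourier-analytic issues do not cause any obstruction.
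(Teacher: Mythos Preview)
Your proposal is correct and follows essentially the same route as the paper's proof. Both arguments reduce the trace equation to a Gaussian convolution identity and invert it by Fourier transform, arriving at the P-density of $F$ given by the stated multiplier. The only cosmetic difference is that you start directly with the P-representation of $F$ and pair it with the Q-function of the displaced thermal state, whereas the paper first writes the Q-function of $F$, pairs it with the P-representation of the thermal state, and then applies the standard Q-to-P Fourier relation $P=\map{F}^{-1}\bigl(e^{\frac14\sum_j\xi_j^2}\,\map{F}(Q)\bigr)$; the two multipliers combine to the same $e^{\frac14\sum_j\xi_j^2/(1-\gamma_j)}$, so your single-step inversion is just a slight streamlining of the paper's two-step one.
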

The proof can be found in Appendix \ref{app-lemma12}.
Now, we are ready to prove Theorem \ref{CTh3}.
\begin{proofof}{Theorem \ref{CTh3}}
We consider without loss of generality $G[\vec{t},\tilde{J}_{\vec{t}_0}^{-1}]$ to be in the canonical form, noticing that any Gaussian state is unitarily equivalent to a Gaussian state in the canonical form as shown by Lemma \ref{3LL}.
For any measurable set $\set{B}$, we define 
the operator $\widetilde{M}^G(\set{B})$ as
\begin{align}\Label{app-MG8}
\widetilde{M}^G(\set{B}):=\int d\vec{y}\,\map{F}^{-1}_{\vec{\xi}\to\vec{y}}\left(\sqrt{\pi^k}e^{\frac14\sum_j\frac{\xi_j^2}{1-\gamma_j}}\map{F}_{\vec{t}\to\vec{\xi}}\left( 
\wp_{\vec{t}_0,\vec{t}|\seq{m}}(\set{B})
\right)\right)|\vec{y}\>\<\vec{y}|.
\end{align}
From the above definition, it can be verified that $\widetilde{M}^G(\set{B})$ satisfies the definition of a POVM: first, it is immediate to see that the term $\map{F}^{-1}_{\vec{\xi}\to\vec{y}}\left(\sqrt{\pi^k}e^{\frac14\sum_j\frac{\xi_j^2}{1-\gamma_j}}\map{F}_{\vec{t}\to\vec{\xi}}\left( 
\wp_{\vec{t}_0,\vec{t}|\seq{m}}(\set{B})
\right)\right)$ equals the convolution of $\wp_{\vec{t}_0,\vec{t}|\seq{m}}(\set{B})$ and a Gaussian function by employing the convolution theorem. Since both functions are non-negative, the outcome of convolution is still non-negative, which implies that $\widetilde{M}^G(\set{B})\ge 0$. Second, by linearity we have $\widetilde{M}^G(\set{B}_1\cup\set{B}_2)=\widetilde{M}^G(\set{B}_1)+\widetilde{M}^G(\set{B}_2)$ for any disjoint sets $\set{B}_1$ and $\set{B}_2$. Finally, the equality $\widetilde{M}^G(\set{B})$ can be shown by combining the linearity with the fact that $\wp_{\vec{t}_0,\vec{t}|\seq{m}}$ is a probability distribution function.

Moreover, Lemma \ref{L54} guarantees that
\begin{align}\Label{H32}
\Tr \widetilde{M}^G(\set{B})
 G[\vec{t},\tilde{J}_{\vec{t}_0}^{-1}]=
\wp_{\vec{t}_0,\vec{t}|\seq{m}}(\set{B}).
\end{align}
What remains to be shown is that the POVM $\{\widetilde{M}^G(\set{B})\}$ satisfies the covariance condition.
Eq. \eqref{H32} guarantees that
\begin{align*}
\Tr T_{\vec{t}'} \widetilde{M}^G(\set{B}) T_{\vec{t}'}^\dagger
 G[\vec{t},\tilde{J}_{\vec{t}_0}^{-1}]=
\Tr \widetilde{M}^G(\set{B})
 G[\vec{t}-\vec{t}',\tilde{J}_{\vec{t}_0}^{-1}]=
\wp_{\vec{t}_0,\vec{t}-\vec{t}'|\seq{m}}(\set{B}),
\end{align*}
and
\begin{align*}
\Tr \widetilde{M}^G(\set{B}+\vec{t}') 
 G[\vec{t},\tilde{J}_{\vec{t}_0}^{-1}]=
\wp_{\vec{t}_0,\vec{t}|\seq{m}}(\set{B}+\vec{t}')
=
\wp_{\vec{t}_0,\vec{t}-\vec{t}'|\seq{m}}(\set{B}).
\end{align*}
The uniqueness of the operator to satisfy the condition \eqref{Gaussian-limiting2}
implies the covariance condition
\begin{align*}
\widetilde{M}^G(\set{B}+\vec{t}')
= T_{\vec{t}'} \widetilde{M}^G(\set{B}) T_{\vec{t}'}^\dagger.
\end{align*}
\end{proofof}

\subsection{MSE bound for the D-invariant case}\Label{subsec:D-inv}
Next, we derive the lower bound of MSE of the limiting distribution
for any D-invariant model.
As an extension of the mean square error, 
we introduce the {\em mean square error  matrix (MSE matrix)}, defined as 
\begin{align}\Label{cov-def}
V_{i,j}[\wp]:= \int\,x_i x_j \wp(dx)
\end{align}
for a generic probability distribution $\wp$.  
Since the set of symmetric matrices is not totally ordered,
we will consider the minimization of the expectation value   $\tr W V[\wp_{\vec{t}_{0},\vec{t}| \seq{m}}]$ 
for a certain weight matrix $W\ge 0$. For short, we will refer to the quantity  $\tr W V[\wp_{\vec{t}_{0},\vec{t}| \seq{m}}]$ as the {\em weighted MSE}.

Under local asymptotic   covariance, one can   derive lower bounds on the 
covariance matrix of the limiting distribution and construct optimal measurements to achieve them.
In general, the attainability of the conventional quantum Cram\'er-Rao bounds is a challenging issue.
  For instance, a well-known bound is the symmetric logarithmic derivative (SLD) Fisher information bound
\begin{align}\Label{SLDB}
\tr W V[\wp_{\vec{t}_{0},\vec{t}| \seq{m}}]\ge 
\tr W J_{\vec{t}_0}^{-1},
\end{align}
where $J_{\vec{t}_0}$ is the SLD Fisher information.
The SLD bound is attainable in the single-parameter case, i.e. when $k=1$, yet it is in general not attainable for multiparameter estimation (see, for instance, later in Subsection \ref{subsec-ob} for a concrete example).

 In the following, we derive an {\em attainable} lower bound on the  weighted MSE. To this purpose,  we define the set $\set{LAC}(\vec{t}_{0})$ of 
local asymptotic covariant sequences of measurements at the point $\vec{t}_0 \in \set{\Theta}$. For a model ${\cal M}$, we focus on the minimum value 
\begin{align}\Label{global-boundB}
\map{C}_{\spc{S}}(W,\vec{t}_{0}):=\min_{\seq{m}' \in \set{LAC}(\vec{t}_{0}) }
 \tr W V [\wp_{\vec{t}_{0},\vec{t} |\seq{m}'}].
\end{align}
When $k\ge 2$, a better choice is the RLD quantum Fisher information bound. 
The main result of this section is an attainable bound on the weighted MSE, relying on the RLD quantum Fisher information.
\begin{theo}[Weighted MSE bound for D-invariant models]\Label{thm-MSE}
Assume that $\tilde{J}_{\vec{t}_0}^{-1} $ exists.
Consider any sequence of locally asymptotically covariant measurements $\seq{m}:=\{M_n\}$.
The limiting distribution is evaluated as
\begin{align}\Label{RLDB0}
V[\wp_{\vec{t}_{0},\vec{t}| \seq{m}}] 
\ge (\tilde{J}_{\vec{t}_0})^{-1},
\end{align}
where $\tilde{J}_{\vec{t}_0}$ is the RLD quantum Fisher information.
When the model is $C^1$ continuous and D-invariant,
we have the bound for the weighted MSE with weight matrix $W\ge 0$
of the limiting distribution as
\begin{align}\Label{RLDB}
\tr W V[\wp_{\vec{t}_{0},\vec{t}| \seq{m}}] 
\ge \tr WJ_{\vec{t}_0}^{-1}+\frac12\tr\left|\sqrt{W}J_{\vec{t}_0}^{-1}D_{\vec{t}_0}J_{\vec{t}_0}^{-1}\sqrt{W}\right|,
\end{align}
where $J_{\vec{t}_0}$ is the SLD quantum Fisher information (\ref{SLD-QFI}) and $D_{\vec{t}_0}$ is the D-matrix (\ref{D-m}).
When $\spc{S}$ is a D-invariant qudit model
and the state $\rho_{\vec{t}_0}$ is not degenerate, we have
\begin{align}\Label{RLDB2}
\map{C}_{\spc{S}}(W,\vec{t}_{0})
= \tr WJ_{\vec{t}_0}^{-1}+\frac12\tr\left|\sqrt{W}J_{\vec{t}_0}^{-1}D_{\vec{t}_0}J_{\vec{t}_0}^{-1}\sqrt{W}\right|.
\end{align}
Moreover, if $W>0$
and $\wp_{\vec{t}_0,\vec{0}| \seq{m}}$ has a differentiable PDF, 
the equality in (\ref{RLDB}) holds if and only if $\wp_{\vec{t}_0,\vec{t}| \seq{m}}$ is the normal distribution with average zero
and covariance
\begin{align}\label{V-tw}
V_{\vec{t_0}|W}:=J_{\vec{t}_0}^{-1}+\frac12
\sqrt{W}^{-1}
\left|\sqrt{W}J_{\vec{t}_0}^{-1}D_{\vec{t}_0}J_{\vec{t}_0}^{-1}\sqrt{W}\right|
\sqrt{W}^{-1}.
\end{align}
Further, when $\{\rho_{\vec{t}}\}_{\vec{t}\in\set{\Theta}}$ is a qudit-model with $C^2$ continuous parametrization,
the equality in \eqref{RLDB} holds, i.e.,
there exist a sequence of POVMs $M_W^{\vec{t}_0,n}$,
a compact set $K$, and constant $c(\vec{t}_0)$
 such that
\begin{align}
\limsup_{n \to \infty}\sup_{\vec{t}_0 \in K}
\sup_{\vec{t} \in \set{\Theta}_{n,x,c(\vec{t}_0)} } n^{\kappa}\| \wp^n_{\vec{t}_0,\vec{t}| M_W^{\vec{t}_0,n}}
- N[\vec{t}, V_{\vec{t_0}|W}]
\|_1 <\infty,\Label{DJI}
\end{align}
where $\kappa$ is a parameter to satisfy $\kappa\ge 0.027$
\end{theo}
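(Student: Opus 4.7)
The plan has two halves: an $\epsilon$-difference RLD argument for the lower bounds, mirroring the single-parameter Theorem~\ref{ThY}, and a Q-LAN construction for achievability that hinges on Theorem~\ref{CTh3} together with Corollary~\ref{LFD}.

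For the matrix bound (\ref{RLDB0}), which requires neither D-invariance nor qudit structure, I would adapt the strategy of Theorem~\ref{ThY} to the multiparameter machinery of Section~\ref{sec-RLD}. First, by local asymptotic covariance and after subtracting the constant vector $\int\hat{\vec{t}}\,\wp_{\vec{t}_0,\vec{0}|\seq{m}}(d\hat{\vec{t}})$ from the outcomes, which does not change the MSE matrix, the limiting family $\{\wp_{\vec{t}_0,\vec{t}|\seq{m}}\}_{\vec{t}}$ becomes $\epsilon$-locally unbiased at $\vec{t}=\vec{0}$ for every $\epsilon>0$. The classical $\epsilon$-difference Cram\'er--Rao inequality \eqref{3-3-1F} then gives
\begin{align*}
V[\wp_{\vec{t}_0,\vec{0}|\seq{m}}]\ge (J^{\infty}_{\vec{t}_0,\epsilon})^{-1},
\end{align*}
where $J^{\infty}_{\vec{t}_0,\epsilon}$ is the $\epsilon$-difference Fisher matrix of the limiting family. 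The chain of information-processing inequalities \eqref{3-3-3} and \eqref{MMM}, applied finite-$n$, yields $J^{\infty}_{\vec{t}_0,\epsilon}\le \liminf_{n}\tilde{J}^{\mathrm{loc},n}_{\vec{0},\epsilon}$ where $\tilde{J}^{\mathrm{loc},n}_{\vec{0},\epsilon}:= \tfrac{1}{n}\tilde{J}^n_{\vec{t}_0,\epsilon/\sqrt{n}}$ is the rescaled $\epsilon$-difference RLD of the $n$-copy model. Lemma~\ref{LLO} identifies this limit as $\tilde{J}^{[\epsilon]}_{\vec{t}_0}$, and sending $\epsilon\to 0$ recovers $\tilde{J}_{\vec{t}_0}$, proving (\ref{RLDB0}). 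The weighted bound (\ref{RLDB}) then specializes: under D-invariance, Eq.~(\ref{RLD-QFI}) gives $\tilde{J}_{\vec{t}_0}^{-1}=J_{\vec{t}_0}^{-1}+(i/2)J_{\vec{t}_0}^{-1}D_{\vec{t}_0}J_{\vec{t}_0}^{-1}$, and the algebraic identity $\min\{\tr[WV]:V\ge A\}=\tr[W\,\mathrm{Re}(A)]+\tr|\sqrt{W}\,\mathrm{Im}(A)\sqrt{W}|$ used in (\ref{BFR}) converts (\ref{RLDB0}) into (\ref{RLDB}).

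For achievability and (\ref{RLDB2}), I would pull back the optimal Gaussian measurement of Corollary~\ref{LFD} through the Q-LAN channel of Theorem~\ref{Th3}. Let $M_W^{\Gamma}$, with $\Gamma=\tilde{J}_{\vec{t}_0}^{-1}$, be the covariant unbiased estimator producing $N[\vec{t},V_{\vec{t}_0|W}]$ on $G[\vec{t},\Gamma]$. Define
\begin{align*}
M_W^{\vec{t}_0,n}(B):=\map{T}^*_{n,\vec{t}_0}\bigl(M_W^{\Gamma}(B)\bigr),
\end{align*}
where $\map{T}_{n,\vec{t}_0}$ is the quantum channel of Theorem~\ref{Th3} and the star denotes the Heisenberg adjoint. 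Contractivity of POVMs under the trace norm, combined with the bound \eqref{LAN-T}, gives
\begin{align*}
\sup_{\vec{t}\in\set{\Theta}_{n,x,c(\vec{t}_0)}}\bigl\|\wp^{n}_{\vec{t}_0,\vec{t}|M_W^{\vec{t}_0,n}}-N[\vec{t},V_{\vec{t}_0|W}]\bigr\|_1 = O(n^{-\kappa}),
\end{align*}
which is \eqref{DJI}; compact uniformity in $\vec{t}_0$ is inherited from Theorem~\ref{Th3} together with continuity of $\vec{t}_0\mapsto\tilde{J}_{\vec{t}_0}^{-1}$ and $\vec{t}_0\mapsto V_{\vec{t}_0|W}$ under the $C^{2}$ hypothesis. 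The limiting distribution $N[\vec{0},V_{\vec{t}_0|W}]$ has MSE matrix $V_{\vec{t}_0|W}$, which saturates (\ref{RLDB}); combining this with the converse proved above gives (\ref{RLDB2}). For the uniqueness of the optimizer, under $W>0$ and differentiability of the PDF I would invoke Theorem~\ref{CTh3} to lift $\seq{m}$ to a covariant Gaussian POVM $\widetilde{M}^G$ reproducing $\wp_{\vec{t}_0,\vec{t}|\seq{m}}$ on $G[\vec{t},\tilde{J}_{\vec{t}_0}^{-1}]$: saturation of (\ref{RLDB}) translates to saturation of the Gaussian weighted RLD bound in Corollary~\ref{LFD}, whose equality clause forces the outcome distribution to be $N[\vec{t},V_{\vec{t}_0|W}]$.

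I expect the main technical obstacle to be twofold. First, the liminf/limit exchange needed to control $J^{\infty}_{\vec{t}_0,\epsilon}$ by $\tilde{J}^{[\epsilon]}_{\vec{t}_0}$ is the multiparameter analogue of the role played by Lemma~\ref{ALL8} in the one-parameter proof; it must be justified with some care because the $\epsilon$-difference RLD of the limiting family could a priori be strictly larger than the limit of the finite-$n$ quantities, and one must rule this out using local asymptotic covariance and weak convergence of $\wp^n_{\vec{t}_0,\vec{t}|M_n}$. Second, promoting the pointwise-in-$\vec{t}_0$ Q-LAN of Theorem~\ref{Th3} to a statement that holds compactly uniformly in $\vec{t}_0$ with the same rate $n^{-\kappa}$ is delicate, because the channels $\map{T}_{n,\vec{t}_0}$ and the parameterization-change map $f_{\vec{t}_0}$ from the proof of Theorem~\ref{Th3} depend nontrivially on $\vec{t}_0$; the $C^{2}$ assumption on the model should provide the continuity and bounds needed to ensure that the error does not deteriorate when $\vec{t}_0$ ranges over a compact set.
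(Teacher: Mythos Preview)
Your plan matches the paper's proof essentially point-for-point: the impossibility part is exactly the $\epsilon$-difference RLD argument you describe (information processing \eqref{3-3-3} at finite $n$, Lemma~\ref{LLO} for the limit $\tilde{J}^{[\epsilon]}_{\vec{t}_0}$, and the multiparameter analogue of Lemma~\ref{ALL8}---which is Lemma~\ref{LL8}---to pass to the limiting distribution), and the achievability part is the Q-LAN pullback $M_W^{\vec{t}_0,n}=(\map{T}^{\rm QLAN}_n)^\dagger(M_W^{\tilde{J}_{\vec{t}_0}^{-1}})$ you write down, with the degenerate-$W$ case handled by perturbing to $W_\epsilon>0$. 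Note also that compact uniformity in $\vec{t}_0$ is not an obstacle: Theorem~\ref{Th3} already asserts it (that is what the notation $\stackrel{\vec{t}_0}{\cong}$ encodes).

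The one place you diverge from the paper is the equality/uniqueness clause. You propose lifting $\seq{m}$ to a covariant Gaussian POVM via Theorem~\ref{CTh3} and then reading off uniqueness from Corollary~\ref{LFD}. This has two issues: Theorem~\ref{CTh3} needs non-degeneracy, which the ``Moreover'' clause does not assume; and Corollary~\ref{LFD} pins down the optimal MSE matrix but does not by itself force the \emph{distribution} to be Gaussian. The paper instead argues purely classically, exactly as in Theorem~\ref{ThY}: taking $\epsilon\to 0$ in \eqref{3-3-8} yields $V[\wp_{\vec{t}_0,\vec{t}|\seq{m}}]\ge (J_{\vec{0}}^{\seq{m}})^{-1}\ge \tilde{J}_{\vec{t}_0}^{-1}$, and equality in the Schwartz-type first inequality forces the logarithmic derivatives of the location-shift family $\{\wp_{\vec{t}_0,\vec{t}|\seq{m}}\}_{\vec{t}}$ to be linear in the outcome, hence the limiting distribution is normal with covariance $V_{\vec{t}_0|W}$.
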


In the following, we prove Theorem \ref{thm-MSE} following three steps. The first step is to derive the bound (\ref{RLDB}). The second step is to show that, to achieve the equality, the limiting distribution needs to be a Gaussian with certain covariance. The last step is to find a measurement attaining the equality. 
In this way,   when the state is not degenerate, we can construct the measurement using Q-LAN\footnote{We note that state estimation involving degenerate states was only solved in a few special cases (see, e.g., \cite{hayashi2016fourier} for the case of maximally mixed qubits).}.


\begin{proofof}{Theorem \ref{thm-MSE}}
\par\noindent{\it Impossibility part \footnote{Note that when the state is not degenerate the proof of the impossibility part of Theorem \ref{thm-MSE} can be simplified using the correspondence between $n$-copy qudit states and Gaussian states as established by the Q-LAN.} (Proofs of \eqref{RLDB0} and \eqref{RLDB}):}

To give a proof, 
we focus on the $\epsilon$-difference RLD Fisher information matrix
$\tilde{J}_{\vec{t}_0,\epsilon}$ at $\vec{t}_0$
for a quantum states family $\{\rho_{\vec{t}}\}_{\vec{t} \in \set{\Theta}}$.
We denote the $\epsilon$-difference Fisher information matrices
for the distribution family
$\{\wp^n_{\vec{t}_{0},\vec{t}|M_n}\}_{\vec{t}}$ 
and
$\{\wp_{\vec{t}_{0},\vec{t} |\seq{m}}\}_{\vec{t}}$
by 
$J_{\vec{t},\epsilon}^n$  and $J_{\vec{t},\epsilon}^{\seq{m}}$, respectively.
Also, we employ the notations given Section \ref{SRT}. 

Applying \eqref{3-3-3} to the POVM $M_n$, we have
 \begin{align}
\frac1n\tilde{J}_{\vec{t}_0, \epsilon/\sqrt{n}}^n \ge 
J^n_{\vec{0},\epsilon},\Label{3-3-4}
 \end{align}
where $\tilde{J}_{\vec{t}_0,\epsilon/\sqrt{n}}^{n}$ denotes the 
$(\epsilon/\sqrt{n})$-difference RLD Fisher information matrix
 of the $n$-copy states $\{\rho_{\vec{t}}^{\otimes n}\}_{\vec{t}  \in \set \Theta}$. 
Combining the above with \eqref{KBI} of Lemma \ref{LLO},
we find that 
$J^n_{\vec{0},\epsilon} <
(\tilde{J}^{[\epsilon]}_{\vec{t}_0}  +I)$
for sufficiently large $n$.
Hence, we can apply Lemma \ref{LL8} (see Appendix \ref{app:asymptoticstuff}) to 
the distribution family
$\{\wp^n_{\vec{t}_{0},\vec{t}|M_n}\}_{\vec{t}}$. 
Then, for any complex vector $\vec{a}$, we have
 \begin{align}
 \liminf_{n \to \infty} 
\langle \vec{a}| J_{\vec{0},\epsilon}^n -J_{\vec{0},\epsilon}^{\seq{m}}|\vec{a}\rangle
\ge 0 .
\Label{3-3-5}
 \end{align}
By taking the limit $n\to \infty$, the combination of \eqref{3-3-4}, \eqref{KBI} of Lemma \ref{LLO}, and \eqref{3-3-5}
implies 
 \begin{align}
\tilde{J}^{[\epsilon]}_{\vec{t}_0}  \ge J_{\vec{0},\epsilon}^{\seq{m}}.
\Label{3-3-6}
 \end{align}
Here, in the same way as the proof of Theorem \ref{ThY},
we can assume that the outcome $\hat{\vec{t}}$ satisfies
the unbiasedness condition. 
Hence, the $\epsilon$-difference Carm\'{e}r-Rao inequality (\ref{3-3-1F}) implies that 
\begin{align}
V[\wp_{\vec{t}_{0},\vec{t}| \seq{m}}] 
\ge \left(J_{\vec{0},\epsilon}^{\seq{m}}\right)^{-1}
\ge \left(
\tilde{J}^{[\epsilon]}_{\vec{t}_0}  
\right)^{-1}.
\Label{3-3-8}
\end{align}
By taking the limit $\epsilon \to 0$, 
\eqref{2KBI} of Lemma \ref{LLO} 
implies 
\begin{align}
V[\wp_{\vec{t}_{0},\vec{t}| \seq{m}}] 
\ge \left(
\tilde{J}_{\vec{t}_0}  
\right)^{-1}.
\Label{3-3-8B}
\end{align}
When the model is $C^1$ continuous and D-invariant, 
adding the conventional discussion for MSE bounds (see, e.g., Chapter 6 of \cite{holevo-book}) to \eqref{3-3-8},
we obtain \eqref{RLDB}.

\noindent{\it Achievability part (Proof of \eqref{RLDB2}):}

Next, we discuss the attainability of the bound
when $W>0$ and $\wp_{\vec{t}_0,\vec{0}| \seq{m}}$ has a differentiable PDF.
In this case, we have the Fisher information matrix 
$J_{\vec{0}}^{\seq{m}}$ of the location shift family 
$\{\wp_{\vec{t}_0,\vec{t}| \seq{m}}\}_{\vec{t}}$.
Taking limit $\epsilon \to 0$ in \eqref{3-3-8}, we have
\begin{align}
V[\wp_{\vec{t}_{0},\vec{t}| \seq{m}}] 
\ge \left(J_{\vec{0}}^{\seq{m}}\right)^{-1}
\ge \left(
\tilde{J}_{\vec{t}_0}  
\right)^{-1}.
\Label{3-3-8F}
\end{align}
The equality of (\ref{RLDB}) holds if and only if
$V[\wp_{\vec{t}_{0},\vec{t}| \seq{m}}] 
= V_{\vec{t_0}|W}$
and the equality in the first inequality of \eqref{3-3-8F} holds. 
Due to the same discussion as the proof of Theorem \ref{ThY},
the equality in the first inequality of \eqref{3-3-8F} holds only when 
all the components of the logarithmic derivative of  the distribution family 
$\{\wp_{\vec{t}_{0},\vec{t} |\seq{m}}\}_{\vec{t}}$
equal the linear combinations of the estimate of $t_i$.
This condition is equivalent to the condition that 
the distribution family 
$\{\wp_{\vec{t}_{0},\vec{t} |\seq{m}}\}_{\vec{t}}$
is a distribution family of shifted normal distributions.
Therefore, when $W>0$, the equality condition of Eq. (\ref{RLDB}) is that
$\wp_{\vec{t}_0,\vec{t}| \seq{m}}$ is the normal distribution with average zero
and covariance matrix
$V_{\vec{t_0}|W}$.

Now, we assume that the state $\rho_{\vec{t}_0}$ is not degenerate.
Then, we use Q-LAN to show that there always exists a sequence of POVM $\seq{m}=\{M_n\}$ satisfying the above property.
We rewrite Eq. \eqref{ER7} of Theorem \ref{Th3} as follows.
\begin{align}
\limsup_{n \to \infty}\sup_{\vec{t}_0 \in K}
\sup_{\vec{t} \in \set{\Theta}_{n,x,c(\vec{t}_0)} } n^{\kappa}
\left\| 
\map{T}^{\rm QLAN}_n \left(\rho_{\vec{t}_0+\vec{t}/\sqrt{n}}^{\otimes n}\right)
-G[{\vec{t}},\tilde{J}_{\vec{t}_0}^{-1}] 
\right\|_1 <\infty,\Label{ERT}
\end{align}
where the notation is the same as Theorem \ref{Th3}.
Then, we choose the covariant POVM 
$M_{W}^{\tilde{J}_{\vec{t}_0}^{-1}}$ on the Gaussian system given in Lemma \ref{ANH}.
When the POVM $M_{W}^{\tilde{J}_{\vec{t}_0}^{-1}}$ applied to the system with the state 
$G[{\vec{\theta}},\Gamma] $, the outcome obeys the distribution
$N[\vec{t}, V_{\vec{t_0}|W}] $.
Therefore, due to \eqref{ERT}, 
when we choose $\seq{m} $ to be 
the sequence of POVMs 
$M_n(\set{B}):= (\map{T}^{\rm QLAN}_n)^\dag (M_{W}^{\tilde{J}_{\vec{t}_0}^{-1}} (\set{B}))$ 
satisfies the condition \eqref{DJI}.

Notice that when $W$ has null eigenvalues, $\sqrt{W}^{-1}$ is not properly defined. In this case, we consider $W_{\epsilon}:=W+\epsilon\cdot P_{W}^{\perp}$ for $\epsilon>0$, where $P_W^\perp$ is the projection on the kernel of $W$, i.e. $P_W^\perp\vec{x}=\vec{x}$ if and only if $W\vec{x}=0$. 
Denoting by $J_{\epsilon}^{-1}:=J_{\vec{t}_0}^{-1}+\frac12
\sqrt{W_\epsilon}^{-1}
\left|\sqrt{W_\epsilon}J_{\vec{t}_0}^{-1}D_{\vec{t}_0}J_{\vec{t}_0}^{-1}\sqrt{W_\epsilon}\right|
\sqrt{W_\epsilon}^{-1}$ the 
\begin{align*}
\tr WJ_{\epsilon}^{-1}\le \tr W_{\epsilon}J_{\epsilon}^{-1}-\epsilon\tr J_{\epsilon}^{-1}.
\end{align*}
Meanwhile, since $W_\epsilon>0$ we can repeat the above argument to find a qudit measurement that attains $\tr W_{\epsilon}J_{\epsilon}^{-1}$.
Taking the limit $\epsilon\to 0$ the quantity $\tr WJ_{\epsilon}^{-1}$ converges to the equality of Eq. (\ref{RLDB2}). Therefore, we can still find a sequence of measurements with Fisher information $\{J_\epsilon\}$ that approaches the bound.
\end{proofof}

\subsection{Precision bound for the estimation of generic models}\Label{s63}
In the previous subsection, we established the precision bound for D-invariant models, where the bound is attainable and has a closed form. Here we extend the bound to any $n$-copy qudit models. The main idea is to \emph{extend} the model to a larger D-invariant model by introducing   additional parameters. 

When estimating parameters in a generic model $\spc{S}$ (consisting of states generated by noisy evolutions, for instance),  the bound (\ref{RLDB}) may not hold. It is then convenient to \emph{extend} the model to a D-invariant model $\spc{S}'$ which contains $\spc{S}$. Since  the bound (\ref{RLDB}) holds for the new model $\spc{S}'$, a corresponding bound  can be derived for the original model $\spc{S}$. The new model $\spc{S}'$ has some additional parameters other than those of $\spc{S}$, which are fixed in the original model $\spc{S}$.
Therefore, a generic quantum state estimation problem can be regarded as an estimation problem in a D-invariant model with \emph{fixed parameters}. The task is to estimate parameters in a model $\spc{S}'$ (globally) parameterized as $\vec{t}'_0=(\vec{t}_0,\vec{p}_0)\in\set{\Theta}'$,  
where $\vec{p}_0$ is a fixed vector and $\set{\Theta}'$ is an open subset of $\R^{k'}$ that equals $\set{\Theta}$ when restricted to $\R^k$. In the neighborhood of $\vec{t}'_0$, since the vector $\vec{p}_0$ is fixed, we have $\vec{t}'=(\vec{t},\vec{0})$ with $\vec{0}$ being the null vector of $\R^{k'-k}$ and $\vec{t}\in\R^{k}$ being a vector of free parameters.
For this scenario, only the parameters in $\vec{t}$ need to be estimated
and we know the parameters $\vec{p}_0$.
Hence, the MSE of $\vec{t}'$ is of the form
\begin{align*}
 V[\wp_{\vec{t}'_0,\vec{t}'| \seq{m}}]=
 \left(\begin{array}{cc}  V[\wp_{\vec{t}_0,\vec{t}| \seq{m}}]&0  \\ 0 & 0 \end{array}\right)
\end{align*}
for any local asymptotic covariant measurement sequence $\seq{m}$.
Due to the block diagonal form of the MSE matrix,  
to discuss the weight matrix $W$ in the original model $\spc{S}$,  
we consider the weight matrix $W'=P^TWP$ 
in the D-invariant model $\spc{S}'$, 
where $P$ is any $k\times k'$ matrix satisfying the constraint $(P)_{ij}:=\delta_{ij}$ for $i,j\le k$
in the following way.

\begin{theo} [MSE bound for generic models]\Label{prop-fixed}
The models $\spc{S}$ and  $\spc{S}'$ are 
$C^1$ continuous and are given in the same way as Proposition \ref{LGT},
and the notations are the same as Proposition \ref{LGT}.
Also, we assume that $\tilde{J}_{\vec{t}_0}^{-1} $ exists.
Consider any sequence of locally asymptotically covariant measurements $\seq{m}:=\{M_n\}$.
Then, the MSE matrx of 
the limiting distribution is evaluated as $\wp_{\vec{t}_{0},\vec{t}| \seq{m}}$.
There exists a $k \times k'$ matrix $P$ such that
\begin{align}
P_{ij}& =\delta_{ij} \hbox{ for }i,j\le k \\
V[\wp_{\vec{t}_{0},\vec{t}| \seq{m}}] 
& \ge P (\tilde{J}_{\vec{t}_0'})^{-1} P^T,\Label{HB0}
\end{align}
where $\tilde{J}_{\vec{t}_0'}$ is the RLD quantum Fisher information of $S'$.
When the model ${\cal M}'$ is a D-invariant model,
we have the bound for the weighted MSE with weight matrix $W\ge 0$
of the limiting distribution as
\begin{align}\Label{HB}
\tr W V[\wp_{\vec{t}_{0},\vec{t}| \seq{m}}] 
\ge \map{C}_{{\rm H},{\cal M}}(W,\vec{t}_{0}).
\end{align}
with $\map{C}_{{\rm H},{\cal M}}(W,\vec{t}_{0})$ defined in Eq. (\ref{H-quantity2}).
When the model ${\cal M}'$ is a D-invariant qudit model
and the state $\rho_{\vec{t}_0}$ is not degenerate, we have
\begin{align}\Label{HB2}
\map{C}_{{\cal M}}(W,\vec{t}_{0})
= \map{C}_{{\rm H},{\cal M}}(W,\vec{t}_{0}).
\end{align}
Moreover, if $W>0$
and $\wp_{\vec{t}_0,\vec{0}| \seq{m}}$ has a differentiable PDF, 
the equality in (\ref{HB}) holds if and only if $\wp_{\vec{t}_0,\vec{t}| \seq{m}}$ is the normal distribution with average zero
and covariance matrix
$V_{\vec{t_0}|W}:=Re(Z_{\vec{t}_0}(\vec{X}))+ 
\sqrt{W}^{-1} |\sqrt{W} {\sf Im} ( Z_{\vec{t}_0}(\vec{X}))\sqrt{W}| \sqrt{W}^{-1}$,
where $\vec{X}$ is the vector to realize the minimum \eqref{H-quantity11}. 
Further, when the models $\spc{S}$ and  $\spc{S}'$ are qudit-models with $C^2$ continuous parametrization,
the equality in \eqref{HB} holds, i.e.,
there exist a sequence of POVMs $M_W^{\vec{t}_0,n}$,
a compact set $K$, and constant $c(\vec{t}_0)$
 such that
\begin{align}
\limsup_{n \to \infty}\sup_{\vec{t}_0 \in K}
\sup_{\vec{t} \in \Theta_{n,x,c(\vec{t}_0)} } n^{\kappa}\left\| \wp^n_{\vec{t}_0,\vec{t}| M_W^{\vec{t}_0,n}}
- N[\vec{t}, V_{\vec{t_0}|W}]
\right\|_1 <\infty.\Label{DJI2}
\end{align}
\end{theo}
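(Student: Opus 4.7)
My plan is to follow the architecture of the proof of Theorem \ref{thm-MSE}, systematically replacing each tool from the intrinsic D-invariant setup with its extended counterpart: Lemma \ref{LO6} is replaced by Lemma \ref{LO8}, Corollary \ref{LFD} by Corollary \ref{LFD2}, and the RLD bound by the Holevo bound via Proposition \ref{LGT}. The extended D-invariant model $\spc S'$ plays the role of a bridge: it carries an $\epsilon$-difference RLD Fisher information matrix that controls the precision of any measurement on the original model, and (when $\spc S'$ is a qudit model with non-degenerate reference state) its Q-LAN image is a Gaussian shift family in which the Holevo bound is tight by Corollary \ref{LFD2}.

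For the impossibility part, I would apply Lemma \ref{LO8} to the POVM $M_n$ viewed as a measurement on the $n$-copy extended family $\{\rho^{\otimes n}_{\vec t'_0+\vec t'/\sqrt n}\}$, producing for every $n$ and $\epsilon>0$ a $k\times k'$ matrix $P_{n,\epsilon}$ with $(P_{n,\epsilon})_{ij}=\delta_{ij}$ for $i,j\le k$ and
\begin{align*}
\bigl(J^{n}_{\vec 0,\epsilon}\bigr)^{-1}\;\ge\;P_{n,\epsilon}\bigl(n^{-1}\tilde J^{n}_{\vec t'_0,\epsilon/\sqrt n}\bigr)^{-1}P_{n,\epsilon}^{T},
\end{align*}
where $J^{n}_{\vec 0,\epsilon}$ is the $\epsilon$-difference Fisher information of the local distribution family $\{\wp^{n}_{\vec t_0,\vec t|M_n}\}_{\vec t}$. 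A compactness argument on $\{P_{n,\epsilon}\}$ (uniform boundedness follows from the fact that otherwise $V_{\vec t_0}(M_n)$ would diverge along a subsequence, contradicting that local asymptotic covariance forces a finite or $+\infty$ limiting MSE which we may assume finite) yields a limit $P_\epsilon$. Combining with Lemma \ref{LLO} to replace $n^{-1}\tilde J^{n}_{\vec t'_0,\epsilon/\sqrt n}$ by $\tilde J^{[\epsilon]}_{\vec t'_0}$, and Lemma \ref{LL8} with local asymptotic covariance to push $J^{n}_{\vec 0,\epsilon}$ down to $J^{\seq m}_{\vec 0,\epsilon}$, I obtain $(J^{\seq m}_{\vec 0,\epsilon})^{-1}\ge P_\epsilon(\tilde J^{[\epsilon]}_{\vec t'_0})^{-1}P_\epsilon^{T}$. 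The reduction to unbiased estimators of the proof of Theorem \ref{ThY} combined with the classical $\epsilon$-difference Cram\'er-Rao inequality \eqref{3-3-1F} then gives $V[\wp_{\vec t_0,\vec t|\seq m}]\ge P_\epsilon(\tilde J^{[\epsilon]}_{\vec t'_0})^{-1}P_\epsilon^{T}$, and the limit $\epsilon\to 0$ via \eqref{2KBI} delivers \eqref{HB0}. The weighted bound \eqref{HB} follows by taking the infimum over admissible $P$'s and applying Proposition \ref{LGT}: the formula \eqref{Ho2} identifies the minimum of $\tr[P^TWP(\tilde J_{\vec t'_0})^{-1}]$, together with the usual $\sqrt{W}\,{\sf Im}(\cdot)\sqrt{W}$ refinement obtained by lifting to the real-symmetric condition $V\ge {\sf Re}(Z)+\sqrt{W}^{-1}|\sqrt{W}\,{\sf Im}(Z)\sqrt{W}|\sqrt{W}^{-1}$, with $\map C_{H,\map M}(W,\vec t_0)$.

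For the achievability part under the qudit hypothesis with non-degenerate $\rho_{\vec t_0}$, I invoke Theorem \ref{Th3} to obtain the compact-uniform Q-LAN equivalence $\rho^{\otimes n}_{\vec t'_0+\vec t'/\sqrt n}\stackrel{\vec t'_0}{\cong}G[\vec t',\tilde J^{-1}_{\vec t'_0}]$ with rate $n^{-\kappa}$. The restriction $\vec t'=(\vec t,\vec 0)$ identifies the effective qudit subproblem with the Gaussian linear submodel $\{G[T\vec t,\tilde J^{-1}_{\vec t'_0}]\}_{\vec t\in\R^k}$, $T$ being the natural inclusion $\R^k\hookrightarrow\R^{k'}$. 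Corollary \ref{LFD2}, applied to this submodel with weight $W>0$, produces a covariant unbiased POVM $M^{G}_W$ whose output distribution is the normal law $N[\vec t,V_{\vec t_0|W}]$. Pulling back through the Q-LAN channel gives $M^{\vec t_0,n}_W:=(\map T^{\rm QLAN}_n)^\dagger(M^{G}_W)$, and \eqref{LAN-T} translates the trace-norm Q-LAN error of order $O(n^{-\kappa})$ into the total-variation convergence \eqref{DJI2}. The case $W\ge 0$ with non-trivial kernel is handled by the perturbation $W_\epsilon:=W+\epsilon P_W^\perp$ as in the closing paragraph of the proof of Theorem \ref{thm-MSE}. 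For the equality statement with differentiable limiting PDF, I would mimic the Schwarz-saturation argument of Theorem \ref{ThY}: equality propagates back through the Cram\'er-Rao chain only if all logarithmic derivatives of the limit law are linear in $\hat{\vec x}$, forcing $\wp_{\vec t_0,\vec t|\seq m}$ to be Gaussian; the covariance is then pinned to $V_{\vec t_0|W}$ by matching the weighted trace to the Holevo value.

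The principal obstacle is the first inequality in the impossibility chain: unlike Theorem \ref{thm-MSE}, where the right-hand matrix is purely intrinsic to the model, here $P_{n,\epsilon}$ depends on both the POVM and $\epsilon$, and one must justify the passage to the double limit $n\to\infty$, $\epsilon\to 0$ by extracting a limit matrix while simultaneously tracking the $\epsilon$-difference Fisher quantities. A secondary subtlety is the identification $\min_P\{\cdots\}=\map C_{H,\map M}(W,\vec t_0)$: one must appeal to the uniqueness of the minimum D-invariant extension (via the projection construction using $\cup_{j\ge 0}\map D^{j}_{\vec t}(\map V)$ discussed after Proposition \ref{LGT}) in order to guarantee that the bound does not depend on the ambient $\spc S'$ and genuinely coincides with the Holevo bound.
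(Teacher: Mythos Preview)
Your proposal is correct and follows essentially the same approach as the paper: apply Lemma~\ref{LO8} to the $n$-copy extended family to get the matrices $P_{n}$, combine Lemmas~\ref{LLO} and~\ref{LL8} to pass to the limit, invoke the $\epsilon$-difference Cram\'er--Rao inequality and Proposition~\ref{LGT} (formula~\eqref{Ho2}) for the impossibility part, and then use Q-LAN (Theorem~\ref{Th3}) together with the covariant Gaussian POVM (the paper uses $M_{P|W}^{\tilde J_{\vec t'_0}^{-1}}$ from Lemma~\ref{ANH}, which is exactly what underlies your Corollary~\ref{LFD2}) for achievability. You have also correctly flagged the one point the paper itself treats tersely---the extraction of a limiting matrix $P$ from the sequence $\{P_{n}\}$---as the main technical subtlety.
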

Theorem \ref{prop-fixed} determines the ultimate precision limit for generic qudit models. Now, we compare it with the most general existing bound on quantum state estimation, namely  Holevo's bound \cite{holevo-book}. Let us define the ultimate precision of {\em unbiased} measurements as
\begin{align*}
 \map{C}_{\set{UB}_{{\cal M}}}(W,\vec{t}_{0}):=
 \lim_{n\to\infty}\min_{M_n \in \set{UB}_{{\cal M}} }
  \tr W V \left[
 \wp^n_{\vec{t}_{0},\vec{t}|M_n}\right].
\end{align*}
Since the Holevo bound still holds with the $n$-copy case,
(see \cite[Lemma 4]{hayashi2008asymptotic}) 
we have
\begin{align}\Label{bound-H}
 \map{C}_{\set{UB}_{{\cal M}}}(W,\vec{t}_{0})\ge 
  \map{C}_{{\rm H},{\cal M}}(W,\vec{t}_{0}).
\end{align}
There are a couple of differences between our results and existing results:
The Holevo bound is derived under unbiasedness assumption, which, as mentioned earlier, is more restrictive than local asymptotic covariance. 
Our bound (\ref{HB}) thus applies to a wider class of measurements than the Holevo bound.

Furthermore, Yamagata et. al. \cite{yamagata2013quantum}
showed a similar statement as \eqref{DJI2} of Theorem \ref{prop-fixed} in a local model scenario.
They did not show the compact uniformity of the convergence 
and had no order estimation of the convergence.  
However, our evaluation \eqref{DJI2} guarantees 
the compact uniformity with the order estimation.
Then, they did not discuss an estimator to attain the bound globally.
Later, we will construct an estimator to attain our bound globally based on 
the estimator given in Theorem \ref{prop-fixed}.
Our detailed evaluation with the compact uniformity and the order estimation
enables us to evaluate the performance of such an estimator globally.

\begin{proofof}{Theorem \ref{prop-fixed}}
\par\noindent{\it Impossibility part (Proofs of \eqref{HB0} and \eqref{HB}):}

We denote the $\epsilon$-difference Fisher information matrices
for the distribution family
$\{\wp^n_{\vec{t}_{0},\vec{t}|M_n}\}_{\vec{t}}$ 
and
$\{\wp_{\vec{t}_{0},\vec{t} |\seq{m}}\}_{\vec{t}}$
by 
$J_{\vec{t},\epsilon}^n$ and $J_{\vec{t},\epsilon}^{\seq{m}}$, respectively.
Also, we denote the $\epsilon$-difference type RLD Fisher information matrix
at $\vec{t}_0'= (\vec{t}_0,\vec{0})$ of the family $\{\rho_{\vec{t}'}^{\otimes n}\}_{\vec{t}'}$
by $\tilde{J}_{\vec{t}_0',\epsilon}^{n}$. 
Then, we have \eqref{3-3-5} in the same way.

Applying \eqref{3-3-F} of Lemma \ref{LO8} with $\epsilon\to\epsilon/\sqrt{n}$, 
there exist $k \times k'$ matrices $P_n$ such that
 \begin{align}
(P_{n})_{ij} &=\delta_{ij} \qquad\hbox{ for }i,j\le k \\
\frac{1}{n}\left(P_n (\tilde{J}_{\vec{t}_0', \epsilon/\sqrt{n}}^{n} )^{-1} P_n^T\right)^{-1}
&\ge
J_{\vec{t}_0,\epsilon}^n.\Label{3-3-4H}
 \end{align}
Hence, 
the combination of \eqref{KBI} of Lemma \ref{LLO}, \eqref{3-3-4H}, and \eqref{3-3-5}
implies that
there exists a $k \times k'$ matrices $P$
such that
 \begin{align}
P_{ij}& =\delta_{ij} \qquad\hbox{ for }i,j\le k \\
\left(P(\tilde{J}_{\vec{t}_0'}^{[\epsilon]})^{-1}P^T\right)^{-1}
& \ge J_{\vec{t}_0,\epsilon}^{\seq{m}}.
\Label{3-3-6V}
 \end{align}
Due to the same reason as \eqref{3-3-8}, we have 
\begin{align}
V[\wp_{\vec{t}_{0},\vec{t}| \seq{m}}] 
\ge (J_{\vec{t}_0,\epsilon}^{\seq{m}})^{-1}
\ge
P(\tilde{J}_{\vec{t}_0'}^{[\epsilon]})^{-1}P^T.\Label{3-3-8V}
\end{align}
By taking the limit $\epsilon \to 0$, 
the combination of \eqref{2KBI} of Lemma \ref{LLO} and \eqref{3-3-8V} implies \eqref{HB0}.

When the model ${\cal M}'$ is D-invariant, 
since
\begin{align*}
\tr W V[\wp_{\vec{t}_{0},\vec{t}| \seq{m}}] 
\ge
\tr P^T W P\tilde{J}_{\vec{t}_0'}^{-1},
\end{align*}
we obtain \eqref{HB} by using the expression \eqref{Ho2}
in the same way as \eqref{RLDB}.

\noindent{\it Achievability part (Proof of \eqref{HB2}):}

Since $\rho_{\vec{t}_0'}$ is not degenerate, 
we can show the achievability in the same way as Theorem \ref{thm-MSE}
because we can apply Q-LAN (Theorem \ref{Th3}) for the model ${\cal M}'$.
The difference is the following.
Choosing the matrix $P$ to achieve the minimum \eqref{Ho2}, 
we employ the covariant POVM $M_{P|W}^{\tilde{J}_{\vec{t}_0'}^{-1}}$
instead of the covariant POVM $M_W^{\tilde{J}_{\vec{t}_0}^{-1}}$.
Then, we obtain the desired statement.
\end{proofof}

\section{Nuisance parameters }\Label{sec-general}
For state estimation in a noisy environment, the strength of noise is not a parameter of interest, yet it affects the precision of estimating other parameters. 
In this scenario, the strength of noise is  a \emph{nuisance parameter} \cite{cox1994inference,lehmann2006theory}. To illustrate the difference between nuisance parameters and fixed parameters that are discussed in the previous section, let us consider the case of a qubit clock state under going a noisy time evolution. To estimate the duration of the evolution, we introduce the strength of the noise as an additional parameter and consider the estimation problem in the extended model parameterized by the duration and the noise strength. The strength of the noise is usually unknown.  Although it is not a parameter of interest, its value will affect the precision of our estimation, and thus it should be treated as a \emph{nuisance} parameter. 

\subsection{Precision bound for estimation with nuisance parameters}

In this subsection, 
we consider state estimation of an arbitrary $(k+s)$-parameter model 
$\{ \rho_{\vec{t},\vec{p}}\}_{(\vec{t},\vec{p})\in \tilde{\set{\Theta}}}$,
where $\vec{t}$ and $\vec{p}$ 
are $k$-dimensional and $s$-dimensional parameters, respectively.
Our task is to estimate only the parameters $\vec{t}$ and 
it is not required to estimate the other parameters $\vec{p}$, which is called 
nuisance parameters.
Hence, our estimate is $k$-dimensional.
We say that a parametric family of a structure of nuisance parameters
is a nuisance parameter model,
and denote it by $\tilde{\spc{S}}=\{ \rho_{\vec{t},\vec{p}}\}_{(\vec{t},\vec{p})\in\tilde{\set{\Theta}}}$.
We simplify $(\vec{t},\vec{p})$ by $\tilde{\vec{t}}$.

The concept of local asymptotic covariance can be extended to a model with nuisance parameters
by considering a local model $\rho^n_{\tilde{\vec{t}}_0,\tilde{\vec{t}}}
:=\rho^{\otimes n}_{\tilde{\vec{t}}_0+\tilde{\vec{t}}/\sqrt{n}}$.
Throughout this section, we assume that $\rho_{\tilde{\vec{t}}_0}$ is invertible
and all the parametrizations are at least $C^1$ continuous.


\begin{condition}[Local asymptotic covariance with nuisance parameters] \Label{Ncond}
We say that a sequence of measurements $\seq{m}:=\{M_n\}$ 
to estimate the $k$-dimensional parameter $\vec{t}$
satisfies local asymptotic covariance at $\tilde{\vec{t}}_0 =(\vec{t}_0,\vec{p}_0) \in \tilde{\set{\Theta}}$ 
 under the nuisance parameter model $\tilde{{\cal M}}$
when the probability distribution 
\begin{align}\Label{finitedistributionN}
\wp^{n}_{\tilde{\vec{t}}_0,\tilde{\vec{t}}|M_n}(\set{B})
:=\Tr\rho_{\tilde{\vec{t}}_0+\frac{\tilde{\vec{t}}}{\sqrt{n}}} ^{\otimes n}M_n
\left( \frac{\set{B}}{\sqrt{n}}+\vec{t}_0\right) 
\end{align}
converges to a limiting distribution 
\begin{align}\Label{limitingB}
\wp_{\tilde{\vec{t}}_0,\tilde{\vec{t}} |\seq{m}}(\set{B}):=
\lim_{n\to\infty} \wp^n_{\tilde{\vec{t}}_0,\tilde{\vec{t}}|M_n}(\set{B}),
\end{align}
the relation
\begin{align}
\wp_{\tilde{\vec{t}}_0,\tilde{\vec{t}}| \seq{m}}(\set{B}+\vec{t})
=
\wp_{\tilde{\vec{t}}_0,(\vec{0},\vec{0})| \seq{m}}(\set{B})
\Label{LLe2}
\end{align}
holds for any $\tilde{\vec{t}}=(\vec{t},\vec{p})\in\R^{k+s}$.

Further, we say that a sequence of measurements $\seq{m}:=\{M_n\}$ satisfies local asymptotic covariance under the nuisance parameter model $\tilde{{\cal M}}$
when it satisfies local asymptotic covariance 
at any element $\tilde{\vec{t}}_0 \in \tilde{\set{\Theta}}$ under the nuisance parameter model $\tilde{{\cal M}}$.
\end{condition}

The quantity we want to bound is 
$\tr V[\wp_{\vec{t}_0,\vec{t}|\seq{m}}] W$,
where $\wp_{\vec{t}_0,\vec{t}|\seq{m}}$ is the limiting distribution of a sequence $\seq{m}$  of locally asymptotically covariant measurements  and $W$ is a weight matrix. Since nuisance parameters are not of interest, the weight matrix of the model $\tilde{\spc{S}}$ is a $(k+s)\times (k+s)$ matrix of  the form
\begin{align}\Label{W'}
\tilde{W}=\left(\begin{array}{cc} W & 0\\ 0& 0\end{array}\right).
\end{align}

\begin{lem}\Label{NLGT}
Let $\tilde{\spc{S}}=\{\rho_{\tilde{\vec{t}}}\}_{\tilde{\vec{t}}\in \tilde{\set{\Theta}}}$  
be a $(k+s)$-parameter nuisance parameter model
and let $\spc{S}'=\{\rho_{\vec{t}'}\}_{\vec{t}'=(\tilde{\vec{t}},\vec{q})}$ be a 
$k'$-parameter  model containing $\tilde{\spc{S}}$
as $\rho_{\tilde{\vec{t}}}=\rho_{(\tilde{\vec{t}},\vec{0})}$. 
When $\spc{S}'$ is D-invariant
and the inverse $\tilde{J}_{\tilde{\vec{t}}_0}^{-1} $ exists, we have
\begin{align}
\map{C}_{{\rm NH},{\cal M}}(W,\tilde{\vec{t}}_{0})
&:= \min_{\vec{X}} \min_V \{\tr W V| V \ge Z_{\tilde{\vec{t}}_0}(\vec{X})\} \Label{NH-quantity} \\
&=\min_{P}\left\{\tr P^TWP(J_{\vec{t}_0'}^{-1})+\frac12\tr\left|\sqrt{P^TWP}
J_{\vec{t}_0'}^{-1}D_{\vec{t}_0'}J_{\vec{t}_0'}^{-1}\sqrt{P^TWP}\right|\right\}.\Label{NHo2}
\end{align}
In \eqref{NH-quantity}, $V$ is a real symmetric matrix and 
$\vec{X}=(X_i)$ is a $k$-component vector of operators to satisfy
\begin{align}
 \Tr
 X_i
 \frac{\partial\rho_{\tilde{\vec{t}}}}{\partial \tilde{t}_j}
 \Big|_{\tilde{\vec{t}}=\tilde{\vec{t}}_0 }=\delta_{ij}, \ \forall
\,i \le k,\quad\forall\,j \le k+s.\Label{MBG}
\end{align}
In \eqref{NHo2},
the minimization is taken over all $k\times (k+s)$ matrices satisfying the constraint 
$(P)_{ij}:=\delta_{ij}$ for $i \le k,j\le k+s$,
and,  $J_{\vec{t}_0'}$ and $D_{\vec{t}_0'}$ are the SLD Fisher information matrix  and the D-matrix [cf. Eqs. (\ref{SLD-QFI}) and (\ref{D-m})] for the extended model  $\spc{S}'$ at 
$\vec{t}_0':=(\tilde{\vec{t}}_0,\vec{0})$.
\end{lem}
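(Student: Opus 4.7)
The plan is to adapt the proof of Proposition~\ref{LGT} (given in~\cite{hayashi2008asymptotic}). The only structural change relative to that proposition is that the constraint~\eqref{MBG} on $\vec{X}$ imposes conditions indexed by $j\le k+s$ rather than $j\le k$, reflecting the presence of $s$ nuisance parameters. Once $\vec{X}$ is expressed in the SLD basis of the D-invariant extension $\spc{S}'$, this stronger constraint is precisely what produces the block-identity condition $P_{ij}=\delta_{ij}$ for $i\le k$, $j\le k+s$ in~\eqref{NHo2}.

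First I would argue that the minimum in~\eqref{NH-quantity} may be restricted to vectors $\vec{X}$ whose components lie in the span $\map{V}'$ of the SLDs $\{L_\ell\}_{\ell=1}^{k'}$ of $\spc{S}'$ at $\vec{t}_0'=(\tilde{\vec{t}}_0,\vec{0})$. Because $\spc{S}'$ is D-invariant, $\map{V}'$ is the minimum D-invariant subspace containing the SLDs of $\tilde{\spc{S}}$. Splitting $X_i=X_i'+X_i''$ along the $\langle A,B\rangle:=\Tr[\rho_{\vec{t}_0'}A^\dag B]$-orthogonal decomposition relative to $\map{V}'$, exactly as in the argument following Proposition~\ref{LGT}, the orthogonal component $X_i''$ contributes nothing to~\eqref{MBG} (every $\partial_{\tilde{t}_j}\rho$ for $j\le k+s$ has its SLD inside $\map{V}'$, since the nuisance parameters $\vec{p}$ belong to $\tilde{\spc{S}}\subset\spc{S}'$), while $Z_{\tilde{\vec{t}}_0}(\vec{X})=Z_{\tilde{\vec{t}}_0}(\vec{X}')+Z_{\tilde{\vec{t}}_0}(\vec{X}'')\ge Z_{\tilde{\vec{t}}_0}(\vec{X}')$, so we may take $X_i''=0$.

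Next I would expand $X_i=\sum_{\ell=1}^{k'}\alpha_{i\ell}L_\ell$. Using $\Tr[L_\ell\,\partial_{\tilde{t}_j}\rho_{\vec{t}'}]|_{\vec{t}'=(\tilde{\vec{t}}_0,\vec{0})}=(J_{\vec{t}_0'})_{\ell j}$, the constraint~\eqref{MBG} becomes $(\alpha J_{\vec{t}_0'})_{ij}=\delta_{ij}$ for $i\le k$, $j\le k+s$. Setting $P:=\alpha J_{\vec{t}_0'}$, i.e.\ $\alpha=PJ_{\vec{t}_0'}^{-1}$, this is exactly the constraint on $P$ in~\eqref{NHo2}. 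Using the identity $\Tr[\rho L_\ell L_m]=(J_{\vec{t}_0'})_{\ell m}-\tfrac{i}{2}(D_{\vec{t}_0'})_{\ell m}$, which follows directly from~\eqref{SLD-QFI} and~\eqref{D-m}, one finds
\begin{align*}
Z_{\tilde{\vec{t}}_0}(\vec{X})
=\alpha J_{\vec{t}_0'}\alpha^T-\tfrac{i}{2}\alpha D_{\vec{t}_0'}\alpha^T
=PJ_{\vec{t}_0'}^{-1}P^T-\tfrac{i}{2}PJ_{\vec{t}_0'}^{-1}D_{\vec{t}_0'}J_{\vec{t}_0'}^{-1}P^T,
\end{align*}
whose real part is $PJ_{\vec{t}_0'}^{-1}P^T$ and whose imaginary part is the real antisymmetric matrix $-\tfrac{1}{2}PJ_{\vec{t}_0'}^{-1}D_{\vec{t}_0'}J_{\vec{t}_0'}^{-1}P^T$.

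Finally, applying the algebraic identity $\min_V\{\tr WV\mid V\ge Z\}=\tr[W\,\mathrm{Re}\,Z]+\tr|\sqrt{W}\,\mathrm{Im}\,Z\,\sqrt{W}|$ underlying~\eqref{BFR}, together with the cyclic identity $\|QSQ^T\|_1=\|\sqrt{Q^T Q}\,S\,\sqrt{Q^T Q}\|_1$ (applied with $Q=\sqrt{W}P$ and $S=J_{\vec{t}_0'}^{-1}D_{\vec{t}_0'}J_{\vec{t}_0'}^{-1}$) and the sign-insensitivity of $|\cdot|$, one recovers~\eqref{NHo2}. The main obstacle is ensuring the validity of the first step under the strengthened constraint: one must verify that for every nuisance index $j>k$ the partial derivative $\partial_{\tilde{t}_j}\rho_{\tilde{\vec{t}}}$ has its SLD in $\map{V}'$, so that orthogonal components of $\vec{X}$ may be safely discarded. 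This is automatic, since the SLDs of $\tilde{\spc{S}}$ with respect to all of its parameters (including the nuisance ones) are among the $L_\ell$ generating $\map{V}'$ by virtue of $\tilde{\spc{S}}\subset\spc{S}'$.
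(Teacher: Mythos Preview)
Your proposal is correct and follows precisely the approach the paper indicates: the paper does not give a detailed proof but states that ``using the method of \cite[Theorem 4]{hayashi2008asymptotic}, we can show this lemma,'' and your adaptation of the Proposition~\ref{LGT} argument is exactly that method, with the key modification (the constraint on $\vec{X}$ running over $j\le k+s$, hence fixing the first $k+s$ columns of $P$) correctly identified and handled. Note that for the formula in~\eqref{NHo2} to make sense dimensionally, $P$ must be $k\times k'$ (not $k\times(k+s)$ as stated), which is indeed what your construction $P=\alpha J_{\vec{t}_0'}$ produces; this appears to be a typo in the paper, and your proof implicitly corrects it.
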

This lemma is a different statement from \cite[Theorem 4]{hayashi2008asymptotic}.
However, using the method of \cite[Theorem 4]{hayashi2008asymptotic}, we can show this lemma.

 In the following, we derive an {\em attainable} lower bound on the  weighted MSE. To this purpose,  we define the set $\set{LAC}(\tilde{\vec{t}}_{0})$ of 
local asymptotic covariant sequences of measurements at the point $\tilde{\vec{t}}_0 \in \tilde{\set{\Theta}}$
for the nuisance parameter model $\tilde{{\cal M}}$, and
focus on the minimum value 
\begin{align}\Label{global-boundBY}
\map{C}_{{\rm N},\tilde{{\cal M}}}(W,\tilde{\vec{t}}_{0})
:=\min_{\seq{m}\in \set{LAC}(\tilde{\vec{t}}_{0}) }
 \tr W V [\wp_{\tilde{\vec{t}}_{0},\tilde{\vec{t}} | \seq{m}}].
\end{align}

\begin{theo}[Weighted MSE bound with nuisance parameters]\Label{theo-nuisance}
The models $\tilde{\spc{S}}$ and  $\spc{S}'$ are given in the same way as Lemma \ref{NLGT},
and the notations is the same as Lemma \ref{NLGT}.
Also, we assume that $(\tilde{J}_{\vec{t}'_0})^{-1}$ exists.
Consider any sequence of locally asymptotically covariant measurements $\seq{m}:=\{M_n\}$
for the nuisance parameter model $\tilde{\spc{S}}$.
Then, the MSE matrx of 
the limiting distribution is evaluated as follows.
There exists a $k \times k'$ matrix $P$ such that
\begin{align}
P_{ij}& =\delta_{ij} \qquad\hbox{ for }1\le i\le k, 1\le j \le k' \Label{NFE}\\
V[\wp_{\tilde{\vec{t}}_0,\tilde{\vec{t}}| \seq{m}}] 
& \ge P (\tilde{J}_{\vec{t}'_0})^{-1} P^T.\Label{NHB0}
\end{align}
When the model $\spc{S}'$ is D-invariant, 
we have the bound for the weighted MSE with weight matrix $W\ge 0$
of the limiting distribution as
\begin{align}\Label{NHB}
\tr W V[\wp_{\tilde{\vec{t}}_0,\tilde{\vec{t}}| \seq{m}}] 
\ge \map{C}_{{\rm NH},{\cal M}}(W,\tilde{\vec{t}}_{0}).
\end{align}
When the model $\spc{S}'$ is a D-invariant qudit model and the state $\rho_{\vec{t}_0}$ is not degenerate,
we have
\begin{align}\Label{NHB2}
\map{C}_{{\rm N},\tilde{{\cal M}}}(W,\tilde{\vec{t}}_{0})
= \map{C}_{{\rm NH},\tilde{{\cal M}}}(W,\tilde{\vec{t}}_0).
\end{align}
Moreover, if $W>0$
and $\wp_{\tilde{\vec{t}}_0,\tilde{\vec{t}}| \seq{m}}$ has a differentiable PDF, 
the equality in (\ref{NHB}) holds if and only if $\wp_{\vec{t}_0,\vec{t}| \seq{m}}$ is the normal distribution with average zero
and covariance
\begin{align}\label{V-condition-W}
V_{\vec{t}_0|W}:=Re(Z_{{\vec{t}}_0}(\vec{X}))+ 
\sqrt{W}^{-1} |\sqrt{W} {\sf Im} ( Z_{\vec{t}_0}(\vec{X}))\sqrt{W}| \sqrt{W}^{-1},
\end{align}
where $\vec{X}$ is the vector to realize the minimum \eqref{NH-quantity}. 
Further, when the models $\tilde{\spc{S}}$ and  $\spc{S}'$ are qudit-models with $C^2$ continuous parametrization,
the equality in \eqref{NHB} holds, i.e.,
there exist a sequence of POVMs $M_W^{\vec{t}_0,n}$,
a compact set $K$, and constant $c(\vec{t}_0)$
 such that
\begin{align}
\limsup_{n \to \infty}\sup_{\vec{t}_0 \in K}
\sup_{\vec{t} \in \set{\Theta}_{n,x,c(\vec{t}_0)} } n^{\kappa}\| \wp^n_{\vec{t}_0,\vec{t}| M_W^{\vec{t}_0,n}}
- N[\vec{t}, V_{\vec{t_0}|W}]
\|_1 <\infty.\Label{DJI3}
\end{align}
Here  $\kappa$ is a parameter to satisfy $\kappa\ge 0.027$.
\end{theo}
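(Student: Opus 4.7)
\begin{proofof}{Theorem \ref{theo-nuisance} (proposal)}
The plan is to mirror the proof of Theorem \ref{prop-fixed}, using the $k'$-parameter D-invariant extension $\spc{S}'$ in place of the D-invariant extension used there, and accounting for the special block structure of the weight matrix \eqref{W'} via the Holevo-like quantity $\map{C}_{{\rm NH},{\cal M}}(W,\tilde{\vec{t}}_0)$ introduced in Lemma \ref{NLGT}. The core technical tools are: the $\epsilon$-difference RLD Fisher information matrix together with its information processing inequality (Lemmas \ref{LO5}--\ref{LO8} and \ref{LLO}), the asymptotic comparison between $J^n_{\vec{0},\epsilon}$ and $J^{\seq{m}}_{\vec{0},\epsilon}$ via Lemma \ref{LL8}, and Q-LAN (Theorem \ref{Th3}) for the achievability direction.

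For the impossibility part, I would first introduce the $\epsilon$-difference Fisher information matrices $J^n_{\tilde{\vec{t}},\epsilon}$ and $J^{\seq{m}}_{\tilde{\vec{t}},\epsilon}$ of the families $\{\wp^n_{\tilde{\vec{t}}_0,\tilde{\vec{t}}|M_n}\}$ and $\{\wp_{\tilde{\vec{t}}_0,\tilde{\vec{t}}|\seq{m}}\}$, together with the $\epsilon$-difference RLD Fisher matrix $\tilde J^n_{\vec{t}'_0,\epsilon}$ at $\vec{t}'_0=(\tilde{\vec{t}}_0,\vec 0)$ of the extended model $\spc{S}'$. Applying Lemma \ref{LO8} with $\epsilon\to\epsilon/\sqrt n$ to the POVM $M_n$ (viewed as estimating the $k$ parameters $\vec{t}$ from the $k'$-parameter model $\{\rho^{\otimes n}_{\vec{t}'}\}_{\vec{t}'}$) yields a sequence of $k\times k'$ matrices $P_n$ obeying the constraint \eqref{NFE} such that $\frac{1}{n}\bigl(P_n(\tilde J^n_{\vec{t}'_0,\epsilon/\sqrt n})^{-1}P_n^T\bigr)^{-1}\ge J^n_{\vec{0},\epsilon}$. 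Taking $n\to\infty$ and using \eqref{KBI} of Lemma \ref{LLO} together with Lemma \ref{LL8} to pass to $J^{\seq{m}}_{\vec{0},\epsilon}$, and then extracting a limit matrix $P$ from $\{P_n\}$ by compactness on a subsequence (after an a priori boundedness argument from the MSE), gives $(P(\tilde J_{\vec{t}'_0}^{[\epsilon]})^{-1}P^T)^{\,-1}\ge J^{\seq{m}}_{\vec{0},\epsilon}$. After an unbiasedness-by-recentering reduction as in the proof of Theorem \ref{ThY}, the $\epsilon$-difference classical Cram\'er--Rao inequality \eqref{3-3-1F} and the limit $\epsilon\to 0$ (via \eqref{2KBI}) then yield \eqref{NHB0}. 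When $\spc{S}'$ is D-invariant, contracting both sides with $W$ and optimizing the right-hand side over admissible $P$, the expression \eqref{NHo2} of Lemma \ref{NLGT} gives \eqref{NHB}.

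For the achievability part under the non-degeneracy hypothesis on $\rho_{\tilde{\vec{t}}_0}$, I would choose the vector $\vec{X}$ of Hermitian operators realizing the minimum in \eqref{NH-quantity}, together with the corresponding optimal $P$ in \eqref{NHo2}. Applying Q-LAN (Theorem \ref{Th3}) to the D-invariant qudit model $\spc{S}'$ produces TP-CP maps $\map{T}^{\rm QLAN}_{n,\vec{t}_0'}$ realizing the asymptotic equivalence $\rho^{\otimes n}_{\vec{t}'_0+\vec{t}'/\sqrt n}\stackrel{\vec{t}_0'}{\cong}G[\vec{t}',\tilde J_{\vec{t}_0'}^{-1}]$ compact-uniformly with order $n^{-\kappa}$. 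On the Gaussian side I would take the covariant POVM $M^{\tilde J_{\vec{t}_0'}^{-1}}_{P|W}$ provided by Lemma \ref{ANH}, whose marginal distribution of the estimate of $\vec{t}$ is exactly $N[\vec{t},V_{\vec{t}_0|W}]$ with $V_{\vec{t}_0|W}$ as in \eqref{V-condition-W}. Pulling this measurement back via $\map{T}^{\rm QLAN}_{n,\vec{t}_0'}$ defines the sequence $M_W^{\vec{t}_0,n}$, and the compact uniformity and order bound in \eqref{DJI3} follow directly from the corresponding properties \eqref{LAN-T}--\eqref{LAN-S} in Theorem \ref{Th3}. The degenerate-$W$ case is handled exactly as in the proof of Theorem \ref{thm-MSE} by the regularization $W_\epsilon=W+\epsilon P_W^\perp$ and taking $\epsilon\to 0$.

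The equality characterization when $W>0$ and $\wp_{\tilde{\vec{t}}_0,(\vec 0,\vec 0)|\seq{m}}$ admits a differentiable PDF follows by the same Schwartz-inequality argument as in the proof of Theorem \ref{ThY}: equality in the chain $V[\wp_{\tilde{\vec{t}}_0,\tilde{\vec{t}}|\seq{m}}]\ge (J^{\seq{m}}_{\vec 0})^{-1}\ge V_{\vec{t}_0|W}$ (obtained by sending $\epsilon\to 0$ in the impossibility argument) forces the score functions of the limiting shift family to be linear in the estimate, hence $\wp_{\tilde{\vec{t}}_0,\tilde{\vec{t}}|\seq{m}}$ must be Gaussian with the prescribed covariance. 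The principal technical obstacle I anticipate is the uniform extraction of the limit matrix $P$ from the sequence $\{P_n\}$ produced by Lemma \ref{LO8}, because a priori the $P_n$ are only known to exist pointwise; the resolution is to first establish tightness of $\{P_n\}$ from the finiteness of the limiting MSE (which bounds $P_n\tilde J_{\vec{t}_0'}^{-1}P_n^T$) and then pass to a convergent subsequence, using that the constraint \eqref{NFE} is closed. The other delicate point is bookkeeping the two-layer extension from $\tilde{\spc{S}}$ (nuisance model) to $\spc{S}'$ (D-invariant extension) to $n$-copy states, but this is handled cleanly by the matrix $P$ of Lemma \ref{NLGT}, whose structural constraint automatically encodes the insensitivity of the objective to the nuisance and auxiliary directions.
\end{proofof}
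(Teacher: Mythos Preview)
Your achievability sketch and the equality characterization are essentially the paper's argument (the paper phrases the Gaussian-side construction via Corollary~\ref{LFD3} rather than Lemma~\ref{ANH} directly, but this is the same content). The impossibility direction, however, has a genuine gap.

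The issue is the constraint on $P$. When you apply Lemma~\ref{LO8} ``viewed as estimating the $k$ parameters $\vec t$ from the $k'$-parameter model $\spc{S}'$,'' you are taking the \emph{$k$-parameter} model $\spc{S}=\{\rho_{(\vec t,\vec 0)}\}$ as the base model and $\spc{S}'$ as the extension. Lemma~\ref{LO8} then produces $k\times k'$ matrices $P_n$ satisfying only $(P_n)_{ij}=\delta_{ij}$ for $i,j\le k$; the columns $k+1,\dots,k+s$ corresponding to the nuisance directions are \emph{not} forced to vanish (in the proof of Lemma~\ref{LO7-2} these entries are the cross-biases $\epsilon^{-1}\int x_i\,\Tr M(d\vec x)(\rho_{\vec t_0'+\epsilon\vec e_j}-\rho_{\vec t_0'})$, which at finite $n$ are generically nonzero). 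With only that weak constraint the minimisation over $P$ yields $\map{C}_{{\rm H},\spc{S}}(W,\vec t_0)$, the \emph{fixed-parameter} Holevo bound, which is in general strictly smaller than $\map{C}_{{\rm NH},\tilde{\spc{S}}}(W,\tilde{\vec t}_0)$ (cf.\ the discussion after \eqref{NHB1}). So your argument, as written, proves only the weaker inequality $\tr WV\ge \map{C}_{{\rm H},\spc{S}}$ and does not reach \eqref{NHB}. Invoking Lemma~\ref{NLGT} at the end does not repair this: that lemma merely identifies the two expressions \eqref{NH-quantity} and \eqref{NHo2}; it does not supply the missing constraint on your $P$.

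The paper closes this gap by a two-layer argument that you did not reproduce. It first runs the analysis of Theorem~\ref{prop-fixed} with the \emph{$(k+s)$-parameter} nuisance model $\tilde{\spc{S}}$ (not $\spc{S}$) as base and $\spc{S}'$ as extension, obtaining a $(k+s)\times k'$ matrix $\tilde P$ with $\tilde P_{ij}=\delta_{ij}$ for $i,j\le k+s$ and the bound $\bigl(\tilde P(\tilde J_{\vec t_0'}^{[\epsilon]})^{-1}\tilde P^T\bigr)^{-1}\ge J^{\seq m}_{\tilde{\vec t}_0,\epsilon}$ on the $(k+s)\times(k+s)$ classical Fisher matrix of the limiting family. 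It then proves, by a direct Cauchy--Schwarz argument on the limiting distribution, the rectangular Cram\'er--Rao type inequality $V[\wp_{\tilde{\vec t}_0,\tilde{\vec t}|\seq m}]\ge \bar P\,(J^{\seq m}_{\tilde{\vec t}_0,\epsilon})^{-1}\bar P^T$ with $\bar P=(I_k\,|\,0)$; the key input here is precisely the nuisance covariance condition~\eqref{LLe2}, which forces the recentered limiting estimator to have expectation $\bar P\tilde{\vec t}=\vec t$ and hence to be insensitive to $\vec p$. Composing gives $P=\bar P\tilde P$, and this $P$ automatically has vanishing columns $k+1,\dots,k+s$, exactly the constraint required in \eqref{NHo2}. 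Your one-step route bypasses $\tilde{\spc{S}}$ and therefore never uses \eqref{LLe2} in the lower bound; that is where the argument fails.
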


Before proving Theorem \ref{theo-nuisance},
we discuss a linear subfamily of 
$k'$-dimensional Gaussian family 
$\{G[\vec{t}', \vec{\gamma}]\}_{\vec{t}'\in \mathbb{R}^{k'}}$.
Consider a linear map $T$ from $\mathbb{R}^{(k+s)}$ to 
$\mathbb{R}^{k'}$.
We have the subfamily $\tilde{{\cal M}}:=\{G[ T(\vec{t},\vec{p}), \vec{\gamma}]\}_{(\vec{t},\vec{p})\in \mathbb{R}^{k+s}}$ as a nuisance parameter model.
Then, the covariance condition is extended as follows.
\begin{defi}
A POVM $M$ is {\it unbiased} for the nuisance parameter model
$\{\rho_{(\vec{t},\vec{p})}\}$
when 
\begin{align*}
\vec{t}=E_{\vec{t},\vec{p}}(M):=\int \vec{x} \Tr \rho_{(\vec{t},\vec{p})} M(d \vec{x})
\end{align*} 
holds for any parameter $(\vec{t},\vec{p})$.
A POVM $M$ is a {\it covariant} estimator for the nuisance parameter model
$\{G[ T(\vec{t},\vec{p}), \vec{\gamma}]\}$
when
the distribution $\wp_{(\vec{t},\vec{p})|M}$
satisfies the condition $\wp_{\vec{0},\vec{0}|M}(\set{B})=\wp_{\vec{t},\vec{p}|M}(\set{B}+\vec{t})$.
\end{defi}

Then, we have the following corollary of Lemma \ref{ANH}.

\begin{cor}\Label{LFD3}
For any weight matrix $W\ge 0$, 
the nuisance parameter model
$\tilde{\cal M}=\{G[ T(\vec{t},\vec{p}), \vec{\gamma}]\}$ with $C^1$ continuous parametrization
satisfies
\begin{align}
& \inf_{M \in \set{UB}_{\tilde{{\cal M}}}}\tr W V_{\vec{t}}(M) 
=\inf_{M \in \set{CUB}_{\tilde{{\cal M}}}}\tr W V_{\vec{t}}(M) 
= \map{C}_{{\rm NH},\tilde{\cal M}}(W,\vec{t}),
\Label{DIT6}
\end{align}
where $\set{UB}_{\tilde{{\cal M}}}$ and $\set{CUB}_{\tilde{{\cal M}}}$ 
are the sets of unbiased estimators and covariant unbiased estimators
of the nuisance parameter model $\tilde{\cal M}$, respectively.
Further, 
when $W>0$, 
we choose a vector $\vec{X}$ to realize the minimum in  \eqref{H-quantity11}.
The above infimum is attained by 
the covariant unbiased estimators $M_W$ whose output distribution is 
the normal distribution with average $\vec{t}$ and covariance matrix
 ${\sf Re} ( (Z_{\vec{t}}(\vec{X}))+
\sqrt{W}^{-1} |\sqrt{W} {\sf Im} ( Z_{\vec{t}}(\vec{X}))\sqrt{W}| \sqrt{W}^{-1}$.
\end{cor}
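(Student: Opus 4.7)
The plan is to mirror the proofs of Corollary \ref{LFD} and Corollary \ref{LFD2}, replacing the role of Proposition \ref{LGT} by Lemma \ref{NLGT} to handle the nuisance structure. Recall that Lemma \ref{NLGT} already expresses $\map{C}_{{\rm NH},\tilde{\cal M}}(W,\tilde{\vec{t}}_0)$ both as a minimization over vectors $\vec{X}$ of operators obeying the partial-unbiasedness constraint \eqref{MBG} and equivalently as a minimization over $k\times k'$ matrices $P$ satisfying $(P)_{ij}=\delta_{ij}$ for $1\le i\le k$, $1\le j\le k+s$. These two characterizations will be the hinge of the argument.

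First, I treat the achievability for $W>0$. Choose a vector $\vec{X}$ attaining the minimum in \eqref{NH-quantity}. Since $\spc{S}'$ is a linear Gaussian model built from $\vec{R}=(R_1,\dots,R_{k'})$, the minimizing $\vec{X}$ can be taken to have components $X_i=\sum_j P_{ij}R_j$ for some real matrix $P$; the constraint \eqref{MBG} becomes $(P)_{ij}=\delta_{ij}$ for $1\le i\le k,\, 1\le j\le k+s$, consistent with \eqref{NHo2}. Apply Lemma \ref{ANH} with this $P$ and weight $W$ to obtain the POVM $M_{P|W}^\Gamma$ whose output distribution under $G[T(\vec{t},\vec{p}),\vec{\gamma}]$ is the normal distribution with mean $(\sum_j P_{ij}(T(\vec{t},\vec{p}))_j)_i=\vec{t}$ (by the nuisance constraint on $P$) and covariance ${\sf Re}(Z_{\vec{t}}(\vec{X}))+\sqrt{W}^{-1}|\sqrt{W}{\sf Im}(Z_{\vec{t}}(\vec{X}))\sqrt{W}|\sqrt{W}^{-1}$. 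Unbiasedness of $M_{P|W}^\Gamma$ for the nuisance parameter model is immediate from the fact that the mean equals $\vec{t}$ for every $\vec{p}$, and covariance follows from the Weyl-displacement structure of Gaussian shifts (the output distribution depends on $(\vec{t},\vec{p})$ only through $\vec{t}$ and is translation-covariant in $\vec{t}$). Its weighted MSE is exactly $\map{C}_{{\rm NH},\tilde{\cal M}}(W,\vec{t})$.

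Second, I prove the converse, namely that every unbiased estimator $M\in\set{UB}_{\tilde{\cal M}}$ satisfies $\tr W V_{\vec{t}}(M)\ge \map{C}_{{\rm NH},\tilde{\cal M}}(W,\vec{t})$. The argument follows the standard derivation of the Holevo bound (Section \ref{s31}), but with the partial-unbiasedness condition \eqref{MBG} in place of full unbiasedness: defining $X_i:=\int x_i\, M(dx)$, the unbiasedness condition $\int x_i\,\Tr[\rho_{(\vec{t},\vec{p})}M(dx)]=t_i$ for all $(\vec{t},\vec{p})$ implies $\Tr[X_i\,\partial\rho/\partial\tilde{t}_j]\bigl|_{\tilde{\vec{t}}_0}=\delta_{ij}$ for $1\le i\le k$ and $1\le j\le k+s$, so $\vec{X}$ is feasible for \eqref{NH-quantity}; then $V_{\vec{t}}(M)\ge Z_{\vec{t}}(\vec{X})$ as Hermitian matrices by the usual operator-Cauchy--Schwarz argument, and taking $\tr W(\cdot)$ together with the identity $\tr WV=\tr W\,{\sf Re}\,V+\tr|\sqrt{W}\,{\sf Im}\,V\sqrt{W}|$ (valid for $V\ge Z$ Hermitian when $V$ is real) yields the bound. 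Since $\set{CUB}_{\tilde{\cal M}}\subset \set{UB}_{\tilde{\cal M}}$ and the achievability above produced a covariant unbiased estimator, the two infima in \eqref{DIT6} coincide.

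Finally, the case $W\ge 0$ not strictly positive is handled by the regularization $W_\epsilon:=W+\epsilon I$, exactly as in Corollary \ref{LFD}: construct the optimal $M_{P_\epsilon|W_\epsilon}^\Gamma$ for each $\epsilon>0$, whose weighted MSE for the weight $W$ is $\tr W\bigl[{\sf Re}(Z_{\vec{t}}(\vec{X}_\epsilon))+\sqrt{W_\epsilon}^{-1}|\sqrt{W_\epsilon}{\sf Im}(Z_{\vec{t}}(\vec{X}_\epsilon))\sqrt{W_\epsilon}|\sqrt{W_\epsilon}^{-1}\bigr]$, and take $\epsilon\to 0$; continuity of the Holevo-type functional in $W$ (which follows from its variational expression) gives convergence to $\map{C}_{{\rm NH},\tilde{\cal M}}(W,\vec{t})$. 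The main subtlety, and the only step that is not pure bookkeeping from the existing lemmas, is verifying that the minimizing $\vec{X}$ in Lemma \ref{NLGT} is realized by a linear combination of $R_1,\dots,R_{k'}$ (so that Lemma \ref{ANH} applies); this is the analogue of the D-invariant reduction used in Corollary \ref{LFD2} and uses the assumed D-invariance of $\spc{S}'$ together with the argument at the end of Section \ref{s31} restricting the optimum to the minimum D-invariant subspace.
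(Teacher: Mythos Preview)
Your proposal is correct and follows essentially the same route as the paper's proof: lower bound via the constraint \eqref{MBG} and the standard Holevo-type argument, achievability for $W>0$ via Lemma \ref{ANH} applied to the minimizing $\vec{X}$ (realized as a linear combination of the $R_j$'s by D-invariance, as in Corollary \ref{LFD2}), and the $W_\epsilon$ regularization for general $W\ge 0$ as in Corollary \ref{LFD}. One small wording slip: what you call ``the identity $\tr WV=\tr W\,{\sf Re}\,V+\tr|\sqrt{W}\,{\sf Im}\,V\sqrt{W}|$'' is not an identity for $V$; the correct statement is that for real symmetric $V$ with $V\ge Z$ (Hermitian), one has $\tr WV\ge \tr W\,{\sf Re}\,Z+\tr|\sqrt{W}\,{\sf Im}\,Z\sqrt{W}|$, with equality achievable---but your logic uses exactly this, so the argument is sound.
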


This corollary can be shown as follows.
The inequality $ \inf_{M \in \set{UB}_{\tilde{{\cal M}}}}\tr W V_{\vec{t}}(M) 
\ge \map{C}_{{\rm NH},\tilde{\cal M}}(W,\vec{t})$
follows from the condition \eqref{MBG}.
Similar to Corollary \ref{LFD2}, 
Proposition \ref{LGT} guarantees that 
the latter part of the corollary with $W>0$ follows from \eqref{NH-quantity} and Lemma \ref{ANH}.
Hence, we obtain this corollary for $W>0$.
The case with non strictly positive $W$ can be shown by considering $W_\epsilon$ in the same way as Corollary \ref{LFD}.

\begin{proofof}{Theorem \ref{theo-nuisance}}
\par\noindent{\it Impossibility part (Proofs of \eqref{NHB0} and \eqref{NHB}):}

We denote the $\epsilon$-difference Fisher information matrix of 
$\{\wp_{\tilde{\vec{t}}_0,\tilde{\vec{t}} |\seq{m}}\}_{\tilde{\vec{t}}}$
by $J_{\tilde{\vec{t}}_0,\epsilon}^{\seq{m}}$.
Due to \eqref{3-3-6V}, there exists a $(k+s) \times k'$ matrix $\tilde{P}$ satisfying the following conditions.
\begin{align}\label{AAGGG}
\tilde{P}_{ij}& =\delta_{ij} \hbox{ for } 1\le i, j\le k+s , \quad
\left(P(\tilde{J}_{\vec{t}_0'}^{[\epsilon]})^{-1}P^T\right)^{-1}
 \ge J_{\vec{t}_0,\epsilon}^{\seq{m}}.
\end{align}
We define the $k \times (k+s)$ matrix $\bar{P}$ by
\begin{align*}
\bar{P}_{ij}& =\delta_{ij} \hbox{ for } 1\le i \le k, 1\le j\le k+s .
\end{align*}
Now, we extend Theorem \ref{LO5}.
Let $E_{(\vec{0},\vec{0})}[\hat{\vec{t}}]$
be the expectation of $\hat{\vec{t}}$ under the distribution
$\wp_{\vec{t}_{0},(\vec{0},\vec{0}) |\seq{m}}$.
We denote the Radon-Nikod\'{y}m derivative of 
$\wp_{\tilde{\vec{t}}_0,\epsilon e_i |\seq{m}}$
with respect to $\wp_{\tilde{\vec{t}}_0,0|\seq{m}}$ by $p_i$.
Then, for two vectors $\vec{a} \in \mathbb{R}^k$ and $\vec{b}\in \mathbb{R}^{k+s}$,
we apply Schwartz inequality to the two variables
$X:= \sum_j b_j(\hat{\vec{t}}-E_{(\vec{0},\vec{0})}[\hat{\vec{t}}])_j $
and $Y:= \sum_i  \frac{a_i}{\epsilon} (p_j(x)-1)$.
Using $\tilde{X}:= \sum_j b_j\vec{t}_j $,
we obtain 
\begin{align}
& \langle \vec{b} |V[\wp_{\tilde{\vec{t}}_{0},(\vec{0},\vec{0})| \seq{m}}] |\vec{b} \rangle
\langle \vec{a} |J_{\tilde{\vec{t}}_0,\epsilon}^{\seq{m}}|\vec{a} \rangle
=
\int \tilde{X}(\vec{x})^2 \wp_{\tilde{\vec{t}}_0,(\vec{0},\vec{0})|\seq{m}}(d\vec{x})
\int Y(\vec{x})^2 \wp_{\tilde{\vec{t}}_0,(\vec{0},\vec{0})|\seq{m}}(d\vec{x}) \nonumber \\
\ge &
\int X(\vec{x})^2 \wp_{\tilde{\vec{t}}_0,(\vec{0},\vec{0})|\seq{m}}(d\vec{x})
\int Y(\vec{x})^2 \wp_{\tilde{\vec{t}}_0,(\vec{0},\vec{0})|\seq{m}}(d\vec{x}) \nonumber \\
\ge &
\Big|\int X(\vec{x})Y(\vec{x}) \wp_{\tilde{\vec{t}}_0,(\vec{0},\vec{0})|\seq{m}}(d\vec{x})\Big|^2
=\Big|\langle \vec{b} | \bar{P} |\vec{a}\rangle\Big|^2,
\end{align}
where the final equation follows from 
the fact that 
the expectation of the variable $\hat{\vec{t}}-E_{(\vec{0},\vec{0})}[\hat{\vec{t}}]$ 
equals $\vec{t}= \bar{P} \tilde{\vec{t}}$
under the distribution $\wp_{\tilde{\vec{t}}_{0},\tilde{\vec{t}} |\seq{m}}$,
which can be shown by the covariance condition for the distribution family 
$\{\wp_{\tilde{\vec{t}}_{0},\tilde{\vec{t}} |\seq{m}}\}_{\tilde{\vec{t}}}$.

Choosing $\vec{a}:= (J_{\tilde{\vec{t}}_0,\epsilon}^{\seq{m}})^{-1} \bar{P}^T
\vec{b}$, we have
\begin{align}
\langle \vec{b} |
V[\wp_{\tilde{\vec{t}}_{0},(\vec{0},\vec{0})| \seq{m}}] 
| \vec{b} \rangle
\ge \langle \vec{b} |
\bar{P} (J_{\tilde{\vec{t}}_0,\epsilon}^{\seq{m}})^{-1}\bar{P}^T| \vec{b} \rangle,\Label{3-3-8T}
\end{align}
which implies
\begin{align}
V[\wp_{\tilde{\vec{t}}_{0},\tilde{\vec{t}}| \seq{m}}] 
=V[\wp_{\tilde{\vec{t}}_{0},(\vec{0},\vec{0})| \seq{m}}] 
\ge \bar{P} (J_{\tilde{\vec{t}}_0,\epsilon}^{\seq{m}})^{-1}\bar{P}^T
.\Label{3-3-8E}
\end{align}
Combining the above with Eq. (\ref{AAGGG}),
since $P:=\bar{P} \tilde{P}$ satisfies the condition \eqref{NFE},
we obtain \eqref{NHB0}.

When the model ${\cal M}'$ is D-invariant, 
since
\begin{align*}
\tr W V[\wp_{\tilde{\vec{t}}_0,\tilde{\vec{t}}| \seq{m}}] 
\ge
\tr P^T W P\tilde{J}_{\vec{t}_0'}^{-1},
\end{align*}
we obtain \eqref{NHB} by using the expression \eqref{NHo2}
in the same way as \eqref{HB}.

\noindent{\it Achievability part (Proof of \eqref{NHB2}):}

Since $\rho_{\vec{t}_0}$ is not degenerate, 
we can show the achievability part in the same way as Theorem \ref{thm-MSE}
because we can apply Q-LAN (Theorem \ref{Th3}) for the model ${\cal M}'$.
The difference is the following.
Instead of  Corollary \ref{LFD}, we employ Corollary \ref{LFD3} to choose 
the covariant POVM $M_{P|W}^{\tilde{J}_{\vec{t}'_0}^{-1}}$. 
Then, we obtain the desired statement.
\end{proofof}

\subsection{Nuisance parameter with D-invariant model}
Next, we discuss the nuisance parameters when the model is D-invariant.

\begin{lem}
When $\tilde{\spc{S}}=\{\rho_{(\vec{t},\vec{p})}\}_{(\vec{t},\vec{p})\in \tilde{\set{\Theta}}}$  
is a D-invariant $k+s$-parameter nuisance parameter model
and ${J}_{\vec{t}_0}^{-1} $ exists,
we have
\begin{align}\Label{NNHB2}
\map{C}_{{\rm NH},\tilde{{\cal M}}}(W,\tilde{\vec{t}}_{0})
=\tr \tilde{W}(J_{\tilde{\vec{t}}_0}^{-1})+\frac12\tr\left|\sqrt{\tilde{W}}J_{\tilde{\vec{t}}_0}^{-1}D_{\tilde{\vec{t}}_0}J_{\tilde{\vec{t}}_0}^{-1}\sqrt{\tilde{W}}\right|.
\end{align}
\end{lem}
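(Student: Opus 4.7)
The plan is to obtain \eqref{NNHB2} as an immediate specialization of Lemma \ref{NLGT}, by exploiting the fact that a D-invariant nuisance parameter model is already its own minimal D-invariant extension. Concretely, I would apply Lemma \ref{NLGT} with the \emph{trivial} extension $\spc{S}' = \tilde{\spc{S}}$, so that $k' = k+s$ and $\vec{t}_0' = \tilde{\vec{t}}_0$. This choice is admissible because $\tilde{\spc{S}}$ is D-invariant by hypothesis, and the existence of $\tilde{J}_{\tilde{\vec{t}}_0}^{-1}$ is guaranteed by the existence of $J_{\tilde{\vec{t}}_0}^{-1}$ combined with D-invariance via the identity \eqref{RLD-QFI}.

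With this choice, the feasible set in the minimization \eqref{NHo2} collapses to a single element. Indeed, $P$ is a $k \times (k+s)$ matrix all of whose entries $P_{ij}$ (for $1 \le i \le k$, $1 \le j \le k+s$) are prescribed by $P_{ij} = \delta_{ij}$, forcing the unique choice $P = \bar{P} := (I_k \mid 0_{k \times s})$. A direct block computation then gives
\begin{align*}
\bar{P}^{T} W \bar{P} \;=\; \begin{pmatrix} W & 0 \\ 0 & 0 \end{pmatrix} \;=\; \tilde{W},
\end{align*}
with $\tilde{W}$ as defined in \eqref{W'}. Substituting $P = \bar{P}$ into \eqref{NHo2} yields exactly \eqref{NNHB2}, since $J_{\vec{t}_0'} = J_{\tilde{\vec{t}}_0}$ and $D_{\vec{t}_0'} = D_{\tilde{\vec{t}}_0}$ for the trivial extension.

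The only verification required is that the singularity of $\tilde{W}$ does not obstruct the expression $\tr |\sqrt{\tilde{W}}\, J_{\tilde{\vec{t}}_0}^{-1} D_{\tilde{\vec{t}}_0} J_{\tilde{\vec{t}}_0}^{-1} \sqrt{\tilde{W}}|$. Since $\sqrt{\tilde{W}} = \mathrm{diag}(\sqrt{W},0)$, the sandwiched matrix is a real skew-symmetric $(k+s) \times (k+s)$ matrix supported on the upper-left $k \times k$ block, and the absolute value is computed via the standard spectral calculus applied to that block. The main obstacle is really conceptual rather than computational: one must notice that Lemma \ref{NLGT} has been formulated to accommodate \emph{any} D-invariant extension and that the resulting value does not depend on the extension chosen, so that the trivial extension makes the bound transparent. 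Once this observation is in place, no further work is needed.
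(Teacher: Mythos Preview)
Your proposal is correct and is exactly the argument the paper has in mind: the lemma is stated there without proof, as an immediate consequence of Lemma~\ref{NLGT} with the trivial extension $\spc{S}'=\tilde{\spc{S}}$, which collapses the minimization over $P$ to the single matrix $\bar P=(I_k\mid 0)$ and yields $P^TWP=\tilde W$. Your check that the singularity of $\tilde W$ causes no trouble, and that the hypothesis $J_{\tilde{\vec t}_0}^{-1}$ exists implies (via \eqref{RLD-QFI}) the existence of $\tilde J_{\tilde{\vec t}_0}^{-1}$ required by Lemma~\ref{NLGT}, fills in precisely the details the paper leaves implicit.
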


A few comments are in order.   First, the nuisance parameter bound (\ref{NHB}) reduces to the bound (\ref{RLDB}), when the parameters to estimate are \emph{orthogonal} to the nuisance parameters
in the sense that the RLD Fisher information matrix $\tilde{J}_{\tilde{\vec{t}}_0}$ is block-diagonal.
This orthogonality is equivalent to the condition that 
the SLD Fisher information matrix $J_{\tilde{\vec{t}}_0}$ and the D-matrix take the block diagonal forms
\begin{align}\Label{orthomatrix}
J_{\tilde{\vec{t}}_0}=\left(\begin{array}{cc} J_{\vec{t}_0} & 0\\ 0 & J_{\rm N}\end{array}\right)\quad D_{\tilde{\vec{t}}_0}=\left(\begin{array}{cc} D_{\vec{t}_0} & 0\\ 0 & D_{\rm N}\end{array}\right). 
\end{align}
This is the case, for instance, of simultaneous estimation of the spectrum and the Hamiltonian-generated phase of a two-level system.
Under such circumstances, the inverse of the Fisher information matrix can be done by   inverting $J_{\vec{t}_0}$ and $J_{\rm N}$ independently. The same precision bound is thus obtained with or without introducing nuisance parameters, and we have the following lemma.
\begin{lem}\Label{lem-orthogonal}
When all nuisance parameters are orthogonal to the parameters of interest,  the bound with nuisance parameters (\ref{NHB}) coincides with the D-invariant MSE bound (\ref{RLDB}).
\end{lem}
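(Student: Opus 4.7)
\begin{proofof}{Lemma \ref{lem-orthogonal} (proposal)}
The plan is to show the claim by direct block-matrix computation, using the explicit evaluation of $\map{C}_{{\rm NH},\tilde{\cal M}}(W,\tilde{\vec{t}}_0)$ in \eqref{NNHB2}, which is available because the extended model $\tilde{\spc{S}}$ is itself D-invariant and therefore qualifies as its own D-invariant extension $\spc{S}' = \tilde{\spc{S}}$ in Lemma \ref{NLGT}. Thus both sides of the desired identity already have closed-form expressions in terms of $J$, $D$ and $W$, and the task is reduced to a linear-algebra check.

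First I would verify that the stated orthogonality assumption (block-diagonal $\tilde{J}_{\tilde{\vec{t}}_0}$) is equivalent to the simultaneous block-diagonality \eqref{orthomatrix} of $J_{\tilde{\vec{t}}_0}$ and $D_{\tilde{\vec{t}}_0}$. Indeed, for the D-invariant model \eqref{RLD-QFI} gives $\tilde{J}_{\tilde{\vec{t}}_0}^{-1} = J_{\tilde{\vec{t}}_0}^{-1} + \tfrac{i}{2}J_{\tilde{\vec{t}}_0}^{-1} D_{\tilde{\vec{t}}_0} J_{\tilde{\vec{t}}_0}^{-1}$. Decomposing into real and imaginary parts, $\tilde{J}_{\tilde{\vec{t}}_0}^{-1}$ is block-diagonal if and only if both $J_{\tilde{\vec{t}}_0}^{-1}$ and $J_{\tilde{\vec{t}}_0}^{-1} D_{\tilde{\vec{t}}_0} J_{\tilde{\vec{t}}_0}^{-1}$ are; by invertibility of $J_{\tilde{\vec{t}}_0}$, this is equivalent to \eqref{orthomatrix}.

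Next, under the assumption \eqref{orthomatrix}, the inverse and the sandwich product inherit the block form, so
\begin{align*}
J_{\tilde{\vec{t}}_0}^{-1} = \begin{pmatrix} J_{\vec{t}_0}^{-1} & 0 \\ 0 & J_{\rm N}^{-1} \end{pmatrix},
\qquad
J_{\tilde{\vec{t}}_0}^{-1} D_{\tilde{\vec{t}}_0} J_{\tilde{\vec{t}}_0}^{-1} = \begin{pmatrix} J_{\vec{t}_0}^{-1}D_{\vec{t}_0}J_{\vec{t}_0}^{-1} & 0 \\ 0 & J_{\rm N}^{-1}D_{\rm N}J_{\rm N}^{-1} \end{pmatrix}.
\end{align*}
Since $\tilde{W}$ in \eqref{W'} has a zero block on the nuisance sector, $\sqrt{\tilde{W}} = \operatorname{diag}(\sqrt{W},0)$, and the nuisance blocks are annihilated from both terms of \eqref{NNHB2}:
\begin{align*}
\tr\bigl[\tilde{W} J_{\tilde{\vec{t}}_0}^{-1}\bigr] &= \tr\bigl[W J_{\vec{t}_0}^{-1}\bigr], \\
\sqrt{\tilde{W}}\,J_{\tilde{\vec{t}}_0}^{-1}D_{\tilde{\vec{t}}_0}J_{\tilde{\vec{t}}_0}^{-1}\sqrt{\tilde{W}} &= \begin{pmatrix} \sqrt{W}\,J_{\vec{t}_0}^{-1}D_{\vec{t}_0}J_{\vec{t}_0}^{-1}\sqrt{W} & 0 \\ 0 & 0 \end{pmatrix},
\end{align*}
whose trace-norm equals $\tr|\sqrt{W}\,J_{\vec{t}_0}^{-1}D_{\vec{t}_0}J_{\vec{t}_0}^{-1}\sqrt{W}|$.

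Substituting into \eqref{NNHB2} and comparing with the right-hand side of \eqref{RLDB} yields equality. There is no real obstacle: the entire lemma is a consequence of the block structure, together with the fact that for D-invariant $\tilde{\spc{S}}$ the nuisance bound \eqref{NHB} is already given in closed form by \eqref{NNHB2}. The only mildly non-obvious point is the equivalence between the ``orthogonality'' phrasing (block-diagonality of $\tilde{J}$) and the block-diagonality of both $J$ and $D$, which is handled in the first step above.
\end{proofof}
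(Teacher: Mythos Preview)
Your proof is correct and follows the same approach as the paper. The paper does not give a formal proof of this lemma; it simply remarks that under the block-diagonal form \eqref{orthomatrix} the inverse of $J_{\tilde{\vec{t}}_0}$ can be computed blockwise, so that the nuisance block drops out of \eqref{NNHB2} and one recovers the right-hand side of \eqref{RLDB}. Your argument fills in exactly these details, and your additional observation that block-diagonality of $\tilde{J}_{\tilde{\vec{t}}_0}$ is equivalent to simultaneous block-diagonality of $J_{\tilde{\vec{t}}_0}$ and $D_{\tilde{\vec{t}}_0}$ (via \eqref{RLD-QFI}) is a useful clarification of what ``orthogonality'' means here.
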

In the case of orthogonal nuisance parameters, the estimation of nuisance parameters does not affect the precision of estimating the parameters of interest, which does not hold for the generic case of non-orthogonal nuisance parameters. Thanks to this fact, one can achieve the bound (\ref{NHB}) by first measuring the nuisance parameters and then constructing the optimal measurement based on the estimated value of the nuisance parameters. On the other hand,  an RLD bound [cf. Eq. (\ref{BFR})] can be attained if and only if its model is D-invariant. Combining these arguments with Lemma \ref{lem-orthogonal}, we obtain a characterization of the attainability of RLD bounds as follows.
\begin{cor}\Label{prop-orthogonal}
An RLD bound can be achieved if and only if it has an orthogonal nuisance extension, i.e. Eq. (\ref{orthomatrix}) holds for some choice of nuisance parameters.
\end{cor}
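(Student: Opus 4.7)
The plan is to derive Corollary \ref{prop-orthogonal} as an immediate synthesis of three earlier ingredients: (i) the characterization, established in the text just above the corollary, that the RLD bound (\ref{BFR}) for a model is attainable if and only if that model is D-invariant; (ii) Lemma \ref{lem-orthogonal}, relating the nuisance-parameter bound to the D-invariant MSE bound under the orthogonality structure (\ref{orthomatrix}); and (iii) the achievability statement in Theorem \ref{theo-nuisance}, which supplies an explicit sequence of POVMs attaining the nuisance-Holevo bound whenever the extended model is a D-invariant qudit model.

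For the $(\Leftarrow)$ direction, suppose $\spc{S}$ admits an orthogonal nuisance extension $\spc{S}'$, meaning a D-invariant extension of $\spc{S}$ whose SLD Fisher information and D-matrix have the block-diagonal form (\ref{orthomatrix}). By Lemma \ref{lem-orthogonal}, the nuisance-Holevo bound $\map{C}_{{\rm NH},\tilde{{\cal M}}}(W,\tilde{\vec{t}}_0)$ then coincides numerically with the D-invariant MSE bound (\ref{RLDB}), which in turn equals the RLD bound (\ref{BFR}) of $\spc{S}$ since D-invariance makes the two expressions agree. Theorem \ref{theo-nuisance} produces a locally optimal POVM attaining (\ref{NHB}); to convert this local construction into a genuine estimator that does not presuppose knowledge of the nuisance parameters, I would invoke the two-step strategy pointed out in the paragraph immediately preceding the corollary: a sublinear number of copies is consumed to form a preliminary estimate $\hat{\vec{p}}$ that is consistent at rate $o(n^{-1/2})$, and the remaining copies are then measured with the POVM of Theorem \ref{theo-nuisance} centred on $\hat{\vec{p}}$. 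The compact-uniform convergence \eqref{DJI3} makes this plug-in legitimate.

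For the $(\Rightarrow)$ direction, assume the RLD bound (\ref{BFR}) of $\spc{S}$ is attained. Ingredient (i) above forces $\spc{S}$ itself to be D-invariant. Then the trivial extension $\spc{S}'=\spc{S}$, with no additional nuisance parameters, satisfies the orthogonality condition (\ref{orthomatrix}) vacuously: the off-diagonal and nuisance-nuisance blocks are simply absent, so the required block-diagonal form holds by default. This yields the required orthogonal nuisance extension.

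The main technical hurdle sits in the $(\Leftarrow)$ direction and concerns the two-step protocol: one must verify that the preliminary, suboptimal estimation of $\vec{p}$ does not degrade the asymptotic MSE for $\vec{t}$. Orthogonality is exactly what makes this work. Because the off-diagonal blocks of $J_{\tilde{\vec{t}}_0}^{-1}$ and $D_{\tilde{\vec{t}}_0}$ vanish under (\ref{orthomatrix}), the fluctuations of $\hat{\vec{p}}$ couple to the $\hat{\vec{t}}$-error only at subleading order in the weighted MSE $\tr W V[\wp_{\vec{t}_0,\vec{t}|\seq{m}}]$. The compact-uniform order estimate \eqref{DJI3} then lets one take the $n\to\infty$ limit through the plug-in, so the nuisance bound --- and hence the RLD bound of $\spc{S}$ --- is attained globally.
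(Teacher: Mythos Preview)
Your proof is correct and mirrors the paper's approach: the paper derives the corollary from the same three ingredients you list (the D-invariance characterization of RLD attainability, Lemma~\ref{lem-orthogonal}, and the achievability of the nuisance bound), as sketched in the paragraph immediately preceding the statement. Your explicit treatment of the $(\Rightarrow)$ direction via the trivial (empty-nuisance) extension and your elaboration of the two-step plug-in are natural fleshings-out of what the paper leaves implicit.
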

The above corollary offers a simple criterion for the important problem of the attainability of RLD bounds.  In Section \ref{subsec-multiphase}, we will  illustrate the application of this criterion  with  a concrete example. 

The bound (\ref{NHB})  can be straightforwardly computed even for complex models; for D-invariant models, the SLD operators have an uniform entry-wise expression and one only needs to shot it into a program to yield the bound (\ref{NHB}). Moreover, the bound does not rely on the explicit choice of nuisance parameters. To see this, one can consider another parameterization $\vec{x}'$ of the D-invariant model. The bound (\ref{NHB}) comes from the RLD bound for the D-invariant model, and the RLD quantum Fisher information matrices $\widetilde{J}_{\vec{t}_0'}$ and $\widetilde{J}_{\vec{x}'_0}$ for two parameterizations are connected by the equation $\widetilde{J}_{\vec{t}_0'}=A\widetilde{J}_{\vec{x}'_0}A^T$, where $A$ is the Jacobian $\left(\partial\vec{x}'/\partial\vec{t}'\right)$ at $\vec{t}'_0$. Since both parameterizations are extensions of the same model $\spc{S}$ satisfying $P_0\vec{t}_0'=P'_0\vec{x}'_0=\vec{t}_0$, the Jacobian takes the form 
$$A=\left(\begin{array}{cc} I_{k}&A'\\ 0 &A''\end{array}\right).$$
Then we have $\widetilde{J}_{\vec{x}'_0}^{-1}=A^T \widetilde{J}^{-1}_{\vec{t}_0'}A$, which implies that the upper-left $k\times k$ blocks of $\widetilde{J}_{\vec{x}'_0}$ and $\widetilde{J}_{\vec{t}_0'}$ are equal. The bound (\ref{NHB}) thus remains unchanged.

\subsection{Precision bound for joint measurements.}
A useful implication of Theorem \ref{theo-nuisance} is a bound on MSEs of several jointly measured observables. Consider a set $\{O_i\}$ of $k$ bounded observables. The goal is to jointly measure their expectations
\begin{align}\Label{ob-define}
o_i:=\Tr\rho O_i\qquad i=1,\dots,k.
\end{align}
The main result of this subsection is the following corollary:
\begin{cor}\Label{prop-disturbance}
Define  the SLD gap of $o_i$  as
\begin{align}\Label{disturbance-def}
\Delta_{o_i}:=\mathbf{MSE}_{o_i}-\left(J^{-1}\right)_{ii},
\end{align}
where $\mathbf{MSE}_{o_i}$ denotes the MSE of $o_i$ under joint measurement and $J$ is the SLD quantum Fisher information.
The sum of the SLD gaps for all observables satisfies the attainable bound:
\begin{align}\Label{disturbance-bound}
\sum_{i=1}^d \Delta_{o_i}\ge \frac12\tr\left|\left(J^{-1}DJ^{-1}\right)_{k\times k}\right|,
\end{align}
where $D$ is the D-matrix.
\end{cor}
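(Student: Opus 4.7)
The strategy is to recast joint measurement of the observables $\{O_i\}$ as a multiparameter estimation problem in a suitably chosen parametric model, and then invoke the multiparameter MSE bound of Theorem \ref{thm-MSE} (in the D-invariant case) or Theorem \ref{prop-fixed} (in the general case) with the uniform weight $W=I_k$.

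First I would introduce a local $k$-parameter family $\{\rho_{\vec{t}}\}_{\vec{t}\in\set{\Theta}}$ around $\rho=\rho_{\vec{0}}$ whose tangent vectors at the origin are dual to the observables, namely $\Tr[O_i(\partial\rho_{\vec{t}}/\partial t_j)|_{\vec{t}=\vec{0}}]=\delta_{ij}$. Such a parametrization exists provided the operators $\{O_i-o_i I\}_{i=1}^{k}$ span a $k$-dimensional subspace of the tangent space at $\rho$. In this coordinate chart, any joint measurement of $\{O_i\}$ on $\rho^{\otimes n}$ is a locally asymptotically covariant estimator of $\vec{t}$, the SLD Fisher information matrix of the model at $\vec{0}$ is precisely the $J$ of the corollary, and the D-matrix is $D$.

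Next I split into two subcases. If the resulting $k$-parameter model is D-invariant, Theorem \ref{thm-MSE} with $W=I_k$ gives directly
\begin{align*}
\sum_{i=1}^{k}\mathbf{MSE}_{o_i}=\tr V[\wp_{\vec{t}_{0},\vec{t}|\seq{m}}]\ \ge\ \tr J^{-1}+\tfrac{1}{2}\tr|J^{-1}DJ^{-1}|,
\end{align*}
and subtracting $\sum_{i}(J^{-1})_{ii}=\tr J^{-1}$ yields the claim (the $(\cdot)_{k\times k}$ notation being vacuous here). Otherwise, I would embed $\{\rho_{\vec{t}}\}$ in its minimum D-invariant superfamily, constructed as described after Proposition \ref{LGT} by closing the span of the SLDs under iterated application of $\map{D}_{\vec{0}}$, and treat the additional directions as nuisance parameters. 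Theorem \ref{prop-fixed} then yields $\tr V\ge\map{C}_{{\rm H},\map{M}}(I_k,\vec{t}_0)$, and invoking the expression \eqref{Ho2} of Proposition \ref{LGT} with the extraction $P=(I_k\mid 0)$ gives the bound whose $k\times k$ block form is precisely $\tr(J^{-1})_{k\times k}+\tfrac{1}{2}\tr|(J^{-1}DJ^{-1})_{k\times k}|$, where $J$ and $D$ now denote the SLD Fisher information and D-matrix of the extension.

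The attainability part of the corollary is inherited from the achievability clause of Theorem \ref{thm-MSE} (Eq. \eqref{DJI}) or Theorem \ref{prop-fixed} (Eq. \eqref{DJI2}): the covariant Gaussian estimator $M_{I_k}^{\tilde{J}_{\vec{t}_0}^{-1}}$ on the limiting c-q Gaussian state, pulled back to the $n$-copy qudit system through the Q-LAN channel of Theorem \ref{Th3}, yields an explicit sequence of locally asymptotically covariant joint measurements of $\{O_i\}$ whose limiting distribution saturates the bound.

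The main technical obstacle I expect is verifying, in the non-D-invariant case, that the evaluation of \eqref{Ho2} with the choice $P=(I_k\mid 0)$ already matches the true Holevo minimum $\map{C}_{\rm H}$, or equivalently that taking the $k\times k$ sub-block is a valid lower bound rather than an overestimate. This should follow from the uniqueness up to unitary equivalence of the minimum D-invariant extension noted after Proposition \ref{LGT}, which guarantees that $(J^{-1}DJ^{-1})_{k\times k}$ is intrinsic to the original problem rather than an artifact of the extension.
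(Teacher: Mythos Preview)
Your overall structure is right, but there is a genuine gap in the non-D-invariant branch, and it is precisely the obstacle you flag at the end. You write that you will ``treat the additional directions as nuisance parameters'' and then invoke Theorem~\ref{prop-fixed}. Those are two different things: Theorem~\ref{prop-fixed} is the \emph{fixed}-parameter bound, whose optimal value is $\map{C}_{\rm H}$, and $\map{C}_{\rm H}$ is given in \eqref{Ho2} as a \emph{minimum} over all $P$ with $P_{ij}=\delta_{ij}$ for $i,j\le k$ but arbitrary remaining columns. Hence the particular choice $P=(I_k\mid 0)$ produces only an upper bound on $\map{C}_{\rm H}$, and the chain $\tr V\ge \map{C}_{\rm H}\le [\text{expression with }P=(I_k\mid 0)]$ does not yield the inequality you want. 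Your proposed fix via uniqueness of the minimal D-invariant extension does not help: that argument shows the $k\times k$ block is well-defined, not that the Holevo minimizer is $(I_k\mid 0)$. In fact $\map{C}_{\rm H}\le\map{C}_{\rm NH}$ in general (Section~\ref{sec-general}, last subsection), and the corollary's right-hand side is $\map{C}_{\rm NH}$, not $\map{C}_{\rm H}$.

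The resolution, which is what the paper does, is to recognize that in the joint-measurement problem the state is unknown, so the extra directions in the D-invariant extension are genuinely \emph{nuisance} parameters, not fixed ones. One then applies Theorem~\ref{theo-nuisance} with the weight $\tilde{W}$ equal to the projector onto the first $k$ coordinates. In the D-invariant nuisance setting the bound $\map{C}_{\rm NH}$ is given by \eqref{NNHB2}, where the constraint on $P$ in \eqref{NHo2} forces $P=(I_k\mid 0)$, so the $k\times k$ block expression $\tr(J^{-1})_{k\times k}+\tfrac12\tr|(J^{-1}DJ^{-1})_{k\times k}|$ drops out directly with no residual minimization. Attainability then follows from the achievability clause of Theorem~\ref{theo-nuisance} rather than Theorem~\ref{prop-fixed}.
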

The right hand side of Eq. (\ref{disturbance-bound}) is exactly the gap between the SLD bound and the ultimate precision limit. It shows a typical example where the SLD bound is not attainable.

\begin{proof}
 Substituting $W'$ in Eq. (\ref{NHB}) by the projection into the subspace $\R^k$, we obtain a bound for the MSE $\{\mathbf{MSE}_{o_i}\}$ of the limiting distributions:
\begin{align}\Label{joint1}
\sum_{i=1}^d \mathbf{MSE}_{o_i}\ge \sum_{i=1}^d \left(J^{-1}\right)_{ii}+\frac12\tr\left|\left(J^{-1}DJ^{-1}\right)_{k\times k}\right|.
\end{align}
 Here $J$ and $D$ are the SLD Fisher information and D-matrix for the extended model, and $\left(A\right)_{k\times k}$ denotes the upper-left $k\times k$ block of a matrix $A$. 
Substituting the above definition into Eq. (\ref{joint1}), we obtain Corollary \ref{prop-disturbance}.\qed
\end{proof}

Specifically, for the case of two parameters, the  bound (\ref{disturbance-bound}) reduces to 
\begin{align}\Label{disturbance-bound-two}
\Delta_{o_1}+\Delta_{o_2}\ge \left|\Tr\rho_{\vec{\theta}}[\hat{L}_1,\hat{L}_2]\right|,
\end{align}
where $\hat{L}_j:=\sum_{j=1}^{k'}\left(J_{\vec{\theta}'}\right)_{ji}^{-1}L_i$ are the SLD operators in the dual space. Next, taking partial derivative with respect to $o_j$ on both sides of Eq. (\ref{ob-define}) and substituting in the definition of RLD operators, the observables satisfy the orthogonality relation with the SLD operators as
\begin{align*}
\frac12\Tr\left(\rho L_j+L_j\rho\right)O_i=\delta_{ij}.
\end{align*}
By uniqueness of the dual space, we have
\begin{align*}
\hat{L_i}=O_i-o_i I\qquad i=1,\dots,k'
\end{align*}
and the bound becomes
\begin{align}\Label{uncertainty}
\Delta_{o_1}+\Delta_{o_2}&\ge|\<[O_1,O_2]|\>|.
\end{align}
Another bound expressing the tradeoff between $\Delta_{o_1}$ and $\Delta_{o_2}$ was obtained by Watanabe et al.  \cite{watanabe} as
\begin{align}\Label{watanabe}
\Delta_{o_1}\Delta_{o_2}\ge|\<[O_1,O_2]\>|^2/4.
\end{align} 
Now, substituting $O_2$ by $\alpha O_2$ for a variable $\alpha\in\R$ in Eq. (\ref{uncertainty}), we have
\begin{align*}
\Delta_{o_1}+\Delta_{\alpha o_2}=\Delta_{o_1}+\alpha^2\Delta_{o_2}\ge \alpha|\<[O_1,O_2]|\>|.
\end{align*}
For the above quadratic inequality to hold for any $\alpha\in\R$, its discriminant must be non-positive, which immediately implies the bound (\ref{watanabe}). Notice that the bound (\ref{watanabe}) was derived under asymptotic unbiasedness \cite{watanabe}, and thus
 it was not guaranteed to be attainable.
Here, instead, since our bound (\ref{uncertainty}) is always attainable, the bound (\ref{watanabe}) can also be achieved in any qudit model under the asymptotically covariant condition.

\subsection{Nuisance parameters versus fixed parameters}
It is intuitive to ask what is the relationship between the nuisance parameter  bound (\ref{NHB}) and the general bound (\ref{HB}).
To see it, 
let $\spc{S}=\{\rho_{\vec{t}}\}_{\vec{t}\in \set{\Theta}}$  be a generic $k$-parameter qudit model and let $\tilde{\spc{S}}$ be a 
$(k+s)$-parameter $D$-invariant model containing $\spc{S}$. 
When $\rho_{\vec{t}_0}$ is non-degenerate,
we notice that the QCR bound with nuisance parameters (\ref{NHB}) can be rewritten as
\begin{align}\Label{NHB1}
 \map{C}_{{\cal M}}(W,\vec{t}_{0})= \tr P_0^TWP_0(J_{\tilde{\vec{t}}_0}^{-1})+\frac12\tr\left|\sqrt{P_0^TWP_0}J_{\tilde{\vec{t}}_0}^{-1}D_{\tilde{\vec{t}}_0}J_{\tilde{\vec{t}}_0}^{-1}\sqrt{P_0^TWP_0}\right|,
\end{align}
where $P_0$ is a $k\times (k+s)$ matrix satisfying the constraint $(P_0)_{ij}:=\delta_{ij}$ for any $i,j\le k+s$.
By definition, $P_0$ is a special case of $P$, and it follows straightforwardly from comparing Eq. (\ref{NHB1}) with Eq. (\ref{HB}) that the general MSE bound is upper bounded by the MSE bound for the nuisance parameter case.  This observation agrees with the obvious intuition that  having additional information on the system is helpful for (or at least, not detrimental to) estimation.
At last, since $J_{\tilde{\vec{t}}_0}$ and $D_{\tilde{\vec{t}}_0}$ are block-diagonal in the case of orthogonal nuisance parameters, we have 
\begin{align*}
P(J_{\tilde{\vec{t}}_0})^{-1}P^T=J^{-1}_{\vec{t}_0}\quad PD_{\tilde{\vec{t}}_0}P^T=D_{\vec{t}_0}
\end{align*}
for any $k\times (k+s)$ matrix satisfying the constraint $(P)_{ij}:=\delta_{ij}$ for $i,j\le k$. This implies that the general bound (\ref{HB}) coincides with the nuisance parameter bound (\ref{NHB}) when the nuisance parameters are orthogonal.

\section{Tail property of the limiting distribution}\Label{sec-tail}
In previous discussions, we focused on the MSE of the limiting distribution. Here, instead, we consider the behavior of the limiting distribution itself. The characteristic property is the tail property: 
Given a weight matrix $W\ge 0$ and a constant $c$,  
we define the tail region $\set{T}_{W,c}(\vec{t})$ as
\begin{align*}
\set{T}_{W,c}(\vec{t}):=\left\{\vec{x}~|~\,(\vec{x}^T-\vec{t}^T)W(\vec{x}-\vec{t})\ge c\right\}.
\end{align*}
For a measurement $\seq{m}=\{M_n(\hat{\vec{t}}_n)\}$, the probability that the estimate $\hat{\vec{t}}_n$ is in the tail region can be approximated by the tail probability of the limiting distribution, i.e.
\begin{align*}
\mathbf{Prob}\left(\hat{\vec{t}}_n\in\set{T}_{W,c}(\vec{t})\right)=\wp_{\vec{t}_0,\vec{t}|\seq{m}}\left(\set{T}_{W,c}(\vec{t})\right)+\epsilon_n,
\end{align*}
up to $\epsilon_n$ being a term vanishing in $n$.
The tail property is usually harder to characterize than   the MSE. Nevertheless, here we show that, under certain conditions, there exists a good bound on the tail property of the limiting distribution.
\subsection{Tail property of Gaussian shift models}
Just like in the previous sections, the tail property of $n$-copy qudit models can  be analyzed by studying the tail property of Gaussian shift models.
In this subsection, we first derive a bound on the tail probability of Gaussian shift models. The result has an interest in its own and can be used for further analysis of qudit models using Q-LAN.

Consider a Gaussian shift model $\{G[\vec{\alpha},\vec{\gamma}]\}$ with $G[\vec{\alpha},\vec{\gamma}]=N[\vec{\alpha}^C,\Gamma]\otimes \Phi[\vec{\alpha}^Q,\vec{\beta}]$ and a measurement  $M^{\rm G}  (\hat \alpha)$.  Then, define    the probability $\wp_{\vec{\alpha}|M^G}\left(\set{T}_{W,c}(\vec{\alpha})\right)$, where $\set{T}_{W,c}(\vec{\alpha})$ is the tail region around $\vec{\alpha}$ defined as
\begin{align*}
\set{T}_{W,c}(\vec{\alpha}):=\left\{\vec{x}~|~\,(\vec{x}^T-\vec{\alpha}^T)W(\vec{x}-\vec{\alpha})\ge c\right\}.
\end{align*}
 Then, for covariant POVMs, the tail probability is independent of $\vec{\alpha}$ and is given by:
\begin{align*}	
  \wp_{\vec{\alpha}|M^G}\left(\set{T}_{W,c}(\vec{\alpha})\right)=\Tr N[\vec{0},\Gamma]\otimes\Phi[\vec{0},\vec{\beta}] M\left(\set{T}_{W,c}(\vec{0})\right).
\end{align*}
When the measurement is covariant, we have the following bound on the tail probability, which can be attained by a certain covariant POVM:

\begin{lem}\Label{L2B}
Consider a Gaussian model $G[\vec{\alpha},\vec{\gamma}]=N[\vec{\alpha}^C,\Gamma]\otimes \Phi[\vec{\alpha}^Q,\vec{\beta}]$  with $s'$ classical parameters and 
$s$ pairs of quantum parameters. 
Assume that a POVM $\{M^G(\set{B}) \}_{\set{B} \subset \mathbb{R}^{s'} \times \mathbb{R}^{2s}}$ is covariant
and the weight matrix $W$ has the following form;
\begin{align}\Label{W-diag}
W=\left(\begin{array}{cccc} W^C &  & & \\ & w_{s'+1}I_2 & & \\ & & \ddots  &\\ & & & w_{s'+s}I_2\end{array}\right)
\end{align}
with $W^C\ge 0$. 
Then,  the tail probability of the limiting distribution is bounded as
\begin{align}
  \wp_{\vec{\alpha}|M^G}\left(\set{T}_{W,c}(\vec{\alpha})\right)\ge
N\left[\vec{0},\Gamma \otimes E_s\left(e^{-\vec{\beta}}+\vec{e}/2 \right)\right]\left(\set{T}_{W,c}(\vec{0})\right),
\Label{ER5B}
\end{align}
where $\vec{e}$ is the $2s$-dimensional vector with  all entries equal to 1.
For the definition of $E_s\left(e^{-\vec{\beta}}+\vec{e}/2 \right)$, see \eqref{M1}.
When the POVM $M^G$ is given as $M^G(B)=\int_{B}| \alpha_1, \ldots, \alpha_s\rangle \langle \alpha_1, \ldots, \alpha_s| d \vec{\alpha}$, 
the equality in \eqref{ER5B} holds.
\end{lem}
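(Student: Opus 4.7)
The plan is to reduce to $\vec\alpha=\vec 0$ by covariance of $M^G$, separate the classical and quantum sectors using the block structure of $W$ in \eqref{W-diag}, represent the quantum output as a phase-space convolution, and conclude by an Anderson-type anti-concentration inequality for centrally symmetric convex sets.

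By covariance of $M^G$ and the translation identity $\set{T}_{W,c}(\vec\alpha)=\vec\alpha+\set{T}_{W,c}(\vec 0)$, the probability $\wp_{\vec\alpha|M^G}(\set{T}_{W,c}(\vec\alpha))$ is independent of $\vec\alpha$, so it suffices to work at $\vec\alpha=\vec 0$. The product structure of $N[\vec 0,\Gamma^C]\otimes\Phi[\vec 0,\vec\beta]$ together with the block-diagonal $W$ factors the problem into an $s'$-dimensional classical sector and an $s$-mode bosonic sector. On the classical sector, every translation-covariant estimator produces the convolution $N[\vec 0,\Gamma^C]*\mu_C$ for some probability measure $\mu_C$; the trivial estimator ($\mu_C=\delta_0$) reproduces the classical block of the right-hand side of \eqref{ER5B}, and any other $\mu_C$ strictly enlarges the complement of each centrally symmetric convex set by Anderson's inequality for the symmetric log-concave Gaussian.

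On the quantum sector I would parametrise each displacement-covariant POVM as $M^Q(B)=\int_B T^{\rm Q}_{\vec\alpha}M_0(T^{\rm Q}_{\vec\alpha})^\dagger\,d\vec\alpha$, where the positive seed $M_0$ is normalised so that the Husimi density $Q_{M_0}(\vec\alpha):=\langle\vec\alpha|M_0|\vec\alpha\rangle$ is a probability density. Inserting the Glauber--Sudarshan $P$-representation of the thermal state $\Phi[\vec 0,\vec\beta]$ into $\Tr[\Phi[\vec 0,\vec\beta]\,M^Q(B)]$ and applying the coherent-state displacement law yields the phase-space convolution
\begin{align*}
\wp_{\vec 0|M^Q}=P_\Phi*Q_{M_0},
\end{align*}
with $P_\Phi$ the Gaussian thermal $P$-function. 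For the heterodyne seed $M_0=|\vec 0\rangle\langle\vec 0|$ the density $Q_{M_0}$ is the vacuum Gaussian, so the convolution of Gaussians reproduces the quantum block of the right-hand side of \eqref{ER5B}, establishing the equality claim. For an arbitrary covariant POVM I would use $Q_{M_0}=Q_{|\vec 0\rangle\langle\vec 0|}*P_{M_0}$ to write $\wp_{\vec 0|M^Q}=\wp_{\vec 0}^{\rm het}*P_{M_0}$ and invoke Anderson's inequality against the Gaussian $\wp_{\vec 0}^{\rm het}$, using that the set $\{\vec x:\vec x^T W\vec x<c\}$ is centrally symmetric and convex whenever $W\ge 0$.

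The main obstacle is that $P_{M_0}$ is only a tempered distribution in general and ceases to be a positive measure for non-classical seeds (e.g.\ Fock or squeezed states), so the Anderson convolution argument does not apply verbatim. I would resolve this using the $O(2)$-symmetry of $W$ within each quantum mode imposed by \eqref{W-diag}: averaging the seed over the rotations that fix $W$ leaves $\wp_{\vec 0|M^Q}(\set{T}_{W,c}(\vec 0))$ unchanged and replaces $M_0$ by a Fock-diagonal operator, on which the tail bound reduces to comparing radial marginals. There the Heisenberg uncertainty $\mathrm{Var}(Q)+\mathrm{Var}(P)\ge 1$, saturated among rank-one unbiased seeds only by the vacuum, together with a convex-combination argument reducing general seeds to rank-one ones, yields the inequality for the quantum block; combining this with the classical estimate via the block-diagonality of $W$ gives \eqref{ER5B}.
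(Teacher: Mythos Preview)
Your overall architecture matches the paper's: reduce to $\vec\alpha=\vec 0$ by covariance, write the quantum covariant POVM as $M^Q(B)=\pi^{-s}\int_B T^{\rm Q}_{\vec\alpha}\,\tau\,(T^{\rm Q}_{\vec\alpha})^\dagger\,d\vec\alpha$, and use the $O(2)$ symmetry of each $w_jI_2$ block (equivalently, invariance of $\Phi[\vec 0,\vec\beta]$ under $e^{it\hat N_j}$) to replace $\tau$ by a Fock-diagonal seed. Your convolution representation $\wp_{\vec 0|M^Q}=P_\Phi*Q_{M_0}$ and the identification of the heterodyne equality case are also correct, and your Anderson treatment of the classical sector is in fact cleaner than the paper's one-line ``the covariant measurement is unique.''

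The genuine gap is your last step on the quantum side. After reducing to Fock-diagonal seeds you invoke the Heisenberg relation $\mathrm{Var}(Q)+\mathrm{Var}(P)\ge 1$ plus a convex-combination reduction to rank-one seeds. But Heisenberg is a second-moment statement; it does not control tail probabilities. For a Fock seed $|k\rangle$ the output is \emph{not} Gaussian, so knowing that its variance exceeds the vacuum's does not imply that its tail mass on $\set T_{W,c}(\vec 0)$ is larger. What is needed is a stochastic-ordering statement, and this is exactly what the paper supplies: it computes, for $\tau=|k_1,\dots,k_s\rangle\langle k_1,\dots,k_s|$,
\[
\Tr\big[\Phi[\vec 0,\vec\beta]\,M^Q(\set T_{W^Q,c}(\vec 0))\big]
=\int_{[0,\infty)^s} f(r_1,\dots,r_s)\,\prod_{j=1}^s p_{k_j}(r_j)\,dr_j,
\]
where $p_k(r)=\pi^{-1}e^{-r}r^k/k!$ and $f(r_1,\dots,r_s)=\int_{\set T_{W^Q,c}(\vec 0)}\prod_j (\pi N_j)^{-1}e^{-|\alpha_j'-\alpha_j|^2/N_j}\,d\vec\alpha$ with $r_j=|\alpha_j'|^2$. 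The key facts are that $f$ is coordinatewise monotone increasing (pushing the Gaussian center outward only increases the mass in the tail region), and that $p_k-p_0$ has a single sign change from negative to positive while integrating to zero. This single-crossing property yields $\int f\,p_k\ge\int f\,p_0$ for every monotone $f$, hence the Fock-$|k\rangle$ seed has tail probability at least that of the vacuum; the convex-combination step then finishes the Fock-diagonal case. Replacing your Heisenberg appeal by this monotonicity/single-crossing argument closes the gap.
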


The proof can be found in Appendix \ref{app-Gaussian}.
When the model has a group covariance,
similar evaluation might be possible.
For example, 
similar evaluation was done 
in the $n$-copy of full pure states family \cite{H98}
and
in the $n$-copy of squeezed states family \cite[Section 4.1.3]{hayashi2017group}.

\subsection{Tail property of D-invariant qudit models}
For a $k$-parameter D-invariant model $\{\rho_{\vec{t}}\}$,
using Lemma \ref{L2B} and Q-LAN,
we have the following theorem.
\begin{theo}\Label{thm-tail}
Let $\{\rho_{\vec{t}}\}_{\vec{t}\in\set{\Theta}}$ be a $k$-parameter D-invariant model. 
Assume that 
$(\tilde{J}_{\vec{t}'_0})^{-1}$ exists,
$\rho_{\vec{t}_0}$ is a non-degenerate state,
and a sequence of measurements $\seq{m}:=\{M_n\}$ satisfies local asymptotic covariance at $\vec{t}_{0} \in \set{\Theta}$.
When $J_{\vec{t}_0}^{-1/2}W J_{\vec{t}_0}^{-1/2}$ commutes with
$J_{\vec{t}_0}^{-1/2}D_{\vec{t}_0} J_{\vec{t}_0}^{-1/2}$,
we have
\begin{align}
 \wp_{\vec{t}_0,\vec{t}|\seq{m}}\left(\set{T}_{W,c}(\vec{t})\right)\ge 
N\Big[0, W^{1/2}J_{\vec{t}_0}^{-1}W^{1/2}
+\frac12\left|W^{1/2}J_{\vec{t}_0}^{-1}D_{\vec{t}_0}J_{\vec{t}_0}^{-1}
W^{1/2}\right|\Big]\left(
\set{T}_c\right)\Label{RLDV}
\end{align}
for $\set{T}_c:=\{\vec{x}\in\R^k~|~\|\vec{x}\| \ge c\}$. 
The equality holds if and only if
$\wp_{\vec{t}_0,\vec{t}| \seq{m}}$ is the normal distribution with average zero
and covariance $V_{\vec{t_0}|W}$ as defined in Eq. (\ref{V-tw}).
\end{theo}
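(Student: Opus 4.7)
The plan is to transport the problem from the $n$-copy qudit model to a Gaussian shift model via Q-LAN and then invoke the Gaussian tail bound of Lemma \ref{L2B}. By local asymptotic covariance, the translation invariance \eqref{LLe} implies
\begin{align*}
\wp_{\vec{t}_0,\vec{t}|\seq{m}}\!\left(\set{T}_{W,c}(\vec{t})\right) = \wp_{\vec{t}_0,\vec{0}|\seq{m}}\!\left(\set{T}_{W,c}(\vec{0})\right),
\end{align*}
so it is enough to bound the tail around the origin. By Theorem \ref{CTh3}, there exists a covariant POVM $\widetilde{M}^G$ on the c-q Gaussian system with
\begin{align*}
\wp_{\vec{t}_0,\vec{t}|\seq{m}}(\set{B}) = \Tr\bigl[\,\widetilde{M}^G(\set{B})\,G[\vec{t},\tilde{J}_{\vec{t}_0}^{-1}]\,\bigr],
\end{align*}
where the D-invariance of the model yields the explicit form $\tilde{J}_{\vec{t}_0}^{-1} = J_{\vec{t}_0}^{-1} + \tfrac{i}{2}J_{\vec{t}_0}^{-1}D_{\vec{t}_0}J_{\vec{t}_0}^{-1}$ via \eqref{RLD-QFI}.

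Next, I would use Lemma \ref{3LL} to pick an invertible real matrix $T$ that brings $\tilde{J}_{\vec{t}_0}^{-1}$ into canonical form, so that the transformed Gaussian state factorizes into classical directions and independent quantum mode pairs with thermal parameters $\vec{\beta}$ read off from the symplectic eigenvalues of $J_{\vec{t}_0}^{-1/2}D_{\vec{t}_0}J_{\vec{t}_0}^{-1/2}$. The transformed weight matrix becomes $\widetilde W = T^{-T}WT^{-1}$. The commutation hypothesis that $J_{\vec{t}_0}^{-1/2}WJ_{\vec{t}_0}^{-1/2}$ commutes with $J_{\vec{t}_0}^{-1/2}D_{\vec{t}_0}J_{\vec{t}_0}^{-1/2}$ is precisely what is needed for $\widetilde W$ to be block-diagonal in the symplectic basis with each two-dimensional quantum block being a positive scalar multiple of $I_2$. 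This is exactly the structural form \eqref{W-diag} required by Lemma \ref{L2B}.

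Applying Lemma \ref{L2B} to the covariant POVM $\widetilde{M}^G$ in the adapted coordinates gives
\begin{align*}
\wp_{\vec{t}_0,\vec{0}|\seq{m}}\!\left(\set{T}_{W,c}(\vec{0})\right) \ge N\!\left[\vec{0},\,\Gamma\otimes E_s(e^{-\vec{\beta}}+\vec{e}/2)\right]\!\left(\set{T}_{\widetilde W,c}(\vec{0})\right).
\end{align*}
Pulling the coordinate change back and using $N_j + 1/2 = \tfrac12\coth(\beta_j/2)$, the covariance of the Gaussian on the right-hand side becomes, in the original coordinates, exactly $J_{\vec{t}_0}^{-1} + \tfrac12|J_{\vec{t}_0}^{-1}D_{\vec{t}_0}J_{\vec{t}_0}^{-1}|$ after absorbing $W^{1/2}$ on both sides, producing \eqref{RLDV}. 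The equality part of Lemma \ref{L2B} identifies the extremal POVM as the multimode coherent-state measurement, which when pulled back through Theorem \ref{CTh3} forces $\wp_{\vec{t}_0,\vec{t}|\seq{m}}$ to be the Gaussian with covariance $V_{\vec{t}_0|W}$ of \eqref{V-tw}, consistently with the equality characterization already obtained in Theorem \ref{thm-MSE}.

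The main obstacle is the bookkeeping of the symplectic change of coordinates: one has to verify that the commutation assumption is exactly the right algebraic condition for $\widetilde W$ to be compatible with Lemma \ref{L2B}, and that the identification of the thermal parameters produces the absolute value $|J_{\vec{t}_0}^{-1}D_{\vec{t}_0}J_{\vec{t}_0}^{-1}|$ and not some other functional calculus of the skew-symmetric part. Once this linear-algebra dictionary is made precise, the tail bound and its equality condition follow directly from the Gaussian statement.
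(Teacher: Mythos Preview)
Your proposal is correct and follows essentially the same route as the paper: transport to the Gaussian model via Theorem \ref{CTh3}, change to adapted coordinates, and apply Lemma \ref{L2B}. The only substantive point you leave open---that the commutation hypothesis forces the transformed weight matrix into the block-scalar form \eqref{W-diag}---is exactly what the paper isolates and proves separately as Lemmas \ref{L4} and \ref{L5} (a simultaneous symplectic diagonalization criterion), so your ``bookkeeping obstacle'' is genuine but already packaged in the paper.
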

We note that bounds on the probability distributions are usually more difficult to obtain and more informative than the MSE bounds, as the MSE can be determined by the probability distribution.
Theorem \ref{thm-tail} provides an attainable bound of the tail probability, which  can be used to determine the maximal probability that the estimate falls into a confidence region $\set{T}_{W,c}$ as well as the optimal measurement. 

Our proof of Theorem \ref{thm-tail} needs some preparations.
First, we introduce the concept of 
simultaneous diagonalization in the sense of symmetric transformation.
Two $2k \times 2k$ real symmetric matrices $A_1$ and $A_2$ are called
{\it simultaneously symplectic diagonalizable}
when there exist a symplectic matrix $S$ 
and two real vectors $\vec{\beta}_1$ and $\vec{\beta}_2$ such that
such that
\begin{align}
S^T A_1 S= E_k(e^{-\vec{\beta}_1}),\quad
S^T A_2 S= E_k(e^{-\vec{\beta}_2})\Label{SYAA2}
\end{align}
with $E_k$ defined in Eq. (\ref{M1}).
Regarding the simultaneous diagonalization, we have the following property, whose proof can be found in Appendix \ref{app-diag}:
\begin{lem}\Label{L4}
The following conditions are equivalent for two $2k \times 2k$ real symmetric matrices $A_1$ and $A_2$.
\begin{description}
\item[(i)]
$A_1$ and $A_2^{-1}$ are
simultaneously symplectic diagonalizable.

\item[(ii)]
$\Omega_k A_2 A_1=A_1 A_2 \Omega_k$, where $\Omega_k$ is defined in Eq. (\ref{M2}).
\end{description}
\end{lem}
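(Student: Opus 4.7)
The plan is to reformulate condition (ii) into a symmetric commutation statement, then prove the forward direction (i)$\Rightarrow$(ii) by a direct computation in the Williamson basis, and tackle the harder reverse direction (ii)$\Rightarrow$(i) by combining Williamson's theorem with a commutant analysis on the isotypic components of $\Omega A_2^{-1}$.

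First I would reformulate (ii). Using $\Omega^T=-\Omega$ one computes $(\Omega A_2A_1)^T=-A_1A_2\Omega$, so (ii) is equivalent to the statement that $\Omega A_2A_1$ is antisymmetric. Multiplying by $\Omega^{-1}=-\Omega$ and by $A_2^{-1}$ on the appropriate sides then yields the symmetric form
\begin{equation*}
[\,\Omega A_1,\ \Omega A_2^{-1}\,]\ =\ 0,
\end{equation*}
which asserts that the two Hamiltonian operators $\Omega A_1$ and $\Omega A_2^{-1}$ commute---the natural condition of simultaneous symplectic diagonalizability.

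For (i)$\Rightarrow$(ii) I would substitute the Williamson decompositions $A_1=S^{-T}D_1S^{-1}$ and $A_2=SD_2^{-1}S^T$ (with $D_i=E_k(e^{-\vec\beta_i})$) into $\Omega A_1A_2$. Using $\Omega S^{-T}=S\Omega$ (which follows from $S$ being symplectic) and the fact that every $E$-type matrix commutes with $\Omega$, the expression reduces to $S(D_1D_2^{-1}\Omega)S^T$. The factor $D_1D_2^{-1}\Omega$ is antisymmetric (as $\Omega$ times a matrix that commutes with it), and since the congruence $M\mapsto SMS^T$ preserves antisymmetry, the claim follows.

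The main obstacle is the reverse direction. My plan is to first diagonalize $A_2^{-1}$ via Williamson's theorem, producing a symplectic $S_0$ with $S_0^TA_2^{-1}S_0=D_2=E_k(e^{-\vec\beta_2})$, and to set $\tilde A_1:=S_0^TA_1S_0$, still symmetric and positive definite. The similarity $\Omega A_i=S_0(\Omega\tilde A_i)S_0^{-1}$ preserves the commutation, yielding $[\Omega\tilde A_1,\Omega D_2]=0$. What remains is to diagonalize $\tilde A_1$ by a symplectic transformation that stabilises $D_2$. The key observation is that $\Omega D_2$ is block diagonal with scalar blocks $e^{-\beta_{2,j}}\Omega_1$, so its real generalized eigenspaces are exactly the coordinate pairs grouped by equal values of $\beta_{2,j}$. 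Since $\Omega\tilde A_1$ commutes with $\Omega D_2$, both $\Omega\tilde A_1$ and $\tilde A_1=\Omega^{-1}(\Omega\tilde A_1)$ preserve each such isotypic component; and on a component of multiplicity $m$, the commutation forces $\tilde A_1|_{\mathrm{comp}}$ itself to commute with $\Omega_m$, i.e., to be $\mathbb{C}$-linear. In the complex representation this corresponds to a Hermitian positive-definite $m\times m$ matrix, which can be unitarily diagonalized; in the real picture the diagonalization is realised by a symplectic-orthogonal matrix $V\in U(m)\subset Sp(2m,\mathbb{R})$ that automatically stabilises the scalar block of $D_2$. Assembling the per-component transformations into a single symplectic $V$ and setting $S:=S_0V$ completes the argument. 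The main delicacy is the degenerate case where multiple $\beta_{2,j}$ coincide, for which the naive $2\times 2$ analysis fails and one must exploit the full $U(m)$ stabilizer within each isotypic block.
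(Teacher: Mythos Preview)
Your proof is correct and follows the paper's own strategy: a direct computation for (i)$\Rightarrow$(ii), and for (ii)$\Rightarrow$(i) first Williamson-diagonalize one matrix (the paper chooses $A_1$, you choose $A_2^{-1}$) and then refine by a symplectic-orthogonal $S'\in U(k)$ that preserves the first diagonal form. The paper merely asserts the existence of this $S'$, while your isotypic-component analysis via the complex-linear identification supplies exactly the justification the paper omits, so your write-up is actually more complete than the appendix proof.
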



Using Lemma \ref{L4}, we obtain the following lemma.

\begin{lem}\Label{L5}
Let $A_1$ be $
\left|J_{\vec{t}_0}^{-1}D_{\vec{t}_0}J_{\vec{t}_0}^{-1}\right|^{1/2}$.
Assume that 
$A_1^{-1}J_{\vec{t}_0}^{-1}D_{\vec{t}_0}J_{\vec{t}_0}^{-1}
A_1^{-1}=\Omega_k$.
When
$J_{\vec{t}_0}^{-1/2}W J_{\vec{t}_0}^{-1/2}$ commutes with
$J_{\vec{t}_0}^{-1/2}D_{\vec{t}_0} J_{\vec{t}_0}^{-1/2}$,
$( A_1 W A_1)^{-1}$ and 
$ A_1^{-1} J_{\vec{t}_0}^{-1} A_1^{-1}$
are simultaneously symplectic diagonalizable.
\end{lem}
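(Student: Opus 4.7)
My plan is to apply Lemma \ref{L4} directly, with the identifications ``$A_1$'' $\leftrightarrow (A_1 W A_1)^{-1}$ and ``$A_2$'' $\leftrightarrow (A_1^{-1} J_{\vec{t}_0}^{-1} A_1^{-1})^{-1} = A_1 J_{\vec{t}_0} A_1$. Under this translation, proving simultaneous symplectic diagonalizability of $(A_1 W A_1)^{-1}$ and $A_1^{-1} J_{\vec{t}_0}^{-1} A_1^{-1}$ reduces to verifying the single commutation identity
\begin{align*}
\Omega_k (A_1 J_{\vec{t}_0} A_1)(A_1 W A_1)^{-1} = (A_1 W A_1)^{-1}(A_1 J_{\vec{t}_0} A_1)\, \Omega_k.
\end{align*}
Writing $(A_1 W A_1)^{-1} = A_1^{-1} W^{-1} A_1^{-1}$ and cancelling the innermost $A_1$'s, this collapses to $\Omega_k A_1 J_{\vec{t}_0} W^{-1} A_1^{-1} = A_1^{-1} W^{-1} J_{\vec{t}_0} A_1 \Omega_k$.

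Next, I will use the hypothesis, which is equivalent to $A_1 \Omega_k A_1 = J_{\vec{t}_0}^{-1} D_{\vec{t}_0} J_{\vec{t}_0}^{-1}$. Multiplying the preceding identity by $A_1$ on the left and on the right and substituting this expression for $A_1 \Omega_k A_1$ on both sides converts the statement to $J_{\vec{t}_0}^{-1} D_{\vec{t}_0} W^{-1} = W^{-1} D_{\vec{t}_0} J_{\vec{t}_0}^{-1}$, which after multiplication by $J_{\vec{t}_0}$ on the left and right becomes
\begin{align*}
D_{\vec{t}_0} W^{-1} J_{\vec{t}_0} = J_{\vec{t}_0} W^{-1} D_{\vec{t}_0}.
\end{align*}
Thus the entire lemma is reduced to establishing this last identity from the commutativity assumption.

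To close this final step, I will introduce $X := J_{\vec{t}_0}^{-1/2} W J_{\vec{t}_0}^{-1/2}$ and $Y := J_{\vec{t}_0}^{-1/2} D_{\vec{t}_0} J_{\vec{t}_0}^{-1/2}$. The hypothesis $[X,Y] = 0$ immediately gives $[X^{-1}, Y] = 0$, and writing out $X^{-1} Y = Y X^{-1}$ explicitly as $J_{\vec{t}_0}^{1/2} W^{-1} D_{\vec{t}_0} J_{\vec{t}_0}^{-1/2} = J_{\vec{t}_0}^{-1/2} D_{\vec{t}_0} W^{-1} J_{\vec{t}_0}^{1/2}$, then sandwiching by $J_{\vec{t}_0}^{1/2}$ on both sides, yields exactly $J_{\vec{t}_0} W^{-1} D_{\vec{t}_0} = D_{\vec{t}_0} W^{-1} J_{\vec{t}_0}$, completing the proof. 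The argument is essentially a bookkeeping exercise; its only conceptual input is Lemma \ref{L4}'s reduction of symplectic diagonalizability to a commutator condition, and I do not anticipate a substantive obstacle—although some care is needed to keep track of where the symmetric square roots $J_{\vec{t}_0}^{\pm 1/2}$ sit relative to the non-commuting factors $W$ and $D_{\vec{t}_0}$.
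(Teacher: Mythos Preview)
Your proof is correct and follows essentially the same route as the paper: both arguments invoke Lemma \ref{L4} and reduce its commutator criterion to an elementary algebraic identity derived from the hypothesis $[J_{\vec t_0}^{-1/2}WJ_{\vec t_0}^{-1/2},\,J_{\vec t_0}^{-1/2}D_{\vec t_0}J_{\vec t_0}^{-1/2}]=0$, using the substitution $A_1\Omega_k A_1=J_{\vec t_0}^{-1}D_{\vec t_0}J_{\vec t_0}^{-1}$. The only cosmetic difference is that the paper works \emph{forward}, deriving $(J_{\vec t_0}^{-1}W)(A_1\Omega_k A_1)=(A_1\Omega_k A_1)(WJ_{\vec t_0}^{-1})$ directly and then reading off the Lemma \ref{L4} condition with the roles of the two matrices swapped relative to yours, whereas you work \emph{backward} from the target identity and pass through $W^{-1}$ via $[X^{-1},Y]=0$; the content is identical.
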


\begin{proof}
From $[J_{\vec{t}_0}^{-1/2}W J_{\vec{t}_0}^{-1/2},J_{\vec{t}_0}^{-1/2}D_{\vec{t}_0} J_{\vec{t}_0}^{-1/2}]=0$, we get
$[J_{\vec{t}_0}^{-1}W, (J_{\vec{t}_0}^{-1}D_{\vec{t}_0}J_{\vec{t}_0}^{-1})]=0$.
Next, noticing that $(J_{\vec{t}_0}^{-1}D_{\vec{t}_0}J_{\vec{t}_0}^{-1})=A_1\Omega_k A_1$, we have
\begin{align*}
&J_{\vec{t}_0}^{-1}W
A_1 \Omega_k A_1
=J_{\vec{t}_0}^{-1}W (J_{\vec{t}_0}^{-1}D_{\vec{t}_0}J_{\vec{t}_0}^{-1}) 
 =(J_{\vec{t}_0}^{-1}D_{\vec{t}_0}J_{\vec{t}_0}^{-1})W J_{\vec{t}_0}^{-1} 
 =A_1 \Omega_k A_1W J_{\vec{t}_0}^{-1}.
\end{align*}
The above equalities show that $J_{\vec{t}_0}^{-1}W$ commutes with $A_1 \Omega_k A_1$, which further implies that
\begin{align*}
&A_1^{-1}J_{\vec{t}_0}^{-1} A_1^{-1} A_1WA_1 \Omega_k 
=\Omega_k A_1 W A_1 A_1^{-1}J_{\vec{t}_0}^{-1} A_1^{-1}.
\end{align*}
Using Lemma \ref{L4}, we get
$(A_1 W A_1 )^{-1}$
and $A_1^{-1}J_{\vec{t}_0}^{-1} A_1^{-1}$
are simultaneously symplectic diagonalizable.\qed
\end{proof}

\begin{proofof}{Theorem \ref{thm-tail}}
Define 
$A_1:=
\left|J_{\vec{t}_0}^{-1}D_{\vec{t}_0}J_{\vec{t}_0}^{-1}\right|^{1/2}$
and
$A_2:= 
A_1^{-1}J_{\vec{t}_0}^{-1}D_{\vec{t}_0}J_{\vec{t}_0}^{-1}
A_1^{-1}$.
Applying a suitable orthogonal matrix, we assume that $A_2=\Omega_k$ without loss of generality.

\noindent{\bf Step 1:}
For simplicity, we assume that there is no classical part.
First, we choose an orthogonal matrix $S'$ such that
${S'}^TA_2S'=D$. 
Using Lemma \ref{L5} guarantees that
the condition $[J_{\vec{t}_0}^{-1/2}W J_{\vec{t}_0}^{-1/2},J_{\vec{t}_0}^{-1/2}D_{\vec{t}_0} J_{\vec{t}_0}^{-1/2}]=0$ allows us to simultaneously diagonalize $W$ and $D_{\vec{t}_0}$.
That is, we can choose symplectic matrix $S$ such that
$S^{T} {S'}^T ( A_1 W A_1)^{-1}{S'} S$ and 
$S^{T} {S'}^T A_1^{-1} J_{\vec{t}_0}^{-1} A_1^{-1}{S'} S$
are diagonal matrixces
$ E_{k}(\vec{w})^{-1}$
and
$ E_{k}(\vec{\beta})$ for $k\in\N^*$.
We introduce the local parameter
${\vec{t}'}:=
S^{T} {S'}^T A_1^{-1} \vec{t}$.
Then, $ \vec{t}\cdot W \vec{t}=
\vec{t}'\cdot E_{k}(\vec{w}) \vec{t}'$.

For a sequence of measurement 
$\seq{m}:=\{M_n\}$ to satisfy local asymptotic covariance at $\vec{t}_{0} \in \set{\Theta}$,
according to 
Theorem \ref{CTh3},
we choose a covariant POVM $\widetilde{M}^G$ to satisfy \eqref{Gaussian-limiting}.
Applying Lemma \ref{L2B} to the POVM $\widetilde{M}^G$,
we obtain the desired statement.

\noindent{\bf Step 2:}
We consider the general case.
Now, we choose the local parameter 
$\vec{t}':= J_{\vec{t}_0}^{-1/2} \vec{t}$.
In this coordinate, 
The inverse of the RLD quantum Fisher information is 
 $I+ J_{\vec{t}_0}^{-1/2}D_{\vec{t}_0} J_{\vec{t}_0}^{-1/2}$.
 Since
$J_{\vec{t}_0}^{-1/2}D_{\vec{t}_0} J_{\vec{t}_0}^{-1/2}$ commutes with $J_{\vec{t}_0}^{-1/2}W J_{\vec{t}_0}^{-1/2}$,
the weight matrix has no cross term between the classical and quantum parts.
Using the above discussion and  Lemma \ref{L2B},
 we obtain the desired statement.
\end{proofof}

\section{Extension to global estimation and generic cost functions}\Label{sec-global}
In the previous sections, we focused on local models and cost functions of the form $\tr WV[\wp_{\vec{t}_0,\vec{t}|\seq{m}}]$. In this section, our treatment  will be extended to  global models 
$\{\rho_{\vec{t}}\}_{\vec{t} \in \set{\Theta}}$.
(where the parameter to be estimated is not restricted to a local neighborhood) and to generic cost functions. 

\subsection{Optimal global estimation via local estimation}

Our optimal global estimation is given by combining the two-step method and 
local optimal estimation.
That is, the first step is the application of
full tomography proposed in  \cite{haah2017sample} on $n^{1-x/2}$ copies
with the outcome $\hat{\vec{t}}_0$ for a constant $x\in(0,2/9)$, and
the second step is the local optimal estimation at $\hat{\vec{t}}_0$, given in Section \ref{s63}, on $$a_{n,x}:=n-n^{1-x/2}$$ copies.
Before its full description, we define the
neighborhood $\set{\Theta}_{n,x}(\vec{t})$ of $\vec{t}\in\set{\Theta}$ as
\begin{align}
\Label{neighborhood-def}
\set{\Theta}_{n,x}(\vec{t})&:=\left\{\vec{y}~|~\|\vec{y}-\vec{t}\|\le n^{-\frac{1-x}2}\right\} .
\end{align}

Given a generic model ${\cal M}=\{\rho_{\vec{t}}\}_{\vec{t} \in \set{\Theta}}$ that does not contain
any degenerate state and a weight matrix $W> 0$,
we describe the full protocol as follows.

\begin{description}
\item[(A1)] {\em Localization:}
Perform full tomography proposed in  \cite{haah2017sample} on $n^{1-x/2}$ copies, which is described by a POVM $\{M^{\rm tomo}_{n^{1-x/2}}\}$, for a constant $x\in(0,2/9)$. 
The tomography outputs the first estimate $\hat{\vec{t}}_0$ so that
\begin{align}
&
\Tr \rho_{\vec{t}}^{\otimes n^{1-x/2}}
M^{\rm tomo}_{n^{1-x/2}}
\left(\set{\Theta}_{n,x}(\vec{t}_{\rm g})\right)=1-O\left(e^{-n^{x/2}}\right)\Label{tomo-faithful}
\end{align}
for any true parameter $\vec{t}$.

\item[(A2)] {\em Local estimation:}
Based on the first estimate $\hat{\vec{t}}_0$, 
apply the optimal local measurement $M^{\hat{\vec{t}}_0,a_{n,x}}_{W}$ given in Theorem \ref{prop-fixed}
with the weight matrix $W$. 
If the measurement outcome $\hat{\vec{t}}_1$ of $M^{\hat{\vec{t}}_0,a_{n,x}}_{W}$
is in $\set{\Theta}_{n,x}(\hat{\vec{t}}_0)$, 
output the outcome $\hat{\vec{t}}_1$ as the final estimate; 
otherwise output $\hat{\vec{t}}_0$ as the final estimate.
\end{description}

Denoting the POVM of the whole process by 
$\seq{m}_{W}=\{M^n_{W}\}$,
we obtain the following theorem.

\begin{theo}\Label{THD}
Assume that
a qudit-model ${\cal M}=\{\rho_{\vec{t}}\}_{\vec{t} \in \set{\Theta}}$ does not contain
any degenerate state, 
the parametrization is $C^2$ continuous, 
$\set{\Theta}$ is an open set, and 
${J}_{\vec{t}_0}^{-1} $ exists.
(i) The sequence $\seq{m}_W$ satisfies local asymptotic covariance at any point $\vec{t}_0$ in the parameter space. 
(ii) 
The equation
\begin{align}\Label{global-bound}
\tr W V [\wp_{\vec{t}_{0},\vec{t} |\seq{m}_{W}}]=
\map{C}(W,\vec{t}_{0})
\end{align}
holds for any point $\vec{t}_0 \in \set{\Theta}$ and any $\vec{t}\in\Theta_{n,x,c(\vec{t}_0)}$ corresponding to a non-degenerate state, 
where $\map{C}_{\spc{S}}(W,\vec{t}_{0})$  is  the minimum weighted MSE as defined in Eq. (\ref{global-boundB}).
More precisely, we have
\begin{align}
\limsup_{n \to \infty}\sup_{\vec{t}_0 \in K}
\sup_{\vec{t} \in \Theta_{n,x,c(\vec{t}_0)} } n^{\kappa}
\left\| \wp^n_{\vec{t}_0,\vec{t}| M^n_{W}}
- N[\vec{t}, V_{\vec{t_0}|W}]
\right\|_1 <\infty\Label{DJI4}
\end{align}
for a compact set $K\subset \set{\Theta}$, where $V_{\vec{t_0}|W}$ is defined in Eq. (\ref{V-condition-W}) and $\Theta_{n,x,c(\vec{t}_0)}$ is defined in Eq. (\ref{neighborhood-c}).
Further, when the parameter set $\set{\Theta}$ is bounded and
$x<\kappa$,
we have the following relation.
\begin{align}
\lim_{n \to \infty}
\sup_{\vec{t}_0 \in K}
\sup_{\vec{t} \in \set{\Theta}_{n,x,c(\vec{t}_0)} } 
\left\| 
V [\wp^n_{\vec{t}_0,\vec{t}| M^n_{W}}]
-  V_{\vec{t_0}|W} \right\|_1=0.\Label{DJI5}
\end{align} 
\end{theo}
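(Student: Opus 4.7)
\begin{proofof}{Theorem \ref{THD} (proposal)}
The plan is to reduce the global statement to the local bound of Theorem \ref{prop-fixed} by conditioning on the success of the tomography step (A1). Let $\vec{t}$ be the local parameter so that the true state is $\rho_{\vec{t}_0+\vec{t}/\sqrt n}^{\otimes n}$, and let $\set E_n$ denote the event that the first-stage outcome $\hat{\vec{t}}_0$ lies in the neighborhood $\set\Theta_{n,x}(\vec{t}_0+\vec{t}/\sqrt n)$. By \eqref{tomo-faithful}, $\mathbf{Pr}(\set E_n^c)=O(e^{-n^{x/2}})$, which is superpolynomially small. On the event $\set E_n$, the true parameter written in the local coordinates centered at $\hat{\vec{t}}_0$ has $\ell^2$-norm at most $\sqrt{a_{n,x}}\cdot n^{-(1-x)/2}=O(n^{x/2})$ (recall $a_{n,x}=n-n^{1-x/2}$), which sits safely inside the Q-LAN window $\set\Theta_{a_{n,x},x,c(\hat{\vec t}_0)}$ since $x/2<x<1/9$ and $c(\hat{\vec t}_0)$ is continuous in $\hat{\vec t}_0$ (hence uniformly bounded away from $0$ on the compactification of any compact $K\subset\set\Theta$).

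First I would establish \eqref{DJI4}. Conditioned on $\set E_n$ and on the observed $\hat{\vec{t}}_0$, step (A2) applies the locally optimal POVM $M^{\hat{\vec t}_0,a_{n,x}}_W$ of Theorem \ref{prop-fixed} to the remaining $a_{n,x}$ copies. The compact uniform estimate \eqref{DJI2} then yields that the distribution of $\sqrt{a_{n,x}}(\hat{\vec t}_1-\hat{\vec t}_0)$ is within $n^{-\kappa}$ trace distance of $N[\sqrt{a_{n,x}}(\vec{t}_0+\vec t/\sqrt n-\hat{\vec t}_0),V_{\hat{\vec t}_0|W}]$, uniformly in $\hat{\vec t}_0\in K'$ for any compact $K'\supset K$ that contains a neighborhood of $K$. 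Two changes of variable then remain: (a) rescaling from $\sqrt{a_{n,x}}$ to $\sqrt n$, which changes the covariance by a factor $a_{n,x}/n=1-n^{-x/2}$ and the mean by a factor $\sqrt{a_{n,x}/n}$, both producing corrections of order $n^{-x/2}\le n^{-\kappa}$ in total variation for Gaussian distributions with compactly supported mean; and (b) replacing $V_{\hat{\vec t}_0|W}$ by $V_{\vec t_0|W}$, which by the $C^2$-regularity of the parametrization introduces an error of order $\|\hat{\vec t}_0-\vec t_0\|=O(n^{-(1-x)/2})\le n^{-\kappa}$. Averaging over $\hat{\vec t}_0$ against its tomography distribution is harmless since the estimate above is uniform, and the contribution from $\set E_n^c$ is $O(e^{-n^{x/2}})$, negligible compared to $n^{-\kappa}$. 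This yields \eqref{DJI4}, and the Gaussian limiting law together with its translation covariance in $\vec t$ gives assertion (i).

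For \eqref{DJI5} I would deduce convergence of second moments from the convergence in total variation. In general, convergence of \emph{distributions} does not entail convergence of \emph{variances}, so here the boundedness of $\set\Theta$ is essential: it lets us bound $\|\hat{\vec t}_n-\vec t\|$ by a deterministic constant $C_{\set\Theta}$ (since the estimator always outputs a point in $\set\Theta_{n,x}(\hat{\vec t}_0)\subset\set\Theta$ or $\hat{\vec t}_0$ itself). Splitting the expectation according to whether $\sqrt n(\hat{\vec t}_n-\vec t_0-\vec t/\sqrt n)$ lies in a ball of radius $R_n:=n^{\kappa/3}$ or outside, the inside contribution is controlled by the total variation bound \eqref{DJI4} (losing a factor of $R_n^2$) and the outside contribution is controlled by the Gaussian tail of $N[\vec t,V_{\vec t_0|W}]$ together with the $O(e^{-n^{x/2}})$ tomography failure probability and the deterministic bound $nC_{\set\Theta}^2$. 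The constraint $x<\kappa$ ensures that the deterministic bound $nC_{\set\Theta}^2$ multiplied by $e^{-n^{x/2}}$ still decays to zero, while the Gaussian tail contribution decays faster than any polynomial. Choosing $\kappa$ small enough makes all three error terms vanish, yielding \eqref{DJI5} uniformly in $\vec t_0\in K$ and $\vec t\in\set\Theta_{n,x,c(\vec t_0)}$.

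The main obstacle I anticipate is the compact uniformity of the Q-LAN constants $c(\vec t_0)$ and of the implicit constant in \eqref{DJI2} as $\hat{\vec t}_0$ ranges over a random neighborhood of $\vec t_0$. This requires showing that, uniformly over $\vec t_0\in K$, the constants $c(\hat{\vec t}_0)$ and the Q-LAN channels $\map T^{\rm QLAN}_{n,\hat{\vec t}_0}$ behave continuously in $\hat{\vec t}_0$, which in turn rests on the $C^2$-continuity of the parametrization and the assumed non-degeneracy of $\rho_{\vec t_0}$ (eigenvalues separated by a uniform gap over $K$). A further subtlety is that, when $\hat{\vec t}_1\notin\set\Theta_{n,x}(\hat{\vec t}_0)$, the estimator outputs $\hat{\vec t}_0$ instead; this event has probability $O(n^{-\kappa})$ by \eqref{DJI4} applied with a crude weight, so its effect on both \eqref{DJI4} and \eqref{DJI5} is absorbed into the stated error order.
\end{proofof}
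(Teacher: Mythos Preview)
Your overall strategy matches the paper's (condition on tomography success, invoke the compact-uniform estimate \eqref{DJI2} at the random center $\hat{\vec t}_0$, absorb the rescaling $\sqrt{a_{n,x}}\to\sqrt n$ and the shift $V_{\hat{\vec t}_0|W}\to V_{\vec t_0|W}$), and your treatment of \eqref{DJI4} is essentially the paper's Steps~1--2.

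For \eqref{DJI5}, however, your truncation argument has a gap. You bound the ``outside $R_n$'' contribution under $\wp^n$ only by the Gaussian tail and by $nC_{\set\Theta}^2\cdot e^{-n^{x/2}}$ (tomography failure). You do not cover the event ``tomography succeeds and $\sqrt n(\hat{\vec t}-\vec t_g)$ lies outside the ball of radius $R_n$''; by the TV bound its $\wp^n$-probability is only $O(n^{-\kappa})$, and combined with your crude pointwise bound $nC_{\set\Theta}^2$ this gives $O(n^{1-\kappa})$, which diverges. The missing ingredient---made explicit in the paper's three-case analysis (Step~3)---is that the fallback rule in (A2) gives a \emph{deterministic} bound on the rescaled error whenever tomography succeeds: whether the output is $\hat{\vec t}_1\in\set\Theta_{n,x}(\hat{\vec t}_0)$ or the fallback $\hat{\vec t}_0$, one has $\sqrt n\,\|\hat{\vec t}-\vec t_g\|\le 2n^{x/2}$. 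Replacing $nC_{\set\Theta}^2$ by $4n^{x}$ on that event yields a contribution of order $n^{x-\kappa}$, which vanishes precisely under the hypothesis $x<\kappa$. (Equivalently, take $R_n\ge 3n^{x/2}$ so that ``outside on success'' is empty; your choice $R_n=n^{\kappa/3}$ does not ensure this when $x>2\kappa/3$.) Finally, your remark ``choosing $\kappa$ small enough'' points the wrong way: $\kappa$ is the fixed Q-LAN convergence rate, not a tunable parameter, and making it smaller would only weaken \eqref{DJI4}.
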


Here, we should remark the key point of the derivation.
The existing papers \cite{gill2000state,hayashi2005statistical} addressed the achievability 
of $\min_{M}\tr W J_{\vec{t}|M}^{-1}$ with the two-step method, where 
$J_{\vec{t}|M}$ is the Fisher information matrix of the 
distribution family $\{\wp_{\vec{t}|M}\}_{\vec{t}}$,
which expresses the bounds among separable measurement \cite[Exercise 6.42]{hayashi2017quantum}.
Hence it can be called the separable bound.
In the one-parameter case,
the separable bound equals the Holevo bound.
To achieve the separable bound, 
we do not consider the sequence of measurement.
Hence, we do not handle a complicated convergence.
The global achievability of the separable bound can be easily shown by the two-step method \cite{gill2000state,hayashi2005statistical}.
However, in our setting, 
we need to handle the sequence of measurement to achieve the local optimality.
Hence, we need to carefully consider the compact uniformity and the order estimate of 
the convergence in Theorem \ref{prop-fixed}. 
In the following proof,   
we employ our evaluation with such detailed analysis as in Eq. \eqref{DJI2}.  


\noindent\emph{Proof.}
\noindent{\bf Step 1:}  Define by $\vec{t}_{\rm g}:=\vec{t}_0+\frac{\vec{t}}{\sqrt{n}}$ the true value of the parameters.
By definition, we have $\|\vec{t}_{\rm g}- \hat{\vec{t}}_0\|\le n ^{-\frac{1-x}{2}}  $ with probability $1-O(e^{-n^{x/2}})$
and $\|\vec{t}_{\rm g}- \vec{t}_0\|\le c(\vec{t}_0) n ^{-\frac{1}{2}+x}  $ by definition. Since the error probability vanishes exponentially, it would not affect the scaling of MSE.
In this step,
we will show
\begin{align}
&\left\|\wp^{n}_{\vec{t}_0,\vec{t}| M^{\hat{\vec{t}}_0,a_{n,x}}_{W}}
-N[\vec{t}, V_{\vec{t}_0|W}]\right\|_1  =O(n^{-\kappa}). \Label{ATG}
\end{align}

Since $\|\vec{t}_{\rm g}- \hat{\vec{t}}_0\|\le n ^{-\frac{1-x}{2}}  $
and $\|\vec{t}_{\rm g}- \vec{t}_0\|\le c(\vec{t}_0) n ^{-\frac{1}{2}+x}  $ imply
$\|\vec{t}_0- \hat{\vec{t}}_0\|\le 2 c(\vec{t}_0) n ^{-\frac{1}{2}+x}  $, we have
\begin{align}
\| N[\vec{0}, V_{\vec{t}_0|W}] - N[\vec{0}, V_{\hat{\vec{t}}_0|W}]\|_1
&= O( n ^{-\frac{1}{2}+x}  ) .\Label{FRF}
\end{align} 
Eq. \eqref{DJI2} of Theorem \ref{prop-fixed} implies 
\begin{align}
\Big\|\wp^{a_{n,x}}_{\vec{t}_0,\vec{t}| M^{\hat{\vec{t}}_0,a_{n,x}}_{W}}
-N[\vec{0}, V_{\hat{\vec{t}}_0|W}]\Big\|_1&=O(n^{-\kappa}).
\end{align}
Since 
$\wp^{n}_{\vec{t}_0,\vec{t}| M^{\hat{\vec{t}}_0,a_{n,x}}_{W}}(\set{B})
=\wp^{a_{n,x}}_{\vec{t}_0,\vec{t}| M^{\hat{\vec{t}}_0,a_{n,x}}_{W}}\left(\frac{\sqrt{a_{n,x}}\set{B}}{\sqrt{n}}\right)$,
\begin{align}
&\Big\|\wp^{n}_{\vec{t}_0,\vec{t}| M^{\hat{\vec{t}}_0,a_{n,x}}_{W}}
-N[\vec{t}, \frac{n}{a_{n,x}}V_{\vec{t}_0|W}]\Big\|_1 
=
 \Big\|\wp^{a_{n,x}}_{\vec{t}_0,\vec{t}| M^{\hat{\vec{t}}_0,a_{n,x}}_{W}}
-N[\vec{t}, V_{\vec{t}_0|W}]\Big\|_1 \nonumber \\
\le &
\Big\|\wp^{a_{n,x}}_{\vec{t}_0,\vec{t}| M^{\hat{\vec{t}}_0,a_{n,x}}_{W}}
-N[\vec{0}, V_{\hat{\vec{t}}_0|W}]\Big\|_1
+\|N[\vec{t}, V_{\vec{t}_0|W}] - N[\vec{t}, V_{\hat{\vec{t}}_0|W}]\|_1 \nonumber \\
=& O(n^{-\kappa})+O( n ^{-\frac{1}{2}+x}  ) =O(n^{-\kappa}).
\Label{FJU}
\end{align}

As we have
\begin{align}
\| N[\vec{0}, V_{\vec{t}_0|W}]-N[\vec{0}, \frac{n}{a_{n,x}}V_{\vec{t}_0|W}]\|_1 
&=O(\frac{n}{a_{n,x}}-1) \nonumber \\
&=O((1-n^{x/2})^{-1}-1)=O(n^{x/2}),
\end{align}
we obtain
\begin{align}
&\|\wp^{n}_{\vec{t}_0,\vec{t}| M^{\hat{\vec{t}}_0,a_{n,x}}_{W}}
-N[\vec{t}, V_{\vec{t}_0|W}]\|_1\nonumber \\
\le
&\| N[\vec{t}, V_{\vec{t}_0|W}]-N[\vec{t}, \frac{n}{a_{n,x}}V_{\vec{t}_0|W}]\|_1 
+\|\wp^{n}_{\vec{t}_0,\vec{t}| M^{\hat{\vec{t}}_0,a_{n,x}}_{W}}
-N[\vec{t}, \frac{n}{a_{n,x}}V_{\vec{t}_0|W}]\|_1\nonumber \\
=&O(n^{x/2})+O(n^{-\kappa}).
\end{align}

\noindent{\bf Step 2:}
We will show \eqref{DJI4}.
First, we discuss two exceptional cases 
$\|\vec{t}_{\rm g}- \hat{\vec{t}}_0\| > n ^{-\frac{1-x}{2}}  $
and $\|\hat{\vec{t}}_1- \hat{\vec{t}}_0\|> n ^{-\frac{1-x}{2}}  $.
Eq. \eqref{tomo-faithful} guarantees that
\begin{align}
\Tr \rho_{\vec{t}_{\rm g}}^{\otimes n^{1-x/2}}
M^{\rm tomo}_{n^{1-x/2}}
\left( \Big\{\hat{\vec{t}}_0  \Big| \|\vec{t}_{\rm g}- \hat{\vec{t}}_0\| > n ^{-\frac{1-x}{2}} \Big\}
\right)
=O\left(e^{-n^{x/2}}\right).
\Label{tomo-faithful2}
\end{align}
Eq. \eqref{FJU} and the property of normal distribution implies
\begin{align}
& \Tr \rho_{\vec{t}_{\rm g}}^{\otimes a_{n,x}}
M^{\hat{\vec{t}}_0,a_{n,x}}_{W}
\left( \Big\{\hat{\vec{t}}_0  \Big| 
\|\hat{\vec{t}}_1- \hat{\vec{t}}_0\|> n ^{-\frac{1-x}{2}} \Big\}
\right) \nonumber \\
=&
O(n^{-\kappa})+
N[\vec{t}, \frac{n}{a_{n,x}}V_{\vec{t}_0,|W}]
( \{\vec{t}| ~ \|\vec{t} \|> n^{x/2} \}) \nonumber \\
=&
O(n^{-\kappa})+
O(e^{- O(n^{x/2}) })
=O(n^{-\kappa}).
\Label{TI2}
\end{align}
When 
$\|\vec{t}_{\rm g}- \hat{\vec{t}}_0\| \le n ^{-\frac{1-x}{2}}  $
and $\|\hat{\vec{t}}_1- \hat{\vec{t}}_0\| \le n ^{-\frac{1-x}{2}}  $,
Eq. \eqref{ATG} holds under the condition $\|\vec{t}_{\rm g}- \vec{t}_0\|\le c(\vec{t}_0) n ^{-\frac{1}{2}+x}  $,
which implies that
\begin{align}
\sup_{\vec{t} \in \set{\Theta}_{n,x,c(\vec{t}_0)} } 
\| \wp^n_{\vec{t}_0,\vec{t}| M^n_{W}}
- N[\vec{t}, V_{\vec{t_0}|W}]
\|_1 =O(n^{-\kappa}).\Label{DJI5B}
\end{align}
Since the above evaluation is compactly uniform with respect to $\vec{t}_0$,
we have \eqref{DJI4}.

\noindent{\bf Step 3:}
We will show 
\begin{align}
\lim_{n \to \infty}
\sup_{\vec{t}_0 \in K}
\sup_{\vec{t} \in \set{\Theta}_{n,x,c(\vec{t}_0)} } 
\Big\| 
V [\wp^n_{\vec{t}_0,\vec{t}| M^n_{W}}]
-  \frac{n}{a_{n,x}} V_{\vec{t_0}|W} \Big\|_1=0\Label{DJI6}
\end{align}
because Eq. \eqref{DJI6} implies \eqref{DJI5} due to the convergence $\frac{n}{a_{n,x}} \to 1$.
There are three cases.
(1) $\|\vec{t}_{\rm g}- \hat{\vec{t}}_0\| > n ^{-\frac{1-x}{2}}  $,
(2) $\|\hat{\vec{t}}_1- \hat{\vec{t}}_0\|> n ^{-\frac{1-x}{2}}  $ and
$\|\vec{t}_{\rm g}- \hat{\vec{t}}_0\| \le n ^{-\frac{1-x}{2}}  $,
and (3) the remaining case.

The compactness of $\set{\Theta}$ guarantees that 
the error $ n (\hat{\vec{t}} - \vec{t}_{\rm g})(\hat{\vec{t}} - \vec{t}_{\rm g})^T $
is bounded by $n C$ with a constant $C$.
Due to \eqref{tomo-faithful2}, the contribution of the first case is bounded by $
nC \cdot O(e^{-n^{x/2}})$, which goes to zero.

In the second case, since 
$\hat{\vec{t}}_0=\hat{\vec{t}}$,
the error $ n (\hat{\vec{t}} - \vec{t}_{\rm g})(\hat{\vec{t}} - \vec{t}_{\rm g})^T $
is bounded by $n (n ^{-\frac{1-x}{2}})^2=n^x $. 
Due to \eqref{TI2}, the contribution of the second case is bounded by 
$n^x \cdot O(n^{-\kappa})=O(n^{x-\kappa})$, which goes to zero.

In the third case, 
since 
$\| {\vec{t}}_{\rm g}-\hat{\vec{t}}_1\| \le
\| {\vec{t}}_{\rm g}-\hat{\vec{t}}_0\| +\| \hat{\vec{t}}_0-\hat{\vec{t}}_1\| \le
2 n ^{-\frac{1-x}{2}}$,
the error $ n (\hat{\vec{t}} - \vec{t}_{\rm g})(\hat{\vec{t}} - \vec{t}_{\rm g})^T $
is bounded by $ 2n (n^{-\frac{1-x}{2}})^2=2 n^x $. 
Due to \eqref{FJU}, the contribution of the second case is bounded by 
$2 n^x \cdot O(n^{-\kappa})=O(n^{x-\kappa})$, which goes to zero.
Therefore, we obtain \eqref{DJI6}.
\qed

\medskip

\subsection{Generic cost functions}
Finally, we show that results in this work hold also for any cost function $c(\hat{\vec{t}},\vec{t})$, which is bounded and has a symmetric expansion, in the sense of satisfying the following two conditions:
\begin{description}
\item[(B1)]  $c(\hat{\vec{t}},\vec{t})$ has a continuous third derivative, so that it can be expanded as  $c(\hat{\vec{t}},\vec{t})=
(\hat{\vec{t}}^T-\vec{t}^T)W_{\vec{t}}(\hat{\vec{t}}-\vec{t})+O(\|\hat{\vec{t}}-\vec{t}\|^3)$ 
as $\hat{\vec{t}}$ is close to $\vec{t} $,
where the matrix $W_{\vec{t}}\ge 0$ is a continuous function of $\vec{t}$.
\item[(B2)] $c(\hat{\vec{t}},\vec{t})\le C$ for a constant $C>0$ and for any $\hat{\vec{t}},\vec{t}\in\R^k$.
\end{description}
To adopt this situation, 
we replace the step (A2) by the following step (A2)':
\begin{description}
\item[(A2)'] 
Based on the first estimate $\hat{\vec{t}}_0$, 
apply the optimal local measurement $M^{\hat{\vec{t}}_0,a_{n,x}}_{W_{\hat{\vec{t}}_0}}$ 
given in Theorem \ref{prop-fixed}
with the weight matrix $W_{\hat{\vec{t}}_0}$. 
If the measurement outcome $\hat{\vec{t}}_1$ of $M^{\hat{\vec{t}}_0,a_{n,x}}_{W_{\hat{\vec{t}}_0}}$
is in $\set{\Theta}_{n,x}(\hat{\vec{t}}_0)$, 
output the outcome $\hat{\vec{t}}_1$ as the final estimate; 
otherwise output $\hat{\vec{t}}_0$ as the final estimate $\hat{\vec{t}}$.
\end{description}

Denoting the POVM of the whole process by 
$\seq{m}_{c}=\{M^n_c\}$,
we have the following result:

\begin{theo}\Label{theo-cost}
Assume the same assumption for 
the model ${\cal M}$ as Theorem \ref{THD}.
(i) When 
a sequence of measurements 
$\seq{m}:=\{M_{n}(d\hat{\vec{t}})\}$  
satisfies local  asymptotic covariance at $\vec{t}_{0} \in \set{\Theta}$
and a cost function $c$ satisfies condition (B1), 
the inequality
\begin{align}\Label{H7}
\lim_{n\to\infty} n\cdot c_{\vec{t}_{0}+\frac{\vec{t}}{\sqrt{n}}}(\seq{m})
\ge 
\map{C}(W_{\vec{t}_{0}},\vec{t}_{0})
\end{align}
holds, where $\map{C}_{\spc{S}}(W_{\vec{t}_{0}},\vec{t}_{0})$ is defined in Eq. (\ref{global-boundB}) and
\begin{align*}
c_{\vec{t}_{0}+\frac{\vec{t}}{\sqrt{n}}}(\seq{m})
:=\int c\left(\hat{\vec{t}}_n,\vec{t}_{0}+\frac{\vec{t}}{\sqrt{n}}\right)\,\Tr\rho_{\vec{t}_{0}+\frac{\vec{t}}{\sqrt{n}}}^{\otimes n}M_{n}\left(d\hat{\vec{t}}_{n}\right).
\end{align*}
(ii) In addition, if $c$ also satisfies Condition (B2), 
$\seq{m}_{c}=\{M^n_c\}$ is locally asymptotically covariant and attains the equality in (\ref{H7}) at any point $\vec{t}_{0} \in \set\Theta$ corresponding to a non-degenerate state.
\end{theo}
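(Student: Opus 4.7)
The plan is to reduce both parts of Theorem~\ref{theo-cost} to the quadratic results of Theorems~\ref{prop-fixed} and~\ref{THD} via a second-order Taylor expansion of $c$ around the true parameter $\vec{t}_{\rm g} := \vec{t}_0 + \vec{t}/\sqrt{n}$. Changing variables to $\hat{\vec{u}} := \sqrt{n}(\hat{\vec{t}}_n - \vec{t}_0)$, whose law is exactly $\wp^n_{\vec{t}_0,\vec{t}|M_n}$, condition (B1) gives
\begin{align*}
n\cdot c(\hat{\vec{t}}_n, \vec{t}_{\rm g}) = (\hat{\vec{u}}-\vec{t})^T W_{\vec{t}_{\rm g}} (\hat{\vec{u}}-\vec{t}) + R_n(\hat{\vec{u}}), \qquad R_n(\hat{\vec{u}}) = O\!\left(\frac{\|\hat{\vec{u}}-\vec{t}\|^3}{\sqrt{n}}\right),
\end{align*}
whenever $\|\hat{\vec{u}}-\vec{t}\|/\sqrt{n}$ lies in the local neighborhood where the expansion holds. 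Continuity of $\vec{t}\mapsto W_{\vec{t}}$ further lets me replace $W_{\vec{t}_{\rm g}}$ by $W_{\vec{t}_0}$ at the cost of an extra $O(\|\hat{\vec{u}}-\vec{t}\|^2/\sqrt{n})$ term.

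For part (i), I would combine the weak convergence $\wp^n_{\vec{t}_0,\vec{t}|M_n} \to \wp_{\vec{t}_0,\vec{t}|\seq{m}}$ (granted by local asymptotic covariance) with a truncation argument. Fixing a radius $R$ and restricting attention to the ball $\{\|\hat{\vec{u}}-\vec{t}\|\le R\}$, the remainder is uniformly $O(R^3/\sqrt{n})\to 0$, while the bounded continuous quadratic integrand converges to its integral against $\wp_{\vec{t}_0,\vec{t}|\seq{m}}$ restricted to the ball. Sending $R\to\infty$ by monotone convergence (using that the quadratic is non-negative) yields
\begin{align*}
\liminf_{n\to\infty} n\cdot c_{\vec{t}_{\rm g}}(\seq{m}) \;\ge\; \tr W_{\vec{t}_0}\, V[\wp_{\vec{t}_0,\vec{t}|\seq{m}}] \;\ge\; \map{C}(W_{\vec{t}_0},\vec{t}_0),
\end{align*}
the last step being Theorem~\ref{prop-fixed}. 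The case of infinite MSE is trivial.

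For part (ii), the two-step design of $\seq{m}_c$ forces $\|\hat{\vec{t}}-\hat{\vec{t}}_0\| \le n^{-(1-x)/2}$ deterministically, while the tomography guarantee \eqref{tomo-faithful} yields $\|\hat{\vec{t}}_0-\vec{t}_{\rm g}\| \le n^{-(1-x)/2}$ with probability $1-O(e^{-n^{x/2}})$. On this successful event $E_n$, one has $\|\hat{\vec{u}}-\vec{t}\| \le 2n^{x/2}$, so the cubic remainder is uniformly $O(n^{3x/2-1/2})\to 0$ thanks to $x<2/9<1/3$. By continuity of $W$, the random weight $W_{\hat{\vec{t}}_0}$ used in step (A2)' is $o(1)$-close to $W_{\vec{t}_0}$, so Theorem~\ref{THD} applied conditionally on $\hat{\vec{t}}_0$ (plus an $\epsilon$--$\delta$ continuity argument absorbing the randomness of $W_{\hat{\vec{t}}_0}$) delivers $\|\wp^n_{\vec{t}_0,\vec{t}|M^n_c}\mathbf{1}_{E_n} - N[\vec{t},V_{\vec{t}_0|W_{\vec{t}_0}}]\mathbf{1}_{E_n}\|_1 = o(1)$, with an analogue of \eqref{DJI5} giving $V[\wp^n\cdot\mathbf{1}_{E_n}] \to V_{\vec{t}_0|W_{\vec{t}_0}}$. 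Integrating the quadratic part then yields $\tr W_{\vec{t}_0} V_{\vec{t}_0|W_{\vec{t}_0}} = \map{C}(W_{\vec{t}_0},\vec{t}_0)$ by \eqref{V-condition-W} and \eqref{NH-quantity}. On the failure event, condition (B2) caps the contribution by $nC\cdot O(e^{-n^{x/2}})\to 0$; local asymptotic covariance of $\seq{m}_c$ follows from the translation covariance of the limiting Gaussian.

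The main obstacle is controlling the cubic remainder \emph{in expectation} rather than pointwise. For (i), without any global bound on $c$, the truncation-plus-monotone-convergence detour is essential, and one tacitly uses that $c$ is non-negative (or bounded below by a constant, which may be subtracted without affecting the scaled asymptotics). For (ii), the obstruction is twofold: the weight $W_{\hat{\vec{t}}_0}$ in step (A2)' depends on the random first-stage estimate, so Theorem~\ref{THD} has to be deployed with a random weight and the resulting approximation $V_{\hat{\vec{t}}_0|W_{\hat{\vec{t}}_0}}\to V_{\vec{t}_0|W_{\vec{t}_0}}$ argued by continuity; and the exponentially rare tomography-failure event must be controlled uniformly in $n$, which is precisely what condition (B2) buys by capping $n\cdot c$ by $nC$ against the $O(e^{-n^{x/2}})$ failure probability.
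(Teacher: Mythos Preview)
Your proposal follows essentially the same route as the paper: for part (i), Taylor-expand the cost, change to rescaled variables, and pass to the limiting quadratic MSE bounded below by $\map{C}(W_{\vec{t}_0},\vec{t}_0)$; for part (ii), re-run the case analysis of Theorem~\ref{THD} with the random weight $W_{\hat{\vec{t}}_0}$, handling the weight drift by continuity (the paper's \eqref{FRF2}), the cubic remainder by the deterministic bound $\|\hat{\vec{t}}-\vec{t}_{\rm g}\| = O(n^{-(1-x)/2})$, and the tomography-failure event by (B2). Your truncation-plus-monotone-convergence treatment in part (i) is in fact more careful than the paper's own proof, which simply writes the limit as an equality without addressing the interchange of limit and integral.
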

Theorem \ref{theo-cost} is reduced to a bound for the (actual) MSE when $c(\hat{\vec{t}},\vec{t})=(\hat{\vec{t}}^T-\vec{t}^T)W(\hat{\vec{t}}-\vec{t})$ for $W\ge 0$.
Therefore, bounds in this work, Eqs. (\ref{HB}) and (\ref{NHB}) for instance, are also attainable bounds for the MSE of any locally asymptotically unbiased measurement.

\medskip
\noindent\emph{Proof.}
\noindent{\bf Step 1:}
We prove (1).
Consider any sequence of asymptotically covariant measurements $\seq{m}_{\vec{t}_0}:=\{M_{n,\vec{t}_0}\}$ at $\vec{t}_0$. Denote by $\vec{t}_{\rm g}:=\vec{t}_0+\frac{\vec{t}}{\sqrt{n}}$ the true value of the parameters.
For a cost function $c$ satisfying (ii), we have
\begin{align*}
 &\lim_{n\to\infty} n\cdot c_{\vec{t}_{0}+\frac{\vec{t}}{\sqrt{n}}}(\seq{m})\\
 =&\lim_{n\to\infty}n\cdot\int c(\hat{\vec{t}}_n,\vec{t}_{\rm g})\,\Tr\rho_{\vec{t}_{\rm g}}^{\otimes n}M_{n,\vec{t}_{0}}\left(d\hat{\vec{t}}_{n}\right)\\
=&\lim_{n\to\infty}\int \left(\vec{t}'^TW_{\vec{t}_{\rm g}}\vec{t'}+\frac{1}{\sqrt{n}}O(\|\vec{t'}\|^3)\right)\ \wp^n_{\vec{t}_0,\vec{t}|M_{n,\vec{t}_{0}}}\left(d\vec{t}'\right) \quad
\big(\vec{t}':=\sqrt{n}(\hat{\vec{t}}_n-\vec{t}_{\rm g}) \big)\\
=&\int \vec{t}'^TW_{\vec{t}_{0}}\vec{t'}\ \wp_{\vec{t}_{0},\vec{t}|\seq{m}_{\vec{t}_0}}(d\vec{t}')
 \ge \ \map{C}(W_{\vec{t}_{0}},\vec{t}_{0}).
\end{align*}

\noindent{\bf Step 2:}
We prove (2).
We replace $W $ by $W_{\vec{t}}$ in the proof of Theorem \ref{THD}.
In this replacement, \eqref{FRF} is replaced by 
\begin{align}
\| N[\vec{0}, V_{\vec{t}_0|W_{{\vec{t}_0}}}] - N[\vec{0}, V_{\hat{\vec{t}}_0|W_{\hat{\vec{t}}_0}}]\|_1
&= O( n ^{-\frac{1}{2}+x}  ) \Label{FRF2}
\end{align}
where $x\in(0,2/9)$.
Hence, the contributions of  the first and second cases of Step 3 of the proof of Theorem \ref{THD}
go to zero.

In the third case of Step 3 of the proof,
we have $\| {\vec{t}}_{\rm g}-\hat{\vec{t}}_1\| \le 2 n ^{-\frac{1-x}{2}}$,
Hence, 
\begin{align}
n c(\hat{\vec{t}}_1,\vec{t}_{\rm g})-
n (\hat{\vec{t}}_1^T-\vec{t}_{\rm g}^T)W_{\vec{t}_{\rm g}}(\hat{\vec{t}}_1-\vec{t}_{\rm g})
=n O(\|\hat{\vec{t}}_1-\vec{t}_{\rm g}\|^3)
=O (n ^{-\frac{1-3x}{2}}) \to 0.
\end{align} 
Hence, in the contribution of the third case, 
we can replace the expectation of $n c(\hat{\vec{t}}_1,\vec{t}_{\rm g})$
by the weighted MSE with weight $W_{\vec{t}_{\rm g}}$.
Hence, we obtain the part (2).
\qed

\medskip

\section{Applications.}\Label{sec:applications}
In this section, we show how to evaluate the MSE bounds in several concrete examples.

\subsection{Joint measurement of observables}\Label{subsec-ob}

Here we consider the fundamental problem  of the joint measurement of two observables. For simplicity we choose to analyze qubit systems, although the approach can be readily generalized to arbitrary dimension. The task  is to simultaneously estimate the expectation of two observables $A$ and $B$ in a qubit system. The observables can be  expressed as
$A=\vec{a}\cdot\vec{\sigma}$ and $B=\vec{b}\cdot\vec{\sigma}$ with $\vec{\sigma}=(\sigma_x,\sigma_y,\sigma_z)$ being the vector of Pauli matrices. We assume without loss of generality that $|\vec{a}|=|\vec{b}|=1$ and $\vec{a}\cdot\vec{b}\in[0,1)$. The state of an arbitrary qubit system can be expressed as
\begin{align*}
\rho:=\frac12\left(I+\vec{n}\cdot\vec{\sigma}\right),
\end{align*}
where $\vec{n}$ is the Bloch vector.

With this notation, the task is reduced to estimate the parameters
\begin{align*}
x:=\vec{a}\cdot\vec{n},\qquad y:=\vec{b}\cdot\vec{n}.
\end{align*}
  It is also convenient to introduce a third unit vector $\vec{c}$ orthogonal to $\vec{a}$ and $\vec{b}$ so that $\{\vec{a},\vec{b},\vec{c}\}$ form a (non-orthogonal) normalized basis  of $\R^3$.   In terms of this vector, we can  define  the parameter $z:=\vec{c}\cdot\vec{n}$.  
In this way, we extend the problem to the full model containing all qubit states, where $x$, $y$ are the parameters of interest and $z$ is a nuisance parameter. Under this parameterization, we can evaluate the SLD operators for $x,y,$ and $z$, as well as the SLD Fisher information matrix and the D matrix (see Appendix \ref{app-example1} for details), substituting which into the bound (\ref{NHB}) yields:
\begin{align}
 \tr V W\ge \tr W\left(\begin{array}{cc}\frac{1-|\vec{n}|^2+y'^2(1-s^2)+z^2}{1-|\vec{n}|^2+x'^2+2x'y's+y'^2+z^2}&-\frac{x'y'(1-s^2)+(1-|\vec{n}|^2+z^2)s}{1-|\vec{n}|^2+x'^2+2x'y's+y'^2+z^2} \\
\\
-\frac{x'y'(1-s^2)+(1-|\vec{n}|^2+z^2)s}{1-|\vec{n}|^2+x'^2+2x'y's+y'^2+z^2} &\frac{1-|\vec{n}|^2+x'^2(1-s^2)+z^2}{1-|\vec{n}|^2+x'^2+2x'y's+y'^2+z^2} \end{array}\right)\nonumber\\
\qquad+\frac{1}2\tr\left|\sqrt{W}\left(\begin{array}{cc}0&-2z\sqrt{1-s^2}
\\
2z\sqrt{1-s^2}&0\end{array}\right)\sqrt{W}\right|,\Label{bound-example1}
\end{align}
where  $s:=\vec{a}\cdot\vec{b}$, $x'=\frac{x-ys}{1-s^2}$, and $y'=\frac{y-xs}{1-s^2}$.

The tradeoff between the measurement precisions for  the two observables is of fundamental interest. Substituting the expressions of D-matrix and the SLD Fisher information matrix (see Appendix \ref{app-example1}) into Eq. (\ref{disturbance-bound-two}), we obtain
\begin{align*}
 \Delta_A+\Delta_B&\ge2|z|\sqrt{1-s^2},
\end{align*}
which characterizes the precision tradeoff in joint measurements of qubit observables.

\subsection{   Direction estimation in the presence of noise}
Consider the task of estimating a pure qubit state $|\psi\>=\cos\frac\theta2|0\>+e^{i\varphi}\sin\frac\theta2|1\>$, which can also be regarded as determining  a direction in space, as qubits are often realized in spin-$1/2$ systems. In a practical setup, it is necessary to take into account the effect of noise, under which the qubit becomes mixed. For noises with strong symmetry, like depolarization, the usual MSE bound produces a good estimate of the error. For other kind of noises, it is essential to introduce nuisance parameters, and to use the techniques introduced in this paper.

 \begin{figure}[t!]
\begin{center}
  \includegraphics[width=0.6\linewidth]{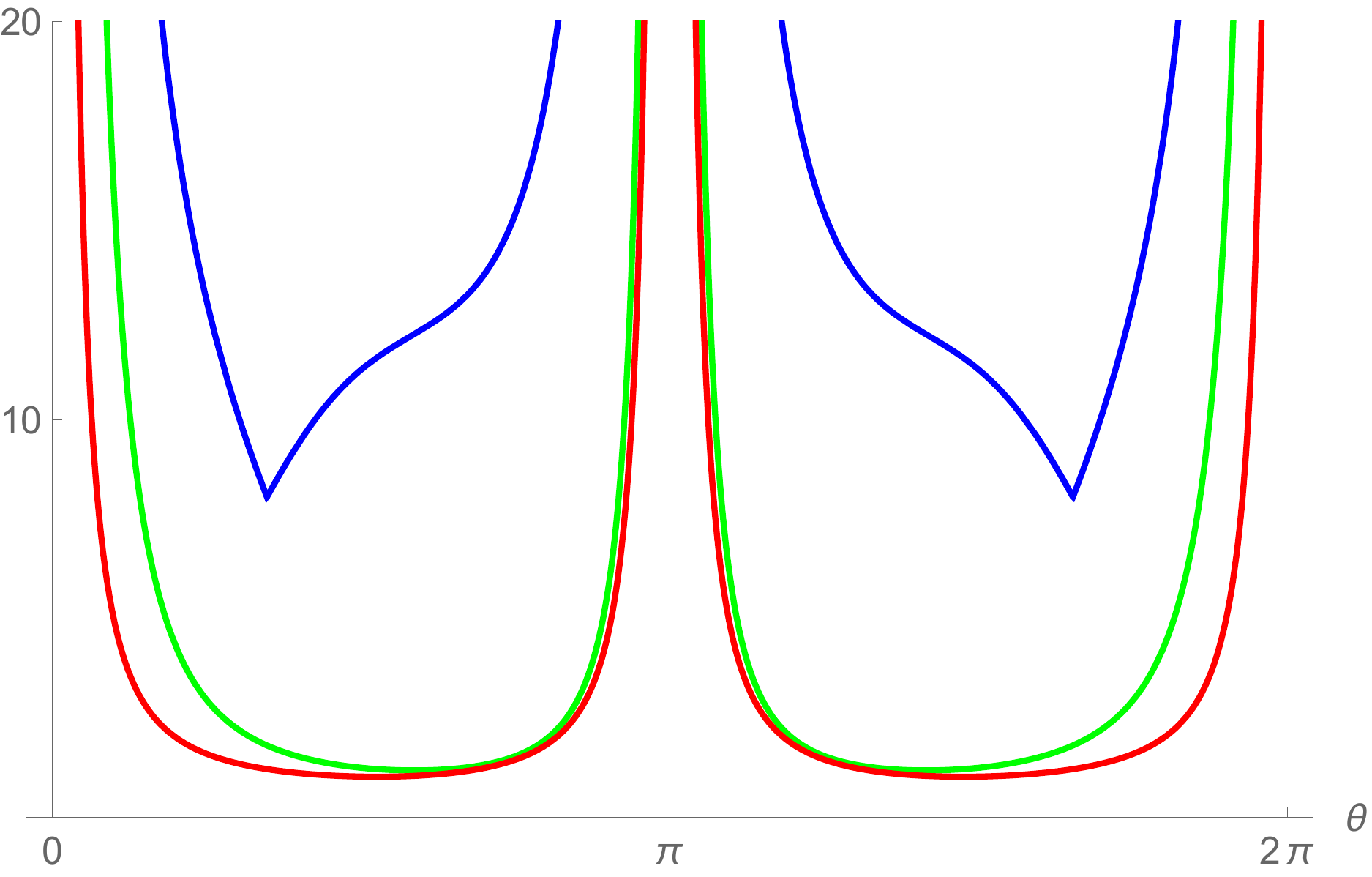}
  \end{center}
\caption{\Label{fig:damping}
  {\bf Amplitude damping model.}  Plot of $\mathbf{MSE}_{\theta}+\mathbf{MSE}_\varphi$ as a function of $\theta$ (the value is the same for all $\varphi$) for $\eta=0.9$ (red), $\eta=0.5$ (green), and $\eta=0.1$ (blue). Here $\mathbf{MSE}_{x}$ denotes the $(x,x)$-th element of the MSE matrix.}
\end{figure}
 
 As an illustration, we consider the amplitude damping noise as an example, which can be formulated as the channel
\begin{align*}
\map{A}_\eta(\rho):=A_0\rho A_0^\dag+A_1\rho A_1^\dag
\end{align*}
where $A_0=|0\>\<0|+\sqrt{\eta}|1\>\<1|$ and $A_1=\sqrt{1-\eta}|0\>\<1|$ are the Kraus operators. After the noisy evolution, the qubit state can be expressed as
\begin{align*}
\rho:=\frac12\left(I+\vec{n}\cdot\vec{\sigma}\right)
\end{align*}
with
$
 \vec{n}:=(\sqrt{\eta}\sin\theta\sin\varphi,\sqrt{\eta}\sin\theta\cos\varphi,1-\eta+\eta\cos\theta).
$ 
Now we can regard $\eta$ as a nuisance parameter, while $\theta$ and $\varphi$ are the parameters of interest.

Defining the derivative vector through the equation $\vec{p}_x\cdot\vec{\sigma}=\partial\rho/\partial x$, we can calculate the vectors 
\begin{align*}
\vec{p}_{\theta}&=(\sqrt{\eta}\cos\theta\sin\varphi,\sqrt{\eta}\cos\theta\cos\varphi,-\eta\sin\theta)\\
\vec{p}_\varphi&=(\sqrt{\eta}\sin\theta\cos\varphi,-\sqrt{\eta}\sin\theta\sin\varphi,0)\\
\vec{p}_\eta&=(\sin\theta\sin\varphi/(2\sqrt{\eta}),\sin\theta\cos\varphi/(2\sqrt{\eta}),-1+\cos\theta)
\end{align*}
In terms of the derivative vector, the SLD for the parameter $x\in  \{\theta,\varphi, \eta\}$ takes the form
\begin{align*}
L_x=-\frac{2\vec{p}_x\cdot\vec{n}}{1-|\vec{n}|^2}I+\left(2\vec{p}_x+\frac{2\vec{p}_x\cdot\vec{n}}{1-|\vec{n}|^2}\vec{n}\right)\cdot\vec{\sigma} \, .
\end{align*}

 After some straightforward calculations, we get 
 \begin{align*}
J=\left(\begin{array}{ccc}4\eta &0 & 2\sin\theta \\
\\
0 &4\eta\sin^2\theta &0\\
\\
2\sin\theta &0& \frac{(1-\eta)[\sin^2\theta+4\eta(1-\cos\theta)^2]+(2\eta-1)^2}{\eta(1-\eta)}\end{array}\right)
\end{align*}
and 
\begin{align*}
&D_{\theta,\varphi}=-D_{\varphi,\theta}=8\eta\sin\theta(\eta-\eta\cos\theta+\cos\theta)\\
&D_{\eta,\varphi}=-D_{\varphi,\eta}=4\sin^2\theta\left(1+\eta-\eta\cos\theta\right)\\
&D_{\theta,\eta}=-D_{\eta,\theta}=0.
\end{align*}
Then we have the MSE bound with nuisance parameter $\eta$. An illustration can be found in Figure \ref{fig:damping} with $W=I$ in Eq. (\ref{NHB}). The minimum of the sum of the $(x,x)$-th matrix element of the MSE matrix for $x=\theta,\varphi$ is independent of $\varphi$, which is a result of the symmetry of the problem:  the D-matrix does not depend on $\varphi$, and thus an estimation of $\varphi$ can be obtained without affecting the precisions of other parameters. Notice that when the state is close to $|0\>$ or $|1\>$, it is insensitive to the change of $\theta$, resulting in the cup-shape curves in Figure \ref{fig:damping}.
 
   \begin{figure}[h!]
\begin{center}
  \includegraphics[width=0.6\linewidth]{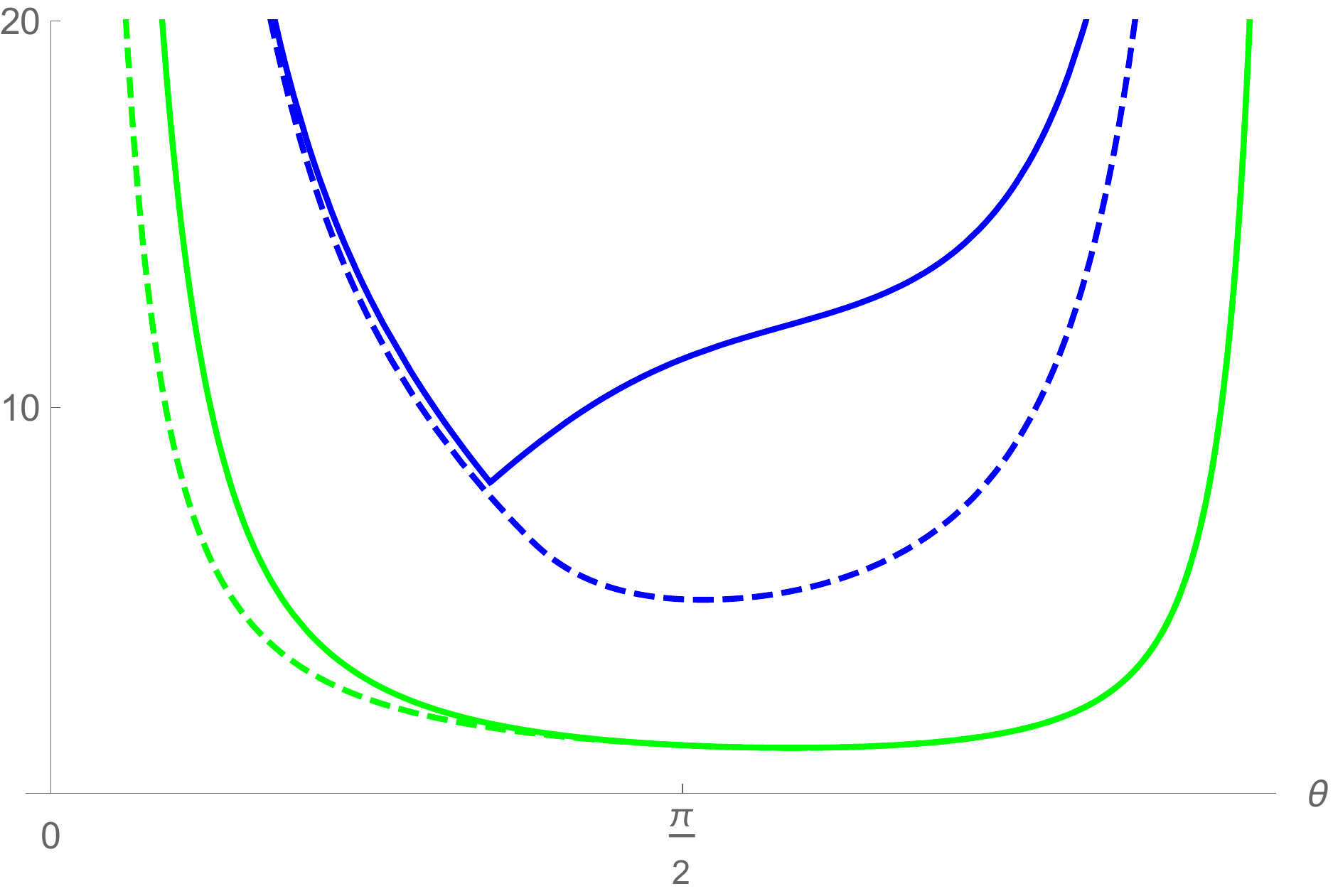}
  \end{center}
\caption{\Label{fig:fixed}
  {\bf  The nuisance parameter bound versus the fixed parameter bound.}  $\mathbf{MSE}_{\theta}+\mathbf{MSE}_\varphi$ as a function of $\theta$ (the value is the same for all $\varphi$) for $\eta=0.5$ (green) and $\eta=0.1$ (blue) are plotted. The solid curves correspond to the case when $\eta$ is a nuisance parameter, while the dashed curves correspond to the case when $\eta$ is a fixed parameter. Here $\mathbf{MSE}_{x}$ denotes the $(x,x)$-th element of the MSE matrix.}
\end{figure}

Next, we evaluate the sum of MSEs of $\varphi$ and $\theta$ when $\eta$ is a (known) fixed parameter using Eq. (\ref{HB}) and compare it to the nuisance parameter case. The result of the numerical evaluation is plotted in Fig. \ref{fig:fixed}.
It is clear from the plot that the variance sum is strictly lower when $\eta$ is treated as a fixed parameter, compared to the nuisance parameter case. This is a good example of how knowledge on a parameter ($\eta$) can assist the estimation of other parameters ($\varphi$ and $\theta$). It is also observed that, when the noise is larger (i.e. when $\eta$ is smaller), the gain of precision by knowing $\eta$ is also bigger.

\subsection{Multiphase estimation with noise}\Label{subsec-multiphase}
Here we consider a noisy version of the multiphase estimation setting \cite{humphreys2013quantum,yue2014quantum}. 
This problem was first studied by \cite{yue2014quantum}, where the authors derived a lower bound for the quantum Fisher information and conjectured that it was tight.  Under local asymptotic covariance, we can now derive an attainable bound and show its equivalence to the SLD bound using the orthogonality of nuisance parameters, which proves the conjecture. 

Our techniques also allow to resolve an open issue about the result of Ref. \cite{yue2014quantum}, where it was  unclear whether or not the best precision depended on the knowledge of the noise. Using Corollary \ref{prop-orthogonal}, we will also see that knowing {\em a priori} the strength of the noise does not help to decrease the estimation error.

   \begin{figure}[h!]
\begin{center}
  \includegraphics[width=0.7\linewidth]{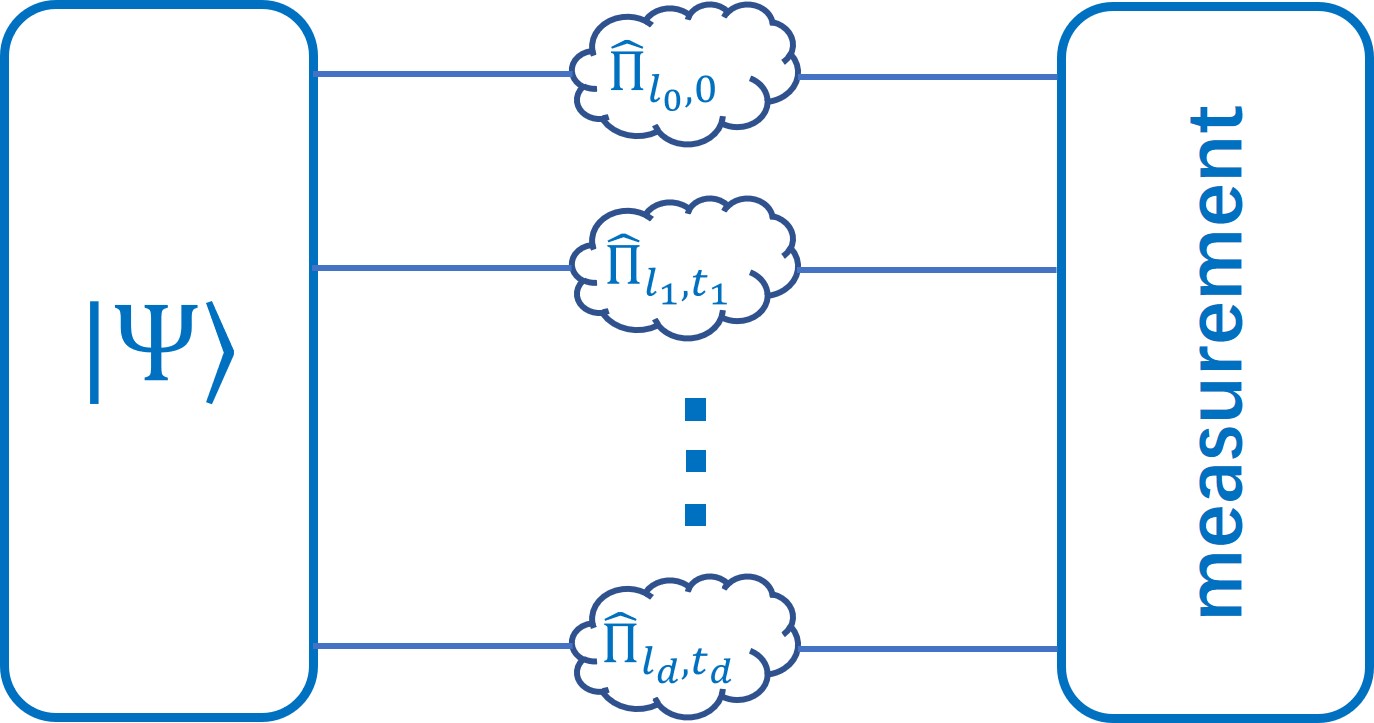}
  \end{center}
\caption{\Label{fig:multiphase}
  {\bf  Setup of  noisy multiphase estimation.} A input state $|\Psi\>$ passes through a noisy channel $\map{N}_{\vec{t}}$ consisting of $d+1$ modes and carrying $d$ parameters. The resultant state is then measured to obtain an estimate of $\vec{t}$. }
\end{figure}
The setting is illustrated in Fig. \ref{fig:multiphase}.
Due to photon loss, the phase-shift operation is no longer unitary. Instead, it corresponds to a noisy channel with the following Kraus form:
\begin{align*}
\hat{\Pi}_{l,\vec{t}}=\sqrt{\frac{(1-\eta)^{l}}{l!}}e^{i\hat{n}\vec{t}}\eta^{\frac{\hat{n}}{2}}\hat{a}^l
\end{align*}
\begin{align*}
\map{N}_{\vec{t}}(\rho):=\sum_{\vec{l}}\hat{\Pi}_{\vec{l},\vec{t}}\rho\hat{\Pi}_{\vec{l},\vec{t}}^\dag,
\qquad \hat{\Pi}_{\vec{l},\vec{t}}:=\hat{\Pi}_{l_j,0}\otimes\left(\bigotimes_{j=1}^d\hat{\Pi}_{l_j,\vec{t}_j}\right).
\end{align*}
Note that $\eta=0$ corresponds to the noiseless scenario. 
We consider a pure input state with $N$ photons and in the ``generalized NOON form'' as
\begin{align*}
|\Psi\>:=a|N\>_0+\frac{b}{\sqrt{d}}\sum_{j=1}^d |N\>_j\qquad |N\>_j:=|0\>^{\otimes j}\otimes|N\>\otimes |0\>^{\otimes (d-j)}.
\end{align*}
The output state from the noisy multiphase evolution would be
\begin{align*}
\map{I}\otimes\map{N}_{\vec{t}}(|\Psi\>\<\Psi|)=p_\eta\left|\psi_{\eta,\vec{t}}\right\>\left\<\psi_{\eta,\vec{t}}\right|+(1-p_\eta)\rho_\eta
\end{align*}
where
\begin{align*}
 \left|\psi_{\eta,\vec{t}}\right\>=\cos\alpha_\eta|N\>_0+\sin\alpha_\eta\sum_j\frac{e^{iN\vec{t}}|N\>_j}{\sqrt{d}},\qquad \cos\alpha_\eta=a/\sqrt{1-b^2(1-\eta^N)},
\end{align*}
 $p_\eta=1-b^2(1-\eta^N)$, and $\rho_\eta$ is independent of $\vec{t}$. Notice that the output state is supported by the finite set of orthonormal states $\{|n\>_j:\, j=0,\dots,d, n=0,\dots, N\}$, and thus it is in the scope of this work.

In this case, $\{\vec{t}_j\}$ are the parameters of interest, while $\alpha_\eta$  and $p_\eta$ can be regarded as nuisance parameters.
The SLD operators for these parameters can be calculated as
\begin{align*}
L_{\vec{t}_j}&=2i\left[N|N\>_j\<N|_j,|\psi_{\eta,\vec{t}}\>\<\psi_{\eta,\vec{t}}|\right]\\
L_{\alpha_\eta}&=2\left(|\psi_{\eta,\vec{t}}\>\<\psi^\perp_{\eta,\vec{t}}|+|\psi^\perp_{\eta,\vec{t}}\>\<\psi_{\eta,\vec{t}}|\right)\\
L_{p_\eta}&=\frac{1}{p_\eta}|\psi_{\eta,\vec{t}}\>\<\psi_{\eta,\vec{t}}|-\frac{1}{1-p_\eta}\wp_{\spc{H}_\perp},
\end{align*}
where $\wp_{\spc{H}_\perp}$ refers to the projection into the space orthogonal to $|\psi_{\eta,\vec{t}}\>$.
Notice that $p_\eta$ and $\alpha_\eta$ are orthogonal to other parameters, in the sense that 
\begin{align*}
\Tr\rho L_{\vec{t}_j}L_{p_\eta}=\Tr\rho L_{\alpha_\eta}L_{p_\eta}=0
\end{align*}
and 
\begin{align*}
\Tr\rho L_{\vec{t}_j}L_{\alpha_\eta}=\frac{2ip_\eta\sin2\alpha_\eta}{d}
\end{align*}
is purely imaginary. We also have
\begin{align*}
\Tr\rho L_{\vec{t}_j}L_{\vec{t}_k}&=4p_\eta N^2|\<\psi|N\>_j|^2\left(\delta_{jk}-|\<\psi|N\>_k|^2\right)\\
&=\frac{4p_\eta N^2\sin^2\alpha_\eta}{d}\left(\delta_{jk}-\frac{\sin^2\alpha_\eta}{d}\right).
\end{align*}
Therefore, the SLD Fisher information matrix and the D matrix are of the forms
\begin{align*}
J=\left(\begin{array}{ccc}J_{\vec{t}}& 0 & 0 \\
\\
0&J_{\alpha_\eta} & 0\\
\\
0& 0 & J_{p_\eta}\end{array}\right)\qquad D=\left(\begin{array}{ccc}0& D_{\vec{t},\alpha_\eta} & 0 \\
\\
D_{\vec{t},\alpha_\eta}^T&0 & 0\\
\\
0& 0 & 0\end{array}\right).
\end{align*}
Substituting the above into the bound (\ref{NHB}), we immediately get an attainable bound 
\begin{align}\Label{bound-multiphase}
\tr WV\left[\wp_{\vec{t}_0,\vec{t}|\seq{m}}\right]\ge\tr WJ_{\vec{t}}^{-1},\quad 
\left(J_{\vec{t}}\right)_{ij}=\frac{4p_\eta N^2\sin^2\alpha_\eta}{d}\left(\delta_{ij}-\frac{\sin^2\alpha_\eta}{d}\right)
\end{align}
for any locally asymptotically covariant measurement $\seq{m}$.
Taking $W$ to be the identity, one will see that for small $\eta$ the sum of the variances scales as $N^2/d^2$, while for $\eta\to1$ it scales as $N^2/d$, losing the boost in scaling compared to separate measurement of the phases.
The bound (\ref{bound-multiphase}) coincides with the SLD bound and the RLD bound. By Corollary \ref{prop-orthogonal}, we conclude that the SLD (RLD) bound can be attained in the case of joint estimation of multiple phases. In addition, we stress that the ultimate precision does not depend on whether or not the noisy parameter $\eta$ is known aprior: If $\eta$ is unknown, one can obtain the same precision as when $\eta$ is known by estimating $\eta$ without disturbing the parameters of interest.

\section{Conclusion}\Label{sec-conclusion}

\begin{table}[t!]
\begin{center}
    \begin{tabular}{ | l | l | l |}
    \hline
    {\bf Topic} & {\bf Our result} & {\bf Existing results} \\ \hline
    Class of  & Local asymptotic & Unbiased, \cite{helstrom-book} \\
    {estimators}  & covariance   & Locally unbiased \cite{holevo-book}, etc.\\
    \hline
Local    & Achieving Holevo bound &  \\ 
    optimal   & Order estimate & Achieving Holevo bound \cite{yamagata2013quantum} \\ 
    estimator &Uniform convergence  &     \\    
    \hline
Global    & Achieving Holevo bound &  \multirow{2}{*}{Achieving separable bound \cite{gill2000state,hayashi2005statistical}}\\ 
    estimator   & General cost function &  \\ 
    \hline
    Nuisance    & Achieving Holevo bound &  \multirow{3}{*}{Special cases, e.g. qubit models \cite{yang2017quantum,suzuki}}\\ 
parameter   & Order estimate &  \\ 
  & Uniform convergence &   \\ 
    \hline
Tail probability   & {Gaussian states} &  \multirow{2}{*}{Special pure states model \cite{H98,hayashi2017group}}\\ 
of limiting dist.   & {General D-inv. model} &  \\ 
    \hline

    \end{tabular}\caption{Comparison between our results and existing results.}
\end{center}
\end{table}

In this work, we completely solved the attainability problem of precision bounds for quantum state estimation under the local asymptotic covariance condition.
We provided an explicit construction for the optimal measurement which attains the bounds globally.
The key building block of the optimal measurement is the quantum local asymptotic normality, derived in \cite{qlan,guta-lan}  for a particular type of parametrization  and generalized here to arbitrary parameterizations. 
Besides the bound of MSE, we also derived a bound for the tail probability of estimation.
Our work provides a general tool of constructing benchmarks and optimal measurements in multiparameter state estimation.

Here, we should remark the relation with the results by Yamagata et. al. \cite{yamagata2013quantum},
which showed a similar statement for this kind of achievability in a local model scenario
by a kind of local quantum asymptotic normality.
In Theorem \ref{prop-fixed}, we have shown 
the compact uniformity with the order estimation in our convergence, but they did not such properties.
In the evaluation of global estimator, these properties for the convergence is essential.
The difference between our evaluation and their evaluation comes from the key tools.
The key tool of our derivation is Q-LAN (Proposition \ref{lem-QLAN}) by \cite{qlan,guta-lan},
which gives the state conversion, i.e.,
the TP-CP maps converting the states family with precise evaluation of the trace norm.
However, their method is based on the algebraic central limit theorem \cite{GW,Petz}, 
which gives only the behavior of the expectation of the function of operators $R_i$.
This idea of applying this method to the achievability of the Holevo bound
was first mentioned in \cite{hayashi2009quantum}.
Yamagata et. al. \cite{yamagata2013quantum} derived the detailed discussion in this direction.

Indeed, the algebraic version of Q-LAN by  \cite{GW,Petz} can be directly applied to 
the vector $\vec{X}$ of Hermitian matrices to achieve the Holevo bound
while use of the state conversion of Q-LAN requires some complicated procedure to handle the 
the vector $\vec{X}$ of Hermitian matrices, which is the disadvantage of our approach.
However, since the algebraic version of Q-LAN does not give a state conversion directly,
it is quite difficult to give the compact uniformity and the order estimate of the convergence.
In this paper, to overcome the disadvantage of our approach,
we have derived several advanced properties for Gaussian states 
in Sections \ref{s32} and \ref{s33} by using symplectic structure.
Using these properties, we could smoothly handle
complicated procedure to fill the gap between the full quit model and arbitrary submodel.

\begin{acknowledgement}
The authors thank the anonymous Referees of this paper for useful suggestions that helped improving the original manuscript.
This work is supported by the National Natural Science Foundation of China through Grant No. 11675136, 
by the Canadian Institute for Advanced Research
(CIFAR), by the Hong Kong Research Grant Council
through Grants No. 17300317 and 17300918, 
by the HKU Seed Funding for Basic Research, and 
by the Foundational Questions Institute through grant FQXi-RFP3-1325.   YY is supported by the Swiss National Science Foundation via the National Center for Competence in Research ``QSIT"
and by a Microsoft Research Asia Fellowship.
MH was supported in part by a JSPS Grant-in-Aid for Scientific Research 
(A) No.17H01280, (B) No. 16KT0017, the Okawa Research Grant and Kayamori
Foundation of Informational Science Advancement. YY acknowledges the hospitality of South University of Science and Technology of China during the completion of part of this work.
\end{acknowledgement}
\medskip
\bibliographystyle{unsrt}
\bibliography{ref} 

\appendix

\section{Proof of Lemma \ref{LGD}}\label{app:lemma1}
In this appendix, we show Lemma \ref{LGD}. For this aim,
we discuss the existence of PDF.
First, we show the following lemma. 
\begin{lem}\Label{L30}
Let $P$ be a probability measure on $\mathbb{R}$.
Define the location shift family $\{P_t\}$ as
$P_t(B):=P(B+t) $.
For an arbitrary disjoint decomposition ${\cal A}:=\{ A_i\}$ of $\mathbb{R}$,
we assume that the probability distribution family 
$\{P_{{\cal A},t}\}$ has finite Fisher information $J_{{\cal A},t}$,
where $P_{{\cal A},t}(i):= P_t(A_i)$.
We also assume that $J_{t} := \sup_{{\cal A}}J_{{\cal A},t}<\infty$.
Also, we define $x_+:=\inf \{x'| P(x',\infty)=0 \}$ and $x_-:=\sup \{x'| P((-\infty,x'])=0 \}$.

Then, for $x =x_-,x_+$,
the derivative 
$p(x):= \frac{d}{dx}P((-\infty,x])$ is zero.
For $x \in (x_-,x_+)$,
the derivative 
$p(x):= \frac{d}{dx}P((-\infty,x])$ exists.
For $x \in [x_-,x_+]$,
$p(x)$ is H\"{o}lder continuous with order $1/2$, i.e.,
\begin{align}
\limsup_{\epsilon\to 0} 
\Big|\frac{p(x+\epsilon)- p(x)}{\epsilon^{1/2}}\Big| < J_0^{3/2}.
\end{align}
Also, $p(x)$ is bounded, i.e., $\sup_x p(x)< \sqrt{J_0}$.
\end{lem}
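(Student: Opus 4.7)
The plan is to convert the hypothesis $J_0=\sup_{\mathcal{A}}J_{\mathcal{A},0}<\infty$ into pointwise regularity of the cumulative distribution function $F(x):=P((-\infty,x])$ by applying the Fisher-information bound on carefully chosen finite partitions. The key observation is that finiteness of the discrete Fisher information on a partition already forces pointwise differentiability of each bin-probability, together with a quantitative control on its derivative.

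First, to establish the existence of $p(x):=F'(x)$ on $(x_-,x_+)$, I would use the two-bin partition $\mathcal{A}=\{(-\infty,x],(x,\infty)\}$. Finiteness of $J_{\mathcal{A},t}$ at every $t$ forces the map $t\mapsto F(x+t)$ to be differentiable, so $p(x)$ exists on the interior of the support. The same partition at $t=0$ gives $J_{\mathcal{A},0}=p(x)^2/[F(x)(1-F(x))]\le J_0$; since $F(x)(1-F(x))\le 1/4$, this already yields the uniform bound $p(x)\le \sqrt{J_0}/2<\sqrt{J_0}$, which is the second assertion of the lemma.

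Second, for the H\"older-$1/2$ estimate I would refine to the three-bin partition $\mathcal{A}_\epsilon=\{(-\infty,x],(x,x+\epsilon],(x+\epsilon,\infty)\}$. The middle-bin contribution to $J_{\mathcal{A}_\epsilon,0}$ equals $(p(x+\epsilon)-p(x))^2/(F(x+\epsilon)-F(x))$, so $(p(x+\epsilon)-p(x))^2\le J_0(F(x+\epsilon)-F(x))\le J_0\,\epsilon\,\sup_y p(y)$. Combined with the boundedness established in the previous step, this gives H\"older-$1/2$ continuity with a constant that is a power of $J_0$, and routine bookkeeping keeps it strictly below $J_0^{3/2}$. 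For the endpoint values, $F$ vanishes on $(-\infty,x_-]$, so the left derivative of $F$ at $x_-$ is zero; differentiability at $x_-$, which is again forced by the two-bin argument, then compels $p(x_-)=0$, and $p(x_+)=0$ follows symmetrically.

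The main obstacle I anticipate is in the very first step: one has to argue rigorously that finiteness of discrete Fisher information on every two-partition genuinely forces \emph{pointwise} differentiability of $t\mapsto F(x+t)$ at every $t$, not merely absolute continuity with an almost-everywhere derivative that one then has to identify with the pointwise derivative of $F$. A secondary subtlety is that the two-bin Fisher-information formula degenerates at the endpoints $x_\pm$ where $F\in\{0,1\}$, so a small separate argument is needed there, essentially combining the H\"older continuity of $p$ from the third step with the observation that $p\equiv 0$ outside $[x_-,x_+]$.
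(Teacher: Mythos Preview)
Your plan is essentially the paper's proof: two-bin partitions at $x$ for existence and boundedness of $p(x)$, then a partition isolating the short interval $(x,x+\epsilon]$ for the H\"older estimate (the paper uses the two bins $\{(x,x+\epsilon],(x,x+\epsilon]^c\}$; your three bins refine this and extract the same key term $(p(x+\epsilon)-p(x))^2/P((x,x+\epsilon])$). Your ``main obstacle'' is exactly where the paper's choice of the \emph{fidelity-based} Fisher information pays off: expanding the fidelity for the two-bin partition $\{(-\infty,x],(x,\infty)\}$ gives $8(1-F)/t^2\approx \bigl(P((x,x+t])/t\bigr)^2\bigl(1/F(x)+1/(1-F(x))\bigr)$, so the very existence and finiteness of $J_{\mathcal{A},0}$ is the existence of $\lim_{t\to0}P((x,x+t])/t=p(x)$, with no separate differentiability argument needed. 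The endpoints are also handled more directly this way than via your H\"older route: at $x_+$ one bin has mass zero, and the fidelity expansion then contains a \emph{linear} term $\tfrac12 P((x_+-t,x_+])$ rather than a quadratic one, so finiteness of $J_{\mathcal{A},0}$ forces $P((x_+-t,x_+])=O(t^2)$ and hence $p(x_+)=0$ immediately.
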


\begin{proof}
Assume that $x_+< \infty$.
We choose $A_1:=(-\infty,x_+]$, $A_2:=(x_+,\infty)$, and ${\cal A}=\{A_1,A_2\}$.
The fidelity between $P_{{\cal A},0}$ and $P_{{\cal A},t}$ is
\begin{align}
&\sqrt{P((-\infty,x_+])} \sqrt{P((-\infty,x_+-t])}
=
P((-\infty,x_+]) \sqrt{1-\frac{P((x_+-t,x_+])}{P((-\infty,x_+])}} \nonumber \\
\cong&
P((-\infty,x_+]) 
\bigg(1-\frac{1}{2}\frac{P((x_+-t,x_+])}{P((-\infty,x_+])}
-\frac{1}{8} \bigg(\frac{P((x_+-t,x_+])}{P((-\infty,x_+])}\bigg)^2\bigg) \nonumber  \\
=&
1-\frac{1}{2}P((x_+-t,x_+])
-\frac{1}{8} P((x_+-t,x_+])^2
\end{align}
Hence,
we have
\begin{align}
\lim_{t \to 0}
\frac{8}{2 t^2}P((x_+-t,x_+])
=J_{\{ (-\infty,x_+], (x_+,\infty)\},0},
\end{align}
which implies the existence of $p(x_+)$ and $p(x_+)=0$.

Similarly, we can show that 
there exists $p(x_-)$ and $p(x_-)=0$ when $x_-> -\infty$.

Next, we choose $x \in (x_-,x_+)$.
We choose $A_1:=(-\infty,x]$, $A_2:=(x,\infty)$, and 
${\cal A}:=\{A_1,A_2\}$.
The fidelity between $P_{{\cal A},0}$ and $P_{{\cal A},t}$ is
\begin{align}
&\sqrt{P((-\infty,x])} \sqrt{P((-\infty,x+t])}
+\sqrt{P((x,\infty))} \sqrt{P((x+t,\infty))}\nonumber  \\
=&
P((-\infty,x]) \sqrt{1+\frac{P((x,x+t])}{P((-\infty,x])}}
+P((x,\infty)) \sqrt{1-\frac{P((x,x+t])}{P((x,\infty))}} \nonumber \\
\cong&
P((-\infty,x]) 
\bigg(1+\frac{1}{2}\frac{P((x,x+t])}{P((-\infty,x])}
-\frac{1}{8} \bigg(\frac{P((x,x+t])}{P((-\infty,x])}\bigg)^2\bigg) \nonumber \\
&+P((x,\infty)) 
\bigg(1
-\frac{1}{2}\frac{P((x,x+t])}{P((x,\infty))}
-\frac{1}{8}\bigg(\frac{P((x,x+t])}{P((x,\infty))}\bigg)^2
\bigg) \nonumber \\
=&
P((-\infty,x]) 
+\frac{1}{2}P((x,x+t])
-\frac{1}{8} \frac{P((x,x+t])^2}{P((-\infty,x])} \nonumber \\
&+P((x,\infty)) 
-\frac{1}{2} P((x,x+t])
-\frac{1}{8}\frac{P((x,x+t])^2}{P((x,\infty))}
\nonumber  \\
=&
1
-\frac{1}{8} \frac{P((x,x+t])^2}{P((-\infty,x])} 
-\frac{1}{8}\frac{P((x,x+t])^2}{P((x,\infty))}.
\end{align}
Hence,
we have
\begin{align}
\lim_{t \to 0}\frac{P((x,x+t])^2}{t^2P((-\infty,x])} 
+\frac{P((x,x+t])^2}{t^2P((x,\infty))}
=J_{\{ (-\infty,x], (x,\infty)\},0},
\end{align}
which implies the existence of $p(x)$.
Thus, 
\begin{align}
& |p(x)|=
\sqrt{\frac{J_{\{ (-\infty,x], (x,\infty)\},0} }{
\frac{1}{P((-\infty,x])} +\frac{1}{P((x,\infty))} } }
\le \sqrt{\frac{J_{0} }{
\frac{1}{P((-\infty,x])} +\frac{1}{P((x,\infty))} }} \nonumber \\
\le &\sqrt{\frac{J_{0} }{
\min\big( \frac{1}{P((-\infty,x])} ,\frac{1}{P((x,\infty))} \big) }} 
= \sqrt{J_{0} \max(P( (-\infty,x]),P((x,\infty)) ) } 
\le \sqrt{J_{0} }.
\end{align}
Hence, $p(x)$ is bounded.

For $x \in [x_-,x_+]$,
we consider the sets ${\cal A}_{x,d}:= \{(x,x+d],(x,x+d]^c\}$.
Then, we have
\begin{align}
&J_{{\cal A}_{x,d},0}
=
\frac{ (\frac{d}{dt}P((x+t,x+d+t]))^2 }{P((x,x+d]) }
+
\frac{ (\frac{d}{dt}P((x+t,x+d+t]^c))^2 }{P((x,x+d]^c) }
\nonumber \\
=&
\frac{ p((x,x+d])^2 }{P((x,x+d]) }
+
\frac{ p((x,x+d]^c)^2 }{P((x,x+d]^c) }.
\end{align}
Hence, we have
\begin{align}
\frac{ (p(x +d)-p(x))^2 }{P((x,x+d]) }
<J_{0}.
\end{align}
Hence, when $d \to 0$
\begin{align}
\frac{ (p(x +d)-p(x))^2 }{p(x)d  }
<J_{0},
\end{align}
i.e., 
\begin{align}
\frac{ (p(x +d)-p(x))^2 }{d  }
<p(x) J_{0} \le J_0^{3/2},
\end{align}
which implies that $p(x)$ is
H\"{o}lder continuous with order $1/2$.
\end{proof}

Using the previous Lemma, we are in position to prove Lemma \ref{LGD}. 

\begin{proofof}{Lemma \ref{LGD}}
Let $ {\cal A}:= \{A_i\}$ be an arbitrary disjoint finite decomposition of $\mathbb{R}$.
Let ${\cal G}_{{\cal A}}$ be the coarse-graining  map from a distribution on
$\mathbb{R}$ to a distribution on the meshes $A_i$.
Then, 
the Fisher information $J_{{\cal A},n,t}$ of $\{{\cal G}_{{\cal A}}(\wp^n_{t_0,t|M_n})\}_t$
is not greater than the Fisher information $J_{t}$ 
of $  \left\{\rho_{t_0,t}^{n} \right\}_{t  \in  \set \Delta_n} $.
Hence, the Fisher information $J_{{\cal A},t}$ of 
$\{{\cal G}_{{\cal A}}(\wp_{t_0,t|\seq{m}})\}_t$
satisfies
\begin{align}
J_{{\cal A},t}=\lim_{n \to \infty} J_{{\cal A},n,t}
\le J_{t_0}.
\end{align}
Therefore, we can apply Lemma \ref{L30} to $\wp_{t_0,t|\seq{m}}$.
Lemma \ref{L30} guarantees the existence of the PDF of the limiting 
distribution $\wp_{t_0,t|\seq{m}}$,
\end{proofof}
  
 \section{Lemmas used for asymptotic evaluations}\label{app:asymptoticstuff}
In this appendix, we prepare two lemmas for  asymptotic evaluations of information quantity 
of probability distributions.
\begin{lem}\Label{ALL8}
Assume that two sequences of probability distributions $\{(P_{n},Q_n)\}$ 
on $\mathbb{R}$ converges to 
a pair of probability distributions $(P,Q)$ on $\mathbb{R}$, respectively.
Then, the inequality
\begin{align}
F(P||Q)
\ge \limsup_{n \to \infty}
F(P_n||Q_n) \label{LJT}
\end{align}
holds.
\end{lem}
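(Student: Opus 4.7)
The plan is to exploit the variational characterization of the classical fidelity (Bhattacharyya coefficient) as an infimum over finite measurable partitions. For any partition $\mathcal{A}=\{A_i\}_{i=1}^N$ of $\mathbb{R}$, a local Cauchy-Schwarz inequality applied to the Radon-Nikod\'ym derivatives $f_P,f_Q$ on each cell $A_i$ yields
\begin{align*}
\int_{A_i}\sqrt{f_P f_Q}\,d(P+Q)\le\sqrt{P(A_i)Q(A_i)},
\end{align*}
and summing over $i$ gives $F(P\|Q)\le\sum_i\sqrt{P(A_i)Q(A_i)}$. By refining the partition so that $\sqrt{f_P/f_Q}$ is well approximated by a simple function on each cell, one can drive the right-hand side down to $F(P\|Q)$, so that in fact
\begin{align*}
F(P\|Q)=\inf_{\mathcal{A}}\sum_{i=1}^{N}\sqrt{P(A_i)Q(A_i)}.
\end{align*}
This is just the classical instance of the data-processing inequality used in the proof of Theorem \ref{L16}.

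With this in hand, I would fix an arbitrary finite measurable partition $\mathcal{A}=\{A_i\}$ and apply the inequality direction of the above formula to the pair $(P_n,Q_n)$, obtaining $F(P_n\|Q_n)\le\sum_i\sqrt{P_n(A_i)Q_n(A_i)}$. The assumed convergence $(P_n,Q_n)\to(P,Q)$ gives $P_n(A_i)\to P(A_i)$ and $Q_n(A_i)\to Q(A_i)$ for each $i$, and since the partition is finite this converts into
\begin{align*}
\limsup_{n\to\infty}F(P_n\|Q_n)\le\sum_{i=1}^{N}\sqrt{P(A_i)Q(A_i)}.
\end{align*}
Taking the infimum over $\mathcal{A}$ on the right, the variational formula above collapses the bound to $F(P\|Q)$, which is the desired inequality \eqref{LJT}.

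The only delicate point is the mode of convergence. In the application within Theorem \ref{ThY}, the convergence is setwise on Borel sets (as in \eqref{limiting-single}), so $P_n(A_i)\to P(A_i)$ holds for arbitrary Borel $A_i$ and the argument goes through directly. If one only assumes weak convergence of measures, the same argument works provided that the partition cells are continuity sets for $P+Q$; since $P+Q$ has at most countably many atoms, any given partition can be approximated from within by continuity-set partitions without increasing its discrete Bhattacharyya sum in the limit, so the infimum in the variational formula is unchanged. Hence no serious obstacle arises: the proof is essentially a monotone-approximation argument combined with the data-processing inequality, and the content of the lemma is the upper semi-continuity of fidelity.
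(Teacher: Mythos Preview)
Your proof is correct and follows essentially the same route as the paper's: coarse-grain via a finite partition, use the data-processing inequality $F(P_n\|Q_n)\le\sum_i\sqrt{P_n(A_i)Q_n(A_i)}$, pass to the limit in $n$ using finiteness of the partition, and then refine the partition to recover $F(P\|Q)$. The paper carries out the refinement step explicitly with dyadic mesh partitions $\mathcal{G}_{N',N}$ and the Lebesgue dominated convergence theorem, whereas you package that step as the variational formula $F(P\|Q)=\inf_{\mathcal{A}}\sum_i\sqrt{P(A_i)Q(A_i)}$; the underlying argument is the same.
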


 \begin{proof}
 Let $p$ and $q$ be the Radon-Nikod\'{y}m derivative of $P$ and $Q$  
with respect to $P+Q$.
Given $N>0$, let ${\cal G}_{N}$ be the coarse-grained map from
a distribution on $\mathbb{R}$ to a distribution on 
the Borel subsets $ B \subset (-N, N]$ and the subset $(-N, N]^c$.
Given $N>0$ and $N'>0$, let 
${\cal G}_{N',N}$ be the coarse-grained map from
a distribution on $\mathbb{R}$ to a distribution on meshes 
$ (i/N', (i+1)/N']$ with $-NN'\le i \le NN'-1 $
and its complement $(-N, N]^c$.

We define
$\bar{\cal G}_{N',N} (P)(x):=
\frac{{\cal G}_{N',N} (P)(i)}{(P+Q)((i/N', (i+1)/N'])}$
for $x \in (i/N', (i+1)/N']\subset (-N, N]$.
Then, we have
$\bar{\cal G}_{N',N} (P)(x)\le 
\frac{{\cal G}_{N',N} (P)(i)}{P((i/N', (i+1)/N'])}=1/2$
and
$\bar{\cal G}_{N',N} (Q)(x)\le 1/2$.
Since 
$\sqrt{\bar{\cal G}_{N',N} (P)(x)}
\sqrt{\bar{\cal G}_{N',N} (Q)(x)}
\le \sqrt{1/2}\sqrt{1/2}=1/2 $ for $x \in (-N, N]$,
Lebesgue convergence theorem guarantees that
\begin{align}
& \lim_{N' \to \infty} 
\int_{-N}^N
\sqrt{\bar{\cal G}_{N',N} (P)(x)}
\sqrt{\bar{\cal G}_{N',N} (Q)(x)}
(P+Q)(dx) \nonumber \\
=&
\int_{-N}^N
\sqrt{p(x)}
\sqrt{q(x)}
(P+Q)(dx)\label{BFD}.
\end{align}
Since 
\begin{align}
&F\left({\cal G}_{N',N} (P)||  {\cal G}_{N',N} (Q)\right) \nonumber \\
=&
\int_{-N}^N
\sqrt{\bar{\cal G}_{N',N} (P)(x)}
\sqrt{\bar{\cal G}_{N',N} (Q)(x)}
(P+Q)(dx) \nonumber \\
&+\sqrt{P((-N, N]^c)}\sqrt{Q((-N, N]^c)},
\end{align}
Eq. \eqref{BFD} implies that
\begin{align}
\lim_{N' \to \infty} 
F\left({\cal G}_{N',N} (P)||  {\cal G}_{N',N} (Q)\right)
=
F\left({\cal G}_N (P)||  {\cal G}_N (Q)\right)
\Label{AL34}.
\end{align}
Also, we have
\begin{align}
\lim_{N \to \infty} 
F\left({\cal G}_N (P)||  {\cal G}_N(Q)\right)
=
F\left(P ||  Q\right)
\Label{AL34B}.
\end{align}
Then, information processing inequality for the fidelity
yields that
 \begin{align*}
F\left(P_n||Q_n\right)
\le 
F\left({\cal G}_{N',N}(P_n)||{\cal G}_{N',N}(Q_n)\right) .
 \end{align*}
Since the number of meshes is finite,
we have
\begin{align*}
\limsup_{n \to \infty}
F\left({\cal G}_{N',N}(P_n)||{\cal G}_{N',N}(Q_n)\right) 
=
F\left({\cal G}_{N',N} (P)||  {\cal G}_{N',N} (Q)\right)
\end{align*}
for every $N',N>0$.
Hence, we have
 \begin{align}
 \limsup_{n \to \infty} 
 F\left(P_n||Q_n\right)
\le 
F({\cal G}_{N',N} (P)||  {\cal G}_{N',N} (Q))
.\label{AL333}
 \end{align}
Hence, using \eqref{AL34}, \eqref{AL34B} and \eqref{AL333}, we have
\begin{align*}
\limsup_{n \to \infty}F(P_n||Q_n)
\le
F(P||Q).
\end{align*}
 \end{proof}

\begin{lem}\Label{LL8}
Let $\Theta$ be an open subset of $\mathbb{R}^{k'}$.
Assume that a sequence of probability distributions 
$\{P_{\vec{t},n}\}_{\vec{t} \in \Theta}$ 
on $\mathbb{R}^k$ converges to a family of probability distributions 
$\{P_{\vec{t}}\}_{\vec{t}\in \Theta}$ on $\mathbb{R}^k$.
We denote their $\epsilon$-difference Fisher information matrices
by $J_{\vec{t},\epsilon}^{n} $ and $J_{\vec{t},\epsilon}$, respectively.
For an vector $\vec{t}$ and $\epsilon >0$,
we also assume that there exists a Hermitian matrix $J_{\epsilon}$ such that
$J_{\vec{t},\epsilon}^{n} \le J_{\epsilon}$.
Then, $P_{\vec{t}+\epsilon e_j }$ is absolutely continuous with respect to $P_{\vec{t}}$
for $j=1, \ldots, k'$,
and 
the inequality
 \begin{align}
 \liminf_{n \to \infty} 
\langle \vec{a}| J_{\vec{t},\epsilon}^{n} -J_{\vec{t},\epsilon}|\vec{a}\rangle
\ge 0 
\Label{3-3-5B}
 \end{align}
holds for any complex vector $\vec{a} \in \mathbb{C}^{k'}$.
\end{lem}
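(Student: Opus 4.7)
The plan is to prove both conclusions via a coarse-graining argument modelled on the proof of Lemma \ref{ALL8}. For a finite Borel partition $\mathcal{A}=\{A_1,\dots,A_m\}$ of $\mathbb{R}^k$, write $J^{\mathcal{A}}_{\vec{t},\epsilon}$ and $J^{n,\mathcal{A}}_{\vec{t},\epsilon}$ for the $\epsilon$-difference Fisher information of the coarse-grained families $\{(P_{\vec{s}}(A_i))_i\}_{\vec{s}}$ and $\{(P_{\vec{s},n}(A_i))_i\}_{\vec{s}}$ respectively. Two ingredients will be central. First, the classical information processing inequality \eqref{MMM} gives $J^{n}_{\vec{t},\epsilon}\ge J^{n,\mathcal{A}}_{\vec{t},\epsilon}$, and hence the uniform bound $J^{n,\mathcal{A}}_{\vec{t},\epsilon}\le J_\epsilon$ for every partition $\mathcal{A}$. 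Second, for a fixed $\mathcal{A}$ the quantity $\langle \vec{a}|J^{\mathcal{A}}_{\vec{t},\epsilon}|\vec{a}\rangle$ is a rational function of the finitely many probabilities $P_{\vec{s}}(A_i)$, continuous wherever the denominators $P_{\vec{t}}(A_i)$ are positive. As a preliminary reduction, since the entries of $J_{\vec{t},\epsilon}$ are real and symmetric in $i,j$, writing $\vec{a}=\vec{b}+i\vec{c}$ with real $\vec{b},\vec{c}$ gives $\langle \vec{a}|J_{\vec{t},\epsilon}|\vec{a}\rangle=\vec{b}^T J_{\vec{t},\epsilon}\vec{b}+\vec{c}^T J_{\vec{t},\epsilon}\vec{c}$, and likewise for $J^n_{\vec{t},\epsilon}$; it therefore suffices to prove the inequality for real $\vec{a}$.

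The absolute continuity is obtained first. Fix an index $j$ and a Borel set $A$ with $P_{\vec{t}}(A)=0$. Applying the first ingredient to the binary partition $\{A,A^c\}$ yields, whenever $P_{\vec{t},n}(A)>0$, the Chapman--Robbins--type bound
\begin{align*}
\frac{P_{\vec{t}+\epsilon\vec{e}_j,n}(A)^2}{\epsilon^2\,P_{\vec{t},n}(A)}\le \langle \vec{e}_j|J_\epsilon|\vec{e}_j\rangle+\frac{1}{\epsilon^2}.
\end{align*}
The assumed convergence $P_{\vec{s},n}\to P_{\vec{s}}$ forces $P_{\vec{t},n}(A)\to 0$ and $P_{\vec{t}+\epsilon\vec{e}_j,n}(A)\to P_{\vec{t}+\epsilon\vec{e}_j}(A)$, and the displayed inequality can only survive passage to the limit if $P_{\vec{t}+\epsilon\vec{e}_j}(A)=0$. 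Hence $P_{\vec{t}+\epsilon\vec{e}_j}\ll P_{\vec{t}}$ for every $j$, and the Radon--Nikod\'{y}m densities $p_{\vec{t}+\epsilon\vec{e}_j}$ appearing in \eqref{GCR} are well-defined.

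For the liminf inequality I proceed in two steps. In Step A, for a fixed $\mathcal{A}$ the continuity property combined with the convergence $P_{\vec{s},n}(A_i)\to P_{\vec{s}}(A_i)$ yields $\langle \vec{a}|J^{n,\mathcal{A}}_{\vec{t},\epsilon}|\vec{a}\rangle\to\langle \vec{a}|J^{\mathcal{A}}_{\vec{t},\epsilon}|\vec{a}\rangle$, and combining with the information processing inequality gives $\liminf_n\langle \vec{a}|J^n_{\vec{t},\epsilon}|\vec{a}\rangle\ge\langle \vec{a}|J^{\mathcal{A}}_{\vec{t},\epsilon}|\vec{a}\rangle$. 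In Step B, take a refining sequence of partitions $\mathcal{A}_\ell$ (say dyadic cubes of side $2^{-\ell}$) whose induced $\sigma$-algebras generate the Borel $\sigma$-algebra. A direct computation using the densities $p_{\vec{t}+\epsilon\vec{e}_i}$ identifies
\begin{align*}
\epsilon^{2}\langle \vec{a}|J^{\mathcal{A}_\ell}_{\vec{t},\epsilon}|\vec{a}\rangle = \bigl\|\mathbb{E}_{P_{\vec{t}}}[f_{\vec{a}}\mid\sigma(\mathcal{A}_\ell)]\bigr\|^2_{L^2(P_{\vec{t}})},\qquad f_{\vec{a}}:=\sum_i a_i\bigl(p_{\vec{t}+\epsilon\vec{e}_i}-1\bigr).
\end{align*}
The supremum of these conditional-expectation $L^2$-norms over all finite partitions equals $\|f_{\vec{a}}\|^2_{L^2(P_{\vec{t}})}$, which is finite because of the uniform bound $\langle \vec{a}|J^{\mathcal{A}}_{\vec{t},\epsilon}|\vec{a}\rangle\le \langle \vec{a}|J_\epsilon|\vec{a}\rangle$; hence $f_{\vec{a}}\in L^2(P_{\vec{t}})$ and the $L^2$-martingale convergence theorem supplies $\langle \vec{a}|J^{\mathcal{A}_\ell}_{\vec{t},\epsilon}|\vec{a}\rangle\nearrow \langle \vec{a}|J_{\vec{t},\epsilon}|\vec{a}\rangle$. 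Letting $\ell\to\infty$ in Step A closes the argument.

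The main obstacle will be Step B, specifically the identification of the coarse-grained Fisher information with the $L^2$-norm of a conditional expectation, and then extracting a priori $L^2$-integrability of $f_{\vec{a}}$ from the assumed uniform bound on $J^n_{\vec{t},\epsilon}$. The identification itself makes sense only once absolute continuity has been proved, which is why the two parts of the proof must appear in this order; finiteness of $\|f_{\vec{a}}\|_{L^2(P_{\vec{t}})}$ then follows from the standard fact that, for a measurable function, the supremum of its conditional-expectation $L^2$-norms over finite partitions equals its own $L^2$-norm. Once both are in place, the $L^2$-martingale convergence theorem delivers the monotone approximation needed to interchange the $\ell\to\infty$ and $n\to\infty$ limits without loss.
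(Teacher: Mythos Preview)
Your argument is correct and shares the paper's overall architecture: reduce to real $\vec{a}$, prove absolute continuity via a binary coarse-graining (the paper does this by contradiction in its Step~1; your Chapman--Robbins bound is the same computation), and then establish the liminf inequality by combining the information-processing inequality $J^n_{\vec{t},\epsilon}\ge J^{n,\mathcal{A}}_{\vec{t},\epsilon}$ with convergence on each fixed finite partition. The genuine difference lies in how you recover the full $J_{\vec{t},\epsilon}$ from its finite-partition versions. The paper introduces a three-parameter truncation---a spatial box $(-N,N]^k$, a mesh size $1/N'$, and a density-level cutoff $\{p_{\vec{t}+\epsilon\vec{e}_j}\le R\}$---and takes nested limits $N'\to\infty$, $N\to\infty$, $R\to\infty$, using Lebesgue's dominated convergence theorem with the pointwise bound $R^2$ on the integrand. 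Your identification $\epsilon^2\langle\vec{a}|J^{\mathcal{A}}_{\vec{t},\epsilon}|\vec{a}\rangle=\|E_{P_{\vec{t}}}[f_{\vec{a}}\mid\sigma(\mathcal{A})]\|_{L^2}^2$ replaces all of this by a single appeal to $L^2$-martingale convergence, with the a~priori integrability $f_{\vec{a}}\in L^2(P_{\vec{t}})$ extracted from the inherited bound $J^{\mathcal{A}}_{\vec{t},\epsilon}\le J_\epsilon$ via Fatou and L\'evy's upward theorem. This is cleaner and avoids the density truncation altogether; the paper's route is more hands-on but keeps every limit explicit. One small repair is needed: the dyadic cubes of side $2^{-\ell}$ on all of $\mathbb{R}^k$ form a countably infinite partition, so your Step~A (which relies on a rational function of finitely many cell probabilities) does not apply as stated; use instead the dyadic cubes inside $[-2^\ell,2^\ell]^k$ together with the single complement cell, which yields a finite refining sequence still generating the Borel $\sigma$-algebra.
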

\begin{proofof}{Lemma \ref{LL8}}
Since $J_{\vec{t},\epsilon}^{n}$ and $J_{\vec{t},\epsilon}$
are real matrices,
it is sufficient to show \eqref{3-3-5B} for a real vector $\vec{a}$.
In this proof, we fix the vector $\vec{t}$.

\noindent {\bf Step (1):} 
We show that $P_{\vec{t}+\epsilon e_j }$ is absolutely continuous with respect to $P_{\vec{t}}$
for $j=1, \ldots, k'$ by contradiction.
Assume that there exists an integer $j$ such that
$P_{\vec{t}+\epsilon e_j }$ is not absolutely continuous with respect to $P_{\vec{t}}$.
There exists a Borel set $B \subset \mathbb{R}^k$ such that
$P_{\vec{t}+\epsilon e_j }(B)>0$ and $P_{\vec{t}}(B)=0$.
Let ${\cal G}$ be the coarse-grained map from
a distribution $P$ on $\mathbb{R}^k$ to a binary distribution 
$(P(B),P(B^c))$ on two events $\{B,B^c\}$.
Let 
$J_{\vec{t},B,\epsilon}$ 
and $J_{\vec{t},B,\epsilon}^n$ 
be the $\epsilon$-difference Fisher information matrices of
$\{{\cal G}(P_{\vec{t}})\}$
and
$\{{\cal G}(P_{\vec{t},n})\}$, respectively.
Information processing inequality implies that
$
J_{\vec{t},B,\epsilon}^n\le
J_{\vec{t},\epsilon}^{n} \le J_{\epsilon}$.
Also, 
$J_{\vec{t},B,\epsilon}^n\to J_{\vec{t},B,\epsilon}$ as $n \to \infty$.
Hence, $J_{\vec{t},B,\epsilon}\le J_{\epsilon}$. 
However, the $j$-th diagonal element of $J_{\vec{t},B,\epsilon}$ is infinity.
It contradicts the assumption of contradiction.

\noindent {\bf Step (2):} 
Let $p_{\vec{t}+\epsilon e_j } $ be the Radon-Nikod\'{y}m derivative of $P_{\vec{t}+\epsilon e_j }$ 
with respect to $P_{\vec{t}}$.
We show that
\begin{align}
\lim_{R \to \infty}P_{\vec{t}}(
\{\vec{x}| p_{\vec{t}+\epsilon e_j}(\vec{x})> R\})
= 0 \Label{LPT}
\end{align}
for $N$, $\epsilon>0$, and any integer $j=1, \ldots, k'$
by contradiction.
We denote the LHS of \eqref{LPT} by $C_j$ and assume 
there exists an integer $j$ such that $C_j>0$.

We set $R= \sqrt{J_{\epsilon;j,j}/ C_j}+2$.
Setting $B$ to be $\{\vec{x}| p_{\vec{t}+\epsilon e_j}(\vec{x})> R\}$,
we repeat the same discussion as Step (1).
Then, we obtain the contradiction as follows.
\begin{align}
& J_{\epsilon;j,j} \ge J_{\vec{t},B,\epsilon;j,j}
\ge \int_{ B_j } (p_{\vec{t}+\epsilon e_j}(\vec{x})-1)^2 P_{\vec{t}}(d\vec{x})
\ge \int_{ B_j } (R-1)^2 P_{\vec{t}}(d\vec{x}) \nonumber \\
=&(R-1)^2/C_j=\Big(\sqrt{J_{\epsilon;j,j}/ C_j}+1\Big)^2/C_j >J_{\epsilon;j,j}.
\end{align}

\noindent {\bf Step (3):}
We show \eqref{3-3-5B} for a real vector $\vec{a}$.
We define the subsets 
\begin{align}
{\cal C}_R &:=\{\vec{x}| \exists j, p_{\vec{t}+\epsilon e_j}(\vec{x})> R\}
\\
{\cal C}_{N,R}& :=((-N, N]^k)^c 
\cup {\cal C}_R.
\end{align}
Given $R>0$,
let ${\cal G}_{R}$ be the coarse-grained map from
a distribution on $\mathbb{R}^k$ to a distribution on 
the family of measurable sets
$\{ B \subset \mathbb{R}^k\setminus {\cal C}_R \} \cup \{{\cal C}_R  \}$,
where $B$ is any Borel set in $\mathbb{R}^k\setminus {\cal C}_R$.
Given $N>0$ and $R>0$, let ${\cal G}_{N,R}$ 
be the coarse-grained map from
a distribution on $\mathbb{R}^k$ to a distribution on 
the family of measurable sets
$\{ B \subset \mathbb{R}^k\setminus {\cal C}_{N,R} \}\cup \{{\cal C}_{N,R}\}$, 
where $B$ is any Borel subset in $\mathbb{R}^k\setminus {\cal C}_{N,R}$.
Given $N>0$, $R>0$, and $N'>0$, let 
${\cal G}_{N',N,R}$ be the coarse-grained map from
a distribution on $\mathbb{R}^k$ to a distribution on meshes 
$ (\prod_{j=1}^k (i_j/N', (i_j+1)/N'])\setminus {\cal C}_{N,R}$ with 
$-NN'\le i_j \le NN'-1 $
and the complement ${\cal C}_{N,R}$.

We define
$$\bar{\cal G}_{N',N,R} (p)(\vec{x}):=
\frac{{\cal G}_{N',N,R} (p)(i_1, \ldots, i_k)}
{P_{\vec{t}}((\prod_{j=1}^k (i_j/N', (i_j+1)/N'])\setminus {\cal C}_{N,R})}$$
for $\vec{x} \in (\prod_{j=1}^k (i_j/N', (i_j+1)/N'])\setminus 
{\cal C}_{N,R}$.
Let $ J_{\vec{t},\epsilon,N',N,R}$
be the $\epsilon$-difference Fisher information matrix
for the distribution family
$\{ {\cal G}_{N',N,R}(p_{\vec{t}})\}_{\vec{t}}$.
Let $ J_{\vec{t},\epsilon,N,R}$
be the $\epsilon$-difference Fisher information matrix
for the distribution family
$\{ {\cal G}_{N,R}(p_{\vec{t}})\}_{\vec{t}}$.
Let $ J_{\vec{t},\epsilon,N}$
be the $\epsilon$-difference Fisher information matrix
for the distribution family
$\{ {\cal G}_{R}(p_{\vec{t}})\}_{\vec{t}}$.

Since 
$  
\bar{\cal G}_{N',N,R} (p_{\epsilon e_{j}})(\vec{x})
\bar{\cal G}_{N',N,R} (p_{\epsilon e_{j'}})(\vec{x})
\le
R^2$ for $\vec{x} \in {\cal C}_{N,R}^c$
and
\begin{align}
&\lim_{N'\to \infty}
\bar{\cal G}_{N',N,R} (p_{\vec{t}+\epsilon e_{j}})(\vec{x})
\bar{\cal G}_{N',N,R} (p_{\vec{t}+\epsilon e_{j'}})(\vec{x}) \nonumber \\
= &
{\cal G}_{N,R} (p_{\vec{t}+\epsilon e_{j}})(\vec{x})
{\cal G}_{N,R} (p_{\vec{t}+\epsilon e_{j'}})(\vec{x}),
\end{align}
Lebesgue convergence theorem guarantees that
\begin{align}
\lim_{N' \to \infty} ( J_{\vec{t},\epsilon,N',N,R})_{j,j'} =
( J_{\vec{t},\epsilon,N,R})_{j,j'} \Label{L33}
\end{align}
in the same way as \eqref{AL34}.
Since 
$p_{\vec{t}}(((-N, N]^k)^c) \to 0 $ 
as $N \to \infty$,
we have
\begin{align}
\lim_{N \to \infty} 
 J_{\vec{t},\epsilon,N,R} 
 =
 J_{\vec{t},\epsilon,R} 
\Label{L33C}.
\end{align}
Using \eqref{LPT}, we have
\begin{align}
\lim_{R \to \infty} 
 J_{\vec{t},\epsilon,R} 
 =
 J_{\vec{t},\epsilon} 
\Label{L33B}.
\end{align}

Let $J_{\vec{t},\epsilon,N',R,N}^{n}$ be
the $\epsilon$-difference Fisher information matrix
for the distribution family
$\{{\cal G}_{N',R,N}(p_{\vec{t},n})\}_{\vec{t}}$. 
Then, information processing inequality 
\eqref{MMM}
for the $\epsilon$-difference Fisher information matrix
yields that
\begin{align}
\langle \vec{a}| J_{\vec{t},\epsilon}^{n} -
J_{\vec{t},\epsilon,N',R,N}^{n}
|\vec{a}\rangle
\ge 0 .\label{BFT1}
\end{align}
Since the number of meshes is finite,
we have
\begin{align}
\lim_{n \to \infty}
J_{\vec{t},\epsilon,N',R,N}^{n}
=J_{\vec{t},\epsilon,N',R,N} .\label{BFT2}
\end{align}
The combination of 
\eqref{L33},
\eqref{L33C},
\eqref{L33B}
\eqref{BFT1},
and \eqref{BFT2} implies 
 \begin{align}
 \liminf_{n \to \infty} 
\langle \vec{a}| J_{\vec{t},\epsilon}^{n} -
J_{\vec{t},\epsilon,N',R,N}
|\vec{a}\rangle
\ge 0 .\Label{L333}
 \end{align}
Hence, using \eqref{L33}, \eqref{L33C}, \eqref{L33B}, and \eqref{L333}, we have
 $\liminf_{n \to \infty} 
\langle \vec{a}| J_{\vec{t},\epsilon}^{n} -
J_{\vec{t},\epsilon}
|\vec{a}\rangle
\ge 0 $.
 \end{proofof}

\section{Proof of Lemma \ref{3LL}}\Label{A3LL}
Before starting our proof of Lemma \ref{3LL}, we prepare the following lemmas.
\begin{lem}\Label{lem-QLAN2}
Consider a canonical quantum Gaussian states family $\{\Phi[\vec{\theta},\vec{\beta}]\}$.
When a symplectic matrix $S$ satisfies 
\begin{align}
SE_q\left(e^{-\vec{\beta}}\right)S^T=E_q\left(e^{-\vec{\beta}}\right)
\Label{C2},
\end{align}
where $E_q$ is the matrix defined in Eq. (\ref{M1}),
there exists a unitary operator $U_S$ such that
\begin{align*}
\Phi[\vec{\theta},\vec{\beta}]=U_S\Phi[S \vec{\theta},\vec{\beta}]U_S^\dag.
\end{align*}
\end{lem}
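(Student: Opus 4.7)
The plan is to construct $U_S$ via the metaplectic (Weil) representation of the real symplectic group. First, I would invoke the standard existence result: for every symplectic matrix $T$ on $\R^{2q}$, there exists a unitary $U(T)$ on the Fock space (unique up to a global phase) such that $U(T)^\dag R_j U(T) = \sum_k T_{jk} R_k$ for every canonical operator $R_j$. Set $U_S := U(S^{-1})$, so that $U_S^\dag R_j U_S = \sum_k (S^{-1})_{jk} R_k$. A textbook reference is Holevo's book, already cited elsewhere in the paper.

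Second, I would compute how $U_S$ acts on a general canonical Gaussian state by using the characteristic equation \eqref{MFT}. Starting from $U_S\,e^{i\vec\xi\cdot\vec R}\,U_S^\dag = e^{i(S^T\vec\xi)\cdot\vec R}$, a direct calculation of the characteristic function gives
\begin{align*}
U_S\,\Phi[\vec\alpha,\Gamma]\,U_S^\dag \;=\; \Phi\!\left[\,S^{-1}\vec\alpha,\; S^{-1}\mathrm{Re}(\Gamma)(S^{-1})^T + \tfrac{i}{2}\Omega_q\,\right].
\end{align*}
Thus the displacement is rotated by $S^{-1}$ and the real part of the covariance is transformed by the congruence $(\cdot)\mapsto S^{-1}(\cdot)(S^{-1})^T$; the symplectic form $\tfrac{i}{2}\Omega_q$ is preserved automatically because $S$ (and hence $S^{-1}$) is symplectic.

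Third, I would translate the hypothesis $SE_q(e^{-\vec\beta})S^T = E_q(e^{-\vec\beta})$ into the invariance of the real part of the covariance of the underlying thermal state. That real part is $E_q(\vec N)$ with $N_j = e^{-\beta_j}/(1-e^{-\beta_j})$, cf.\ \eqref{M1}. Since $N_j$ is a strictly monotonic function of $e^{-\beta_j}$, the diagonal matrices $E_q(e^{-\vec\beta})$ and $E_q(\vec N)$ share the same pattern of equal diagonal entries, and therefore the same orthogonal eigenspaces. Writing $O := E_q(e^{-\vec\beta})^{-1/2}\,S\,E_q(e^{-\vec\beta})^{1/2}$ turns the hypothesis into $OO^T = I$; combined with the symplectic condition this pins $O$ down to a block structure compatible with the doubled mode pairing (together with any extra degeneracies among the $\beta_j$'s), and such an $O$ also fixes $E_q(e^{-\vec\beta})^{-1/2}\,E_q(\vec N)\,E_q(e^{-\vec\beta})^{-1/2}$, yielding $SE_q(\vec N)S^T = E_q(\vec N)$.

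Combining the three steps and applying the transformation rule of step two with $\vec\alpha := S\vec\theta$, the covariance invariance from step three leaves the real part unchanged, so $U_S\,\Phi[S\vec\theta,\vec\beta]\,U_S^\dag = \Phi[\vec\theta,\vec\beta]$, which is exactly the claim. The main obstacle I anticipate lies in step three: the map $M\mapsto SMS^T$ is a congruence rather than a similarity, so invariance of $E_q(e^{-\vec\beta})$ under this congruence does not follow from functional calculus applied to the hypothesis. One must combine the congruence condition with the symplecticity of $S$ and the common doubled-block structure of the two matrices, with extra care when some $\beta_j$'s coincide so that the joint degenerate eigenspaces exceed the individual mode pairs. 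Once this is in place, the remainder is a routine manipulation of characteristic functions.
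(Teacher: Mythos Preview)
Your proposal is correct and follows essentially the same route as the paper: both invoke the metaplectic/symplectic representation to produce $U_S$ and reduce the claim to showing that the hypothesis $SE_q(e^{-\vec\beta})S^T=E_q(e^{-\vec\beta})$, combined with symplecticity, forces $S$ to be orthogonal and block-diagonal on the eigenspaces of $E_q(e^{-\vec\beta})$, hence to preserve the thermal covariance. The only cosmetic difference is that you identify the transformed state via the characteristic equation \eqref{MFT}, whereas the paper works with the operator exponential form $\rho\propto\exp\!\big(-\tfrac12\,\vec{x}^{\,T}E_q(\vec\beta)\,\vec{x}\big)$ of the thermal state and changes variables directly in that expression.
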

\begin{proof}
Consider any coordinate $\vec{\theta}'=(\vec{\theta}'^C,\vec{\theta}'^Q)$, where $\vec{\theta}'^Q$ obtained by a reversible linear transformation $S$ on the Q-LAN coordinate $\vec{\theta}^Q$, i.e. $\vec{\theta}'^Q=S\vec{\theta}^Q$.

Define $\hat{q}_j=\frac{1}{\sqrt{2}}(\hat{a}_j+\hat{a}_j^\dag)$, $\hat{p}_j=\frac{1}{i\sqrt{2}}(\hat{a}_j-\hat{a}_j^\dag)$, and $\vec{x}=(\hat{q}_1,\hat{p}_1,\dots,\hat{q}_q,\hat{p}_q)^T$. We have
\begin{align}
\Phi[\vec{\theta}_V^Q,\vec{\beta}]&=\Phi[S^{-1}\vec{\theta}'^Q,\vec{\beta}_V]\nonumber\\
&=Z_\beta \exp[i\vec{x}^TS^{-1}\vec{\theta}'^Q]\exp\left[-\frac{\vec{x}^TE_q(\vec{\beta}_V)\vec{x}}2\right]\exp\Phi[-i\vec{x}^TS^{-1}\vec{\theta}'^Q]\nonumber\\
&=Z_\beta \exp[i\vec{y}^T\vec{\theta}'^Q]\exp\left[-\frac{\vec{y}^TSE_q(\vec{\beta}_V)S^T\vec{y}}2\right]\exp\Phi[-i\vec{y}^T\vec{\theta}'^Q]\Label{app1-inter}
\end{align}
where $ \vec{y}:=(S^{-1})^T\vec{x}$ and $Z_\beta>0$ is a normalizing constant.
Now, by the definition of $E_q(\vec{x})$ in Eq. (\ref{M1}) and $SE_q(e^{-\vec{\beta}_V})'S^T=E_q\left(e^{-(\vec{\beta}_V)'}\right)$, $S$ must be of the block diagonal form $S=\bigoplus_{i} O_{\set{s}_i}$. Here $\{\set{s}_i\}$ is a partition of $\{1,\dots,2q\}$ and $j,k\in\set{s}_i$ if and only if $\beta'_j=\beta'_k$, and $O_{\set{s}_i}$ is an orthogonal matrix acting on any component $j\in\set{s}_i$. Since $\vec{\beta}_V'$, $\vec{\beta}_V$ and $\ln\vec{\beta}_V$ are in one-to-one correspondence, we $SE_q(e^{-\vec{\beta}_V})S^T=E_q(e^{-\vec{\beta}_V})$. Substituting it into Eq. (\ref{app1-inter}), we have
\begin{align*}
\nonumber\Phi[\vec{\theta}_V^Q,\vec{\beta}_V]&=Z_\beta \exp[i\vec{y}^T\vec{\theta}'^Q]\exp\left[-\frac{\vec{y}^TE_q( \vec{\beta}_V)\vec{y}}2\right]\exp\Phi[-i\vec{y}^T\vec{\theta}'^Q,\vec{\beta}_V].
\end{align*}
That is, $(S^{-1})^T$ can be regarded as a transformation of $\vec{x}$. Finally, $S$ is symplectic since $SDS^T=D$, and there exists a unitary $U_S$ such that \cite{hayashi2017group}
\begin{align}
\Phi[\vec{\theta}_V^Q,\vec{\beta}_V]&=U_S \Phi[\vec{\theta}'^Q,\vec{\beta}_V]U_S^\dag\Label{app1-inter2}.
\end{align}
Therefore we have $\Phi[\vec{\theta}_V^Q,\vec{\beta}_V]\cong \Phi[\vec{\theta}'^Q,\vec{\beta}_V]$ as desired. \qed
\end{proof}

\begin{proofof}{Lemma \ref{3LL}}
Using the imaginary part ${\sf Im}(\Gamma)$, 
we distinguish the classical and the quantum parts.
Specifically,  the kernel and  support of ${\sf Im}(\Gamma)$ are 
\begin{align}\Label{kernel}
\Ker {\sf Im} (\Gamma):=\left\{\vec{x}\in\R^k: {\sf Im} (\Gamma)\vec{x}=0\right\}
\end{align}
and
\begin{align}\Label{image}
\Supp {\sf Im} (\Gamma):=\left(\Ker {\sf Im} (\Gamma)\right)^\perp,
\end{align}  
respectively. We introduce the classical parameters $\vec{\theta}^C$ and the quantum parameters $\vec{\theta}^Q$ in
$\Ker {\sf Im} (\Gamma)$
and $\Supp {\sf Im} (\Gamma)$, respectively.
That is, 
the classical parameter $\vec{\theta}^C$ 
and 
the quantum parameter $\vec{\theta}^Q$
are given by an invertible linear transformation $T'$ such that 
$\vec{\theta}':=(\vec{\theta}^C,\vec{\theta}^Q)=T'\vec{t}$ satisfies
\begin{align}
\Ker {\sf Im} (\Gamma)&=
\{T'^{-1}(\vec{\theta}^C,0)| \vec{\theta}^C \in \R^{d^C}\}
\Label{reparameteriation1}
\\
\Supp {\sf Im} (\Gamma)&=
\{T'^{-1}(0,\vec{\theta}^Q)| \vec{\theta}^C \in \R^{2d^Q}\}.
\Label{reparameteriation2}
\end{align}
Since the above separation is unique up to the linear conversion
and any classical Gaussian states can be converted to each other
via scale conversion,
the remaining problem is to show the desired statement for the quantum part.

Next, we focus on the quantum part 
$(({T'}^{-1})^T\Gamma T'^{-1})^Q$
of the Hermitian matrix $ ({T'}^{-1})^T\Gamma T'^{-1}$
It is now convenient to define the matrix 
\begin{align}\Label{A}
A:=| {\sf Im} ((({T'}^{-1})^T\Gamma T'^{-1})^Q)|^{1/2}.
\end{align}
The role of $A$ is to normalize the $D$-matrix. Indeed,  since 
${\sf Im}((({T'}^{-1})^T\Gamma T'^{-1})^Q)$ is skew symmetric, 
$A^{-1}{\sf Im}((({T'}^{-1})^T\Gamma T'^{-1})^Q)A^{-1}$ 
is similar to $\Omega_{d^Q}$, namely that there exists an orthogonal matrix $S_0$ so that 
$S_0A^{-1}{\sf Im}((({T'}^{-1})^T\Gamma T'^{-1})^Q) A^{-1}S_0^T
=\Omega_{d^Q}$. 
Moreover, since $S_0A^{-1}
{\sf Re}((({T'}^{-1})^T\Gamma T'^{-1})^Q)
A^{-1}S_0^T$ is a real symmetric matrix, there exists a symplectic matrix $S$  
and a vector $\vec{\beta}$
such that \cite{williamson1936algebraic}
\begin{align}
S^T S_0A^{-1}
{\sf Im} ((({T'}^{-1})^T\Gamma T'^{-1})^Q)
A^{-1}S_0^T S= E_{d^Q}(e^{-\vec{\beta}}).\Label{SYAA}
\end{align}
Meanwhile, we have $SS_0 A^{-1}{\sf Im}((({T'}^{-1})^T\Gamma T'^{-1})^Q) A^{-1}
S_0S^{T}=
\Omega_{d^Q}$ since $S$ is symplectic.
Overall, when $T$ is given as $(I \oplus (SS_0A^{-1}))T'$,
the desired requirement is satisfied.

The uniqueness of $\vec{\beta}$ is guaranteed by the 
uniqueness of symplectic eigenvalues.
Hence, when two linear conversions $T$ and $\tilde{T}$ satisfies the condition of the statement,
$T \Gamma T^{T}=
\tilde{T} \Gamma \tilde{T}^{T}$. 
Thus, Lemma \ref{lem-QLAN2} guarantees that
the canonical Gaussian states 
$G( T^{-1}\vec{\alpha}, T \Gamma T^{T})$
and
$G( \tilde{T}^{-1}\vec{\alpha}, \tilde{T} \Gamma \tilde{T}^{T})$
are unitarily equivalent.
\end{proofof}

\section{Proof of Lemma \ref{LNS}}\Label{ALNS}
(3) $\Rightarrow$ (1):
When a Gaussian states family is given in the RHS of \eqref{FHY},
it is clearly $D$-invariant.

(1) $\Leftrightarrow$ (2):
Assume that the Gaussian states $G[ \vec{\alpha}, {\Gamma}]$ is generated by the operators
$\vec{R}=(R_1, \ldots, R_d)$.
Due to Lemma \ref{lem-QLAN-QFI1}, the SLDs of $\{G[ T(\vec{t}), {\Gamma}]\}$ are 
$L_j:=\sum_{k,k'} T_{k,j} (A^{-1})_{k,k'} R_{k'}$.
Lemma \ref{lem-QLAN-QFI1} guarantees that 
\begin{align*}
{\cal D}(L_j)&= 
\sum_{k,k'} T_{k,j} (A^{-1})_{k,k'} {\cal D}(R_{k'})
=\sum_{k,k'} T_{k,j} (A^{-1})_{k,k'} \sum_{j'} 2 B_{k',j'}R_{j'}\\
&=-2 \sum_{k}R_k (B A^{-1} T)_{k,j}.
\end{align*}
Hence, the $D$-invariance is equivalent to the condition (2).

(2) $\Rightarrow$ (3):
First, we separate the system into the classical and the quantum parts.
In the Gaussian states family $\{G[ \vec{\alpha}, {\Gamma}]\}$
this separation can be done by considering the Kernel of $Im (\Gamma)$ as in the proof of Lemma \ref{3LL}.
In the Gaussian states family $G[ T(\vec{t}), {\Gamma}]$
this separation can be done by considering the Kernel of the $D$-matrix $D_{\vec{0}}$ in the same way.
Since the relation \eqref{FHY} for the classical part is easily done,
we show the relation \eqref{FHY} when only the quantum part exists.

Under the above assumption,
we define the $k \times (d-k)$ matrix $T'$ such that
$F:=(T \oplus T')$ is invertible and $T'^T A^{-1} T=0$.
Then, Lemma \ref{3LL} guarantees that 
$G[ F(\vec{t},\vec{t}'), \Gamma]$ is unitarily equivalent to 
$G[ (\vec{t},\vec{t}'), F^{-1} \Gamma (F^T)^{-1}]$.
Since $T'^T A^{-1} T=0$, we have
\begin{align}
 &F^{-1} \Gamma (F^T)^{-1}
=
F^{-1} A (F^T)^{-1} F^T A^{-1}\Gamma A^{-1} F  F^{-1 } A(F^T)^{-1} \nonumber \\
=&
(F^T T'^T F)^{-1} F^T A^{-1}\Gamma A^{-1} F  (F^T T'^T F)^{-1}
=  {\Gamma}_T \oplus {\Gamma}_0,
\end{align}
where
\begin{align*}
{\Gamma}_0:= ((T'^T A^{-1} T')^{-1}+ i (T'^T A^{-1} T')^{-1}(T'^T B T')(T'^T A^{-1} T')^{-1}).
\end{align*}
Hence, $G[ (\vec{t},\vec{t}'), F^{-1} \Gamma (F^T)^{-1}]
=G[ \vec{t}, \Gamma_T]
\otimes G[ \vec{t}', \Gamma_0]$.
Putting $\vec{t}'=\vec{0}$, we obtain the condition (3).

\section{Proof of Lemma \ref{ANH}}\Label{AANH}

Since Lemma \ref{3LL} shows that general Gaussian states can be reduced to the canonical Gaussian states,
we discuss only the canonical Gaussian states.

\noindent{\bf Step 1:}
We show the statement when we have only the quantum part and $\vec{X}=\vec{R}$.
For a given state $\rho$, we define the POVM $M_\rho$
by 
\begin{align}
M_\rho(\set{B}):=\int_{\set{B}} T_{\vec{\alpha}} \rho T_{\vec{\alpha}}^\dagger d \vec{\alpha}.
\end{align}
When $\rho$ is a squeezed state with $ \Tr \rho Q_j=\Tr \rho_j P=0$,
the output distribution 
$\wp_{\vec{\alpha}| M[\rho]}$
of $M[\rho]$ is the $2d^Q$-dimensional normal distribution of average $\vec{\alpha}$ 
and the following covariance matrix \cite{holevo-book}; 
\begin{align}
E_{d^Q}(\vec{\beta})+ V_\rho,
\hbox{ with }
V_{\rho}:= 
\left(
\begin{array}{cc}
(\Tr Q_i Q_j\rho)_{i,j} & (\Tr Q_i P_j\rho)_{i,j} \\
( \Tr P_i Q_j \rho)_{i,j} &  (\Tr P_i P_j \rho)_{i,j}
\end{array}
\right). \Label{BDE}
\end{align}

In the single-mode case, without loss of generality, we can assume that 
$W$ is 
a diagonal matrix $
\left(
\begin{array}{cc}
w_1 & 0 \\
0  & w_2
\end{array}
\right)$
because this diagonalization can be done by applying the orthogonal transformation between $Q$ and $P$.
Then, 
\begin{align}
\frac{1}{2}\sqrt{W}^{-1} |\sqrt{W} \Omega_{d^Q} \sqrt{W}|\sqrt{W}^{-1}
=
\left(
\begin{array}{cc}
\frac{\sqrt{w_1}}{2 \sqrt{w_1}} & 0 \\
0  & \frac{\sqrt{w_1}}{2 \sqrt{w_2}}
\end{array}
\right).
\end{align}
We define the squeezed state $\rho[w]$ by
$ V_\rho=\left(
\begin{array}{cc}
\frac{\sqrt{w}}{2 } & 0 \\
0  & \frac{1}{2 \sqrt{w}}
\end{array}
\right)$.
Then the squeezed state $\rho[\frac{{w_2}}{ {w_1}}]$
satisfies the condition 
$V_{\rho[\frac{{w_2}}{ {w_1}}]}= \frac{1}{2}\sqrt{W}^{-1} |\sqrt{W} \Omega_{d^Q} \sqrt{W}|\sqrt{W}^{-1}
$.
Hence, the POVM $M[\rho[\frac{{w_2}}{ {w_1}}]]$ satisfies the requirement.

In the multiple-mode case,
we choose a symplectic matrix $S$ such that 
$S W S^T$
is a diagonal matrix
with diagonal element $w_1,w_2, \ldots, w_{2d^Q}$.
The matrix $$\frac{1}{2}S\sqrt{W}^{-1} |\sqrt{W} \Omega_{d^Q} \sqrt{W}|\sqrt{W}^{-1}S^T$$
is the diagonal matrix 
with diagonal element 
$\frac{\sqrt{w_1}}{2 \sqrt{w_2}},\frac{\sqrt{w_1}}{2 \sqrt{w_2}}, \ldots,
\frac{\sqrt{w_{2d^Q}}}{2 \sqrt{w_{2d^Q-1}}},
\frac{\sqrt{w_{2d^Q-1}}}{2 \sqrt{w_{2d^Q}}}$.
Employing symplectic representation $U(S)$ given in \cite[Section 7.8]{hayashi2017group},
we apply the unitary operator $U(S)$.
Then, the state 
$\rho[\frac{{w_2}}{ {w_1}}] \otimes\cdots\otimes
\rho[\frac{{w_{2d^Q}}}{{w_{2d^Q-1}}}]$
satisfies the condition 
$V_{U(S)
\rho[\frac{{w_2}}{ {w_1}}] \otimes\cdots\otimes
\rho[\frac{{w_{2d^Q}}}{{w_{2d^Q-1}}}]
U(S)^\dagger}
=\frac{1}{2}\sqrt{W}^{-1} |\sqrt{W} \Omega_{d^Q} \sqrt{W}|\sqrt{W}^{-1}$.
Therefore, 
Hence, the POVM $M[U(S)
\rho[\frac{{w_2}}{ {w_1}}] \otimes\cdots\otimes
\rho[\frac{{w_{2d^Q}}}{{w_{2d^Q-1}}}]U(S)^\dagger
]$ satisfies the requirement.



\noindent{\bf Step 2:}
As the next step, we show the statement when we have only quantum part and 
the vector $\vec{X}$ is given in the following way with an integer $\tilde{k}$;
\begin{align}
X_1&=Q_1, ~X_2=P_1, ~\ldots,~ 
X_{2\tilde{k}-1}=Q_{\tilde{k}},~
X_{2\tilde{k}}=P_{\tilde{k}}
\Label{DHY}\\
X_{2\tilde{k}+1}&= Q_{\tilde{k}+1},~ 
X_{2\tilde{k}+2}= Q_{\tilde{k}+2}, ~\ldots,~
X_{k}= Q_{k-\tilde{k}}.
\Label{DHY2}
\end{align}
For the latter $k-2\tilde{k} $ parameters, we just apply the measurements for 
the observables $Q_{\tilde{k}+1},\ldots, Q_{k-\tilde{k}}$.
We denote the part of the initial $2\tilde{k}$ parameters for $W$ by $W_Q$.
We substituting $W_Q$ into $W$ in the construction of the POVM $M[U(S)
\rho[\frac{{w_2}}{ {w_1}}] \otimes\cdots\otimes
\rho[\frac{{w_{2d^Q}}}{{w_{2d^Q-1}}}]U(S)^\dagger
]$.
Then, as the output distribution, it 
realizes the normal distribution with average $\vec{t}$ and covariance matrix
 ${\sf Re} ( (Z_{\vec{t}}(\vec{X}))+
\sqrt{W}^{-1} |\sqrt{W} {\sf Im} ( Z_{\vec{t}}(\vec{X}))\sqrt{W}| \sqrt{W}^{-1}$.

\noindent{\bf Step 3:}
Next, we consider the case when we have only quantum part and 
$\vec{X}$ does not have the form \eqref{DHY}, \eqref{DHY2}.
Let $2 \tilde{k}$ be the rank of $ {\sf Im} ( Z_{\vec{t}}(\vec{X}))$.
We choose an invertible matrix $T'$ such that 
non-zero entries of the matrix $T' {\sf Im} ( Z_{\vec{t}}(\vec{X})){T'}^{T}$
are limited in the first $2 \tilde{k}$ components
and are given as $\Omega_{\tilde{k}}$.
When we apply the matrix $T'$ to the outcome of $M$,
we obtain another POVM, which is denoted by $M'$.
$M'$ can be considered as a measurement for $X':=T'(\vec{X})$ instead of $\vec{X}$.
Hence, we have the relation
$ V_{\vec{\alpha}}(M') = T' V_{\vec{\alpha}}(M){T'}^T$.
Thus, it is sufficient to discuss the case with $\vec{X}'$ and the weight matrix
$ {T'}^T WT'$.

Then, we define $Y_j:=(T' \vec{X})_j$ for $j=1, \ldots, 2\tilde{k}$ and 
$Y_{2\tilde{k}+2j'-1}:= (T' \vec{X})_{2\tilde{k}+j'}$ for $j'= 1, \ldots, k-2 \tilde{k}$.
Next, we choose $k'-k$ operators $Y_{2\tilde{k}+2j}$ for $j= 1, \ldots, k-2 \tilde{k}$
and $Y_{2k-2 \tilde{k}+j'}$ for $j'=1, \ldots, k'$
as linear combinations of
$Q_j$ and $P_j$ with $j=1, \ldots, k'/2$ such that
$ {\sf Im} ( Z_{\vec{t}}(\vec{Y}))=\Omega_{k'/2}$. 
Hence, there exists a symplectic matrix $S$ such that 
$ \vec{Y}= S \vec{Q}$, where the vector $\vec{Q}$ is defined as $(\vec{Q})_{2j-1}=Q_j$
and $(\vec{Q})_{2j}=P_j$.
Employing symplectic representation $U(S)$ given in \cite[Section 7.8]{hayashi2017group},
we apply the unitary operator $U(S)$ to the Hilbert space so that
$U(S)(T' \vec{X})_j$ satisfies the conditions\eqref{DHY} and \eqref{DHY2}.
Hence, our problem is reduced to Step 2.

\noindent{\bf Step 4:}
We consider the case when our system is composed of the classical and the quantum parts.
$X_i$ is given as a linear combination of
the operators $Q_j$ and $P_j$, 
and the classical random variables $Z_{j'}$.
Then, we divide $\vec{X}=(X_i)$ to the sum of 
the quantum part $\vec{X}^Q=(X_i^Q)$ and the classical part 
$\vec{X}^C=(X_i^C)$.
Then, we have $Z_{\vec{\alpha}}(\vec{X})=Z_{\vec{t}}(\vec{X}^Q)+Z_{\vec{\alpha}}(\vec{X}^C) $,
which implies that
$
\tr W  {\sf Re} ( Z_{\vec{\alpha}}(\vec{X}))+
\tr |\sqrt{W} {\sf Im} ( Z_{\vec{\alpha}}(\vec{X}))\sqrt{W}| 
=
\tr W  {\sf Re} ( Z_{\vec{\alpha}}(\vec{X}^Q))+
\tr |\sqrt{W} {\sf Im} ( Z_{\vec{\alpha}}(\vec{X}^Q))\sqrt{W}| 
\tr |\sqrt{W} {\sf Im} ( Z_{\vec{\alpha}}(\vec{X}^C))\sqrt{W}| $.
Hence, 
when the outcome is given by the sum of 
 $\vec{X}^C$ and the outcome of Step 3 with $\vec{X}^Q$,
the desired properties are satisfied.

\section{Proof of Eq. (\ref{QFI-QLAN})}\Label{app-QFI}
By additivity of quantum Fisher information, the quantum Fisher information of $\{\rho_{\vec{\theta},n}\}$ is equal to the quantum Fisher information of $\{\rho_{\vec{\theta}_0+\vec{\theta}}\}$. For the latter, solving the equations
\begin{align}
\frac{\partial{\rho_{\vec{\theta}_0+\vec{\theta}}}}{\partial \theta^{\rm R/I}_{j,k}}=\frac12\left(L_{\theta^{\rm R/I}_{j,k}}\rho_{\vec{\theta}_0+\vec{\theta}}+\rho_{\vec{\theta}_0+\vec{\theta}}L_{\theta^{\rm R/I}_{j,k}}\right),
\end{align}
we get 
\begin{align}\Label{SLD-qudit}
L_{\theta^{\rm R}_{j,k}}=\frac{2\sqrt{\theta_{0,j}-\theta_{0,k}}}{(\theta_{0,j}+\theta_{0,k})}\cdot T^{\rm I}_{j,k}\qquad
L_{\theta^{\rm I}_{j,k}}=-\frac{2\sqrt{\theta_{0,j}-\theta_{0,k}}}{(\theta_{0,j}+\theta_{0,k})}\cdot T^{\rm R}_{j,k}.
\end{align}
Using Eq. (\ref{SLD-qudit}), we can evaluate the SLD quantum Fisher information 
\begin{align}
(J_{\vec{\theta}_0})_{j,k}^{\rm R}=(J_{\vec{\theta}_0})_{j,k}^{\rm I}=\frac{4(1-\beta_{j,k})}{1+\beta_{j,k}}\qquad (J_{\vec{\theta}_0})_{j,k}^{\rm (R,I)}=(J_{\vec{\theta}_0})_{j,k}^{\rm (I,R)}=0
\end{align}
and the D-matrix
\begin{align}
(D_{\vec{\theta}_0})_{j,k}^{\rm R}=(D_{\vec{\theta}_0})_{j,k}^{\rm I}=0\qquad (D_{\vec{\theta}_0})_{j,k}^{\rm (R,I)}=-(D_{\vec{\theta}_0})_{j,k}^{\rm (I,R)}=-\frac{8(1-\beta_{j,k})^2}{(1+\beta_{j,k})^2},
\end{align}
having used the definition $\beta_{j,k}:=\theta_{0,k}/\theta_{0,j}$. It can be immediately verified that $\left(\widetilde{J}_{\vec{\theta}_0}\right)^{-1}=(J_{\vec{\theta}_0})^{-1}+\frac{i}{2}(J_{\vec{\theta}_0})^{-1}D_{\vec{\theta}_0}(J_{\vec{\theta}_0})^{-1}$ matches Eq. (\ref{QFI-QLAN}).

The case with the displaced thermal state $\rho_{\theta^{\rm R}_{j,k}+i\theta^{\rm I}_{j,k},\beta_{j,k}}$
follows from Lemma \ref{lem-QLAN-QFI1}.

\section{Proof of Lemma \ref{L54}}\Label{app-lemma12}
Denote by $Q(\vec{y}):=\frac1{\pi^{k/2}}\<\vec{y}| 
F|\vec{y}\>$ the Q-function of $F$ \cite{husimi1940some}. Expanding displaced thermal states into a convex combination of coherent states, Eq. (\ref{Gaussian-limiting2}) can be rewritten as
\begin{align}
f(\vec{\alpha})&=\int d\vec{y}\,\left(\prod_{j}\sqrt{\frac{1-\gamma_j}{\gamma_j}}\,e^{-(1-\gamma_j)(y_j-\alpha_j)^2/\gamma_j}\right)\,Q(\vec{y}). \Label{HFD}
\end{align}
Taking the Fourier transform 
$\map{F}_{\vec{y}\to\vec{\xi}}(g):=\int d\vec{y}\ e^{i\vec{y}\cdot\vec{\xi}} g$ 
on both sides, we get
\begin{align}
&\map{F}_{\vec{\alpha}\to\vec{\xi}}\left(
f(\vec{\alpha})\right)\nonumber\\
=&\int\int d\vec{\alpha} d\vec{y}\,\left(\prod_{j}\sqrt{\frac{1-\gamma_j}{\gamma_j}}\,e^{i\xi_j\alpha_j- (1-\gamma_j)(y_j-\alpha_j)^2/\gamma_j}\right)\,Q(\vec{y})\nonumber\\
=&\left(\prod_{j}\int d\alpha_j\sqrt{\frac{1-\gamma_j}{\gamma_j}}\,e^{-\frac{1-\gamma_j}{\gamma_j}\left(\alpha_j-y_j-\frac{i\xi_j\gamma_j}{2(1-\gamma_j)}\right)^2}\right)\int d\vec{y}\,e^{i\xi_jy_j-\sum_j\frac{\gamma_j\xi_j^2}{4(1-\gamma_j)}}\,Q(\vec{y})\nonumber\\
=&\sqrt{\pi^k}e^{-\frac14\sum_j\frac{\gamma_j\xi_j^2}{1-\gamma_j}}\map{F}_{\vec{y}\to\vec{\xi}}\left(Q(\vec{y})\right).
\Label{KGT}
\end{align}
In addition, we know that  the P-function $P(\vec{y})$ \cite{glauber1963coherent} of $F$  can be evaluated via the Q-function as (see, for instance, \cite{schleich2011quantum})
\begin{align}
P(\vec{y})=\map{F}^{-1}_{\vec{\xi}\to\vec{y}}\left(e^{\frac14\sum_j\xi_j^2}\map{F}_{\vec{\alpha}\to\vec{\xi}}\left(
Q(\vec{\alpha})\right)\right).
\Label{KGT2}
\end{align}
The combination of \eqref{KGT} and \eqref{KGT2} yields
\begin{align*}
P(\vec{y})=\map{F}^{-1}_{\vec{\xi}\to\vec{y}}\left(\sqrt{\pi^k}e^{\frac14\sum_j\frac{\xi_j^2}{1-\gamma_j}}\map{F}_{\vec{\alpha}\to\vec{\xi}}\left( f(\vec{\alpha})\right)\right).
\end{align*}
By definition of the P-function $P(\vec{y})$, 
$F$ satisfies
\begin{align}\Label{app-MG6}
F=\int d\vec{y} P(\vec{y})|\vec{y}\>\<\vec{y}| 
=\int d\vec{y}\,\map{F}^{-1}_{\vec{\xi}\to\vec{y}}\left(\sqrt{\pi^k}e^{\frac14\sum_j\frac{\xi_j^2}{1-\gamma_j}}\map{F}_{\vec{\alpha}\to\vec{\xi}}\left( 
f(\vec{\alpha})
\right)\right)|\vec{y}\>\<\vec{y}|.
\end{align}

Conversely, we assume that
$F$ is given by \eqref{Gaussian-limiting3}.
Then, we choose the function $Q(\vec{\alpha})$ to satisfy
\begin{align}
\map{F}_{\vec{\alpha}\to\vec{\xi}}\left(
f(\vec{\alpha})\right)
=\sqrt{\pi^k}e^{-\frac14\sum_j\frac{\gamma_j\xi_j^2}{1-\gamma_j}}\map{F}_{\vec{y}\to\vec{\xi}}\left(Q(\vec{y})\right).\Label{TGY}
\end{align}
Since $F= \int d\vec{y} 
\map{F}^{-1}_{\vec{\xi}\to\vec{y}}\left(e^{\frac14\sum_j\xi_j^2}\map{F}_{\vec{\alpha}\to\vec{\xi}}\left(
Q(\vec{\alpha})\right)\right)
|\vec{y}\>\<\vec{y}|$,
we have
$Q(\vec{y})=\frac1{\pi^{k/2}}\<\vec{y}| 
F|\vec{y}\>$.
Expanding displaced thermal states into a convex combination of coherent states, 
we have
\begin{align}
\Tr F G[\vec{\alpha},\vec{\gamma}]=
\int d\vec{y}\,\left(\prod_{j}\sqrt{\frac{1-\gamma_j}{\gamma_j}}\,e^{-(1-\gamma_j)(y_j-\alpha_j)^2/\gamma_j}\right)\,Q(\vec{y})\Label{GDE}.
\end{align}
Applying the inverse of $\map{F}_{\vec{\alpha}\to\vec{\xi}}$ to \eqref{TGY}, we obtain
\eqref{HFD}.
The combination of \eqref{HFD} and \eqref{GDE} implies \eqref{Gaussian-limiting2}.

\section{Proof of Lemma \ref{L2B}}\Label{app-Gaussian}
First we focus on the quantum part and we show that,
when a POVM $\{M_0^G \}_{B \subset \mathbb{R}^{2s}}$ is covariant, 
\begin{align}
  \wp_{\vec{\alpha}^Q|M_0^G}\left(\set{T}_{W^Q,c}(\vec{\alpha}_0^Q)\right)\ge
N\left[\vec{0}, E_s\left(e^{-\vec{\beta}}+\vec{e}/2 \right)\right]\left(\set{T}_{W^Q,c}(\vec{0})\right)
\Label{ER5}
\end{align}
where $\vec{\alpha}^Q=(\vec{\alpha}^R,\vec{\alpha}^I)$ and $W^Q$ is a diagonal matrix with eigenvalues
$w_j>0$.

Since $M^Q$ is covariant, we have $\wp_{\vec{\alpha}^Q|M_0^G}\left(\set{T}_{W^Q,c}(\vec{\alpha}_0^Q)\right)=\wp_{\vec{0}|M_0^G}\left(\set{T}_{W^Q,c}(\vec{0})\right)$.
There exists a state $ \tau$ such that
\begin{align*}
 M^Q(\set{B})=\frac{1}{\pi^s}\int_{\set{B}}d \vec{\alpha}^Q\ T^{\rm Q}_{\vec{\alpha}^Q}\ \tau (T^{\rm Q}_{\vec{\alpha}^Q})^\dagger,
 \end{align*}
 where $T^{\rm Q}_{\vec{\alpha}^Q}$ is defined as in Eq. (\ref{D-alpha}).	
Let $\hat{N}_j$ be the number operator on the $j$-th system.
Since $\hat{N}_j \Phi[\vec{0},\vec{\beta}]= \Phi[\vec{0},\vec{\beta}]\hat{N}_j$, 
for the set $\set{T}_{W^Q,c}(\vec{0})$,
we have
\begin{align*}
& \Tr \Phi[\vec{0},\vec{\beta}]M  (\set{T}_{W^Q,c}(\vec{0}))
 \\
 =&
\Tr \Phi[\vec{0},\vec{\beta}]
e^{i t \hat{N}_j} M (\set{T}_{W^Q,c}(\vec{0}))
  e^{-i t \hat{N}_j} \\
=& 
\Tr \Phi[\vec{0},\vec{\beta}]e^{it \hat{N}_j}\frac{1}{\pi^s}\int_{\set{T}_{W^Q,c}(\vec{0})} D_{\vec{\alpha}^Q} \tau D_{\vec{\alpha}^Q}^\dagger 
d \vec{\alpha}^Q
e^{-i t \hat{N}_j} \\
=& 
\Tr \Phi[\vec{0},\vec{\beta}]
\frac{1}{\pi^s}\int_{ \set{T}_{W^Q,c}(\vec{0})} 
D_{\vec{\alpha}^Q}
e^{it \hat{N}_j} \tau e^{-it \hat{N}_j} 
D_{\vec{\alpha}^Q}^\dagger 
d \vec{\alpha}^Q
 \end{align*}
Since this relation holds with any $\hat{N}_j$.
Hence, $\tau$ can 
be replaced by
\begin{align*}
\frac{1}{(2\pi)^s}
\int_{[0,2\pi]^s} e^{i \sum_{j=1}^s t_j \hat{N}_j} \tau e^{-i \sum_{j=1}^st_j \hat{N}_j} dt_1\ldots d t_s \, , 
\end{align*} which commutes with $\hat{N}_j$ with any $j$.

Now, we consider the case with $\tau=|k_1,\ldots, k_s\rangle \langle k_1,\ldots, k_s| $.
Then,
\begin{align}
&\Tr \Phi[\vec{0},\vec{\beta}] M  (\set{T}_{W^Q,c}(\vec{0})) \nonumber \\
=&
\frac{1}{\pi^s}\int_{ \set{T}_{W^Q,c}(\vec{0})} 
\langle k_1,\ldots, k_s|
D_{\vec{\alpha}^Q}^\dagger 
 \Phi[\vec{0},\vec{\beta}] D_{\vec{\alpha}^Q}
|k_1,\ldots, k_s\rangle 
d \vec{\alpha}^Q\nonumber  \\
=&
\frac{1}{\pi^s}\int_{ \set{T}_{W^Q,c}(\vec{0})} 
\langle k_1,\ldots, k_s|
 \Phi[\vec{\alpha}^Q,\vec{\beta}]
|k_1,\ldots, k_s\rangle 
d \vec{\alpha}^Q\nonumber  \\
=&
\frac{1}{\pi^s}\int_{ \set{T}_{W^Q,c}(\vec{0})} 
\int_{}
|\langle k_1,\ldots, k_s|\alpha_1', \ldots, \alpha_s'\rangle |^2
\frac{1}{\pi^s N_1 \ldots N_s}e^{-\frac{\sum_{j=1}^s|\alpha_j'-\alpha_j|^2}{N_j}}
d \vec{\alpha}'
d \vec{\alpha}^Q\nonumber \\
=&
\frac{1}{\pi^s}\int_{ \set{T}_{W^Q,c}(\vec{0})} 
\int_{}
e^{-\sum_{j=1}^s |\alpha_j'|^2}
\frac{|\alpha_1'|^{2k_1} \cdots |\alpha_s'|^{2 k_s}}{k_1 ! \cdots k_s !}
\frac{1}{\pi^s N_1 \ldots N_s}e^{-\frac{\sum_{j=1}^s|\alpha_j'-\alpha_j|^2}{N_j}}
d \vec{\alpha}'
d \vec{\alpha}^Q\nonumber \\
=&
\frac{1}{\pi^s}
\int_{}
e^{-\sum_{j=1}^s |\alpha_j'|^2}
\frac{|\alpha_1'|^{2 k_1} \cdots |\alpha_s'|^{2 k_s}}{k_1 ! \cdots k_s !}
f( |\alpha_1'|^2, \ldots, |\alpha_s'|^2)
d \vec{\alpha}'
\nonumber \\
=&
\frac{1}{\pi^s}\int_{[0,\infty)^s}
e^{-\sum_{j=1}^s r_j}
\frac{r_1^{k_1} \cdots r_s^{k_s}}{k_1 ! \cdots k_s !}
f( r_1, \ldots, r_s)
 d_1 \ldots d r_s
\nonumber \\
=&
\int_{[0,\infty)^s}
f( r_1, \ldots, r_s)
p_{k_1}( r_1) \cdots p_{ k_s}(r_s) d r_1 \ldots d r_s,\Label{e10-18}
\end{align}
where $\alpha_j:=\alpha^R_j+i\alpha^I_j$,
$f( r_1, \ldots, r_s):=  
\int_{\set{T}_{W^Q,c}(\vec{0})} 
\frac{1}{\pi^s N_1 \ldots N_s}
e^{-\frac{\sum_{j=1}^s|\alpha_j'-\alpha_j|^2}{\hat{N}_j}}
d \vec{\alpha}^Q$ with $r_j=|\alpha_j'|^2$
and
$p_{k}( r)$
is the PDF defined as
$e^{-r}\frac{r^{k} }{\pi k !}$.
We find that $f( r_1, \ldots, r_s)$ is monotone increasing for $r_1, \ldots, r_s$.
Also,
$p_{k}( r) \ge p_0(r)$ for $ r \ge (k!)^{1/k} $,
and $p_{k}( r) < p_0(r)$ for $ r < (k!)^{1/k} $.
These facts show that
\begin{align}
& \int_{[0,\infty)^s}
f( r_1, \ldots, r_s)
p_{0}( r_1) \cdots p_{ 0}(r_s) d r_1 \ldots d r_s \nonumber \\
\le &
\int_{[0,\infty)^s}
f( r_1, \ldots, r_s)
p_{k_1}( r_1) \cdots p_{ k_s}(r_s) d r_1 \ldots d r_s.\Label{e10-17}
\end{align}
When $\tau$ is the vacuum state,
the equality in (\ref{ER5}) hold.
So, combining (\ref{e10-18}) and (\ref{e10-17}), we obtain (\ref{ER5}). 

In the classical case, 
the covariant measurement is unique.
So, we have the extension as in Lemma \ref{L2B}.

\section{Proof of Lemma \ref{L4}}\Label{app-diag}
(i)$\Rightarrow$ (ii):
Since
$S^{-1 } A_2 (S^{T})^{-1}=(S^T A_2^{-1} S)^{-1}$,
$S^T A_1 S$, and $D$
commute with each other,
we have
\begin{align}
& D S^{-1 } A_2  A_1 S
=
D S^{-1 } A_2 (S^{T})^{-1} S^T A_1 S\nonumber\\
= &
S^T A_1 S
S^{-1 } A_2 (S^{T})^{-1} D
= 
S^T A_1  A_2 (S^{T})^{-1} D.\Label{PPY}
\end{align}
Since $ S^{T}D S =D$, we have
$ S^{T}D  =DS^{-1}$
and
$ D S  =(S^{T})^{-1} D$.
Thus,
\begin{align}
& S^{T} D A_2  A_1 S
= 
S^T A_1  A_2  D S ,\Label{PPY2}
\end{align}
which implies (ii).

(ii)$\Rightarrow$ (i):
Let $S$ be a symplectic matrix to symplectically diagonalize $A_1$.
Combining (\ref{PPY}) and (\ref{PPY2}), we have
\begin{align*}
D S^{-1 } A_2 (S^{T})^{-1} S^T A_1 S 
= 
S^T A_1 S
S^{-1 } A_2 (S^{T})^{-1} D
\end{align*}
Since $D^2=-1$, we have
\begin{align*}
 S^{-1 } A_2 (S^{T})^{-1} S^T A_1 S D
= 
D S^T A_1 S
S^{-1 } A_2 (S^{T})^{-1} 
= 
 S^T A_1 S D
S^{-1 } A_2 (S^{T})^{-1} .
\end{align*}
Hence, 
$S^{-1 } A_2 (S^{T})^{-1} $ commute with  
$ S^T A_1 S D$.
There exists an orthogonal matrix $S'$ such that
$S S'$ is a symplectic matrix, and 
$ (S S')^T A_1 (S S')$
and 
$(S S')^{-1 } A_2 ((S S')^{T})^{-1} $ are diagonal matrices.
Considering the inverse of $A_2^{-1}$, we obtain (i).

\section{Derivation of Eq. (\ref{bound-example1})}\Label{app-example1}
First,  SLD operators for $x$, $y$, and the nuisance parameter $z$ can be calculated by solving the equation  $\frac{\partial \rho_{\vec{t}}}{\partial \vec{t}_j}=\frac12\left(\rho_{\vec{t}}{L}_j+{L}_j\rho_{\vec{t}}\right)$:
\begin{align*}
L_x&=-\frac{x'}{1-|\vec{n}|^2}I+\left(\vec{a}'+\frac{x'\vec{n}}{1-|\vec{n}|^2}\right)\cdot\vec{\sigma}\\
L_y&=-\frac{y'}{1-|\vec{n}|^2}I+\left(\vec{b}'+\frac{y'\vec{n}}{1-|\vec{n}|^2}\right)\cdot\vec{\sigma}\\
L_z&=-\frac{z}{1-|\vec{n}|^2}I+\left(\vec{c}+\frac{z\vec{n}}{1-|\vec{n}|^2}\right)\cdot\vec{\sigma},
\end{align*}
 where $s:=\vec{a}\cdot\vec{b}$, $\vec{a}'=\frac{\vec{a}-s\vec{b}}{1-s^2}$, $\vec{b}'=\frac{\vec{b}-s\vec{a}}{1-s^2}$,
$x'=\frac{x-ys}{1-s^2}$, and $y'=\frac{y-xs}{1-s^2}$.
By definition, the SLD Fisher information can be evaluated as $\left(J_{\vec{t}}\right)_{ij}:=\Tr\rho_{\vec{t}}(L_iL_j+L_jL_i)/2$ and the D-matrix can be evaluated as $(D_{\vec{t}})_{ij}:=i\Tr\rho_{\vec{t}}[L_i,L_j]$. Explicitly, we get
\begin{align*}
J=\left(\begin{array}{ccc}\frac{1}{1-s^2}+\frac{x'^2}{1-|\vec{n}|^2} &-\frac{s}{1-s^2}+\frac{x'y'}{1-|\vec{n}|^2} & \frac{x'z}{1-|\vec{n}|^2} \\
\\
-\frac{s}{1-s^2}+\frac{x'y'}{1-|\vec{n}|^2}  &\frac{1}{1-s^2}+\frac{y'^2}{1-|\vec{n}|^2} &\frac{y'z}{1-|\vec{n}|^2}\\
\\
\frac{x'z}{1-|\vec{n}|^2}&\frac{y'z}{1-|\vec{n}|^2} & 1+\frac{z^2}{1-|\vec{n}|^2}\end{array}\right), 
\end{align*}
\begin{align*}
  J^{-1}=\left(\begin{array}{ccc}\frac{1-|\vec{n}|^2+y'^2(1-s^2)+z^2}{1-|\vec{n}|^2+x'^2+2x'y's+y'^2+z^2}&-\frac{x'y'(1-s^2)+(1-|\vec{n}|^2+z^2)s}{1-|\vec{n}|^2+x'^2+2x'y's+y'^2+z^2} & -\frac{(x'+sy')z}{1-|\vec{n}|^2+x'^2+2x'y's+y'^2+z^2} \\
\\
-\frac{x'y'(1-s^2)+(1-|\vec{n}|^2+z^2)s}{1-|\vec{n}|^2+x'^2+2x'y's+y'^2+z^2} &\frac{1-|\vec{n}|^2+x'^2(1-s^2)+z^2}{1-|\vec{n}|^2+x'^2+2x'y's+y'^2+z^2}&-\frac{(sx'+y')z}{1-|\vec{n}|^2+x'^2+2x'y's+y'^2+z^2}\\
\\
-\frac{(x'+sy')z}{1-|\vec{n}|^2+x'^2+2x'y's+y'^2+z^2} & -\frac{(sx'+y')z}{1-|\vec{n}|^2+x'^2+2x'y's+y'^2+z^2} &\frac{1-|\vec{n}|^2+x'^2+2sx'y'+y'^2}{1-|\vec{n}|^2+x'^2+2x'y's+y'^2+z^2}
 \end{array}\right),
\end{align*} 
and
\begin{align*}
D=\frac{1}{\sqrt{1-s^2}}\left(\begin{array}{ccc} 0 & -2z & 2y \\
\\
2z&0 &-2x\\
\\
-2y&2x & 0\end{array}\right).
\end{align*}
Finally, substituting the above into Eq. (\ref{NHB}), we get Eq. (\ref{bound-example1}).

\end{document}